\newtheorem{Thm}{Theorem}[section]
\newtheorem{Def}[Thm]{Definition}
\newtheorem{Lemm}[Thm]{Lemma}
\newtheorem{Prop}[Thm]{Proposition}
\newtheorem{Coro}[Thm]{Corollary}
\theoremstyle{definition}
\newtheorem*{Proof}{Proof}
\newtheorem{Rem}[Thm]{Remark}
\newcommand{\be}{\begin{equation}}
\newcommand{\ee}{\end{equation}}
\newcommand{\ba}{\begin{align}}
\newcommand{\ea}{\end{align}}
\newcommand{\ben}{\begin{equation*}}
\newcommand{\een}{\end{equation*}}
\def\i<#1>{\langle #1 \rangle}
\def\l<#1>{\left\langle #1 \right\rangle}
\def\b<#1>{\big\langle #1 \big\rangle}
\def\wrt{\ \text{w.r.t.}\ }
\newcommand{\la}{\langle}
\newcommand{\ra}{\rangle}
\newcommand{\bs}{\boldsymbol}
\newcommand{\Tr}{\mathrm{Tr}}
\newcommand{\BbbR}{\mathbb{R}}
\newcommand{\BbbN}{\mathbb{N}}
\newcommand{\BbbC}{\mathbb{C}}
\newcommand{\bq}{{\boldsymbol q}}
\newcommand{\vepsilon}{\varepsilon}
\newcommand{\vphi}{\varphi}
\newcommand{\no}{\nonumber \\}
\newcommand{\Ex}{\mathbb{E}}
\newcommand{\one}{{\mathchoice {\rm 1\mskip-4mu l} {\rm 1\mskip-4mu l}
{\rm 1\mskip-4.5mu l} {\rm 1\mskip-5mu l}}}
\newcommand{\up}{\uparrow}
\newcommand{\down}{\downarrow}
\newcommand{\hd}{\hat{d}}
\newcommand{\hf}{\hat{f}}
\newcommand{\hnd}{\hat{n}^d}
\newcommand{\hnf}{\hat{n}^f}
\newcommand{\ohnf}{\overline{\hat{n}}^f}
\newcommand{\vLa}{\varLambda}
\newcommand{\vPhi}{\varPhi}
\newcommand{\Np}{N_{\mathrm{p}}}
\newcommand{\im}{\mathrm{i}}
\newcommand{\h}{\mathfrak{H}}
\newcommand{\p}{\mathfrak{P}}
\newcommand{\vP}{\varPhi}
\newcommand{\VP}{{\bs \varPhi}}
\newcommand{\F}{\mathfrak{F}}
\newcommand{\X}{\mathfrak{X}}
\newcommand{\hn}{\hat{n}}
\title{ \sf
Ground state properties of the periodic Anderson model with electron-phonon interactions

}
\date{}
\author[1]{Tadahiro Miyao\thanks{miyao@math.sci.hokudai.ac.jp}}
\author[1]{Hayato Tominaga\thanks{htominaga@frontier.hokudai.ac.jp}}
\affil[1]{Department of Mathematics,  Hokkaido University

Sapporo 060-0810,  Japan}
\begin{document}

\maketitle

\begin{abstract}
The periodic Anderson model (PAM) is a fundamental model describing heavy fermion systems' behavior. In this paper, we investigate the PAM in the presence of electron-phonon interactions. By utilizing a novel analytical methodology based on operator inequalities, we demonstrate that the ground state at half-filling is unique and in a spin-singlet state. Additionally, we establish that the ground state exhibits short-range antiferromagnetism.
\end{abstract}

\section{Introduction}\label{Sec1}

\subsection{Background}
Heavy fermion systems are a prototypical example of strongly correlated electron systems, exhibiting a plethora of phenomena such as unconventional superconductivity and  heavy effective masses.
The periodic Anderson model (PAM) is one of the most fundamental models for describing heavy fermion systems.
There is a substantial corpus of literature on the PAM, comprising numerical analyses; see, e.g., \cite{PhysRevB.85.235116, PhysRevB.93.235143, PhysRevB.48.10320, PhysRevLett.122.227201} and references cited therein.
Despite being relatively scarce, rigorous studies of the PAM have also been undertaken and have made substantial contributions to the understanding of heavy fermion systems; see, for instance, \cite{PhysRevB.72.075130, NOCE2006173, PhysRevB.65.212303, PhysRevLett.70.833, PhysRevB.50.6246, PhysRevB.58.7612, PhysRevLett.68.1030}.
In  \cite{PhysRevLett.68.1030}, Ueda, Tsunetsugu, and Sigrist show that the ground state of the symmetric PAM is unique and a spin singlet.
Their proof is based on the spin reflection positivity established by Lieb in his examination of the ground state of the Hubbard model \cite{Lieb1989}; Tian also employs the concept of spin reflection positivity to show that the ground state exhibits short-range antiferromagnetism \cite{PhysRevB.50.6246}. These rigorous results remain a firm cornerstone of subsequent studies on the PAM.
\medskip

This paper aims to rigorously  examine the effects of interactions between phonons and many-electron systems as characterized by the PAM.
The PAM, incorporating electron-phonon interactions, has been employed in the theoretical explication of unconventional superconductivity in heavy fermion systems \cite{PhysRevB.36.180}.  For recent studies, see  \cite{PhysRevB.99.155147, 
MITSUMOTO2007419, 
PhysRevB.87.121102} 
and references therein.
However, there has been a scarcity of rigorous examination of such models.
A more comprehensive depiction of our findings is as follows: This study concentrates on the interaction between conduction electrons and phonons and the interaction between localized electrons and phonons; in the scenario of either interaction, if the electron-phonon interaction is not particularly strong, we prove that the ground state at half-filling is unique and a spin singlet, and exhibits short-range antiferromagnetism.
\medskip

The methodological novelty of this paper is outlined below. In  \cite{PhysRevB.50.6246, PhysRevLett.68.1030}, the ground-state properties of the PAM are clarified by utilizing the method of the spin reflection positivity to the PAM.
  The concept of spin reflection positivity stems from the spatial reflection positivity of axiomatic quantum field theory \cite{Osterwalder1973, Osterwalder1975}. 
   It is a powerful analytical method that can be applied not only to the PAM but also to various models describing many-electron systems\footnote{See \cite{Shen1998,Tian2004} for a comprehensive review of the spin reflection positivity.
Also, refer to \cite{Tasaki2020}  for an instructive explanation of Lieb's theorem. For a recent development, see \cite{Yoshida2021}.}.
 Freericks and Lieb were the pioneers in applying the spin reflection positivity to electron-phonon interacting systems \cite{Freericks1995}; they succeeded in analyzing the ground-state properties of a class of general models, including the Holstein model.
It is well-established that the Lang-Firsov transformation is a crucial element in the examination of electron-phonon interacting systems, however, the method of \cite{Freericks1995} is incompatible with this transformation.
This obstacle was surmounted by introducing a fresh analytical approach based on operator inequalities in \cite{Miyao2016}. The theory of operator inequalities presented here  differs from those found in the standard textbooks of functional analysis, and has thus far been demonstrated to be highly efficacious in analyzing various models of many-electron systems, see, e.g., \cite{Miyao2012, Miyao2016, Miyao2019, MIYAO2021168467, Miyao2022}.
It should be noted that, despite not being widely acknowledged, the rigorous analysis of the PAM is more convoluted than the rigorous analysis of other models of many-electron systems.
Furthermore, the operators describing phonons are typically unbounded operators on infinite-dimensional Hilbert space, rendering mathematical treatment more intricate. As such, the endeavor to establish the uniqueness of the ground state of the PAM with electron-phonon interactions through existing methods is a highly daunting task. In this paper, we tackle this challenge by refining the analytical method based on the operator inequalities of \cite{Miyao2012, Miyao2016, MIYAO2021168467}.  It should be emphasized that the method presented in this paper is capable of analyzing a broad class of models.

\subsection{Ground state properties of the PAM}
To contextualize the importance of our results, we first provide an overview of the results for the standard PAM.
The Hamiltonian of the  periodic Anderson  model on a finite lattice $\vLa$ is given by
\begin{align}
H_\mathrm{PAM}
=&\sum_{x,  y\in \vLa}\sum_{\sigma=\uparrow,  \downarrow} (-t_{x,  y})d_{x,  \sigma}^* d_{y,  \sigma}
+\sum_{x\in \vLa}\sum_ {\sigma=\up, \down}\vepsilon_f n_{x, \sigma}^f \no
&+V\sum_{x\in \vLa}\sum_ {\sigma=\up, \down}(f_{x,  \sigma}^* d_{x,  \sigma}+d^*_{x,  \sigma} f_{x,  \sigma})
+U^f\sum_{x\in \vLa} n_{x,  \uparrow}^fn_{x,  \downarrow}^f.
\end{align}
Here, $d_{x, \sigma}$ and $f_{x, \sigma}$ are annihilation operators of conduction electrons and localized electrons, respectively, and satisfy the following anti-commutation relations:
\begin{align}
&\{d_{x, \sigma}, d_{y, \tau}\}=0=\{f_{x, \sigma}, f_{y, \tau}\}, \label{CAR1}\\
&\{d_{x, \sigma}, d_{y, \tau}^*\}=\delta_{x, y}\delta_{\sigma, \tau},\ \ \{f_{x, \sigma}, f^*_{y, \tau}\}=\delta_{x, y} \delta_{\sigma, \tau}. \label{CAR2}
\end{align}
$n_{x, \sigma}^f$ is the number operator of $f$-electrons at site $x$, defined by 
$n_{x, \sigma}^f=f_{x, \sigma}^* f_{x, \sigma}$.
$H_{\rm PAM}$ acts in the fermionic Fock space:
\be
\F_{\rm e}=\bigoplus_{n=0}^{4|\vLa|} \bigwedge^n (\mathfrak{h}\oplus \mathfrak{h}),
\quad \mathfrak{h}=
\ell^2(\vLa)\oplus \ell^2(\vLa), \label{DefFe}
\ee
where $\bigwedge^n$ denotes  the $n$-fold antisymmetric tensor product with $\bigwedge^0 (\mathfrak{h}\oplus \mathfrak{h})=\BbbC$.\footnote{Refer to \cite{Arai2016, Bratteli1997}  for mathematical definitions and basic properties of fermionic Fock spaces and annihilation operators.}
Throughout this paper, we will focus on the half-filling case. Therefore, in what follows, we consider $H_{\rm PAM}$ to be restricted to the following subspace:
\be
\F_{{\rm e}, 2|\vLa|}= \bigwedge^{2|\vLa|} (\mathfrak{h}\oplus \mathfrak{h}).
\ee
The hopping matrix element is denoted by $t_{x, y}$;  $\vepsilon_f$ represents  a local
potential; 
$U^f$ is the on-site interaction between spin-up
and spin-down electrons on the localized orbital, and $V$ is the
conduction-localized orbital hybridization.

Across the course of this paper, we assume the following.
\begin{description}
\item[\hypertarget{A1}{(A. 1)}] 
The parameters fulfill the following conditions:
\begin{itemize}
\item[(i)] $t_{x, y}\in \BbbR$ and $t_{x, y} =t_{y, x} $ for every $x, y \in \vLa$.
\item[(ii)] $\vepsilon_f\in \BbbR$, $U^f \in \BbbR$ and  $V \in \BbbR$ with $V\neq 0$.
\end{itemize}
\end{description}
Under these conditions, we see that $H_{\rm PAM}$ is self-adjoint.

Let $G_d=(\vLa, E)$ be the graph generated by the hopping matrix:
	$E=\{\{x, y\} : t_{x, y} \neq 0\}$ defines the set of edges.  The following assumption is essential to this paper.
\begin{description}
	\item[(A. 2)]\hypertarget{A2}{} $G_d$ is connected and bipartite. To be precise, 
	\begin{itemize}
	\item[(i)] for any $x, y\in \vLa $,  there is a sequence $\{\{x_i, x_{i+1}\}\}_{i=0}^{n-1}$  in  $E$ satisfying $x_0 = x,  x_n = y$ and $t_{x_i, x_{i+1}} \neq 0\ (i=0, 1, \dots, n-1)$; 
	\item[(ii)]
	there exists a partition $\vLa=\vLa_1\cup \vLa_2\ (\vLa_1\cap \vLa_2=\varnothing)$ of $\vLa$ satisfying $t_{x, y}=0$ when $x, y\in\vLa_1$ or $x, y\in\vLa_2$.
	
	\end{itemize}
	\end{description}
Next, let us define some spin operators. The spin operators ${\bs S}_x^d=(S_x^{d, (1)}, S_x^{d, (2)}, S_x^{d, (3)})$ of the conduction electrons at site $x$ are defined to be
\be
S_x^{d, (i)}=\frac{1}{2} \sum_{\sigma, \sigma\rq{}=\up, \down} d_{x, \sigma}^* (s^{(i)})_{\sigma, \sigma\rq{}}d_{x, \sigma},\quad i=1,2,3,
\ee
where $s^{(i)}\ (i=1, 2, 3)$ are the Pauli matrices:
\be
s^{(1)}=\begin{pmatrix}
0 & 1 \\
1 & 0
\end{pmatrix}, \ \ 
s^{(2)}=\begin{pmatrix}
0 & -\im \no
\im & 0
\end{pmatrix},\ \ 
s^{(3)}=\begin{pmatrix}
1 & 0 \no
0 & -1
\end{pmatrix}.
\ee
$(s^{(i)})_{\sigma, \sigma\rq{}}$ represents the matrix elements of $s^{(i)}$, with the correspondence $\up=1,\down=2$.
Under this convention, for example, $(s^{(1)})_{\up, \up}=(s^{(1)})_{1, 1}=0$ and $ (s^{(1)})_{\up, \down}=(s^{(1)})_{1, 2}=1$.
The spin operators ${\bs S}_x^f=(S_x^{f, (1)}, S_x^{f, (2)}, S_x^{f, (3)})$ of the $f$-electron at site $x$ are defined similarly: 
\be
S_x^{f, (i)}=\frac{1}{2} \sum_{\sigma, \sigma\rq{}=\up, \down} f_{x, \sigma}^* (s^{(i)})_{\sigma, \sigma\rq{}}f_{x, \sigma},\quad i=1,2,3.
\ee
The total spin operators ${\bs S}_{\rm tot}=(S_{\rm tot}^{(1)}, S_{\rm tot}^{(2)}, S_{\rm tot}^{(3)})$ are defined as 
\be
S_{\rm tot}^{(i)}=\sum_{x\in \vLa}(S_x^{d, (i)}+S_x^{f, (i)}),\quad i=1, 2, 3.
\ee
The Casimir operator is denoted by ${\bs S}_{\rm tot}^2$:
\be
{\bs S}_{\rm tot}^2=(S_{\rm tot}^{(1)})^2+(S_{\rm tot}^{(2)})^2+(S_{\rm tot}^{(3)})^2.
\ee
If the state $\vphi\in \F_{{\rm e}, 2|\vLa|}$ is an eigenvalue of ${\bs S}_{\rm tot}^2$ with  ${\bs S}_{\rm tot}^2 \vphi=S(S+1) \vphi$, then we say that $\vphi$ has total spin $S$.

The following theorem is proved in \cite{PhysRevLett.68.1030}:
\begin{Thm}\label{PAMTh} Assume \hyperlink{A1}{\bf (A. 1)} and \hyperlink{A2}{\bf (A. 2)}. In addition, assume that 
\be
U^f>0, \quad \vepsilon_f=-U^f/2. \label{Symm}
\ee
Then the ground state of $H_{\rm PAM}$ is unique and has total spin $S=0$.
\end{Thm}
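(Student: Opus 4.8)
My plan is to recover Theorem~\ref{PAMTh} by Lieb's method of spin reflection positivity, recast in the language of operator inequalities: I will build a self-dual cone that $e^{-\beta H_{\rm PAM}}$ leaves invariant and then show that this semigroup is positivity improving. Since $H_{\rm PAM}$ commutes with ${\bs S}_{\rm tot}$, every $SU(2)$-multiplet in $\F_{{\rm e},2|\vLa|}$ meets $\mathcal H_0:=\{\varphi:S^{(3)}_{\rm tot}\varphi=0\}$, so it suffices to show that $H_{\rm PAM}$ restricted to $\mathcal H_0$ has a unique ground state: then the ground-state energy of $H_{\rm PAM}$ is attained in $\mathcal H_0$, the full ground-state eigenspace is a single $SU(2)$-multiplet, and once that multiplet is shown to have $S=0$ the ground state is unique in $\F_{{\rm e},2|\vLa|}$. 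In $\mathcal H_0$ there are $|\vLa|$ up-spin and $|\vLa|$ down-spin electrons on the $2|\vLa|$ orbitals $\{d_x,f_x:x\in\vLa\}$, so $\mathcal H_0\cong\big(\bigwedge^{|\vLa|}\mathfrak h\big)\otimes\big(\bigwedge^{|\vLa|}\mathfrak h\big)$.

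First I would put $H_{\rm PAM}$ into reflection-symmetric form. Since $\vepsilon_f=-U^f/2$, completing the square turns $\vepsilon_f\sum_\sigma n^f_{x,\sigma}+U^fn^f_{x,\uparrow}n^f_{x,\downarrow}$ into $U^f\big(n^f_{x,\uparrow}-\tfrac12\big)\big(n^f_{x,\downarrow}-\tfrac12\big)$ up to an additive constant; the remainder is the one-body operator $K_\uparrow+K_\downarrow$, where $K_\sigma$ collects the hopping $\sum_{x,y}(-t_{x,y})d^*_{x,\sigma}d_{y,\sigma}$ and the hybridization $V\sum_x(f^*_{x,\sigma}d_{x,\sigma}+\text{h.c.})$ of spin $\sigma$. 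Next apply the partial particle--hole transformation on the down-spin orbitals, $d_{x,\downarrow}\mapsto\eta_xd^*_{x,\downarrow}$ and $f_{x,\downarrow}\mapsto-\eta_xf^*_{x,\downarrow}$, with $\eta_x=+1$ on $\vLa_1$ and $\eta_x=-1$ on $\vLa_2$ furnished by \hyperlink{A2}{(A.\,2)}. Bipartiteness makes $K_\downarrow$ retain exactly the same form in the new operators — the minus signs generated by the canonical anticommutation relations are absorbed by $\eta_x\eta_y=-1$ along every edge of $G_d$ — while $V$ keeps its sign and $n^f_{x,\downarrow}-\tfrac12\mapsto-(\tilde n^f_{x,\downarrow}-\tfrac12)$. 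Hence on $\mathcal H_0$, now carrying $|\vLa|$ ``up'' and $|\vLa|$ transformed ``down'' fermions,
\[
H_{\rm PAM}=K\otimes\one+\one\otimes\widetilde K-U^f\sum_{x\in\vLa}\Big(n^f_{x,\uparrow}-\tfrac12\Big)\otimes\Big(\tilde n^f_{x,\downarrow}-\tfrac12\Big)+\text{const},
\]
where $\widetilde K=\vartheta K\vartheta^{-1}$ for the antiunitary spin reflection $\vartheta$; the hypothesis $U^f>0$ is precisely the sign of the coupling term that is needed for reflection positivity (and the reality of $t_{x,y}$ and $V$ in \hyperlink{A1}{(A.\,1)} is what makes $\widetilde K=\vartheta K\vartheta^{-1}$ genuinely ``the same'' operator).

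The core step is the matrix/cone argument. Identifying the transformed down-spin space with $\overline{\bigwedge^{|\vLa|}\mathfrak h}$ through $\vartheta$, each $\varphi\in\mathcal H_0$ is encoded by a square matrix $M$, and the trace inequalities at the heart of spin reflection positivity show that some ground state is represented by a positive semidefinite $M$; equivalently, $e^{-\beta H_{\rm PAM}}$ preserves the self-dual cone of positive semidefinite matrices for all $\beta\ge0$. To upgrade semidefiniteness to strict positive definiteness I would prove that this semigroup is positivity improving: connectivity of the one-body operator $K$ is what makes this work — $G_d$ is connected by \hyperlink{A2}{(A.\,2)} and $V\neq0$ links each $f_x$ with $d_x$, so the graph associated with $K$ is connected — and a Perron--Frobenius-type argument then yields a unique, strictly positive definite ground-state matrix. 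Therefore $H_{\rm PAM}$ restricted to $\mathcal H_0$ has a unique ground state.

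Finally I would pin down the total spin. The bipartition of $\vLa$ in \hyperlink{A2}{(A.\,2)} induces a bipartition of the $2|\vLa|$ orbitals — one class being $\{d_x:x\in\vLa_1\}\cup\{f_x:x\in\vLa_2\}$, the other its complement — in which both classes have exactly $|\vLa|$ elements, so the orbital graph is \emph{balanced} bipartite. Running the reflection-positivity estimate again in the sectors with $S^{(3)}_{\rm tot}=\pm1$ — where after the particle--hole transformation the two fermion species still have equal numbers $|\vLa|\pm1$ — and invoking the strict positive definiteness of the corresponding ground-state matrices yields the strict inequality $E_0(S^{(3)}_{\rm tot}=0)<E_0(S^{(3)}_{\rm tot}=\pm1)$; together with the uniqueness already obtained, this forces the unique ground state to have $S=0$. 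I expect the principal obstacle to be the positivity-improvement (ergodicity) step — verifying that hopping, hybridization and the $U^f$-coupling jointly connect the whole cone — with the bookkeeping behind the strict energy comparison that excludes $S\ge1$ a close second.
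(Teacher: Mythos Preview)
Your overall strategy --- hole--particle transformation on the down spins, spin reflection positivity recast as positivity preservation of $e^{-\beta H_{\rm PAM}}$ on the cone $\mathscr{L}^2_+(\mathfrak{E})$, and a Perron--Frobenius argument for uniqueness in the $M=0$ sector --- is correct and is exactly the route taken in the paper and in the original Ueda--Tsunetsugu--Sigrist proof. The paper does not reprove Theorem~\ref{PAMTh} in detail (it is cited from \cite{PhysRevLett.68.1030}), but Lemma~\ref{HubbardPI} records the key observation you also made: $H_{\rm PAM}$ is a Hubbard Hamiltonian on the balanced bipartite lattice $\varXi=\vLa\sqcup\vLa$ with $\varXi_1=\vLa_1\sqcup\vLa_2$, $\varXi_2=\vLa_2\sqcup\vLa_1$, so Lieb's theorem applies directly.

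There is, however, a genuine gap in your last step. You assert that running reflection positivity in the $S^{(3)}_{\rm tot}=\pm1$ sectors and ``invoking the strict positive definiteness of the corresponding ground-state matrices yields the strict inequality $E_0(S^{(3)}_{\rm tot}=0)<E_0(S^{(3)}_{\rm tot}=\pm1)$''. This does not follow. Reflection positivity gives uniqueness and positive definiteness of the ground-state matrix \emph{within} each fixed $M$-sector, but it supplies no mechanism for comparing ground-state energies \emph{across} different $M$-sectors; the two positive definite matrices live in $\mathscr{L}^2\big(\bigwedge^{|\vLa|}\mathfrak h\big)$ and $\mathscr{L}^2\big(\bigwedge^{|\vLa|\pm1}\mathfrak h\big)$ and there is no variational link between them coming from positivity alone. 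The energy inequality you want is in fact a \emph{consequence} of $S=0$, not a route to proving it.

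The standard remedies are: (a) Lieb's original continuity argument in $U^f$ (the ground-state spin is integer-valued, unique for all $U^f>0$, and at $U^f\to0^+$ equals $\tfrac12\big||\varXi_1|-|\varXi_2|\big|=0$); or (b) the overlap argument, which is what the paper uses in the form of Lemma~\ref{Overlap2}: exhibit a reference vector in the $M=0$ sector that is simultaneously positive w.r.t.\ $\mathscr{L}^2_+(\mathfrak{E})$ and an eigenvector of ${\bs S}_{\rm tot}^2$ with $S=0$; since the unique ground state is \emph{strictly} positive, the two have nonzero overlap, forcing the ground state to share the value $S=0$. Your correct identification of the balanced bipartition of the $2|\vLa|$ orbitals is precisely the input that makes either argument go through --- you should invoke one of them rather than the unjustified cross-sector energy comparison.
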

When the condition \eqref{Symm} is satisfied, $H_{\rm PAM}$ is called the symmetric PAM, in particular. Under this condition, it is easily seen that $H_{\rm PAM}$ has hole-particle symmetry.

To state the next result, we define the ladder operators by 
\begin{align}
S_x^{d, (\pm)}=S_x^{d, (1)} \pm \im S_x^{d, (2)},\quad S_x^{f, (\pm)}=S_x^{f, (1)} \pm \im S_x^{f, (2)}.
\end{align}
The following theorem is proved in \cite{PhysRevB.50.6246}:

\begin{Thm}\label{PAMzigzag}
Under the same assumptions as in Theorem \ref{PAMTh}, let $\psi_{\rm g}$ be the ground state of $H_{\rm PAM}$.  For a given operator $A$, let $\la A\ra$ denote the ground state expectation of $A$ : $\la A\ra=\la\psi_{\rm g} \ket{A \psi_{\rm g}}$. Then, for any $x, y\in \vLa$, the following hold: 
\begin{align}
\gamma_x \gamma_y \big\la S_x^{d, (\pm)} S_y^{d, (\mp)}\big\ra&>0,\quad  \gamma_x \gamma_y \big\la S_x^{f, (\pm)} S_y^{f, (\mp)}\big\ra>0,\\ 
\gamma_x \gamma_y \big\la S_x^{d, (\pm)} S_y^{f, (\mp)}\big\ra&>0, \quad \gamma_x \gamma_y \big\la S_x^{f, (\pm)} S_y^{d, (\mp)}\big\ra>0,
\end{align}
where, $\gamma_x=1$ if $x\in \vLa_1$, $\gamma_x=-1$ if $x\in \vLa_2$.
\end{Thm}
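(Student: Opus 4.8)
The plan is to prove Theorem~\ref{PAMzigzag} along the lines of Tian's spin‑reflection‑positivity argument, recast in the operator‑inequality language of this paper. First I would apply to $H_{\rm PAM}$ the unitary $\mathcal U$ obtained by composing (i) the sublattice gauge transformation $d_{x,\up}\mapsto\gamma_x d_{x,\up}$, $f_{x,\up}\mapsto\gamma_x f_{x,\up}$ on the up‑spin operators and (ii) the partial hole--particle transformation $d_{x,\down}\mapsto d_{x,\down}^*$, $f_{x,\down}\mapsto f_{x,\down}^*$ on the down‑spin operators. Decompose $\F_{\rm e}=\F_\up\otimes\F_\down$ according to spin and identify $\F_\down$ with the conjugate space $\overline{\F_\up}$ through an antiunitary $J$ with $J d_{x,\up}^* J^{-1}=d_{x,\down}^*$ and $J f_{x,\up}^* J^{-1}=-f_{x,\down}^*$; equivalently, identify $\F_{\rm e}$ with the space $\mathcal L^2(\F_\up)$ of Hilbert--Schmidt operators on $\F_\up$. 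Using \hyperlink{A2}{\bf (A.\,2)} — which forces $\gamma_x\gamma_y=-1$ on every edge and so fixes the gauge signs coherently — together with \eqref{Symm}, a direct computation shows that $\widetilde H:=\mathcal U H_{\rm PAM}\,\mathcal U^{-1}$ takes the reflection‑positive form
\ben
\widetilde H = h\otimes 1 + 1\otimes J h J^{-1} - U^f\sum_{x\in\vLa} n^f_{x,\up}\otimes n^f_{x,\down} + \text{const},
\een
where $h=h^*$ acts on $\F_\up$ and $U^f>0$; here \eqref{Symm} is precisely what makes the transformed local $f$‑potential of the down‑sector equal $J h J^{-1}$.

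Acting on $\mathcal L^2(\F_\up)$, the first two terms of $\widetilde H$ generate $\xi\mapsto e^{-\beta h}\xi e^{-\beta h^*}$ and each coupling term $-U^f n^f_{x,\up}\otimes n^f_{x,\down}$ generates $\xi\mapsto\xi+(e^{\beta U^f}-1)\,n^f_{x,\up}\xi\,n^f_{x,\up}$, so $e^{-\beta\widetilde H}$ preserves the self‑dual cone $\mathcal P$ of positive‑semidefinite Hilbert--Schmidt operators. Next I would invoke Theorem~\ref{PAMTh}: since $\psi_{\rm g}$ has total spin $0$ it lies in the sector $N_\up=N_\down=|\vLa|$, which after $\mathcal U$ becomes the invariant subspace $\mathcal L^2(\F_\up^{(|\vLa|)})$ ($\F_\up^{(n)}$ the $n$‑particle subspace of $\F_\up$), and every term of $\widetilde H$ preserves it. The positivity‑improving property of $e^{-\beta\widetilde H}$ on this sector — the same ergodicity input that underlies the proof of Theorem~\ref{PAMTh}, resting on connectedness in \hyperlink{A2}{\bf (A.\,2)} and on $V\neq 0$ — then forces the unique ground state $\mathcal U\psi_{\rm g}$ to correspond to a \emph{strictly positive} operator $\xi_{\rm g}>0$ on $\F_\up^{(|\vLa|)}$.

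It remains to compute the correlations. For $a,b\in\{d,f\}$, write $S_x^{a,(+)}=a_{x,\up}^* a_{x,\down}$, $S_y^{b,(-)}=b_{y,\down}^* b_{y,\up}$ (the case $S_x^{a,(-)}S_y^{b,(+)}$ is identical), apply $\mathcal U$, and reorder the four fermion operators: the outcome is $\mathcal U\big(S_x^{a,(\pm)}S_y^{b,(\mp)}\big)\mathcal U^{-1}=\gamma_x\gamma_y\,c_{ab}\,K\otimes J K J^{-1}$, where $K$ is a product of two up‑spin operators (e.g.\ $K=a_{x,\up}^* b_{y,\up}$) and $c_{ab}\in\{+1,-1\}$ is the sign produced by the hole--particle transformation together with the rule $J f^*_{\cdot,\up}J^{-1}=-f^*_{\cdot,\down}$. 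Since $K\otimes J K J^{-1}$ acts on $\mathcal L^2(\F_\up)$ as $\xi\mapsto K\xi K^*$, taking the Hilbert--Schmidt inner product with $\xi_{\rm g}$ and using $(\gamma_x\gamma_y)^2=1$ gives
\ben
\gamma_x\gamma_y\,\big\la S_x^{a,(\pm)} S_y^{b,(\mp)}\big\ra = c_{ab}\,\Tr\!\big[\xi_{\rm g}\,K\,\xi_{\rm g}\,K^*\big] = c_{ab}\,\big\|\xi_{\rm g}^{1/2} K\,\xi_{\rm g}^{1/2}\big\|^2 ,
\een
with $\|\cdot\|$ the Hilbert--Schmidt norm. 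Because $\xi_{\rm g}>0$ on $\F_\up^{(|\vLa|)}$, the factor $\xi_{\rm g}^{1/2}$ is invertible there, so the right‑hand side is nonzero iff $K$ does not vanish on $\F_\up^{(|\vLa|)}$ — which is immediate for hopping, hybridization, and number bilinears at half filling — and reading off $c_{ab}$ in each of the four cases yields the strict inequalities in the statement.

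The main obstacle is the step producing strict positivity of $\xi_{\rm g}$: one must show that $e^{-\beta\widetilde H}$ is positivity \emph{improving}, not merely positivity preserving, on $\mathcal L^2(\F_\up^{(|\vLa|)})$ relative to $\mathcal P$. This is where the hypotheses genuinely enter — connectedness of $G_d$ together with $V\neq 0$ must be used to rule out any proper $\mathcal P$‑invariant face, and the argument has to be arranged so that the off‑diagonal part $-U^f\sum_x n^f_{x,\up}\otimes n^f_{x,\down}$ (with $U^f>0$) and the one‑body part $h$ jointly generate enough connectivity. In the present finite‑dimensional situation this is a Perron--Frobenius/ergodicity statement and the unbounded‑operator subtleties that drive the phonon part of this paper are absent; what still needs care here is the fermionic sign bookkeeping through $\mathcal U$ and the verification that \eqref{Symm} is exactly the condition under which the reflection‑positive form holds.
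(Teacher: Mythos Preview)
The paper does not itself prove Theorem~\ref{PAMzigzag}; it is quoted from Tian~\cite{PhysRevB.50.6246}. Your outline is nevertheless the same method the paper uses to prove the phonon analogue (Theorem~\ref{GSPart}, argued in Subsection~3.4): a hole--particle transformation brings the Hamiltonian into a form whose heat semigroup preserves and improves the cone $\mathscr{L}^2_+(\mathfrak{E})$, Perron--Frobenius--Faris yields a strictly positive ground state, and one checks that $\gamma_x\gamma_y$ times the transformed spin bilinear equals $\mathcal L(K)\mathcal R(K^*)\unrhd 0$. The paper's $W$ in \eqref{WProp} already absorbs your separate gauge step (note $W^*d_{x,\down}W=\gamma_x d_{x,\down}^*$ and the extra sign $W^*f_{x,\down}W=-\gamma_x f_{x,\down}^*$), and the paper closes with Lemma~\ref{PPSP} rather than your $\|\xi_g^{1/2}K\xi_g^{1/2}\|^2$ formulation, but these are cosmetic differences.

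There is, however, one genuine gap. You write the transformed bilinear as $\gamma_x\gamma_y\,c_{ab}\,K\otimes JKJ^{-1}$ and assert that ``reading off $c_{ab}$ in each of the four cases yields the strict inequalities in the statement.'' You never actually compute $c_{ab}$, and with your conventions this is not innocent: for the mixed $d$--$f$ correlations the two sign mechanisms you built in --- the hole--particle rule on $f_\down$ and your choice $Jf_{x,\up}^*J^{-1}=-f_{x,\down}^*$ --- do not obviously cancel, and a straightforward computation gives $c_{df}=c_{fd}=-1$ while $c_{dd}=c_{ff}=+1$. The same issue arises in the paper's framework: with the $W$ of \eqref{WProp} one finds $\gamma_x\gamma_y\,W^*S_x^{d,(+)}S_y^{f,(-)}W=-(\hat d_x^*\hat f_y)\otimes(\hat d_x^*\hat f_y)$, an operator $\unlhd 0$ rather than $\unrhd 0$. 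So precisely the step you flag as routine is where the argument either requires a careful choice of $J$ matched to the extended-lattice bipartition $\varXi_1=\vLa_1\sqcup\vLa_2$, $\varXi_2=\vLa_2\sqcup\vLa_1$ (which assigns opposite parities to $(x,d)$ and $(x,f)$), or shows that the $df$ and $fd$ inequalities should in fact carry the opposite sign. Either way, you must carry out the sign bookkeeping rather than defer it.
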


Theorem \ref{PAMzigzag} implies that the ground state of $H_{\rm PAM}$ exhibits short-range antiferromagnetism.

\subsection{Main results}

In this paper, we examine the following two Hamiltonians that account for interactions between electrons and phonons:
\begin{description}
\item[Model 1:] 
\begin{align}
\boldsymbol{H}_d
=H_\mathrm{PAM} +U^d\sum_{x\in\Lambda}n^d_{x, \up}n_{x, \down}^d 
+g \sum_{x\in\Lambda} n_x^d(b_x^*+b_x) +\omega_0\Np.
\end{align}

\item[Model 2:]
\begin{align}
\boldsymbol{H}_f
=H_\mathrm{PAM} +U^d\sum_{x\in\Lambda}n^d_{x, \up}n_{x, \down}^d
+g \sum_{x\in\Lambda} n_x^f(b_x^*+b_x) +\omega_0\Np.
\end{align}

\end{description}

$\bs H_d$ and $\bs H_f$
 act in the Hilbert space:
\be
\F_{\rm e, 2|\vLa|}\otimes \h_{\rm ph},
\ee
where 
$\h_{\rm ph}$ is the Hilbert space describing the phonon states and is given by 
\be
\quad \h_{\rm ph}=L^2(\BbbR^{|\vLa|}).
\ee
The number operators of the conduction electrons at site $x$ are defined by 
$n_{x, \sigma}=d^*_{x, \sigma} d_{x, \sigma}$ and $n_x^d=n^d_{x, \up}+n_{x, \down}^d$; 
also, $n_x^f=n_{x, \up}^f+n_{x, \down}^f$ is the number operator of localized electrons at site $x$.
$b_x^*$ and $b_x$ are the phonon creation and annihilation operators at site $x$, respectively, and satisfy the usual commutation  relations\footnote{
More precisely, these commutation relations should be interpreted as holding in an appropriate subspace, e.g., $\mathcal{S}(\BbbR^{|\vLa|})$, the Schwartz space.
}:
\be
[b_x, b_y^*]=\delta_{x, y},\quad [b_x, b_y]=0.
\ee
$\Np=\sum_{x\in \vLa}b_x^*b_x$ is the phonon number operator.
$U^d$ is the strength of the Coulomb repulsion between conduction electrons; 
$g$  is  the coupling constant between phonons and  electrons; the
phonons are assumed to be dispersionless  with energy $\omega_0$.
$\bs H_d$ is a model that considers the interaction between conduction electrons and phonons.
On the other hand, $\bs H_f$ incorporates the effect of interaction between localized electrons and phonons.
More general interactions are discussed in Subsection \ref{Disc}.

In this paper, we assume the following:
\begin{description}
\item[\hypertarget{A3}{(A. 3)}] 
 $g\in\mathbb{R}, U^d\in \BbbR$ and $\omega_0>0$.
\end{description}
Using Kato--Rellich's theorem \cite[Theorem X.12]{Reed1975}, we see that both $\bs H_d$ and $\bs H_f$ are self-adjoint on $\mathrm{dom}(\Np)$ and bounded below.

To state the main results, we define the effective Coulomb energies between electrons as
\be
U_{{\rm eff}}^d=U^d- \frac{2g^2}{\omega_0},
\quad  U_{{\rm eff}}^f=U^f- \frac{2g^2}{\omega_0}.
\ee

The first main result of this paper is as follows:
\begin{Thm}\label{MainTh1}Assume \hyperlink{A1}{\bf (A. 1)}, \hyperlink{A2}{\bf (A. 2)} and  \hyperlink{A3}{\bf (A. 3)}. 
The following {\rm (i)} and {\rm (ii)} hold true:

\begin{itemize}
\item[\rm (i)]Assume that 
\begin{align}
U_{\rm eff}^d\ge 0,\quad  U^f>0, \quad
\varepsilon_f=\frac{1}{2}(U^d-U^f)-2\omega_0^{-1}g^2.
\end{align}
Then, the ground state of $\bs H_d$ is unique and  has total spin $S=0$.
\item[\rm (ii)] Assume that 
\begin{align}
U^d\ge 0,\quad
U_{\rm eff}^f>  0,\ \
\varepsilon_f=\frac{1}{2}(U^d-U^f)+2\omega_0^{-1}g^2.
\end{align}
Then, the ground state of $\bs H_f$ is unique and  has total spin $S=0$.
\end{itemize}
\end{Thm}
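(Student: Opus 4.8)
The plan is to adapt the spin-reflection-positivity method of \cite{Lieb1989, PhysRevLett.68.1030} to the electron–phonon setting, in the refined operator-inequality formulation of \cite{Miyao2012, Miyao2016, MIYAO2021168467}, and to treat {\rm (i)} and {\rm (ii)} in parallel. The first step is to restrict attention to the subspace of $\F_{{\rm e},2|\vLa|}\otimes\h_{\rm ph}$ with $S_{\rm tot}^{(3)}=0$. Since the electron number $2|\vLa|$ is even, the total spin is integer-valued, so every $SU(2)$-multiplet is represented in this sector; hence the global ground-state energy is attained there, and once the ground state in that sector is shown to be unique and to have $S=0$, it follows (from the fact that $[\bs H_d,{\bs S}_{\rm tot}]=0$) that it is the globally unique ground state with total spin $0$.

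The core step is a chain of unitary transformations bringing $\bs H_d$ (resp.\ $\bs H_f$) into reflection-positive form on the $S_{\rm tot}^{(3)}=0$ sector: (a) a hole–particle transformation on the down-spin $d$- and $f$-electrons, with sublattice signs dictated by the bipartite partition of \hyperlink{A2}{\bf (A.\,2)}; (b) a real translation of the phonon coordinates absorbing the linear phonon term produced by (a); and (c) the Lang–Firsov transformation removing the residual electron–phonon coupling, which renormalizes the Coulomb energy ($U^d\mapsto U_{\rm eff}^d$ for $\bs H_d$; $U^f\mapsto U_{\rm eff}^f$ for $\bs H_f$) and dresses the conduction hopping (resp.\ the hybridization) by phonon-translation unitaries $\ex^{\pm\im c(\hat p_x-\hat p_y)}$ (resp.\ $\ex^{\pm\im c\hat p_x}$). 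Using the half-filling identity $\sum_x(n^d_{x,\sigma}+n^f_{x,\sigma})=|\vLa|$ in this sector to trade $\sum_x n^d_{x,\uparrow}$ for a constant minus $\sum_x n^f_{x,\uparrow}$, one verifies that the prescribed value of $\vepsilon_f$ is exactly the one for which all spin-asymmetric one-body terms cancel — this is precisely the role of the hypothesis on $\vepsilon_f$. The transformed Hamiltonian then consists of a spin-symmetric one-body part (the dressed hopping, the hybridization, a uniform potential), the terms $\tfrac12 U_{\rm eff}^d\sum_x(n^d_{x,\uparrow}-n^d_{x,\downarrow})^2+\tfrac12 U^f\sum_x(n^f_{x,\uparrow}-n^f_{x,\downarrow})^2$ (resp.\ $\tfrac12 U^d\sum_x(n^d_{x,\uparrow}-n^d_{x,\downarrow})^2+\tfrac12 U_{\rm eff}^f\sum_x(n^f_{x,\uparrow}-n^f_{x,\downarrow})^2$), and $\omega_0\Np$; the sign hypotheses $U_{\rm eff}^d\ge 0,\ U^f>0$ (resp.\ $U^d\ge 0,\ U_{\rm eff}^f>0$) guarantee that the coefficients of these squared differences are nonnegative.

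Then I run a Perron–Frobenius argument for a self-dual cone. Identify the $S_{\rm tot}^{(3)}=0$ electronic sector with the Hilbert–Schmidt operators on $\bigwedge^{|\vLa|}(\mathfrak h)$ via the spin-flip, and take $\p=\{\text{positive semidefinite operators}\}\otimes\{\text{nonnegative functions in }L^2(\BbbR^{|\vLa|})\}$. One shows: (1) $\ex^{-\beta\omega_0\Np}$ has a strictly positive (Mehler) kernel, hence is positivity improving on the nonnegative cone of $L^2(\BbbR^{|\vLa|})$; (2) the spin-symmetric one-body terms act, via congruences of positive semidefinite operators combined with the $L^2$-nonnegativity-preserving phonon translations, as positivity-preserving operators with respect to $\p$; (3) each $\ex^{-\beta\cdot\frac12 U(n_{x,\uparrow}-n_{x,\downarrow})^2}$ with $U\ge 0$ is, on the Hilbert–Schmidt side, a positive-definite Schur multiplier, hence preserves the cone of positive semidefinite operators. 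By Trotter's product formula, $\ex^{-\beta H}$ is then positivity preserving with respect to $\p$; and combining the positivity improvement in (1) with the connectivity furnished by the hopping (\hyperlink{A2}{\bf (A.\,2)(i)}), the hybridization ($V\neq 0$), and the strictly attractive Hubbard channel ($U^f>0$, resp.\ $U_{\rm eff}^f>0$), one obtains that $\ex^{-\beta H}$ is positivity improving with respect to $\p$ for $\beta>0$; the abstract Perron–Frobenius theorem then gives a unique ground state lying in the interior of $\p$. For the value of the spin, the ground state being unique (hence an isolated simple eigenvalue, since $H$ has compact resolvent) throughout the path-connected admissible parameter region, $\la{\bs S}_{\rm tot}^2\ra$ is a continuous, hence constant, integer-valued function of the parameters; letting $g\to 0$ and then $U^d\to 0$ reduces $\bs H_d$ (resp.\ $\bs H_f$) to the symmetric PAM, so Theorem~\ref{PAMTh} gives $S=0$. (Alternatively, one exhibits a spin-singlet reference vector in $\p$, which has strictly positive overlap with the ground state since the latter lies in the interior of $\p$.)

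The main obstacle I expect is the combination of steps (2)–(3) with the positivity-improvement claim in the presence of the unbounded phonon field and the phonon-dressed hopping/hybridization: unlike in the purely electronic PAM, the dressed hopping is not literally a congruence of positive semidefinite operators, so establishing rigorously that $\ex^{-\beta H}$ is positivity preserving and improving with respect to $\p$ requires the operator-inequality techniques of \cite{Miyao2012, Miyao2016, MIYAO2021168467}. A secondary, routine point is the existence of a ground state, which follows because $\omega_0\Np$ has compact resolvent and the remainder of $H$ is $\Np$-bounded with relative bound zero.
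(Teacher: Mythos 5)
Your overall outline---restricting to the $S_{\rm tot}^{(3)}=0$ sector, combining hole--particle and Lang--Firsov transformations, defining the cone of positive semidefinite Hilbert--Schmidt operators tensored with nonnegative phonon wave functions, and invoking a Perron--Frobenius argument---matches the paper's strategy, and your continuity argument for $S=0$ (deforming the parameters to $g=0$, $U^d=0$ and using the symmetric PAM result) is a legitimate and arguably more elementary alternative to the paper's positive overlap principle (Lemma~\ref{Overlap2}). But the crucial claim that $e^{-\beta H}$ is \emph{positivity improving} w.r.t.\ $\p$ for $\beta>0$ is precisely what the paper does not (and apparently cannot easily) establish, and your sketch does not close the gap. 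The paper proves only the strictly weaker \emph{ergodicity}: for each pair $u,v\in\p\setminus\{0\}$ there is some $t$ (depending on $u,v$) with $\la u|e^{-tH}v\ra>0$. This occupies Sections~\ref{Sec4}--\ref{Sec8} and is the paper's main new technical contribution: a generalized relation ``$\succeq$'' is introduced, an abstract ergodicity criterion (Theorem~\ref{ergodicity}) is proved, and five nontrivial propositions are verified involving a graph $\mathscr{G}$ on electron configurations $(X_d,X_f)$ with $|X_d|+|X_f|=|\vLa|$, whose edges encode single hopping and hybridization moves. Your ``connectivity plus positive Schur multiplier'' argument does not suffice for two reasons. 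First, after the Lang--Firsov transformation the hopping and hybridization are dressed by phonon phase factors, making the transformed Hamiltonian act fiber-wise in the direct integral $\int^\oplus_{\BbbR^{|\vLa|}}\h_{\rm e}\,d\bq$; fiber-wise positivity preservation does not lift to positivity improvement, because the dressed one-body terms do not connect distinct fibers on their own and the Hubbard terms are fiber-independent. Second, the hypothesis of (i) allows $U^d_{\rm eff}=0$, so the $d$-Hubbard channel contributes nothing; the paper's Remark after Theorem~\ref{MainTh1} states explicitly that the method of \cite{Miyao2016,Miyao2018}, which you invoke, only handles $U^d_{\rm eff}>0$. So the step you flag as ``the main obstacle'' is not a routine adaptation---it is the heart of the proof, and your proposal does not supply it. Moreover, your $S=0$ argument still presupposes that uniqueness (hence ergodicity) has been established for all intermediate parameter values along the deformation path, so it does not bypass the gap.
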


\begin{Rem}
\upshape
\begin{itemize}
\item[1. ] In (i), we can take $U^d_{\rm eff}=0$, while we cannot take $U^f=0$.
If $U^d_{\rm eff}>0$ and $U^f>0$, the theorem can be proved relatively easily by the method of \cite{Miyao2016, Miyao2018}.\footnote{More in detail, $\bs H_d$ is equivalent to the Holstein--Hubbard Hamiltonian on the enlarged lattice $\vLa \sqcup \vLa$, so the method of \cite{Miyao2016, Miyao2018} can be applied.} It is important to note that the method developed in this paper encompasses the case of $U^d_{\rm eff}=0$, which is mathematically challenging to analyze.
Similar observations hold true for (ii).
\item[2. ]
The condition $U^d_{\rm eff} \ge 0$ is equivalent to $|g|\le \sqrt{U^d \omega_0/2}$ .This suggests that if the interaction between conduction electrons and phonons is not excessively strong, the magnetic properties of the ground state of $H_{\rm PAM}$ stated in Theorem \ref{PAMTh} remain unaltered and stable. Analogously, a similar interpretation can be applied to (ii).
\end{itemize}
\end{Rem}

The second main result is the following theorem concerning the magnetic structure of the ground state:
\begin{Thm}\label{MainTh2}
Assume \hyperlink{A1}{\bf (A. 1)}, \hyperlink{A2}{\bf (A. 2)} and  \hyperlink{A3}{\bf (A. 3)}. 
For a given operator $A$, we denote by $\la A\ra_d$ the expectation concerning the ground state of $\bs H_d$.
Similarly, we denote by $\la A\ra_f$ the expectation concerning the ground state of $\bs H_f$.
The following {\rm (i)} and {\rm (ii)} hold true:
\begin{itemize}
\item[\rm (i)]
Under the same assumptions as in  {\rm (i)} of Theorem \ref{MainTh1}, we have
\begin{align}
\gamma_x \gamma_y \big\la S_x^{d, (\pm)} S_y^{d, (\mp)}\big\ra_d&>0,\quad  \gamma_x \gamma_y \big\la S_x^{f, (\pm)} S_y^{f, (\mp)}\big\ra_d>0,\\ 
\gamma_x \gamma_y \big\la S_x^{d, (\pm)} S_y^{f, (\mp)}\big\ra_d&>0, \quad \gamma_x \gamma_y \big\la S_x^{f, (\pm)} S_y^{d, (\mp)}\big\ra_d>0.
\end{align}
\item[\rm (ii)]Under the same assumptions as in  {\rm (ii)} of Theorem \ref{MainTh1}, we have
\begin{align}
\gamma_x \gamma_y \big\la S_x^{d, (\pm)} S_y^{d, (\mp)}\big\ra_f&>0,\quad  \gamma_x \gamma_y \big\la S_x^{f, (\pm)} S_y^{f, (\mp)}\big\ra_f>0,\\ 
\gamma_x \gamma_y \big\la S_x^{d, (\pm)} S_y^{f, (\mp)}\big\ra_f&>0, \quad \gamma_x \gamma_y \big\la S_x^{f, (\pm)} S_y^{d, (\mp)}\big\ra_f>0.
\end{align}
\end{itemize}
\end{Thm}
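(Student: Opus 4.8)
\emph{Approach.} I would adapt Tian's argument for the bare periodic Anderson model (Theorem~\ref{PAMzigzag}) to the present setting, working inside the self-dual-cone formalism that underlies the proof of Theorem~\ref{MainTh1}. Write $\bs{H}$ for either $\bs{H}_d$ or $\bs{H}_f$, and let $\psi_{\rm g}$ be its ground state, which is unique by Theorem~\ref{MainTh1}. The plan rests on the structure produced in the course of proving that theorem: a unitary $W$ on $\F_{{\rm e},2|\vLa|}\otimes\h_{\rm ph}$ — the composition of a Lang--Firsov-type transformation that removes the linear phonon coupling with the hole--particle transformation $d_{x,\down}\mapsto\gamma_x d_{x,\down}^{*}$, $f_{x,\down}\mapsto\gamma_x f_{x,\down}^{*}$ on the down-spin sector, the signs $\gamma_x$ being those of \hyperlink{A2}{\bf (A. 2)} — together with a self-dual cone $\mathfrak{P}$ such that the semigroup $\ex^{-\beta W\bs{H}W^{-1}}$ is positivity improving with respect to $\mathfrak{P}$ for all $\beta>0$. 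By the standard Perron--Frobenius argument for positivity improving semigroups this yields not only uniqueness but, crucially for us, that $\Psi_{\rm g}:=W\psi_{\rm g}$ is \emph{strictly positive} with respect to $\mathfrak{P}$, i.e.\ $\la\Psi_{\rm g}\ket{\xi}>0$ for every $\xi\in\mathfrak{P}\setminus\{0\}$.

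\emph{Transporting the spin-flip bilinears through $W$.} Since $S_x^{a,(\pm)}$ commutes with every number operator $n_z^d$ and $n_z^f$ (a spin flip at a fixed site and orbital leaves all occupation numbers unchanged), and since its hole--particle image commutes with every difference $n_{z,\up}^a-n_{z,\down}^a$, conjugation by $W$ reduces, in either ordering of the two transformations, to conjugation by the hole--particle transformation alone; in particular the result does not involve the phonon variables. Computing the hole--particle images of $S_x^{a,(+)}=a_{x,\up}^{*}a_{x,\down}$ and $S_y^{b,(-)}=b_{y,\down}^{*}b_{y,\up}$ (with $a,b\in\{d,f\}$), the staggered prefactor $\gamma_x\gamma_y$ cancels the sublattice signs exactly, and one obtains, for $x\neq y$,
\[
W\,\big(\gamma_x\gamma_y\,S_x^{a,(\pm)}S_y^{b,(\mp)}\big)\,W^{-1}=\big(a_{x,\up}^{*}b_{y,\up}\big)\big(a_{x,\down}^{*}b_{y,\down}\big)\otimes\one_{\h_{\rm ph}},
\]
with the obvious relabelling in the $(\mp,\pm)$ case, while for $x=y$, $a=b$ it collapses to a local number operator of the form $n_{x,\up}^{a}n_{x,\down}^{a}\otimes\one_{\h_{\rm ph}}$.

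\emph{Positivity preservation and conclusion.} The heart of the matter is to show that each of these transformed operators is \emph{positivity preserving}, i.e.\ maps $\mathfrak{P}$ into itself. After the spin splitting $\F_{\rm e}\cong\F_{\up}\otimes\F_{\down}$ and the antiunitary identification of $\F_{\down}$ with $\overline{\F_{\up}}$ used in the proof of Theorem~\ref{MainTh1}, the operator $\big(a_{x,\up}^{*}b_{y,\up}\big)\otimes\big(a_{x,\down}^{*}b_{y,\down}\big)$ becomes a completely positive map of the form $\Xi\mapsto A\,\Xi\,A^{*}$ with $A=a_{x,\up}^{*}b_{y,\up}$, tensored with the identity on the phonon factor (whose cone consists of the a.e.-nonnegative functions in $L^2(\BbbR^{|\vLa|})$); hence it preserves $\mathfrak{P}$. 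This is precisely the type of statement that the operator-inequality calculus of Theorem~\ref{MainTh1} is designed to certify, and it is where the signs $\gamma_x\gamma_y$ are indispensable. Granting it, fix a correlator and set $T=W\big(\gamma_x\gamma_y S_x^{a,(\pm)}S_y^{b,(\mp)}\big)W^{-1}$; then $T\Psi_{\rm g}\in\mathfrak{P}$, and, since inner products of two elements of a self-dual cone are nonnegative reals,
\[
\gamma_x\gamma_y\,\big\la S_x^{a,(\pm)}S_y^{b,(\mp)}\big\ra=\la\Psi_{\rm g}\ket{T\Psi_{\rm g}}\ge 0 .
\]
To upgrade this to a strict inequality, note that $T\neq 0$ and that $T$ commutes with the conjugation fixing $\mathfrak{P}$ (its coefficients are real); if $T\Psi_{\rm g}=0$, then for every $\xi\in\mathfrak{P}$ we would have $T^{*}\xi\in\mathfrak{P}$ (the adjoint of a positivity-preserving operator is positivity preserving, by self-duality) and $\la\Psi_{\rm g}\ket{T^{*}\xi}=\la T\Psi_{\rm g}\ket{\xi}=0$, forcing $T^{*}\xi=0$ by strict positivity of $\Psi_{\rm g}$, hence $T^{*}=0$ and $T=0$ — a contradiction. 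Therefore $T\Psi_{\rm g}\in\mathfrak{P}\setminus\{0\}$, and strict positivity of $\Psi_{\rm g}$ gives $\la\Psi_{\rm g}\ket{T\Psi_{\rm g}}>0$. This establishes all the inequalities in (i) and (ii) simultaneously (and shows, in passing, that the correlators are real).

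\emph{Main obstacle.} The real work is the positivity-preservation step: keeping track of the fermionic statistics carefully enough to confirm that $\gamma_x\gamma_y$ neutralizes the Fock-space signs exactly, and pinning down the concrete cone $\mathfrak{P}$ constructed for Theorem~\ref{MainTh1} as one with respect to which these pair operators — not merely the semigroup $\ex^{-\beta W\bs{H}W^{-1}}$ — are positivity preserving. The usual difficulty in such analyses, the unboundedness of the phonon operators, does not intervene here, because by the previous step the transformed correlation operators are bounded and act trivially on the phonon factor.
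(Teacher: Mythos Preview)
Your approach is essentially the same as the paper's. The paper reduces Theorem~\ref{MainTh2} to Theorem~\ref{GSPart}, whose proof (Section~\ref{Sec3}.4) does exactly what you outline: transform by $\mathcal{U}=e^{-L_d}e^{-\im\frac{\pi}{2}N_{\mathrm p}}W$, observe that the Lang--Firsov and phonon-rotation factors commute with the spin-flip bilinears so only the hole--particle transformation $W$ acts nontrivially, verify that $\gamma_x\gamma_y\,\mathcal{U}^*S_x^{a,(\pm)}S_y^{b,(\mp)}\mathcal{U}\unrhd 0$ w.r.t.\ the cone $\mathfrak P=\int^{\oplus}\mathscr{L}^2_+(\mathfrak{E})\,d\bq$ via Proposition~\ref{PPI2}, and then invoke Lemma~\ref{PPSP} (which is precisely your ``$T\Psi_{\rm g}=0\Rightarrow T=0$'' argument) together with strict positivity of the ground state to get the strict inequality.

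One point of discrepancy worth flagging: the paper does \emph{not} establish that the semigroup is positivity improving for every $\beta>0$; it proves only \emph{ergodicity} (Theorem~\ref{MainPI}), and indeed devotes Sections~\ref{Sec4}--\ref{Sec8} to this weaker property. Ergodicity already yields a strictly positive ground state via the Perron--Frobenius--Faris theorem (Theorem~\ref{pff}), so your overclaim is harmless for the conclusion, but you should not assume positivity improvement is available. Also, be careful with the extra sign in the paper's convention $W^*f_{x,\down}W=-\gamma_x f_{x,\down}^*$ (versus $+\gamma_x$ for $d$); tracking it correctly through the mixed $d$--$f$ correlators is needed to land on the right side of zero.
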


\begin{Rem}
\upshape 

Theorem \ref{MainTh2} implies that the short-range antiferromagnetism of the ground state of $H_{\rm PAM}$ described in Theorem \ref{PAMTh} is robustly stable when the electron-phonon interactions are not excessively strong.
\end{Rem}

\subsection{Discussion}\label{Disc}
The approach put forth in this paper is capable of addressing more comprehensive electron-phonon interactions. For instance, consider the following Hamiltonian:
\begin{align}
\boldsymbol{H}
=H_\mathrm{PAM} +\sum_{x, y\in\vLa}U^d_{x, y}n_{x, \up}^d n_{y, \down}^d 
+\sum_{x, y\in\vLa}g_{x, y}n_x^d(b_y^*+b_y) +\omega_0\Np.
\end{align}
By imposing suitable restrictions on $U^d_{x, y}$ and $g_{x, y}$, we can demonstrate analogous outcomes as those presented in Theorems \ref{MainTh1} and \ref{MainTh2}. Furthermore, the method presented in this paper can also be applied to examine systems that involve interactions between electrons and the quantized electromagnetic field. For prior examinations of systems composed of many electrons interacting with quantized electromagnetic fields, please refer to, among others, \cite{GIULIANI2012461, Miyao2019}.  It is worth noting that there has yet to be a thorough examination of the interaction between many electrons as characterized by the PAM and quantized electromagnetic fields.
\medskip

The stability of the magnetic properties of the ground state of $H_{\rm PAM}$ as outlined thus far can be coherently accounted for by the theory established in \cite{Miyao2019, Miyao2022}.  
In essence, the stability can be succinctly attributed to the fact that $H_{\rm PAM}, \bs H_d$ and $ \bs H_f$ belong to the Marshall--Lieb--Mattis stability class $\mathscr{A}_{\rm MLM}$ on the extended lattice $\vLa\sqcup \vLa$.
It has been demonstrated that Hamiltonians belonging to $\mathscr{A}_{\rm MLM}$ possess unique ground states with a total spin of $S=0$, which further exhibit short-range antiferromagnetism, as shown in Theorems \ref{PAMzigzag} and \ref{MainTh2}.
\medskip

In addition to the PAM, the Kondo lattice model (KLM) is a noteworthy instance of a Hamiltonian belonging to $\mathscr{A}_{\rm MLM}$.
The authors of \cite{MIYAO2021168467} conducted a comprehensive examination of the system where many electrons described by the KLM interact with phonons, and proved that the ground state exhibits similar characteristics to those outlined in Theorems \ref{PAMTh} and \ref{PAMzigzag}.\footnote{See, in particular, \cite[Example 1]{MIYAO2021168467}.}
This result aligns with expectations, as the KLM is derived from the PAM via the extended Kondo limit  \cite{PhysRevB.65.212303}:
 \be
 U^f\to \infty,\quad V\to \infty, \quad \frac{V^2}{U^f}\to \mbox{const.},\quad \vepsilon_f=-\frac{U^f}{2}.
 \ee
 \medskip
 
 Additionally, it has been established that besides class  $\mathscr{A}_{\rm MLM}$, a variety of other stability classes exist, indicating that the ground states of various Hamiltonians that describe many-electron systems display some shared properties. Nevertheless, the rich and diverse range of phenomena arising from interactions between multiple electrons makes it highly likely that numerous stability classes remain undiscovered. Their identification is crucial in achieving a deeper comprehension of many-electron systems.

\subsection{Organization}

The structure and organization of the present work are as follows. In Section \ref{Sec2}, we provide mathematical preliminaries by introducing operator inequalities and highlighting their fundamental properties. These operator inequalities serve as the analytical foundation of this study.
In Section \ref{Sec3}, we show that the ground state is unique and exhibits short-range antiferromagnetism under the assumption that the heat semigroup generated by the Hamiltonian is ergodic, as stated in Theorem  \ref{MainPI}.
The proof of Theorem \ref{MainPI}, which asserts the ergodicity of the heat semigroup, is a complex and nuanced task that is addressed in Sections \ref{Sec4}-\ref{Sec8}. Initially, in Section \ref{Sec4}, an abstract theorem outlining the proof is presented. This novel method facilitates examining interacting systems of electrons described by the PAM and phonons. The abstract theorem contains five assumptions. In order to apply the theorem in practice, we will verify the validity of these assumptions for the PAM in Sections \ref{Sec5}-\ref{Sec8}.
 Section \ref{Sec9} completes the proof of the main theorems by showing that the ground state has total spin $S=0$. Finally, in Appendices \ref{PfProjection3} and \ref{PfConnect}, we prove two crucial propositions that would interrupt the flow of the main argument because the proofs are too long.

\subsection*{Acknowledgements}

T.M. was supported by JSPS KAKENHI Grant Numbers 18K03315, 20KK0304.\\
\vspace{2mm}

\noindent
{\bf Data Availability}\\
 Data sharing not applicable to this article as no datasets were generated or analysed during
the current study.
\vspace{2mm}

\noindent
{\bf Financial interests}\\
 Author T.M. and H.T. declare they have no financial interests.
\section{Preliminaries}\label{Sec2}

\subsection{Basic definitions}
This section briefly explains the operator inequalities necessary to prove the main theorems. The operator inequalities introduced here are different from those in ordinary functional analysis textbooks and characterize the analytical approach of this paper.

First, basic terms related to the  operator inequalities will be introduced.
Let $\mathfrak{X}$ be a  complex separable Hilbert space.
We denote by $\mathscr{B}(\mathfrak{X})$ the Banach space of all bounded operators on $\mathfrak{X}$.

\begin{Def}\label{DefPPInq} \upshape
A {\it Hilbert cone} $\mathfrak{P}$  in $\mathfrak{X}$   is  a closed convex  cone  obeying:
\begin{itemize}
\item[(i)] $\i<u|v>\geq 0$ for every  $u, v\in \mathfrak{P}$;
\item[(ii)] for each  $w\in \mathfrak{X}$, there exist  $u,u',v,v'\in \mathfrak{P}$ such that   $w=u-v+\im (u'-v')$  and $\i<u|v>=\i<u'|v'>=0.$
\end{itemize}
A vector $ u \in\mathfrak{P}$ is said to be {\it  positive w.r.t.} $\mathfrak{P}$. We write this as $u \geq 0$ w.r.t. $\mathfrak{P}$. A vector $v \in\mathfrak{X}$ is called {\it strictly positive w.r.t.} $\mathfrak{P
}$,  whenever $\i<v|u>>0$ for all $ u \in \mathfrak{P}\setminus\{0\}$. We express  this as $v>0$ w.r.t. $\mathfrak{P}$.
\end{Def}

The operator inequalities introduced below form the basis of the analytical methods in this paper.
\begin{Def}\upshape 
Let $A\in\mathscr B(\mathfrak{X})$. 
\begin{itemize}
\item[(i)] $A$ is {\it  positivity preserving  w.r.t.}  $\mathfrak{P}$ if $A\mathfrak{P}\subseteq \mathfrak{P}$. We write this as $A\unrhd 0\wrt \mathfrak{P}$.
\item[(ii)] $A$ is {\it positivity improving  w.r.t.} $\mathfrak{P}$ if,  for each  $u \in \mathfrak{P
} \setminus \{0\},\ A u >0\wrt \mathfrak{P}$ holds.  We express  this as $A\rhd 0\wrt \mathfrak{P}$.
\end{itemize}
Remark that the notations of the operator inequalities are  borrowed  from \cite{Miura2003}.
\end{Def}

The following corollary provides  fundamental properties for practical applications of the operator inequalities defined above.
\begin{Lemm}\label{PPBasic}
Let $A,B\in\mathscr B(\mathfrak{X})$.
 Suppose that  $A\unrhd 0 $ and $B\unrhd0\wrt \mathfrak{P}$. We have the following {\rm (i)}--{\rm (iv)}:
 \begin{itemize}
 \item[\rm (i)] For every  $u, v\in \mathfrak{P}$,  $\la u|Av\ra\ge 0$ holds.
 \item[{\rm (ii)}] If $a\ge 0 $ and $ b\ge 0$, then $aA+bB\unrhd0\wrt \mathfrak{P}$.
 \item[\rm (iii)] $A^*\unrhd 0$ w.r.t. $\mathfrak{P}$.
 \item[{\rm  (iv)}] $AB\unrhd0\wrt \mathfrak{P}$.
 \end{itemize}
\end{Lemm}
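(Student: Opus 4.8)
The plan is to verify the four elementary properties directly from the definitions of Hilbert cone, positivity preserving operator, and the decomposition property (ii) of Definition~\ref{DefPPInq}. None of these steps is genuinely hard; the only point requiring a little care is (iii), and I would sketch it last.

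First, for (i): given $u, v \in \mathfrak{P}$, the hypothesis $A \unrhd 0$ means $Av \in \mathfrak{P}$, and then property (i) of Definition~\ref{DefPPInq} (the self-duality / positivity of inner products on the cone) gives $\la u \mid Av\ra \ge 0$ immediately. For (ii): if $a, b \ge 0$ and $u \in \mathfrak{P}$, then $aAu \in \mathfrak{P}$ (since $Au \in \mathfrak{P}$ and $\mathfrak{P}$ is a cone, closed under nonnegative scalar multiples), similarly $bBu \in \mathfrak{P}$, and $aAu + bBu \in \mathfrak{P}$ because $\mathfrak{P}$ is convex and a cone, hence closed under addition. Thus $(aA+bB)\mathfrak{P} \subseteq \mathfrak{P}$. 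For (iv): if $u \in \mathfrak{P}$ then $Bu \in \mathfrak{P}$, and then $A(Bu) \in \mathfrak{P}$, so $AB\mathfrak{P}\subseteq\mathfrak{P}$.

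The one step with a small twist is (iii), $A^* \unrhd 0$. Here I would argue as follows: it suffices to show $\la v \mid A^* u\ra \ge 0$ for all $u, v \in \mathfrak{P}$, and then use the decomposition property (ii) of Definition~\ref{DefPPInq} to conclude that $A^* u \ge 0$ w.r.t. $\mathfrak{P}$. Concretely, fix $u \in \mathfrak{P}$ and write an arbitrary $w \in \mathfrak{X}$ as $w = u_1 - v_1 + \im(u_1' - v_1')$ with $u_1, v_1, u_1', v_1' \in \mathfrak{P}$ and $\la u_1 \mid v_1\ra = \la u_1' \mid v_1'\ra = 0$. For the real part: $\la u_1 \mid A^* u\ra = \la A u_1 \mid u\ra \ge 0$ by (i) applied to $A$ (since $Au_1 \in \mathfrak{P}$), and likewise $\la v_1 \mid A^* u\ra \ge 0$. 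To see that $A^* u$ has no spurious negative component along $\mathfrak{P}$, decompose $A^* u$ itself as $A^* u = p - q + \im(p' - q')$ with $p,q,p',q' \in \mathfrak{P}$, $\la p \mid q\ra = \la p' \mid q'\ra = 0$; pairing against $q \in \mathfrak{P}$ gives $\la q \mid A^* u\ra = -\|q\|^2 \le 0$, but also $\la q \mid A^* u\ra = \la Aq \mid u\ra \ge 0$ by (i), forcing $q = 0$; pairing against $p'$ and $q'$ and taking imaginary parts similarly forces $p' = q' = 0$. Hence $A^* u = p \in \mathfrak{P}$, which is the claim. (The self-duality of $\mathfrak{P}$ — that a vector pairing nonnegatively with every element of $\mathfrak{P}$ lies in $\mathfrak{P}$ — follows from property (ii) of Definition~\ref{DefPPInq} by exactly this decomposition argument, so one could also invoke that once as a lemma and reuse it.)

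The main obstacle, such as it is, is purely bookkeeping in (iii): making sure the decomposition in Definition~\ref{DefPPInq}(ii) is used correctly to pass from ``$\la v \mid A^* u\ra \ge 0$ for all $v \in \mathfrak{P}$'' to ``$A^* u \in \mathfrak{P}$.'' Everything else is a one-line consequence of $\mathfrak{P}$ being a convex cone invariant under $A$ and $B$. I would present (i), (ii), (iv) in a couple of lines each and devote the bulk of the write-up to the self-duality argument underlying (iii).
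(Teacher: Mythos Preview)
Your proof is correct. The paper itself does not give a proof of this lemma at all---it simply writes ``See, e.g., \cite{Miura2003,Miyao2016}''---so your direct argument from Definition~\ref{DefPPInq} is in fact more complete than what the paper provides, and it is exactly the standard elementary verification one would find in those references.
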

\begin{proof}
See, e.g., \cite{Miura2003,Miyao2016}. 
\end{proof}

Arguments combining the operator inequalities  with  limit operations can be  justified by the following lemma: 

\begin{Lemm}\label{Wcl}
Let $\{A_n\}_{n=1}^{\infty }$ and $A$ be 
bounded operators on $\mathfrak{X}$. If $A_n\unrhd 0$ w.r.t. $\p$ and $A_n$ weakly converges to $A$ as $n\to \infty$, then $A\unrhd 0$ w.r.t. $\p$ holds.
\end{Lemm}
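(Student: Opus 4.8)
\textbf{Proof plan for Lemma \ref{Wcl}.}

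The plan is to reduce the statement to the elementary fact that a weak limit of a sequence of nonnegative reals is nonnegative, combined with the characterization of $A\unrhd 0$ w.r.t.\ $\p$ through matrix elements between cone vectors. The key point is that $\p$ is a generating cone in the sense of Definition \ref{DefPPInq}(ii), so positivity preservation can be tested on pairs $u,v\in\p$ rather than on the cone action itself.

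First I would record the ``only if'' direction of the characterization: if $A\unrhd 0$ w.r.t.\ $\p$, then by Lemma \ref{PPBasic}(i) we have $\la u|Av\ra\ge 0$ for all $u,v\in\p$. I would then prove the converse: suppose $\la u|Av\ra\ge 0$ for all $u,v\in\p$; fix $v\in\p$ and write the vector $Av\in\X$ using Definition \ref{DefPPInq}(ii) as $Av=u_1-u_2+\im(u_1'-u_2')$ with $u_i,u_i'\in\p$ and $\la u_1|u_2\ra=\la u_1'|u_2'\ra=0$. Pairing against an arbitrary $u\in\p$, the hypothesis forces $\la u|Av\ra\ge 0$; taking $u=u_2$ gives $\la u_2|u_1\ra - \|u_2\|^2 + \im\la u_2|u_1'-u_2'\ra \ge 0$, and since this must be a nonnegative real, the imaginary part vanishes and $\|u_2\|^2\le\la u_2|u_1\ra=0$, so $u_2=0$; similarly testing against $u_2'$ and using that $\la u|Av\ra$ is automatically real and nonnegative for all $u\in\p$ (so in particular $\im\la u|Av\ra=0$ for all $u\in\p$, which kills the term $u_1'-u_2'$) yields $u_1'=u_2'=0$. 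Hence $Av=u_1\in\p$, i.e.\ $A\p\subseteq\p$. This establishes that $A\unrhd 0$ w.r.t.\ $\p$ if and only if $\la u|Av\ra\ge 0$ for every $u,v\in\p$.

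With this characterization in hand, the lemma is immediate: for any $u,v\in\p$ we have $\la u|A_n v\ra\ge 0$ for every $n$ by assumption and Lemma \ref{PPBasic}(i), and $A_n\to A$ weakly means in particular $\la u|A_n v\ra\to\la u|A v\ra$ as $n\to\infty$; a limit of nonnegative real numbers is a nonnegative real number, so $\la u|Av\ra\ge 0$. Since $u,v\in\p$ were arbitrary, the characterization gives $A\unrhd 0$ w.r.t.\ $\p$.

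I expect the only genuine obstacle to be the converse half of the characterization, specifically arguing cleanly that $\la u|Av\ra\ge 0$ for all $u\in\p$ forces the full decomposition of $Av$ to collapse to its $u_1$-part; the subtlety is that one must use both the vanishing of the imaginary part (testing against all of $\p$ rules out any contribution from $u_1'-u_2'$) and the orthogonality relations $\la u_1|u_2\ra=0$, $\la u_1'|u_2'\ra=0$ built into Definition \ref{DefPPInq}(ii). Everything else—the weak-convergence step and the reduction to matrix elements—is routine. (Alternatively, one could bypass the converse characterization entirely and argue directly: fix $v\in\p$, and for $w:=Av$ use the decomposition to isolate $u_2$; then $\|u_2\|^2=\lim_n\la u_2|(A-A_n)v\ra + \lim_n\la u_2|A_n v\ra \le \lim_n\la u_2|A_n v\ra$ plus an error term, but this is messier, so I would present the characterization-based proof.)
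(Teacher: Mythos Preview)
Your argument is correct. The paper itself does not give a proof of this lemma but simply cites an external reference (\cite[Proposition A.1]{Miyao2020}), so there is nothing to compare at the level of approach; the characterization you establish---$A\unrhd 0$ w.r.t.\ $\p$ if and only if $\la u|Av\ra\ge 0$ for all $u,v\in\p$---together with the weak-limit step is precisely the standard proof and is almost certainly what the cited reference contains.

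One small polish: in your converse argument, after you dispose of $u_2$ by testing against $u=u_2$, it is cleanest to test separately against $u=u_1'$ and $u=u_2'$. Using that $\la u|w\ra\ge 0$ for $u,w\in\p$ (Definition~\ref{DefPPInq}(i)), all cross-terms are real, so $\la u_1'|Av\ra$ has imaginary part $\|u_1'\|^2$ and $\la u_2'|Av\ra$ has imaginary part $-\|u_2'\|^2$; both must vanish, giving $u_1'=u_2'=0$ directly. Your parenthetical phrasing ``which kills the term $u_1'-u_2'$'' is slightly vague on this point, but the underlying computation is exactly this.
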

\begin{proof}
See, e.g., \cite[Proposition A.1]{Miyao2020}.
\end{proof}

The following lemma is useful in analyzing the ground state properties.
\begin{Lemm}\label{PPSP}
Assume that $A\in\mathscr{B}(\X)\ (A\neq 0)$ satisfies $A\unrhd0\wrt\p$. If $u\in \X$  satsfies $u>0\wrt\p$, then $Au\neq0$ holds.
\end{Lemm}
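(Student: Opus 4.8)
The statement is an immediate consequence of the structural axioms of a Hilbert cone together with Lemma~\ref{PPBasic}(iii), so I would prove it directly rather than by contradiction. First I would pass to the adjoint. Since $A\neq0$ we also have $A^*\neq0$, so there is some $w\in\X$ with $A^*w\neq0$. By property (ii) in Definition~\ref{DefPPInq} we may write $w=u_1-v_1+\im(u_1'-v_1')$ with $u_1,v_1,u_1',v_1'\in\p$; then
\[
A^*w=(A^*u_1-A^*v_1)+\im(A^*u_1'-A^*v_1'),
\]
so at least one of the four vectors $A^*u_1,A^*v_1,A^*u_1',A^*v_1'$ is nonzero. Renaming such a vector, we obtain an element $v\in\p$ with $A^*v\neq0$.

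Next, by Lemma~\ref{PPBasic}(iii) we have $A^*\unrhd0\wrt\p$, hence $A^*v\in\p$; combined with the previous step this gives $A^*v\in\p\setminus\{0\}$. Finally, since $u>0\wrt\p$ by hypothesis, the definition of strict positivity yields $\la u|A^*v\ra>0$. Using the adjoint relation $\la u|A^*v\ra=\la Au|v\ra$, we conclude $\la Au|v\ra>0$, which forces $Au\neq0$, as claimed.

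The argument is short and I do not anticipate a genuine obstacle. The only point needing care is the use of the decomposition axiom (ii) of a Hilbert cone to produce a single element of $\p$ not annihilated by $A^*$: one must observe that if $A^*$ killed all four components $u_1,v_1,u_1',v_1'$ it would kill $w$, contradicting the choice of $w$; this is immediate from linearity. Everything else is a direct application of Lemma~\ref{PPBasic}(iii) and the definition of $u>0\wrt\p$.
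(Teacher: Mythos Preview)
Your proof is correct. The paper does not actually supply a proof of this lemma; it only cites an external reference (\cite[Theorem A.7]{Miyao2020}). Your argument is a clean, self-contained derivation from the axioms in Definition~\ref{DefPPInq} and Lemma~\ref{PPBasic}(iii), and it is essentially the standard way to establish this fact: pass to $A^*$, use the decomposition axiom to find $v\in\p$ with $A^*v\neq0$, note $A^*v\in\p\setminus\{0\}$ by positivity preservation, and then pair with the strictly positive $u$. No step is missing.
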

\begin{proof}
See, e.g., \cite[Theorem A.7]{Miyao2020}
\end{proof}

Let $\mathfrak{X}_{\mathbb{R}}$ be the real subspace of $\mathfrak{X}$ generated by $\mathfrak{P}$.  If $A\in \mathscr{B}(\mathfrak{X})$ satisfies $A\mathfrak{X}_{\mathbb{R}} \subseteq \mathfrak{X}_{\mathbb{R}}$, then we say that $A$ {\it preserves the reality w.r.t. $\mathfrak{P}$}. Note that $A$ preserves the reality w.r.t. $\p$ if and only if $\la u|Av\ra\in \BbbR$ for every $u, v\in \mathfrak{X}_{\BbbR}$.

\begin{Def} \upshape
Let $A, B\in\mathscr B(\mathfrak{X})$ be reality preserving w.r.t. $\mathfrak{P}$.  
If   $A-B\unrhd 0$ holds, then we write this as  
$A\unrhd B \wrt \mathfrak{P}$.
In what follows, we understand that $A$ and $B$ are always assumed to be reality preserving when one writes $A\unrhd B$ w.r.t. $\mathfrak{P}$.
\end{Def}

The following two lemmas are useful for practical applications:
\begin{Lemm}\label{InqSeki}
Let $A,B,C,D\in\mathscr B(\mathfrak X)$. Suppose $A\unrhd B\unrhd0\wrt\mathfrak{P}$ and $C\unrhd D\unrhd0\wrt\mathfrak P$. Then we have $AC\unrhd BD\unrhd0\wrt\mathfrak P$.
\end{Lemm}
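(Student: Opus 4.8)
\textbf{Proof proposal for Lemma \ref{InqSeki}.}

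The plan is to reduce the product inequality to the two defining properties of the symbol $\unrhd$: positivity preservation with respect to $\mathfrak{P}$, and reality preservation with respect to $\mathfrak{P}$. Writing $A = B + (A-B)$ and $C = D + (C-D)$, where by hypothesis $B \unrhd 0$, $A-B \unrhd 0$, $D \unrhd 0$ and $C-D \unrhd 0$, I would first expand the product $AC = BD + B(C-D) + (A-B)D + (A-B)(C-D)$. The claim $AC \unrhd BD$ amounts to showing that $AC - BD \unrhd 0 \wrt \mathfrak{P}$, and from the expansion $AC - BD = B(C-D) + (A-B)D + (A-B)(C-D)$ is a sum of three products, each a product of two operators that are $\unrhd 0 \wrt \mathfrak{P}$.

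The key step is then an application of Lemma \ref{PPBasic}: part (iv) of that lemma gives that each of $B(C-D)$, $(A-B)D$, and $(A-B)(C-D)$ is $\unrhd 0 \wrt \mathfrak{P}$, and part (ii) (with $a = b = 1$) gives that the sum of finitely many such operators is again $\unrhd 0 \wrt \mathfrak{P}$. Hence $AC - BD \unrhd 0 \wrt \mathfrak{P}$. The same argument applied to $BD = 0 + BD$ — more directly, Lemma \ref{PPBasic}(iv) applied to $B \unrhd 0$ and $D \unrhd 0$ — yields $BD \unrhd 0 \wrt \mathfrak{P}$, completing the chain $AC \unrhd BD \unrhd 0 \wrt \mathfrak{P}$.

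The one point that needs a word of care, and which I expect to be the only genuine (if minor) obstacle, is the bookkeeping of the reality-preservation convention: the notation $A \unrhd B \wrt \mathfrak{P}$ presupposes that $A$ and $B$ both preserve the reality w.r.t. $\mathfrak{P}$, i.e. $A\mathfrak{X}_{\mathbb{R}} \subseteq \mathfrak{X}_{\mathbb{R}}$ and likewise for $B$, $C$, $D$. Since $\mathfrak{X}_{\mathbb{R}}$ is a linear subspace, reality preservation is closed under sums and differences, so $A - B$ and $C - D$ preserve the reality; and it is closed under composition, so all of $AC$, $BD$, $B(C-D)$, $(A-B)D$, $(A-B)(C-D)$ preserve the reality as well. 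Thus every operator appearing in the argument legitimately lies in the class for which the symbol $\unrhd$ is defined, and the statement $AC \unrhd BD \unrhd 0 \wrt \mathfrak{P}$ is meaningful. This verification is routine and can be dispatched in a single sentence in the write-up.
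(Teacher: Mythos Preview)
Your proof is correct. The paper itself does not give an argument for this lemma but simply cites external references (Miura 2003, Miyao 2016); your expansion $AC - BD = B(C-D) + (A-B)D + (A-B)(C-D)$ followed by repeated application of Lemma~\ref{PPBasic} is precisely the standard proof one finds in those sources, and your remark on reality preservation correctly handles the only subtlety.
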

\begin{proof}
For proof, see, e.g., \cite{Miura2003,Miyao2016}. 
\end{proof}

\begin{Lemm}\label{ppiexp1}
Let $A,B$ be self-adjoint operators on $\mathfrak{X}$. Assume that $A$ is bounded from below and    $B\in\mathscr B(\mathfrak{X})$. Furthermore, suppose that $e^{-tA}\unrhd0\wrt\mathfrak{P}$ for all $t\geq0$ and $B\unrhd0\wrt\mathfrak{P}$. Then we have $e^{-t(A-B)}\unrhd e^{-tA}\wrt\mathfrak{P}$ for all $t\geq0$.
\end{Lemm}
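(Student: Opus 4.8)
The plan is to prove the Duhamel-type comparison $e^{-t(A-B)} \unrhd e^{-tA}$ w.r.t. $\mathfrak{P}$ by combining the Trotter product formula with the positivity-preservation lemmas (Lemma \ref{PPBasic} and Lemma \ref{Wcl}) already available. The starting observation is that for each $t\ge 0$, $e^{tB} \unrhd 0$ w.r.t. $\mathfrak{P}$: since $B$ is bounded and $B\unrhd 0$, the norm-convergent series $e^{tB}=\sum_{k\ge 0}\frac{t^k}{k!}B^k$ has every partial sum positivity preserving by parts (ii) and (iv) of Lemma \ref{PPBasic}, and the limit is positivity preserving by Lemma \ref{Wcl} (norm convergence implies weak convergence). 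One also checks that $e^{tB}$ preserves the reality w.r.t. $\mathfrak{P}$, since $B$ does and $\mathfrak{X}_\mathbb{R}$ is closed; similarly $e^{-tA}$ and $e^{-t(A-B)}$ preserve the reality, so the inequality $\unrhd$ between them is meaningful.

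Next I would invoke the Trotter--Kato product formula: since $A$ is self-adjoint and bounded below and $B$ is bounded self-adjoint, $A-B$ is self-adjoint and bounded below, and
\begin{equation}
e^{-t(A-B)} = \mathrm{s\text{-}lim}_{n\to\infty}\Big(e^{-\frac{t}{n}A}\, e^{\frac{t}{n}B}\Big)^n.
\end{equation}
For each fixed $n$, both $e^{-\frac{t}{n}A}\unrhd 0$ (hypothesis) and $e^{\frac{t}{n}B}\unrhd 0$ (previous paragraph), so by Lemma \ref{PPBasic}(iv) the product $e^{-\frac{t}{n}A} e^{\frac{t}{n}B}\unrhd 0$, and hence its $n$-th power $\big(e^{-\frac{t}{n}A} e^{\frac{t}{n}B}\big)^n \unrhd 0$ w.r.t. $\mathfrak{P}$, again by iterating Lemma \ref{PPBasic}(iv). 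Strong convergence implies weak convergence, so Lemma \ref{Wcl} gives $e^{-t(A-B)}\unrhd 0$ w.r.t. $\mathfrak{P}$. That already shows $e^{-t(A-B)}$ is positivity preserving, but the goal is the sharper statement that $e^{-t(A-B)}-e^{-tA}\unrhd 0$.

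For the difference, I would use the integrated Duhamel formula
\begin{equation}
e^{-t(A-B)} - e^{-tA} = \int_0^t e^{-(t-s)(A-B)}\, B\, e^{-sA}\, ds,
\end{equation}
valid as a Bochner integral of bounded operators, with the integral converging strongly. Each integrand $e^{-(t-s)(A-B)} B\, e^{-sA}$ is a product of three operators each $\unrhd 0$ w.r.t. $\mathfrak{P}$ — the first by the previous paragraph, the middle by hypothesis on $B$, the third by hypothesis on $e^{-sA}$ — hence is $\unrhd 0$ by Lemma \ref{PPBasic}(iv). A Riemann-sum approximation of the integral, being a nonnegative linear combination (via Lemma \ref{PPBasic}(ii)) of positivity-preserving operators, is $\unrhd 0$, and since the Riemann sums converge strongly, hence weakly, Lemma \ref{Wcl} yields $e^{-t(A-B)}-e^{-tA}\unrhd 0$, which is the claim. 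The main obstacle is justifying the operator-valued identities rigorously when $A$ is unbounded: the Duhamel formula requires care (one can first prove it by differentiating $s\mapsto e^{-(t-s)(A-B)}e^{-sA}$ on vectors in $\mathrm{dom}(A)$ and then extending by density, or alternatively avoid it entirely by expanding $\big(e^{-\frac{t}{n}A}e^{\frac{t}{n}B}\big)^n - \big(e^{-\frac{t}{n}A}\big)^n$ as a telescoping sum of terms each manifestly $\unrhd 0$ and passing to the Trotter limit). The telescoping-sum route is the cleanest, since it stays entirely within bounded-operator manipulations at each finite $n$ and only uses Lemmas \ref{PPBasic} and \ref{Wcl}, so I would present that as the main line of argument.
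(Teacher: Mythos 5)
Your proposal is correct, and the route you sketch (Trotter product approximants combined with the Duhamel expansion, each reduced to products and nonnegative combinations of positivity-preserving operators via Lemmas~\ref{PPBasic} and~\ref{Wcl}) is precisely the standard argument used in the cited reference \cite[Theorem~A.3]{Miyao2020}, which the paper itself invokes repeatedly in Sections~\ref{Sec5}--\ref{Sec8} via the full Duhamel series $e^{-t(A-B)}=\sum_{n\ge 0}\int_{\Delta_n(t)}B(s_1)\cdots B(s_n)e^{-tA}\,d^n\bs s$. The telescoping variant you favor is a slightly more elementary packaging of the same idea, and your observation that $\unrhd 0$ automatically entails reality preservation (so the comparison $\unrhd$ is meaningful) closes the only point one might otherwise overlook.
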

\begin{proof}
See, e.g., \cite[Theorem A.3]{Miyao2020}.
\end{proof}

\begin{Def}
\upshape
Let $A$ be a self-adjoint operator on $\mathfrak{X}$,   bounded from below.
The semigroup  generated by $A$, $\{e^{-tA}\}_{t\ge 0}$,  is said to be {\it ergodic  w.r.t.} $\mathfrak{P}$, if the following (i) and (ii) are satisfied:
\begin{itemize}
\item[(i)] $e^{-tA} \unrhd 0$ w.r.t. $\mathfrak{P}$ for all $t\ge 0$;
\item[(ii)] for each $u, v\in \mathfrak{P} \setminus \{0\}$, there exists a $t\ge 0$
such that $\langle u| e^{-tA} v\rangle >0$. Note that $t$ could depend on $u$ and $v$.
\end{itemize}

\end{Def}

The following lemma   immediately follows  from the definitions:
\begin{Lemm}
Let $A$ be a self-adjoint operator on $\mathfrak{X}$, bounded from below. If $e^{-tA}\rhd 0$ w.r.t. $\mathfrak{P}$ for all $t>0$, then the semigroup  $\{e^{-tA}\}_{t\ge 0}$ is ergodic w.r.t. $\mathfrak{P}$.
\end{Lemm}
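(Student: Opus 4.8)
The plan is to verify the two defining conditions of ergodicity directly from the hypothesis that $e^{-tA}\rhd 0$ w.r.t.\ $\mathfrak{P}$ for all $t>0$, so the argument is short and essentially bookkeeping.

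For condition (i), I would first record the implication that positivity improving entails positivity preserving. Indeed, if $e^{-tA}\rhd 0$ w.r.t.\ $\mathfrak{P}$, then for every $u\in\mathfrak{P}\setminus\{0\}$ the vector $e^{-tA}u$ is strictly positive w.r.t.\ $\mathfrak{P}$, and since $\mathfrak{P}$ is a Hilbert cone (hence self-dual) a strictly positive vector necessarily belongs to $\mathfrak{P}$; together with $e^{-tA}0=0\in\mathfrak{P}$ this gives $e^{-tA}\mathfrak{P}\subseteq\mathfrak{P}$, i.e.\ $e^{-tA}\unrhd 0$ w.r.t.\ $\mathfrak{P}$ for every $t>0$. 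For $t=0$ one has $e^{-0\cdot A}=I$, which satisfies $I\mathfrak{P}=\mathfrak{P}\subseteq\mathfrak{P}$ trivially; alternatively, since $e^{-tA}\to I$ strongly (hence weakly) as $t\downarrow 0$ and each $e^{-tA}\unrhd 0$, Lemma \ref{Wcl} yields $I\unrhd 0$ w.r.t.\ $\mathfrak{P}$. Thus condition (i) holds for all $t\ge 0$.

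For condition (ii), fix $u,v\in\mathfrak{P}\setminus\{0\}$ and choose any $t>0$ (for instance $t=1$). Since $e^{-tA}\rhd 0$ w.r.t.\ $\mathfrak{P}$ and $v\in\mathfrak{P}\setminus\{0\}$, the vector $e^{-tA}v$ is strictly positive w.r.t.\ $\mathfrak{P}$, so by definition $\langle e^{-tA}v\mid w\rangle>0$ for every $w\in\mathfrak{P}\setminus\{0\}$; taking $w=u$ gives $\langle e^{-tA}v\mid u\rangle>0$. Because $e^{-tA}$ is positivity preserving it preserves the reality w.r.t.\ $\mathfrak{P}$, hence $e^{-tA}v$ and $u$ both lie in $\mathfrak{X}_{\mathbb{R}}$ and $\langle u\mid e^{-tA}v\rangle=\langle e^{-tA}v\mid u\rangle>0$. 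This is exactly condition (ii). Combining (i) and (ii) shows that $\{e^{-tA}\}_{t\ge 0}$ is ergodic w.r.t.\ $\mathfrak{P}$.

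There is no genuine obstacle in this argument. The only point deserving a moment's attention is the implication ``$\rhd$ implies $\unrhd$'', which relies on the self-duality built into the definition of a Hilbert cone (equivalently, it is part of the conventions of \cite{Miura2003}); everything else is immediate from the definitions and, where a limit in $t$ is invoked, from Lemma \ref{Wcl}.
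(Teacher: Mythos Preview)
Your proposal is correct and matches the paper's approach: the paper simply asserts that the lemma ``immediately follows from the definitions,'' and you have written out precisely that verification. The only remark is that your route from $\langle e^{-tA}v\mid u\rangle>0$ to $\langle u\mid e^{-tA}v\rangle>0$ via reality preservation is a slight detour; since these two quantities are complex conjugates and the first is a positive real number, the second equals it.
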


The following theorem is employed in showing the uniqueness of the ground state:
\begin{Thm}[Perron--Frobenius--Faris]\label{pff}
Let $A$ be a self-adjoint operator, bounded from below. 
Set $E(A)=\inf \mathrm{spec}(A)$, where $\mathrm{spec}(A)$ indicates the spectrum of $A$.
Assume that $E(A)$ is an eigenvalue of $A$.  If $\{e^{-tA}\}_{t\ge 0}$ is ergodic w.r.t. $\mathfrak{P}$, then $\dim \ker(A-E(A))=1$ and $\ker(A-E(A))$ is spanned by a   strictly positive vector w.r.t. $\mathfrak{P}$.
\end{Thm}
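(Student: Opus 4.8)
The plan is to follow the classical Perron--Frobenius--Faris argument, adapted to the abstract Hilbert-cone setting. First I would reduce to the case $E(A)=0$ by replacing $A$ with $A-E(A)$: this multiplies each $e^{-tA}$ by the positive scalar $e^{tE(A)}$, so it affects neither the hypothesis $e^{-tA}\unrhd0\wrt\p$ nor ergodicity, and now $A\ge0$, hence $0\le e^{-tA}\le I$ for all $t\ge0$. Observe also that $e^{-tA}\unrhd0\wrt\p$ implies each $e^{-tA}$ preserves the reality w.r.t. $\p$, hence so does $A$; thus $\ker A$ is invariant under the conjugation fixing $\X_{\BbbR}$, and we may pick a ground state $\psi\in\X_{\BbbR}\setminus\{0\}$ with $A\psi=0$, i.e. $e^{-tA}\psi=\psi$ for all $t\ge0$.

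Next I would manufacture a positive ground state. Using the Hilbert-cone structure, write $\psi=\psi_+-\psi_-$ with $\psi_\pm\in\p$ and $\i<\psi_+|\psi_->=0$, and set $|\psi|:=\psi_++\psi_-$, so $\||\psi|\|=\|\psi\|$. Expanding both sides and invoking $e^{-tA}\unrhd0\wrt\p$ together with Lemma~\ref{PPBasic}(i) (which makes all four terms $\i<\psi_\pm|e^{-tA}\psi_\pm>$ nonnegative) yields
\[
\langle\,|\psi|\mid e^{-tA}|\psi|\,\rangle-\langle\psi\mid e^{-tA}\psi\rangle=2\langle\psi_+\mid e^{-tA}\psi_-\rangle+2\langle\psi_-\mid e^{-tA}\psi_+\rangle\ \ge\ 0 .
\]
Since $e^{-tA}\psi=\psi$, the subtrahend equals $\|\psi\|^2=\||\psi|\|^2$, while $\langle\,|\psi|\mid e^{-tA}|\psi|\,\rangle\le\||\psi|\|^2$ because $e^{-tA}\le I$. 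Hence $\langle\,|\psi|\mid(I-e^{-tA})|\psi|\,\rangle=0$ for every $t\ge0$; as $I-e^{-tA}\ge0$ this gives $(I-e^{-tA})^{1/2}|\psi|=0$, so $e^{-tA}|\psi|=|\psi|$ for all $t$, and the spectral theorem then forces $|\psi|\in\ker A$. Adding and subtracting $\psi$ shows moreover $\psi_+,\psi_-\in\ker A$.

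Now ergodicity enters through the following observation: if $\Omega\in\ker A\cap\p$, $\Omega\ne0$, and $w\in\p\setminus\{0\}$ with $\i<w|\Omega>=0$, then ergodicity gives $\i<w|e^{-tA}\Omega>>0$ for some $t\ge0$, contradicting $e^{-tA}\Omega=\Omega$; hence every nonzero element of $\ker A\cap\p$ is strictly positive w.r.t. $\p$. For uniqueness, suppose $\dim\ker A\ge2$. Using conjugation-invariance, choose $\phi\in\ker A\cap\X_{\BbbR}$, $\phi\ne0$, with $\i<|\psi|\mid\phi>=0$. Applying the construction of the previous paragraph to $\phi$ gives $\phi_+,\phi_-\in\ker A$; one of them, say $\phi_+$, is nonzero, so $\phi_+>0\wrt\p$, and if $\phi_-\ne0$ as well then $\i<\phi_+|\phi_->>0$, contradicting $\i<\phi_+|\phi_->=0$. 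So $\phi_-=0$, $\phi=\phi_+>0\wrt\p$, whence $\i<\phi\mid|\psi|>>0$ because $|\psi|\in\p\setminus\{0\}$ --- contradicting $\i<|\psi|\mid\phi>=0$. Therefore $\dim(\ker A\cap\X_{\BbbR})=1$; complexifying, $\dim\ker A=1$, and $\ker A=\BbbC|\psi|$ is spanned by the strictly positive vector $|\psi|$.

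The only non-mechanical step is the passage from $e^{-tA}|\psi|=|\psi|$ for all $t\ge0$ to $|\psi|\in\ker A$; this is where $A\ge0$ and the spectral theorem are genuinely used (equivalently, that the fixed-point set of the self-adjoint semigroup coincides with $\ker A$, which also settles the attendant domain question automatically). Everything else is bookkeeping with the Hilbert-cone decomposition, Lemma~\ref{PPBasic}(i), and the elementary bound $0\le e^{-tA}\le I$.
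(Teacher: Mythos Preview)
Your argument is correct and is precisely the classical Perron--Frobenius--Faris proof. The paper itself does not give a proof of this theorem: it simply writes ``See, e.g., \cite{deimling1985nonlinear, Faris1972}'', treating the result as known. What you have written is essentially the argument that appears in those references (in particular Faris), specialized to the abstract Hilbert-cone framework of Definition~\ref{DefPPInq}: the Jordan-type decomposition $\psi=\psi_+-\psi_-$ in $\p$, the variational inequality $\langle\,|\psi|\,|\,e^{-tA}|\psi|\,\rangle\ge\langle\psi|e^{-tA}\psi\rangle$, and the use of ergodicity to upgrade positivity to strict positivity and then to force one-dimensionality. The only point worth a one-line remark in a polished write-up is that the $J$-invariance of $\ker A$ (needed to produce a real ground state and, later, a real orthogonal $\phi$) follows most cleanly from $e^{-tA}J=Je^{-tA}$ on all of $\X$, which avoids any domain question about $A$ itself; you already implicitly use this route when you characterize $\ker A$ as the fixed-point set of the semigroup.
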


\begin{proof}
See, e.g.,  \cite{ deimling1985nonlinear, Faris1972}.
\end{proof}

\subsection{Fiber direct integral of Hilbert cones}

This subsection summarizes the basic properties of the fiber direct integral of Hilbert cones. For details, see \cite{Bs1976}.

Let $\mathfrak{X}$ be a complex Hilbert space and let $(M, \mathfrak{M}, \mu)$ be a $\sigma$-finite measure space.
The Hilbert space of  $L^2(M, d\mu; \mathfrak{X})$ of square integrable $\mathfrak{X}$-valued functions 
is called a {\it constant fiber direct integral}, and is written as 
$
\int^{\oplus}_M\mathfrak{X} d\mu
$ \cite[Section XIII.16]{Reed1978}.
The inner product on $\int^{\oplus}_M\mathfrak{X} d\mu$ is  given by 
$
\la \Phi|\Psi\ra=\int_M\la \Phi(m)|\Psi(m)\ra_{\mathfrak{X} }d\mu,
$
where $\la \cdot |\cdot \ra_{\mathfrak{X}}$ is the inner product on $\mathfrak{X}$.
As is well-known,   $L^2(M, d\mu; \mathfrak{X})$ can be naturally identified with $\mathfrak{X} \otimes L^2(M, d\mu)$:
\begin{align}
\mathfrak{X}\otimes L^2(M, d\mu)=\int_M^{\oplus} \mathfrak{X} d\mu. \label{TensorIdn}
\end{align}

We denote by 
$L^{\infty}(M, d\mu; \mathscr{B}(\mathfrak{X}))$  the space of measurable functions from $M$ to $\mathscr{B}(\mathfrak{X})$ with the norm: 
\begin{align}
\|A\|_{\infty}=\mathrm{ess.sup} \|A(m)\|_{\mathscr{B}(\mathfrak{X})}.
\end{align}
A bounded operator $A$ on $\int^{\oplus}_M\mathfrak{X} d\mu$ is said to be decomposed by the direct integral decomposition, if and only if there is a function $A(\cdot)\in L^{\infty}(M, d\mu; \mathscr{B}(\mathfrak{X}))$ such that 
\begin{align}
(A\Psi)(m)=A(m)\Psi(m),\quad  \Psi\in \int^{\oplus}_M\mathfrak{X} d\mu. \label{Decomp1}
\end{align}
In this case, we call $A$ {\it decomposable} and write 
\begin{align}
A=\int^{\oplus}_MA(m)d\mu.
\end{align}

The following simple lemma is frequently used in applications:
\begin{Lemm}
Let $B\in \mathscr{B}(\mathfrak{X})$. Under the identification (\ref{TensorIdn}), we have
\begin{align}
B\otimes \mathbbm{1}=\int^{\oplus}_MBd\mu. \label{BOI}
\end{align}
\end{Lemm}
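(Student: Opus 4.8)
The plan is to verify the asserted identity on a dense set of elementary tensors and then extend it to the whole space by boundedness and continuity. First I would observe that the constant map $m \mapsto B$ trivially belongs to $L^{\infty}(M, d\mu; \mathscr{B}(\mathfrak{X}))$, with $\|B\|_{\infty} = \|B\|_{\mathscr{B}(\mathfrak{X})}$; hence the right-hand side $\int_M^{\oplus} B\, d\mu$ is a genuinely well-defined decomposable bounded operator on $\int_M^{\oplus} \mathfrak{X}\, d\mu$ in the sense of (\ref{Decomp1}), namely $\big(\int_M^{\oplus} B\, d\mu\, \Psi\big)(m) = B\,\Psi(m)$ for $\Psi \in \int_M^{\oplus}\mathfrak{X}\,d\mu$.

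Next, recall that under the identification (\ref{TensorIdn}) an elementary tensor $\psi \otimes f$, with $\psi \in \mathfrak{X}$ and $f \in L^2(M, d\mu)$, corresponds to the $\mathfrak{X}$-valued function $m \mapsto f(m)\,\psi$. Applying $B \otimes \mathbbm{1}$ yields $(B\psi) \otimes f$, which corresponds to the function $m \mapsto f(m)\,(B\psi) = B\big(f(m)\,\psi\big)$, that is, to $m \mapsto B\,(\psi \otimes f)(m)$. By the description of decomposable operators recalled above, $\big(\int_M^{\oplus} B\, d\mu\big)(\psi \otimes f)$ is precisely this same function. Thus $B \otimes \mathbbm{1}$ and $\int_M^{\oplus} B\, d\mu$ agree on every elementary tensor, and by linearity on their finite linear span.

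Finally, the span of the elementary tensors $\psi \otimes f$ is dense in $\mathfrak{X} \otimes L^2(M, d\mu)$, and both operators $B \otimes \mathbbm{1}$ and $\int_M^{\oplus} B\, d\mu$ are bounded (the latter with norm $\|B\|_{\mathscr{B}(\mathfrak{X})}$). Since a bounded operator is uniquely determined by its restriction to a dense subspace, the identity (\ref{BOI}) holds on all of $\int_M^{\oplus} \mathfrak{X}\, d\mu$.

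Every step here is elementary, so there is no substantive obstacle; the only point deserving a moment's care is the compatibility of the tensor-product identification (\ref{TensorIdn}) with the fiberwise picture — i.e. that $\psi \otimes f$ is represented by the function $m \mapsto f(m)\,\psi$ — but this is exactly the defining property of that identification, so nothing further is required.
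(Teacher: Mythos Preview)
Your argument is correct. The paper itself states this lemma without proof, treating it as an elementary and standard fact; your verification on elementary tensors followed by extension by density and boundedness is exactly the natural way to justify it.
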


Given a Hilbert cone  $\p $  in $\mathfrak{X}$, we set 
\begin{align}
\int^{\oplus}_M \p  d\mu=\bigg\{
\Psi\in \int^{\oplus}_M \mathfrak{X} d\mu\, :\, \mbox{$\Psi(m) \ge 0$ w.r.t. $\p $ for $\mu$-a.e.} 
\bigg\}.
\end{align}
It is not hard to check that $ \int^{\oplus}_M\p  d\mu$ is a Hilbert cone cone in $ \int^{\oplus}_M\mathfrak{X} d\mu$.
We call $\int^{\oplus}_M\p  d\mu$ a {\it  direct integral of $\p $}.

\begin{Prop}\label{BasicDP}
Let $A=\int^{\oplus}_MA(m) d\mu$ be a decomposable operator on $\int^{\oplus}_M\mathfrak{X} d\mu$.
If $A(m) \unrhd 0$ w.r.t. $\p $ for $\mu$-a.e., then
$A\unrhd 0$ w.r.t. $\int^{\oplus}_M\p  d\mu$.
\end{Prop}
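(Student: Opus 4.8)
The plan is to verify the two defining conditions of a Hilbert cone are respected fibrewise, and then to lift them through the integral. First I would recall from the hypothesis that there is a function $A(\cdot)\in L^{\infty}(M,d\mu;\mathscr B(\mathfrak X))$ representing $A$ as in \eqref{Decomp1}, and that $A(m)\unrhd 0$ w.r.t.\ $\p$ for $\mu$-a.e.\ $m$. I would also record the straightforward fact that $\int^{\oplus}_M\p\, d\mu$ is indeed a Hilbert cone in $\int^{\oplus}_M\mathfrak X\, d\mu$, so that the statement $A\unrhd 0$ w.r.t.\ $\int^{\oplus}_M\p\,d\mu$ is meaningful; this amounts to checking Definition \ref{DefPPInq}(i) and (ii) pointwise and integrating, which is routine.

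The core of the argument is the positivity-preserving property itself. Fix $\Psi\in\int^{\oplus}_M\p\,d\mu$, so $\Psi(m)\ge 0$ w.r.t.\ $\p$ for $\mu$-a.e.\ $m$. By \eqref{Decomp1} we have $(A\Psi)(m)=A(m)\Psi(m)$ for $\mu$-a.e.\ $m$, and since $A(m)\p\subseteq\p$ for $\mu$-a.e.\ $m$, the vector $A(m)\Psi(m)$ lies in $\p$ for $\mu$-a.e.\ $m$; here one uses that the intersection of two sets each of full measure is again of full measure, so the ``a.e.'' conditions on $A(m)$ and on $\Psi(m)$ can be combined. Hence $(A\Psi)(m)\ge 0$ w.r.t.\ $\p$ for $\mu$-a.e.\ $m$, which by definition means $A\Psi\in\int^{\oplus}_M\p\,d\mu$. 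Since $\Psi$ was an arbitrary element of $\int^{\oplus}_M\p\,d\mu$, this gives $A\big(\int^{\oplus}_M\p\,d\mu\big)\subseteq\int^{\oplus}_M\p\,d\mu$, i.e.\ $A\unrhd 0$ w.r.t.\ $\int^{\oplus}_M\p\,d\mu$.

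The only genuine subtlety to watch is measurability: one should confirm that $m\mapsto A(m)\Psi(m)$ is a measurable $\mathfrak X$-valued function, so that the pointwise membership in $\p$ is meaningful and $A\Psi$ really is an element of the direct integral. This is immediate from $\Psi\in\int^{\oplus}_M\mathfrak X\,d\mu$ and $A(\cdot)\in L^{\infty}(M,d\mu;\mathscr B(\mathfrak X))$ together with the definition \eqref{Decomp1} of a decomposable operator, and in fact $A\Psi\in\int^{\oplus}_M\mathfrak X\,d\mu$ because $A$ is bounded on the direct integral. I expect this measurability bookkeeping (plus the combination of null sets) to be the ``hard'' part only in the bureaucratic sense; conceptually there is no obstacle, and the proof is essentially a one-line pointwise argument. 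One may therefore simply cite \cite{Bs1976} for the underlying measurable-selection framework and present the fibrewise inclusion as above.
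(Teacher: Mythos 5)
Your proof is correct and follows essentially the same one-line fibrewise argument as the paper: for $\Psi\in\int^{\oplus}_M\p\,d\mu$, use $(A\Psi)(m)=A(m)\Psi(m)\ge 0$ w.r.t.\ $\p$ for $\mu$-a.e.\ $m$ and conclude. The additional remarks on null-set bookkeeping and measurability are harmless padding but not a different route.
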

\begin{proof}For each $\Psi\in \int^{\oplus} _M\p  d\mu$, we have
$
(A\Psi)(m)=A(m) \Psi(m)\ge 0 $ w.r.t. $\p $ for $\mu$-a.e..
Hence, $A\Psi \ge 0$ w.r.t. $\int^{\oplus}_M\p  d\mu$. 
\end{proof}

The following basic proposition is often useful:
\begin{Prop}\label{BasicDP2}
Under the identification (\ref{TensorIdn}), we have the following:
\begin{itemize}
\item[\rm (i)] Let $B\in \mathscr{B}(\mathfrak{X})$. If $B\unrhd 0$ w.r.t. $\p $, then
$B\otimes \mathbbm{1} \unrhd 0$ w.r.t. $\int^{\oplus}_M\p  d\mu$. 
\item[\rm (ii)] Let $C$ be a bounded linear operator on $L^2(M, d\mu)$. Let $L^2_+(M, d\mu)$  be the  Hilbert cone in $L^2(M, d\mu)$ given by  \be L^2_+(M, d\mu)=\{f\in L^2(M, d\mu) : f(m) \ge 0\ \mbox{a.e. $\mu$}\}.
\ee
If $C\unrhd 0$ w.r.t. $L^2(M, d\mu)_+$, then $\mathbbm{1}\otimes C \unrhd 0$ w.r.t. $\int^{\oplus}_M\p  d\mu$.
\end{itemize}
\end{Prop}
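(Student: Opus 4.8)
\textbf{Proof proposal for Proposition \ref{BasicDP2}.}

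The plan is to derive both statements from Proposition \ref{BasicDP}, using the tensor identification \eqref{TensorIdn} and the lemma \eqref{BOI} that rewrites the relevant operators as decomposable operators. For part (i), the key observation is that under \eqref{TensorIdn} we have $B\otimes\mathbbm{1}=\int^{\oplus}_M B\, d\mu$ by the identity \eqref{BOI}; that is, $B\otimes\mathbbm{1}$ is decomposable with constant fiber $A(m)=B$ for every $m$. Since by hypothesis $B\unrhd 0$ w.r.t. $\p$, the fiberwise positivity-preservation holds trivially for $\mu$-a.e.\ $m$, and Proposition \ref{BasicDP} immediately yields $B\otimes\mathbbm{1}\unrhd 0$ w.r.t. $\int^{\oplus}_M\p\, d\mu$. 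One should only note in passing that $B\otimes\mathbbm{1}$ is reality preserving w.r.t. $\int^\oplus_M \p\, d\mu$, which follows because $B$ is reality preserving w.r.t. $\p$ and the real subspace generated by $\int^\oplus_M \p\, d\mu$ is $\int^\oplus_M \X_\BbbR\, d\mu$ (i.e. the $L^2$-functions valued $\mu$-a.e.\ in $\X_\BbbR$); this is routine.

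For part (ii), the idea is dual: the operator $\mathbbm{1}\otimes C$ is not decomposable in the fiber direct integral $\int^\oplus_M \X\, d\mu$ in the sense of \eqref{Decomp1} (it acts on the $L^2(M,d\mu)$ factor, not the $\X$ factor), so I cannot apply Proposition \ref{BasicDP} directly. Instead I would argue by reducing to the scalar case. Fix an orthonormal basis $\{e_j\}_j$ of $\X$ adapted to $\p$ — more precisely, since the real span $\X_\BbbR$ of $\p$ carries the order structure, it suffices to test positivity on a spanning family of $\p$. Concretely, for $\Psi,\Phi\in\int^\oplus_M\p\, d\mu$, write the inner product $\la\Psi|(\mathbbm{1}\otimes C)\Phi\ra$ and use Fubini together with the fact that $C$ acts only on the measure-space variable. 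The cleanest route: for $u,v\in\p$ and $f,g\in L^2_+(M,d\mu)$, the elementary tensors $u\otimes f$ and $v\otimes g$ lie in $\int^\oplus_M\p\, d\mu$, and
\[
\la u\otimes f\,|\,(\mathbbm{1}\otimes C)(v\otimes g)\ra=\la u|v\ra_\X\,\la f|Cg\ra_{L^2(M,d\mu)}\ge 0,
\]
since $\la u|v\ra_\X\ge 0$ by property (i) of a Hilbert cone and $\la f|Cg\ra\ge 0$ by $C\unrhd 0$ w.r.t. $L^2_+(M,d\mu)$ and Lemma \ref{PPBasic}(i). The remaining work is to upgrade this from elementary tensors of the special form $u\otimes f$ to all of $\int^\oplus_M\p\, d\mu$: one shows that the closed convex cone generated by $\{u\otimes f : u\in\p,\ f\in L^2_+(M,d\mu)\}$ is exactly $\int^\oplus_M\p\, d\mu$, and then invokes Lemma \ref{Wcl} (or simply the fact that $\mathbbm{1}\otimes C$ maps this generating set into itself, then take closed convex hulls) to conclude $(\mathbbm{1}\otimes C)\Phi\ge 0$ w.r.t. $\int^\oplus_M\p\, d\mu$ for all $\Phi$ in the cone. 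Reality preservation of $\mathbbm{1}\otimes C$ is again routine since $C$ maps $L^2_\BbbR=L^2(M,d\mu)$ (real-valued functions) into itself.

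The main obstacle is the density/approximation step in part (ii): verifying that simple tensors $u\otimes f$ with $u\ge 0$, $f\ge 0$ generate (the closed convex hull of) $\int^\oplus_M\p\, d\mu$. This is where one has to be slightly careful with measurability — a general element $\Psi\in\int^\oplus_M\p\, d\mu$ is an $\X$-valued function that is $\mu$-a.e.\ in $\p$, and one must approximate it in $L^2$-norm by finite nonnegative combinations $\sum_k u_k\otimes f_k$ with $u_k\in\p$, $f_k\in L^2_+$. The standard device is to approximate $\Psi$ first by simple functions $\sum_k u_k\,\mathbbm{1}_{E_k}$ with $u_k\in\p$ and $E_k$ measurable of finite measure (using separability of $\X$ and the fact that $\p$ is closed), and then note $\mathbbm{1}_{E_k}\in L^2_+(M,d\mu)$; the approximation can be taken in $L^2(M,d\mu;\X)$-norm. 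Passing the inequality through this approximation is exactly what Lemma \ref{Wcl} is designed for, applied to the sequence of operators or, more simply, applied directly to the vectors via continuity of the inner product. Everything else is a direct appeal to the lemmas already assembled in this section.
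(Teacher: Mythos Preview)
Your argument is correct. The paper does not give a self-contained proof here---it simply cites \cite[Corollary I.4, Proposition I.5]{Miyao2016(2)}---so there is nothing substantive to compare against; your proposal effectively supplies the details behind that citation. Part (i) is exactly the intended one-line application of \eqref{BOI} and Proposition \ref{BasicDP}. For part (ii), your direct route (observe $(\mathbbm{1}\otimes C)(v\otimes g)=v\otimes Cg$ stays in the generating set $\{u\otimes f:u\in\p,\ f\in L^2_+\}$, then pass to the closed conical hull) is the cleanest, and the density step you flag as the main obstacle is precisely the identity the paper itself records later as \eqref{PExp2} (and likewise proves only by citation). Two minor remarks: the reality-preservation comments are unnecessary, since $A\unrhd 0$ in Definition \ref{DefPPInq} just means $A\p\subseteq\p$ with no reality hypothesis; and your invocation of Lemma \ref{Wcl} is slightly off (that lemma is about weak operator limits), but as you note yourself, ordinary norm continuity of the bounded operator $\mathbbm{1}\otimes C$ together with closedness of the cone suffices.
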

\begin{proof}
See, e.g., \cite[Corollary I.4, Proposition I.5]{Miyao2016(2)}.
\end{proof}

\subsection{Operator inequalities in $\mathscr{L}^2(\mathfrak{X})$ }

This subsection introduces a particular Hilbert cone, which is useful for studying many-electron systems and describes its fundamental properties.

Let  $\mathscr{L}^2(\mathfrak{X})$ be the set of all Hilbert--Schmidt operators on $\mathfrak{X}$:
 \be
 \mathscr{L}^2(\mathfrak{X})=\{\xi\in \mathscr{B}(\mathfrak{X})\, :\, \Tr[\xi^*\xi]<\infty\}.
 \ee
In what follows, we regard $\mathscr{L}^2(\mathfrak{X})$ as a Hilbert space equipped with the inner product 
$\langle \xi| \eta\rangle_2=\Tr[\xi^* \eta]\ (\xi, \eta\in \mathscr{L}^2(\mathfrak{X}))$.
We often abbreviate the inner product by omitting the subscript $2$ if no confusion arises.

Let $\vartheta$ be an antiunitary operator on $\mathfrak{X}$. We define the linear operatar $\Psi_\vartheta:\mathfrak{X}\otimes\mathfrak{X}\longrightarrow\mathscr L^2(\mathfrak X)$ by
\begin{align}
\Psi_\vartheta(\phi\otimes\vartheta\psi)=|\phi\rangle\langle\psi|,\quad\phi,\psi\in\mathfrak{X}. \label{DefTheta}
\end{align}
Since $\Psi_\vartheta$ is  unitary, we can identify $\mathfrak{X}\otimes\mathfrak{X}$ with $\mathscr{L}^2(\mathfrak{X})$, naturally. 
We shall express this identification as follows:
\begin{align}
\mathfrak{X}\otimes \mathfrak{X}\underset{\Psi_{\vartheta}}{=}\mathscr{L}^2(\mathfrak{X}). \label{IdnSym}
\end{align}
Occasionally, we abbreviate (\ref{IdnSym}) by omitting the subscript $\Psi_{\vartheta}$ if no confusion arises.

Given  $A\in\mathscr B(\mathfrak{X})$, we define the left multiplication operator, $\mathcal{L}(A)$,  and the right multiplication operator, $\mathcal{R}(A)$,  as follows:
\begin{align}
\mathcal{L}(A) \xi=A\xi,\ \ \mathcal{R}(A)\xi=\xi A,\ \ \xi\in \mathscr{L}^2(\mathfrak{X}).
\end{align}
We  readily confirm that $\mathcal{L}(A)$ and $\mathcal{R}(A)$ are bounded operators on $\mathscr{L}^2(\mathfrak{X})$ and satisfy
\begin{align}
\mathcal{L}(A)\mathcal{L}(B)=\mathcal{L}(AB),\ \ \ \mathcal{R}(A)\mathcal{R}(B) =\mathcal{R}(BA).
\end{align}
Under the identification (\ref{IdnSym}), we have
\begin{align}
A\otimes \mathbbm{1}=\mathcal L(A),\ \ \
\mathbbm{1}\otimes A = \mathcal R(\vartheta A^*\vartheta). \label{LRIden}
\end{align}
Set 
\begin{align}
\mathscr L^{2}_{+}(\mathfrak{X})=\{\xi \in\mathscr L^2(\mathfrak{X})\,:\,\xi\geq0\}, \label{DefI_{2, +}}
\end{align}
where the inequality in the right hand side of (\ref{DefI_{2, +}}) indicates the standard operator inequality.\footnote{To be precise,
$\xi\ge 0$ if and only if $\la x |\xi x\ra\ge 0$ for all $x, y\in \mathfrak{X}$.}
It is well-known that $\mathscr{L}^2_{+}(\mathfrak{X})$ is a Hilbert cone in $\mathscr{L}^2(\mathfrak{X})$, see, e.g.,  \cite[Proposition 2.5]{Miyao2016}. 
\begin{Def}\rm 
 We introduce the  Hilbert cone in $\mathfrak{X}\otimes \mathfrak{X}$ by  $\mathfrak C=\Psi_\vartheta^{-1}(\mathscr L^2_{+}(\mathfrak{X}))$. Taking the identification (\ref{IdnSym}) into account, we have the following identification:
 \begin{align}
 \mathfrak{C}=\mathscr{L}^2_{ +}(\mathfrak{X}).
 \end{align}
\end{Def}

The following proposition is fundamental to the analysis of this paper:

\begin{Prop}\label{PPI2}
Let $A\in \mathscr{B}(\mathfrak{X})$.
Then we have $\mathcal L(A)\mathcal R(A^*)\unrhd0\wrt \mathscr{L}^2_{ +}(\mathfrak{X})$. Hence, under  the identification (\ref{IdnSym}), we have $A\otimes \vartheta A \vartheta \unrhd 0$ w.r.t. $\mathfrak{C}$.
\end{Prop}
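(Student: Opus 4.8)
The plan is to prove the first assertion by verifying the defining inclusion $\mathcal{L}(A)\mathcal{R}(A^*)\,\mathscr{L}^2_+(\mathfrak{X})\subseteq\mathscr{L}^2_+(\mathfrak{X})$ by hand, and then to deduce the second assertion by transporting this statement to $\mathfrak{X}\otimes\mathfrak{X}$ via the identification (\ref{IdnSym}) together with the formulas (\ref{LRIden}).

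For the first part, I would start from the observation that $\big(\mathcal{L}(A)\mathcal{R}(A^*)\big)\xi = A\xi A^*$ for every $\xi\in\mathscr{L}^2(\mathfrak{X})$, and that this indeed lies in $\mathscr{L}^2(\mathfrak{X})$ because $\mathcal{L}(A)$ and $\mathcal{R}(A^*)$ are bounded operators on $\mathscr{L}^2(\mathfrak{X})$. Then, if $\xi\ge 0$ in the standard operator sense, for every $x\in\mathfrak{X}$ one has $\langle x|A\xi A^* x\rangle=\langle A^*x|\xi(A^*x)\rangle\ge 0$, so $A\xi A^*\ge 0$, i.e. $A\xi A^*\in\mathscr{L}^2_+(\mathfrak{X})$. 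By the definition of positivity preservation this is exactly $\mathcal{L}(A)\mathcal{R}(A^*)\unrhd 0$ w.r.t. $\mathscr{L}^2_+(\mathfrak{X})$.

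For the second part, I would use (\ref{LRIden}): $A\otimes\mathbbm{1}=\mathcal{L}(A)$, while the right-multiplication formula $\mathbbm{1}\otimes B=\mathcal{R}(\vartheta B^*\vartheta)$ applied with $B=\vartheta A\vartheta$ gives $\mathbbm{1}\otimes\vartheta A\vartheta=\mathcal{R}\big(\vartheta(\vartheta A\vartheta)^*\vartheta\big)$. Since $\vartheta$ is an antiunitary involution, $(\vartheta A\vartheta)^*=\vartheta A^*\vartheta$, hence $\vartheta(\vartheta A\vartheta)^*\vartheta=A^*$ and $\mathbbm{1}\otimes\vartheta A\vartheta=\mathcal{R}(A^*)$. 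Because left- and right-multiplication operators commute, $A\otimes\vartheta A\vartheta=(A\otimes\mathbbm{1})(\mathbbm{1}\otimes\vartheta A\vartheta)=\mathcal{L}(A)\mathcal{R}(A^*)$, and under the identification $\mathfrak{C}=\mathscr{L}^2_+(\mathfrak{X})$ the operator inequality just established yields $A\otimes\vartheta A\vartheta\unrhd 0$ w.r.t. $\mathfrak{C}$.

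I do not expect a genuine obstacle here: the entire substantive content is the elementary fact that the conjugation $\xi\mapsto A\xi A^*$ maps positive semidefinite Hilbert--Schmidt operators to positive semidefinite Hilbert--Schmidt operators. The only point requiring a little care is the antiunitary bookkeeping in the identity $(\vartheta A\vartheta)^*=\vartheta A^*\vartheta$; this can even be bypassed by simply noting that, whatever the correct expression for $\mathbbm{1}\otimes\vartheta A\vartheta$ is, it must coincide with $\mathcal{R}(A^*)$ once one unwinds (\ref{LRIden}) (solving $\vartheta B^*\vartheta=A^*$ for $B$).
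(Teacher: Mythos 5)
Your argument is correct and coincides with the paper's own proof: the substantive step is exactly the observation that $\mathcal{L}(A)\mathcal{R}(A^*)\xi = A\xi A^* \ge 0$ whenever $\xi\ge 0$, which is what the paper records. You have additionally spelled out the $\vartheta$-bookkeeping needed to identify $A\otimes\vartheta A\vartheta$ with $\mathcal{L}(A)\mathcal{R}(A^*)$ via \eqref{LRIden}, a step the paper compresses into ``Hence''; this is fine, with the minor remark that the identity $\vartheta(\vartheta A\vartheta)^*\vartheta=A^*$ already follows for any antiunitary $\vartheta$ (since $(\vartheta A\vartheta)^*=\vartheta^*A^*\vartheta^*$ and $\vartheta\vartheta^*=\vartheta^*\vartheta=\mathbbm{1}$), so the involution hypothesis is not actually needed at this point.
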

\begin{proof}
Take $\xi \in\mathscr{L}^2_{+}(\mathfrak{X})$, arbitrarily. Then we find that 
$
\mathcal L(A)\mathcal R(A^*) \xi= A\xi A^*\ge 0,
$
which implies that $\mathcal L(A)\mathcal R(A^*)\unrhd0\wrt \mathscr{L}^2_{ +}(\mathfrak{X})$. 
\end{proof}

Note that Proposition \ref{PPI2} abstracts from the idea of  the spin reflection positivity.

\begin{Coro}
Let $A\in \mathscr{B}(\mathfrak{X})$. Then
$
\exp( A\otimes \mathbbm{1}+\mathbbm{1}\otimes \vartheta A \vartheta)
\unrhd 0$ w.r.t. $\mathfrak{C}$.
\end{Coro}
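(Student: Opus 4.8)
The plan is to observe that the operator $\exp(A\otimes\mathbbm 1+\mathbbm 1\otimes\vartheta A\vartheta)$ is itself of the form $B\otimes\vartheta B\vartheta$ with $B\in\mathscr B(\mathfrak X)$, so that the claim becomes an immediate instance of Proposition \ref{PPI2}. Concretely: since $A\otimes\mathbbm 1$ and $\mathbbm 1\otimes\vartheta A\vartheta$ act on different tensor factors they commute, hence
$\exp(A\otimes\mathbbm 1+\mathbbm 1\otimes\vartheta A\vartheta)=e^{A\otimes\mathbbm 1}\,e^{\mathbbm 1\otimes\vartheta A\vartheta}=e^{A}\otimes e^{\vartheta A\vartheta}$.
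Expanding the exponential on the second factor and using that $\vartheta$ is antilinear with $\vartheta^{2}=\mathbbm 1$ while the Taylor coefficients $1/k!$ are real, $(\vartheta A\vartheta)^{k}=\vartheta A^{k}\vartheta$ and therefore $e^{\vartheta A\vartheta}=\vartheta e^{A}\vartheta$. Thus $\exp(A\otimes\mathbbm 1+\mathbbm 1\otimes\vartheta A\vartheta)=e^{A}\otimes\vartheta e^{A}\vartheta$, and applying Proposition \ref{PPI2} with $A$ replaced by $B:=e^{A}$ gives $e^{A}\otimes\vartheta e^{A}\vartheta\unrhd 0$ w.r.t. $\mathfrak C$, which is the assertion.

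The same computation can be carried out directly on $\mathscr L^{2}(\mathfrak X)$, which is perhaps the cleanest bookkeeping: under the identification (\ref{IdnSym}) one has $A\otimes\mathbbm 1=\mathcal L(A)$ and, by (\ref{LRIden}), $\mathbbm 1\otimes\vartheta A\vartheta=\mathcal R\big(\vartheta(\vartheta A\vartheta)^{*}\vartheta\big)=\mathcal R(A^{*})$. Since left and right multiplication operators always commute, $\exp(\mathcal L(A)+\mathcal R(A^{*}))=\mathcal L(e^{A})\,\mathcal R(e^{A^{*}})=\mathcal L(e^{A})\,\mathcal R\big((e^{A})^{*}\big)$, and this is $\unrhd 0$ w.r.t. $\mathscr L^{2}_{+}(\mathfrak X)$ by Proposition \ref{PPI2}. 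Alternatively, to avoid relying on the exponential-of-a-sum identity at all, one may invoke the Lie--Trotter product formula together with Lemma \ref{Wcl} (the template already used elsewhere in the paper): for each $n$, $e^{(A/n)\otimes\mathbbm 1}e^{\mathbbm 1\otimes(\vartheta A\vartheta/n)}=e^{A/n}\otimes\vartheta e^{A/n}\vartheta\unrhd 0$ w.r.t. $\mathfrak C$ by Proposition \ref{PPI2}, hence $\big(e^{(A/n)\otimes\mathbbm 1}e^{\mathbbm 1\otimes(\vartheta A\vartheta/n)}\big)^{n}\unrhd 0$ by Lemma \ref{PPBasic}(iv); letting $n\to\infty$ these operators converge in norm, hence weakly, to $\exp(A\otimes\mathbbm 1+\mathbbm 1\otimes\vartheta A\vartheta)$, and Lemma \ref{Wcl} promotes the limit to $\unrhd 0$ w.r.t. $\mathfrak C$.

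There is no genuine obstacle here: the corollary is a one-line consequence of Proposition \ref{PPI2}. The only point requiring attention is the elementary bookkeeping with the antiunitary $\vartheta$ — the identities $(\vartheta A\vartheta)^{k}=\vartheta A^{k}\vartheta$, $e^{\vartheta A\vartheta}=\vartheta e^{A}\vartheta$, $(\vartheta A\vartheta)^{*}=\vartheta A^{*}\vartheta$, and the correspondence $\mathbbm 1\otimes\vartheta A\vartheta=\mathcal R(A^{*})$ — all of which rest on $\vartheta$ being a conjugation ($\vartheta^{2}=\mathbbm 1$, $\vartheta^{*}=\vartheta$), consistent with (\ref{LRIden}). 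Boundedness of $A$ makes $e^{A}$ manifestly well-defined, so no domain questions arise; had $A$ been unbounded, the Trotter route (read through strongly continuous semigroups) would still deliver the conclusion.
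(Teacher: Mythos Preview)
Your proof is correct and matches the paper's own argument exactly: the paper's proof is the single line $\exp(A\otimes\mathbbm 1+\mathbbm 1\otimes\vartheta A\vartheta)=e^{A}\otimes\vartheta e^{A}\vartheta\unrhd 0$ w.r.t.\ $\mathfrak C$ by Proposition~\ref{PPI2}. Your additional $\mathscr L^{2}(\mathfrak X)$ reformulation and the Trotter variant are sound but not needed here.
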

\begin{proof}
 By using Proposition \ref{PPI2}, we have   
 $
\exp( A\otimes \mathbbm{1}+\mathbbm{1}\otimes \vartheta A \vartheta)=e^A\otimes \vartheta e^{A} \vartheta \unrhd 0$ w.r.t. $\mathfrak{C}$.
\end{proof}

\section{Uniqueness of ground state}\label{Sec3}

\subsection{
Main theorem in Section \ref{Sec3}}

First, we discuss $\bs H_d$ in detail. The basic strategy of the proof for the claim for $\bs H_f$ is the same (technically, it is simpler than the case of $\bs H_d$). For the readers' convenience, in Subection \ref{StHf}, we summarize the notes on the proof for the $\bs H_f$ case.

In the following, $\bs H_d$ is denoted as $\bs H$ unless there is a risk of confusion.
$\bs H$ commutes with the total spin operators $S_{\rm tot}^{(1)}, S_{\rm tot}^{(2)} $ and $S_{\rm tot}^{(3)}$ in the following strong sense: 
$[e^{-\beta H}, S_{\rm tot}^{(i)}]=0$ $(i=1,2,3,\ \beta \ge 0)$. So, we restrict $\bs H$ to the following subspace: 
\be
\h=\h_{\rm e}\otimes \h_{\rm ph},\quad
\h_{\rm e}=\ker (S_{\rm tot}^{(3)}) \cap \F_{{\rm e}, 2|\vLa|}.
\ee
$\h$ is called the $M=0$ subspace of $\F_{\rm e, 2|\vLa|}\otimes \h_{\rm ph}$.

The main theorem of this section is as follows.
\begin{Thm}\label{GSPart}
The ground state of ${\bs H}$ in $\h$ is unique.\footnote{In other words, the ground state of ${\bs H} \restriction \h$ is unique.}
 If we denote the expectation associated with this ground state by $\la \cdot \ra$, then the following hold:
\begin{align}
\gamma_x \gamma_y \big\la S_x^{d, (\pm)} S_y^{d, (\mp)}\big\ra&>0,\quad  \gamma_x \gamma_y \big\la S_x^{f, (\pm)} S_y^{f, (\mp)}\big\ra>0,\\ 
\gamma_x \gamma_y \big\la S_x^{d, (\pm)} S_y^{f, (\mp)}\big\ra&>0, \quad \gamma_x \gamma_y \big\la S_x^{f, (\pm)} S_y^{d, (\mp)}\big\ra>0.
\end{align}
\end{Thm}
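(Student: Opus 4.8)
The plan is to reduce the statement of Theorem~\ref{GSPart} to an ergodicity property of the heat semigroup $\{e^{-\beta \bs H}\}_{\beta\ge 0}$ on the $M=0$ subspace $\h$, and then to extract both the uniqueness of the ground state and the sign of the spin correlations from a suitably chosen Hilbert cone. Concretely, I would first construct an appropriate Hilbert cone $\mathfrak{P}$ in $\h$ that encodes the spin reflection positivity structure. Since $\h = \h_{\rm e}\otimes \h_{\rm ph}$ and $\h_{\rm ph}=L^2(\BbbR^{|\vLa|})$, the natural candidate is built as a fiber direct integral: fix an antiunitary $\vartheta$ on the one-spin-species electron space, identify the $M=0$ electron space $\h_{\rm e}$ with a space of Hilbert--Schmidt operators $\mathscr{L}^2(\X)$ via $\Psi_\vartheta$ (so that the up-spin and down-spin sectors play the roles of left and right multiplication), take the cone $\mathscr{L}^2_+(\X)=\mathfrak{C}$ of positive semidefinite Hilbert--Schmidt operators, and then take $\int^\oplus \mathfrak{C}\, d\bs q$ over the phonon configuration space, combined with $L^2_+(\BbbR^{|\vLa|})$ in the phonon variable. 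This is exactly the cone set up in Section~\ref{Sec2}; the point of the Lang--Firsov-type dressing is to arrange that $e^{-\beta \bs H}$ acts positively on this cone even in the presence of the electron-phonon coupling, which is where the conditions on $U_{\rm eff}^d$, $U^f$ and $\vepsilon_f$ enter.

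Second, granting (as the excerpt announces in Theorem~\ref{MainPI} and Sections~\ref{Sec4}--\ref{Sec8}) that $\{e^{-\beta \bs H}\}_{\beta\ge 0}$ is \emph{ergodic} w.r.t.\ this cone $\mathfrak{P}$, uniqueness of the ground state is immediate from the Perron--Frobenius--Faris theorem (Theorem~\ref{pff}): $E(\bs H)=\inf\mathrm{spec}(\bs H)$ is an eigenvalue (the Hilbert space is, for fixed $\vLa$, effectively a finite tensor times $L^2(\BbbR^{|\vLa|})$ and $\bs H$ is bounded below with compact resolvent after the oscillator part is taken into account), so $\dim\ker(\bs H-E(\bs H))=1$ and the unique ground state $\psi_{\rm g}$ is strictly positive w.r.t.\ $\mathfrak{P}$. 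The bulk of this section is therefore really the \emph{input} Theorem~\ref{MainPI}; within Section~\ref{Sec3} the main obstacle is choosing the cone and the dressing transformation so that (i) $e^{-\beta \bs H}\unrhd 0$ w.r.t.\ $\mathfrak{P}$ — using Proposition~\ref{PPI2}, its corollary, Lemma~\ref{ppiexp1}, and the Trotter product formula to handle the sum of kinetic, interaction, coupling and phonon terms piece by piece — and (ii) the spin operators appearing in the correlation functions have a definite sign relative to $\mathfrak{P}$.

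Third, for the correlation inequalities I would exploit strict positivity of $\psi_{\rm g}$. The standard spin-reflection-positivity computation shows that, after the identification $\Psi_\vartheta$ and the gauge/phase choice absorbing the factors $\gamma_x$, each operator $\gamma_x\gamma_y\, S_x^{a,(\pm)}S_y^{b,(\mp)}$ (for $a,b\in\{d,f\}$) is of the form $\mathcal{L}(A_x)\mathcal{R}(A_y^*)$ or a positive combination thereof — i.e.\ it is positivity preserving w.r.t.\ $\mathfrak{P}$ and nonzero. Then $\la \psi_{\rm g}\,|\,\gamma_x\gamma_y S_x^{a,(\pm)}S_y^{b,(\mp)}\,\psi_{\rm g}\ra \ge 0$ by Lemma~\ref{PPBasic}(i), and strict positivity follows from Lemma~\ref{PPSP}: since $\psi_{\rm g}>0$ w.r.t.\ $\mathfrak{P}$ and the relevant operator $O\unrhd 0$ is nonzero, $O\psi_{\rm g}\neq 0$, and because $\psi_{\rm g}$ is strictly positive one gets $\la \psi_{\rm g}|O\psi_{\rm g}\ra>0$ provided $O\psi_{\rm g}\in\mathfrak{P}\setminus\{0\}$, which holds here because $O$ maps the cone into itself. (One must also check the particular choice $\vepsilon_f=\frac12(U^d-U^f)-2\omega_0^{-1}g^2$ makes the local $f$-potential and Hubbard terms fit the reflection-positive form after the Lang--Firsov shift; that is the arithmetic behind the stated hypothesis.)

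The step I expect to be the genuine obstacle is \textbf{(i)}: verifying $e^{-\beta\bs H}\unrhd 0$ w.r.t.\ the cone — in particular reconciling the phonon displacement coupling $g\sum_x n_x^d(b_x^*+b_x)$ with the Hilbert-cone structure. Unlike the Freericks--Lieb treatment, one cannot simply complete the square in a way compatible with spin reflection; the coupling $n_x^d$ acts diagonally in the electron occupation basis but mixes left/right multiplication unless one first performs the Lang--Firsov transformation, which in turn renormalizes $U^d\to U^d_{\rm eff}=U^d-2g^2/\omega_0$ and shifts $\vepsilon_f$. Making this rigorous for \emph{unbounded} phonon operators on $L^2(\BbbR^{|\vLa|})$ — so that all semigroup manipulations, Trotter formulas, and weak limits (Lemma~\ref{Wcl}) are justified — is exactly what Sections~\ref{Sec4}--\ref{Sec8} are devoted to, and in this section I would simply cite Theorem~\ref{MainPI} and assemble the consequences as above. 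The $\bs H_f$ case is handled identically (and more easily), as the paper notes, by interchanging the roles of the $d$- and $f$-coupling in the dressing transformation, which is why the sign of $2\omega_0^{-1}g^2$ in the $\vepsilon_f$ condition flips.
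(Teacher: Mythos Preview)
Your proposal is correct and matches the paper's approach: construct the Hilbert cone $\mathfrak{P}=\int^{\oplus}\mathscr{L}^2_+(\mathfrak{E})\,d\bs q$, transform $\bs H$ unitarily (Lang--Firsov plus the hole--particle transformation $W$, which is the ``gauge/phase choice'' you allude to for absorbing the $\gamma_x$'s), invoke Theorem~\ref{MainPI} for ergodicity of the transformed semigroup, apply Perron--Frobenius--Faris for uniqueness, and then use strict positivity of the ground state together with $\gamma_x\gamma_y\,\mathcal{U}^*S_x^{a,(\pm)}S_y^{b,(\mp)}\mathcal{U}\unrhd 0$ and Lemma~\ref{PPSP} to get the strict inequalities. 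The only point to sharpen is that the paper states and proves ergodicity for the \emph{transformed} Hamiltonian $H=\mathcal{U}^*\bs H\,\mathcal{U}+\mathrm{const.}$ with respect to $\mathfrak{P}$, not for $\bs H$ itself; the hole--particle map $W$ (not just Lang--Firsov) is essential both to make the $U^f$ and $U^d_{\rm eff}$ interaction terms reflection-positive in the form $R\unrhd 0$ of Lemma~\ref{RProp} and to produce the $\gamma_x$ signs in the spin correlators.
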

The proof of Theorem \ref{GSPart} is quite complicated.
In the remainder of this section, we give the proof of this theorem, assuming that a particular theorem holds.
Then, in Sections \ref{Sec4}-\ref{Sec8}, we prove the theorem we have assumed.

\subsection{ Setting up proper Hilbert cones}

\subsubsection{Construction of the proper Hilbert cone in $\h_{\rm e}$}
In the remainder of this section, we will illustrate the proof flow of Theorem \ref{GSPart}. In proving the uniqueness of the ground state of Theorem \ref{GSPart}, we would like to apply Theorem \ref{pff}. In order to do so, we need to set up an appropriate reference Hilbert cone. In this subsection, we will construct the reference Hilbert cone.

First, we construct a Hilbert cone in the Hilbert space $\h_{\rm e}$ describing electrons.
For a given Hilbert space $\mathfrak{X}$, we denote by $\F(\X)$ the fermionic Fock space over $\X$:
\be
\F(\X)=\bigoplus_{n=0}^{\mathrm{dim} \X} \bigwedge^n\X,\quad \bigwedge^0 \X=\BbbC.
\ee
Then, $\F_{\rm e}$ given by $\eqref{DefFe}$ can be expressed as $\F_{\rm e}=\F(\mathfrak{h} \oplus \mathfrak{h})$.
Next, as a preparatory step, let us construct the following identification concerning the fermionic Fock spaces:
\be
\F(\mathfrak{h}\oplus \mathfrak{h})=\F(\mathfrak{h}) \otimes \F(\mathfrak{h}). \label{FIdn}
\ee
Let $\hat{d}_x$ and $\hat{f}_x$ be the annihilation operators in $\F(\mathfrak{h})$:
\begin{align}
\{\hat{d}_x, \hat{d}_y^*\}=\delta_{x, y}, \quad\{\hat{f}_x, \hat{f}_y^*\}=\delta_{x, y}, \quad \{\hat{d}_x, \hat{f}^*_y\}=0,\\
\{\hat{d}_x, \hat{d}_y\}=0, \quad\{\hat{f}_x, \hat{f}_y\}=0, \quad \{\hat{d}_x, \hat{f}_y\}=0.
\end{align}
In addition, let $\hat{N}$ be the number operator in $\F(\mathfrak{h})$: $\hat{N}=\sum_{x\in \vLa} \hat{d}_x^*\hat{d}_x+\sum_{x\in \vLa} \hat{f}_x^* \hat{f}_x$. We can then construct the unitary operator $\iota$ from $\F(\mathfrak{h}\oplus \mathfrak{h})$ to $\F(\mathfrak{h})\otimes \F(\mathfrak{h})$, which satisfies the following:
\begin{align}
\iota d_{x, \up} \iota^{-1}=\hd_x \otimes \one,\ \ \  \iota d_{x, \down}\iota^{-1}=(-1)^{\hat{N}} \otimes \hd_x, \ \ \ 
\iota f_{x, \up}\iota^{-1}=\hf_x \otimes \one,\ \ \  \iota f_{x, \down}\iota^{-1}=(-1)^{\hat{N}} \otimes \hf_x \label{AnniIdn}
\end{align}
and 
\be
\iota |\varnothing \ra_{\F(\mathfrak{h}\oplus \mathfrak{h})}=|\varnothing \ra_{\F(\mathfrak{h})} \otimes |\varnothing \ra_{\F(\mathfrak{h})},
\ee
where $|\varnothing \ra_{\F(\mathfrak{h}\oplus \mathfrak{h})}$ and $|\varnothing \ra_{\F(\mathfrak{h})}$
are the Fock vacuums in $\F(\mathfrak{h}\oplus \mathfrak{h})$ and $\F(\mathfrak{h})$, respectively.
In what follows, we do not explicitly specify the $\iota$ that gives the identification \eqref{FIdn}, unless there is a risk of confusion: for example, we may simply write as 
\be
 d_{x, \up} =\hd_x \otimes \one, \quad f_{x, \down}=(-1)^{\hat{N}} \otimes \hf_x, \quad  |\varnothing \ra_{\F(\mathfrak{h}\oplus \mathfrak{h})}=|\varnothing \ra_{\F(\mathfrak{h})} \otimes |\varnothing \ra_{\F(\mathfrak{h})}.
\ee

Let us examine how $\h_{\rm e}$ is expressed under the identification \eqref{FIdn}.
First, the  $N$-fermion subspace  $\F_N(\mathfrak{h} \oplus \mathfrak{h})=\bigwedge^N (\mathfrak{h} \oplus \mathfrak{h})$ is represented as 
\be
\F_N(\mathfrak{h} \oplus \mathfrak{h})=\bigoplus_{m+n=N}\F_m(\mathfrak{h}) \otimes \F_n(\mathfrak{h}),\quad
\F_m(\mathfrak{h})=\bigwedge^m \mathfrak{h}.
\ee
Therefore, $\h_{\rm e}=\F_{\mathrm{e}, 2|\vLa|} \cap \ker(S^{(3)}_{\rm tot})$ can be expressed as 
\be
\h_\mathrm{e}=\mathfrak{E}\otimes \mathfrak{E}, \quad\mathfrak{E}=\bigwedge^{|\vLa|} \big(
\ell^2(\vLa)\oplus \ell^2(\vLa)
\big). \label{IdnE}
\ee

\begin{Def}\rm
Suppose that 
$X_d\subseteq \vLa $ and $ X_f\subseteq \vLa$ satisfy $|X_d|+|X_f|=|\vLa|$. 
In this case, ${\bs X}=(X_d, X_f)\subset \vLa\times \vLa$ is called an {\it electron configuration}. 
We denote by $\mathscr{C}$ the set of all electron configurations:
\be
\mathscr{C}=\{
{\bs X}=(X_d, X_f)\, :\, X_d\subseteq \vLa,\ X_f\subseteq \vLa, |X_d|+|X_f|=|\vLa|
\}.
\ee\begin{figure}[t]
\centering
 \includegraphics[scale=0.8]{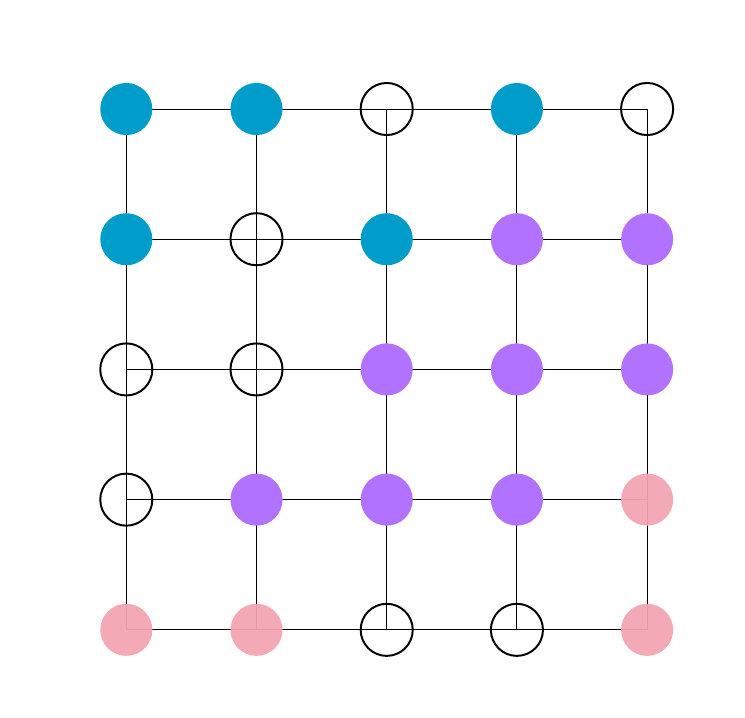}
\caption{The blue-colored sites are occupied by conduction electrons and correspond to $X_d$; the orange-colored sites are occupied by $f$-electrons and correspond to $X_f$. The purple-colored sites are occupied by both $f$-electrons and conduction electrons, corresponding to $X_d\cap X_f$.}
\label{Fig0}
\end{figure}
For a given ${\bs X}=(X_{d}, X_f)\in \mathscr{C}$, define the vector $|{\bs X}\ra$ in $\mathfrak{E}$ by
\begin{align}
\big|
{\bs X}\big\ra
=\Bigg[\prod_{x\in X_{d}} \hd_{x}^*\Bigg]
\Bigg[\prod_{x\in X_{f}} \hf_{x}^*\Bigg]
 |\varnothing\ra, \label{DefVecX}
\end{align}
where $|\varnothing\ra$ is the Fock vacuum; 
the elements of $\vLa$ are assumed to be numbered ($\vLa=\{x_1, x_2, \dots, x_{|\vLa|}\}$), and the order of the product $\prod_{x\in X}A_x\ (X\subseteq \vLa)$ is assumed to follow this order throughout this paper: 
\be
\prod_{x\in X}A_x=A_{x_{i_1}}A_{x_{i_2}} \cdots A_{x_{i_{|X|}}}\ (i_1<i_2<\cdots <i_{|X|}).
\ee
Fig. \ref{Fig0} depicts the vector $\ket{\bs X}$ using colors  for a two-dimensional square lattice.
Under the above setup, $\{|{\bs X}\ra\, :\, {\bs X} \in \mathscr{C}\}$ is a complete orthonormal system (CONS) of $\mathfrak{E}$.
\end{Def}

Next,  define the antiunitary operator $\vartheta: \mathfrak{E}\longrightarrow \mathfrak{E}$ by 
\begin{align}
\vartheta\Big(\sum_{\boldsymbol{X}\in \mathscr{C}} c_{\boldsymbol{X}}|\boldsymbol{X}\rangle\Big)
=\sum_{\boldsymbol{X}\in \mathscr{C}} c_{\boldsymbol{X}}^*|\boldsymbol{X}\rangle\quad (c_{\boldsymbol
{X}}\in\mathbb{C}).
\end{align}
With this $\vartheta$, we can define the  unitary operator $\Psi_{\vartheta}$ that gives the identification of $\h_{\rm e}$ with $\mathscr{L}^2(\mathfrak{E})$ in Section \ref{Sec2}: 
\begin{align}
\Psi_{\vartheta}(u \otimes \vartheta v)=|u\rangle \langle v| \quad (u,v\in\mathfrak{E}).
\end{align}

\begin{Def}\rm 
Define the Hilbert cone in $\h_{\rm e}$ as follows:
\begin{align}
\p_{\rm e}=\mathscr{L}^2_+(\mathfrak{E}).
\end{align}

\end{Def}

\subsubsection{Construction of the proper Hilbert cone in $\h_{\rm ph}$}
For each $x\in \vLa$, define the  self-adjoint operators, $p_x$ and $q_x$, by 
\begin{align}
p_x=\frac{\im }{\sqrt{2}}(\overline{b_x^*-b_x}), \ \ \  q_ x=\frac{1}{\sqrt{2}} (\overline{b_x^*+b_x}), \label{Defpq}
\end{align}
where $\overline{A}$ is the closure of $A$. $q_x$ is a multiplication operator in $L^2(\BbbR^{|\vLa|})$ and $p_x$ is equal to  the partial differential operator $-\im \partial/\partial q_x$; 
as is well-known, these operators satisfy the standard commutation relation:
$[q_x, p_y]=\im \delta_{x,y}$. 
In this paper, we choose the following as the Hilbert cone in $L^2(\BbbR^{|\vLa|})$:
\begin{align}
L^2_+(\mathbb{R}^{|\vLa|})=\{f\in L^2(\BbbR^{|\vLa|}) : f(\bs q) \ge 0\ \mbox {a.e. }\}.
\end{align}

\subsubsection{Construction of the proper Hilbert cone in $\h$}
According to \eqref{TensorIdn}, $\h$ can be expressed as follows: 
\be
\h=\int^{\oplus}_{\BbbR^{|\vLa|}} \h_{\rm e} d\bs q. \label{HeFiber}
\ee
Then, we can define the Hilbert cone in $\h$ as
\be
\p=\int^{\oplus}_{\BbbR^{|\vLa|}} \p_{\rm e} d\bq.
\ee
Note that $\p$ can also be represented  as:
\be
\p
=\overline{\mathrm{coni}}\Big\{\psi\otimes f \in\h\,:\,\psi\in\p_{\rm e},f\in L_+^2(\BbbR^{|\vLa|})\Big\}, \label{PExp2}
\ee
where $\overline{\mathrm{coni}}(S)$ indicates the closure of $\mathrm{coni}(S)$, the conical hull of $S$.
For the proof of \eqref{PExp2}, see \cite[Proposition D.1]{MIYAO2021168467}.

\subsection{Deformation of the Hamiltonian}\label{DefoHam}

In this subsection, we transform the Hamiltonian into a convenient form for analysis by applying the Lang--Firsov and hole-particle transformations.

Set
\begin{align}
L_d
=-\im \frac{\sqrt2 g}{\omega_0}\sum_{x\in\vLa }n_x^dp_x.
\end{align}
Then we readily confirm that 
\begin{align}
e^{\im \frac{\pi}{2}N_\mathrm{p}}q_xe^{-\im \frac{\pi}{2}N_\mathrm{p}}
&=p_x, \quad
e^{\im \frac{\pi}{2}N_\mathrm{p}}p_xe^{-\im \frac{\pi}{2}N_\mathrm{p}}
=-q_x, \label{LF1}\\
e^{L_d}d_{x,\sigma}e^{-L_d}
&=\exp\left(\im \frac{\sqrt2 g}{\omega_0}p_x\right) d_{x,\sigma}, \quad
e^{L_d}f_{x,\sigma}e^{-L_d}
=f_{x,\sigma}, \label{LF2}\\
e^{L_d}b_xe^{-L_d}
&=b_x-\frac{g}{\omega_0} n_x^d. \label{LF3}
\end{align}
The unitary operator $e^{L_d}$ is called the {\it Lang--Firsov transformation} which was first introduced in \cite{Lang1963}.

\begin{Lemm}\label{HLFT}
One obtains the following:
\begin{align}
&e^{\im \frac{\pi}{2}N_\mathrm{p}}e^{L_d} \boldsymbol{H} e^{-L_d}e^{-\im \frac{\pi}{2}N_\mathrm{p}}\no
&=T_\up(\VP) +T_\down(\VP) +(\varepsilon_f+\omega_0^{-1}g^2)\sum_{x\in\vLa } n_{x}^f +V_\up(\VP) +V_\down(\VP) +U^f\sum_{x\in \vLa } n_{x,  \uparrow}^fn_{x,  \downarrow}^f \no
&\quad\quad+U^d_{\mathrm{eff}}\sum_{x\in\vLa }n_{x, \up}^dn_{x, \down}^d+\omega_0\Np-2\omega_0^{-1}g^2|\vLa |,
\end{align}
where
\begin{align}
\varPhi_x
&=\frac{\sqrt2g}{\omega_0}q_x,\quad
\varPhi_{x,y}
=\varPhi_x-\varPhi_y, \\
T_\sigma(\pm \VP)
&=\sum_{x,  y\in \vLa } (-t_{x,  y})d_{x,  \sigma}^* d_{y,  \sigma}e^{\pm \im \vP_{x,y}}, \\
V_\sigma(\pm \VP)
&=V\sum_{x\in \vLa}\Big(f_{x,  \sigma}^* d_{x,  \sigma}e^{\mp \im \vP_x}+d^*_{x,  \sigma} f_{x,  \sigma}e^{\pm \im \vP_x}\Big). \label{DefVP}
\end{align}

\end{Lemm}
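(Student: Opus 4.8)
The plan is to prove Lemma~\ref{HLFT} by applying the two unitary transformations in sequence and tracking how each term of $\bs H$ is conjugated. First I would recall the building blocks: the Lang--Firsov transformation $e^{L_d}$ acts on the various operators according to \eqref{LF2} and \eqref{LF3}, and the phonon rotation $e^{\im\frac{\pi}{2}\Np}$ acts according to \eqref{LF1}. Since $e^{L_d}$ fixes the $f$-operators and $e^{\im\frac{\pi}{2}\Np}$ fixes all fermion operators, the $f$-electron terms $\vepsilon_f n_x^f$ and $U^f n_{x,\up}^fn_{x,\down}^f$ will be essentially untouched except for a shift coming from how $e^{L_d}$ reshuffles $b_x$; conversely, the purely phononic term $\omega_0\Np$ is fixed by $e^{\im\frac{\pi}{2}\Np}$ but gets a linear-plus-quadratic correction from $e^{L_d}$.

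The core computation proceeds term by term. For the conduction-electron hopping $\sum_{x,y}(-t_{x,y})d_{x,\sigma}^*d_{y,\sigma}$, applying \eqref{LF2} gives a factor $\exp(-\im\frac{\sqrt2 g}{\omega_0}p_x)$ on $d_{x,\sigma}^*$ and $\exp(\im\frac{\sqrt2 g}{\omega_0}p_y)$ on $d_{y,\sigma}$; then the phonon rotation \eqref{LF1} sends $p_x\mapsto -q_x$, producing $\exp(\im\frac{\sqrt2 g}{\omega_0}q_x)\exp(-\im\frac{\sqrt2 g}{\omega_0}q_y)=e^{\im\vP_{x,y}}$ with $\vP_x=\frac{\sqrt2 g}{\omega_0}q_x$ as defined, hence the term $T_\sigma(\VP)$ (one must be a little careful that $q_x$ and $q_y$ commute so no Baker--Campbell--Hausdorff correction arises, and that the $d_{x,\sigma}^*$ picks up the adjoint phase). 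The hybridization $V\sum_x(f_{x,\sigma}^*d_{x,\sigma}+d_{x,\sigma}^*f_{x,\sigma})$ is handled the same way, but now only the $d$-operator carries a phase while $f$ is inert, yielding $V_\sigma(\VP)$ as in \eqref{DefVP}. The on-site repulsions $U^f n_{x,\up}^fn_{x,\down}^f$ and $U^d n_{x,\up}^dn_{x,\down}^d$ commute with both unitaries (the phases attached to $d_{x,\up}$ and $d_{x,\up}^*$ cancel in $n_{x,\up}^d$), so $U^f$ survives verbatim. For the phonon sector, $e^{L_d}$ sends $n_x^d(b_x^*+b_x)+\omega_0 b_x^*b_x$ to $n_x^d(b_x^*+b_x - 2\frac{g}{\omega_0}n_x^d) + \omega_0(b_x^*-\frac{g}{\omega_0}n_x^d)(b_x-\frac{g}{\omega_0}n_x^d)$ by \eqref{LF3}; expanding and using $(n_x^d)^2 = n_x^d + 2 n_{x,\up}^dn_{x,\down}^d$ collects the cross terms, cancels the original $g$-coupling, and leaves $\omega_0 b_x^*b_x$ together with $-\frac{g^2}{\omega_0}(n_x^d+2n_{x,\up}^dn_{x,\down}^d)$ per site; summing over $x$ gives the shift $-\frac{g^2}{\omega_0}\sum_x n_x^d - \frac{2g^2}{\omega_0}\sum_x n_{x,\up}^dn_{x,\down}^d$. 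The $\omega_0\Np$ piece is then invariant under $e^{\im\frac{\pi}{2}\Np}$. Finally one uses half-filling, $\sum_x n_x^d = 2|\vLa| - \sum_x n_x^f$ up to the appropriate hole count, to absorb $-\frac{g^2}{\omega_0}\sum_x n_x^d$ into the renormalized $f$-potential $(\vepsilon_f+\omega_0^{-1}g^2)\sum_x n_x^f$ plus the constant $-2\omega_0^{-1}g^2|\vLa|$, and replaces $U^d - \frac{2g^2}{\omega_0}$ by $U_{\rm eff}^d$.

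I expect the main obstacle to be bookkeeping rather than conceptual: correctly tracking the direction of the phases (which operators get $e^{+\im\vP}$ versus $e^{-\im\vP}$ after the $p\mapsto -q$ rotation), making sure the identity $(n_x^d)^2=n_x^d+2n_{x,\up}^dn_{x,\down}^d$ is applied consistently so that exactly the right multiple of $U^d$ gets shifted, and keeping the constant term exact. A secondary subtlety is the domain question — $e^{L_d}$ involves the unbounded $p_x$, so conjugation identities like \eqref{LF2}--\eqref{LF3} should be understood on a suitable core (e.g.\ $\mathcal S(\BbbR^{|\vLa|})$ tensored with the fermionic space) and then extended; but since \eqref{LF1}--\eqref{LF3} are quoted as established, I would simply cite them and perform the algebra formally, remarking that all manipulations are justified on the common core and that the resulting operator identity extends by self-adjointness and the Kato--Rellich argument already invoked. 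No step requires anything beyond the commutation relations \eqref{CAR1}--\eqref{CAR2}, $[q_x,p_y]=\im\delta_{x,y}$, and the quoted transformation rules, so the proof is a direct, if lengthy, verification.
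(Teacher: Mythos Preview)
Your proposal is correct and is exactly the direct term-by-term calculation that the paper has in mind; the paper's own proof is the one-liner ``Direct calculation using \eqref{LF1}--\eqref{LF3}. Note that $\sum_x n_x^d=2|\vLa|-\sum_x n_x^f$ is employed,'' and you have simply spelled out that calculation in full detail, including the cancellation in the phonon sector via $(n_x^d)^2=n_x^d+2n_{x,\up}^dn_{x,\down}^d$ and the half-filling identity that produces the $f$-potential shift and the constant.
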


\begin{proof}
Direct calculation using 
\eqref{LF1}-\eqref{LF3}. Note that $\sum_x n_x^d=2|\vLa |-\sum_x n_x^f$ is employed.
\end{proof}

In the following, we assume that 
\begin{align}
\varepsilon_f
= \frac{1}{2}(U^d-U^f) -2\omega_0^{-1}g^2.\label{ChEf}
\end{align}

The {\it hole-particle transformation} is a unitary operator $W$ on $\h_{\rm e}$ that satisfies: 
\be
W^* d_{x, \up}W=d_{x, \up}, \quad W^* d_{x, \down}W=\gamma_x d_{x, \down}^*, \quad W^* f_{x, \up}W=f_{x, \up}, \quad W^* f_{x, \down}W=-\gamma_x f_{x, \down}^*, \label{WProp}
\ee
where $\gamma_x$ is given in Theorem \ref{PAMzigzag}.

\begin{Lemm}\label{HHPT}
Set
$\mathcal{U}=e^{-L_d}e^{-\im \frac{\pi}{2}N_\mathrm{p}}W$. 
Define the self-adjoint operator $H$ by 
 $H=\mathcal{U}^*\boldsymbol{H}\mathcal{U}+2\omega_0^{-1}g^2|\vLa|-U^d_{\rm eff}|\vLa |/2$. 
Then, $H$ can be represented as: 
\begin{align}
H
&=H_0 -R,
\end{align}
where
\begin{align}
H_0
&=T_\up(\VP) +T_\down(-\VP) +V_\up(\VP) +V_\down(-\VP) 
 +\frac{U^d_{\mathrm{eff}}}{2}\sum_{x\in \vLa}(n_{x, \up}^d+n_{x, \down}^d) +\omega_0\Np, \label{DefH_0} \\
R
&= \frac{U^f}{2}\sum_{x\in \vLa } n_{x,  \uparrow}^fn_{x,  \downarrow}^f +
\frac{U^f}{2}\sum_{x\in \vLa } (\mathbbm{1}
-n_{x,  \uparrow}^f)(\mathbbm{1}-n_{x,  \downarrow}^f)+U^d_{\mathrm{eff}}\sum_{x\in\vLa }  n_{x, \up}^dn_{x, \down}^d.
\end{align}
\end{Lemm}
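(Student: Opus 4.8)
The plan is to obtain the stated formula for $H=\mathcal U^*\bs H\mathcal U+2\omega_0^{-1}g^2|\vLa|-U^d_{\rm eff}|\vLa|/2$ by composing the three unitary transformations $W$, $e^{-\im\frac\pi2\Np}$ and $e^{-L_d}$ in the order prescribed by $\mathcal U=e^{-L_d}e^{-\im\frac\pi2\Np}W$, starting from the already-established Lemma \ref{HLFT}. Concretely, Lemma \ref{HLFT} gives $e^{\im\frac\pi2\Np}e^{L_d}\bs H e^{-L_d}e^{-\im\frac\pi2\Np}$ in closed form; call this operator $K$. Then $\mathcal U^*\bs H\mathcal U=W^*\big(e^{\im\frac\pi2\Np}e^{L_d}\bs He^{-L_d}e^{-\im\frac\pi2\Np}\big)W=W^*KW$, so the entire task reduces to conjugating $K$ by the hole-particle transformation $W$, using the substitution rules \eqref{WProp}, and then using the constraint \eqref{ChEf} on $\vepsilon_f$ to cancel the $n^f_x$ linear term.

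First I would record how each summand of $K$ behaves under $W$. The hopping and hybridization terms: since $W$ acts only on the electron factor $\h_{\rm e}$ and leaves the phonon phases $e^{\pm\im\vP_x}$, $e^{\pm\im\vP_{x,y}}$ untouched, I need $W^*T_\sigma(\pm\VP)W$ and $W^*V_\sigma(\pm\VP)W$. For $\sigma=\up$, \eqref{WProp} says $W$ fixes $d_{x,\up},f_{x,\up}$, so $T_\up(\VP)$ and $V_\up(\VP)$ are unchanged. For $\sigma=\down$, $W^*d_{x,\down}W=\gamma_xd^*_{x,\down}$ and $W^*f_{x,\down}W=-\gamma_xf^*_{x,\down}$; substituting into $T_\down(\VP)=\sum_{x,y}(-t_{x,y})d^*_{x,\down}d_{y,\down}e^{\im\vP_{x,y}}$ and using $d^*_{x,\down}d_{y,\down}\mapsto\gamma_x\gamma_y d_{x,\down}d^*_{y,\down}=-\gamma_x\gamma_yd^*_{y,\down}d_{x,\down}+\gamma_x\gamma_y\delta_{x,y}$, together with the bipartiteness assumption \hyperlink{A2}{(A.2)(ii)} which forces $\gamma_x\gamma_y=-1$ whenever $t_{x,y}\neq0$, one gets after relabeling $x\leftrightarrow y$ and using $t_{x,y}=t_{y,x}$ that $W^*T_\down(\VP)W=T_\down(-\VP)$ (the diagonal $\delta_{x,y}$ piece drops because $t_{x,x}=0$ on a bipartite graph). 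Similarly, $W^*V_\down(\VP)W$: in $V\sum_x(f^*_{x,\down}d_{x,\down}e^{-\im\vP_x}+d^*_{x,\down}f_{x,\down}e^{\im\vP_x})$ the substitution gives $f^*_{x,\down}d_{x,\down}\mapsto(-\gamma_x)(\gamma_x)f_{x,\down}d^*_{x,\down}=-(-1)f^*... $ — more carefully, $f^*_{x,\down}d_{x,\down}\mapsto -\gamma_x^2 f_{x,\down}d^*_{x,\down}=-f_{x,\down}d^*_{x,\down}=d^*_{x,\down}f_{x,\down}$ (anticommutator $\{f_{x,\down},d^*_{x,\down}\}=0$), so the two terms swap and $W^*V_\down(\VP)W=V_\down(-\VP)$, which is exactly what appears in \eqref{DefH_0}.

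Next the on-site terms. For the $U^d_{\rm eff}$ Hubbard term on $d$-electrons, $W^*n^d_{x,\up}W=n^d_{x,\up}$ and $W^*n^d_{x,\down}W=W^*d^*_{x,\down}d_{x,\down}W=\gamma_x^2 d_{x,\down}d^*_{x,\down}=\mathbbm1-n^d_{x,\down}$, so $U^d_{\rm eff}\sum_x n^d_{x,\up}n^d_{x,\down}\mapsto U^d_{\rm eff}\sum_x n^d_{x,\up}(\mathbbm1-n^d_{x,\down})=\frac{U^d_{\rm eff}}2\sum_x(n^d_{x,\up}+n^d_{x,\down})-U^d_{\rm eff}\sum_x n^d_{x,\up}n^d_{x,\down}+(\text{const})$, where I symmetrized using that the two spin species are on equal footing and the constant $-U^d_{\rm eff}|\vLa|/2$ is absorbed into the definition of $H$. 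This produces both the $\frac{U^d_{\rm eff}}2\sum_x(n^d_{x,\up}+n^d_{x,\down})$ summand of $H_0$ and the $+U^d_{\rm eff}\sum_x n^d_{x,\up}n^d_{x,\down}$ summand of $R$ (with the minus sign making it appear in $H=H_0-R$). For the $f$-Hubbard term, $W^*n^f_{x,\down}W=\gamma_x^2 f_{x,\down}f^*_{x,\down}=\mathbbm1-n^f_{x,\down}$ and $W^*n^f_{x,\up}W=n^f_{x,\up}$, so $U^f\sum_x n^f_{x,\up}n^f_{x,\down}\mapsto U^f\sum_x n^f_{x,\up}(\mathbbm1-n^f_{x,\down})$. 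Finally the linear $f$-number term: $(\vepsilon_f+\omega_0^{-1}g^2)\sum_x n^f_x\mapsto(\vepsilon_f+\omega_0^{-1}g^2)\sum_x(\mathbbm1-n^f_{x,\down}+n^f_{x,\up})$. Now I collect all terms proportional to $n^f_{x,\up}$ and to $n^f_{x,\down}$: combining $U^f\sum_x n^f_{x,\up}$ coming from the expansion with the $\pm(\vepsilon_f+\omega_0^{-1}g^2)$ contributions, the coefficient is $-(\vepsilon_f+\omega_0^{-1}g^2)+\frac{U^f}2$ on the symmetrized $n^f_x$ (after the standard trick $U^f n^f_\up n^f_\down=\frac{U^f}2 n^f_\up n^f_\down+\frac{U^f}2(\mathbbm1-n^f_\up)(\mathbbm1-n^f_\down)+\frac{U^f}2(n^f_\up+n^f_\down)-\frac{U^f}2$), and this coefficient vanishes precisely because \eqref{ChEf} gives $\vepsilon_f+\omega_0^{-1}g^2=\frac12(U^d-U^f)-\omega_0^{-1}g^2$… here I must track the algebra so that the residual linear-in-$n^f$ term cancels against $\frac{U^f}2$ and the $U^d$-dependent constant is exactly cancelled by the additive $-U^d_{\rm eff}|\vLa|/2$ and $+2\omega_0^{-1}g^2|\vLa|$ in the definition of $H$; what is left is $-\frac{U^f}2\sum_x n^f_{x,\up}n^f_{x,\down}-\frac{U^f}2\sum_x(\mathbbm1-n^f_{x,\up})(\mathbbm1-n^f_{x,\down})$, i.e. $-R$'s $f$-part. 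Assembling everything yields $H=H_0-R$ with $H_0,R$ as stated. I expect the one genuinely delicate step to be exactly this bookkeeping of constants and of the linear $n^f_x$ term: one has to verify that the specific choice \eqref{ChEf} of $\vepsilon_f$ together with the two additive constants in the definition of $H$ conspire to kill every term that is not quadratic in the $f$-occupations, and to do so with the signs matching $H=H_0-R$ rather than $H_0+R$. Everything else is a mechanical application of \eqref{WProp}, the CAR \eqref{CAR1}–\eqref{CAR2}, and bipartiteness; I would present the conjugation of each block as a separate short display and then combine.
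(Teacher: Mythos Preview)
Your overall strategy---apply Lemma~\ref{HLFT} to get $K=e^{\im\frac\pi2\Np}e^{L_d}\bs He^{-L_d}e^{-\im\frac\pi2\Np}$, then conjugate by $W$---is exactly the paper's approach, and your treatment of the hopping and hybridization blocks ($T_\down(\VP)\mapsto T_\down(-\VP)$, $V_\down(\VP)\mapsto V_\down(-\VP)$ via bipartiteness and the CAR) is correct.

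The gap is in the on-site bookkeeping. Your claimed identity
\[
U^d_{\rm eff}\sum_x n^d_{x,\up}(\mathbbm1-n^d_{x,\down})=\frac{U^d_{\rm eff}}2\sum_x(n^d_{x,\up}+n^d_{x,\down})-U^d_{\rm eff}\sum_x n^d_{x,\up}n^d_{x,\down}+\text{const}
\]
is false as an operator identity: the left side has a linear term $U^d_{\rm eff}\sum_x n^d_{x,\up}$, which cannot be ``symmetrized'' into $\frac{U^d_{\rm eff}}{2}\sum_x(n^d_{x,\up}+n^d_{x,\down})$ using only the $d$-electrons. The $M=0$ constraint gives $\sum_x(n^d_{x,\up}+n^f_{x,\up})=\sum_x(n^d_{x,\down}+n^f_{x,\down})$, not $\sum_x n^d_{x,\up}=\sum_x n^d_{x,\down}$. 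If you carry your computation through honestly, after matching the quadratic and constant terms you are left with a residual
\[
\tfrac{U^d_{\rm eff}}{2}\sum_x\big(n^d_{x,\up}-n^d_{x,\down}+n^f_{x,\up}-n^f_{x,\down}\big)=U^d_{\rm eff}\,S^{(3)}_{\rm tot},
\]
which vanishes only because we work on the $M=0$ subspace. So the $d$- and $f$-linear pieces must be combined before the constraint can be invoked; treating them separately, as you do, does not close.

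The paper avoids this entirely by first writing $\mathcal U^*\bs H\mathcal U+2\mu|\vLa|=\mathcal U^*(\bs H+\mu N_{\rm e})\mathcal U$ with $\mu=-U^d_{\rm eff}/2$ (using $N_{\rm e}\restriction\h=2|\vLa|$), so that the on-site pieces enter in the manifestly particle--hole symmetric form $-\tfrac12 n^d_x+n^d_{x,\up}n^d_{x,\down}$ and $-\tfrac12 n^f_x+n^f_{x,\up}n^f_{x,\down}$ (after using \eqref{ChEf}); then the two clean identities
\[
W^*\Big(-\tfrac12 n^f_x+n^f_{x,\up}n^f_{x,\down}\Big)W=-\tfrac12 n^f_{x,\up}n^f_{x,\down}-\tfrac12(\mathbbm1-n^f_{x,\up})(\mathbbm1-n^f_{x,\down}),
\]
\[
W^*\Big(-\tfrac12 n^d_x+n^d_{x,\up}n^d_{x,\down}\Big)W=\tfrac12 n^d_x-n^d_{x,\up}n^d_{x,\down}-\tfrac12
\]
finish the job with no residual linear terms and no need to invoke $S^{(3)}_{\rm tot}=0$ a second time. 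I recommend you adopt this $\mu N_{\rm e}$ trick; your route can be salvaged, but only by combining the $d$- and $f$-linear terms and explicitly using the $M=0$ restriction at the end.
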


\begin{proof}
We denote by $N_{\rm e}$ the total electron number operator: $N_{\rm e}=\sum_{x\in \vLa} \sum_{\sigma=\up, \down}(n_{x, \sigma}^d+n_{x, \sigma}^f)$. Put $\mu=-U_{\rm eff}^d/2$.
Noting that $N_{\rm e} \restriction \h=2|\vLa|$, we see that 
\be
\mathcal{U}^*\boldsymbol{H}\mathcal{U}+2\mu|\vLa|= 
\mathcal{U}^*(\boldsymbol{H}+\mu N_{\rm e})\mathcal{U}.
\ee
Using this, the condition \eqref{ChEf}, and the following equations, we obtain the desired claim:
\begin{align}
W^* \bigg( -\frac{1}{2}n_x^f +n_{x, \up}^fn_{x, \down}^f \bigg) W
&= -\frac{1}{2}n_{x, \up}^fn_{x, \down}^f  -\frac{1}{2}(\mathbbm{1}-n_{x, \up}^f)(\mathbbm{1}-n_{x, \down}^f), \\
W^* \bigg(-\frac{1}{2}n_x^d+n_{x, \up}^d n_{x, \down}^d\bigg)W&=\frac{1}{2} n_x^d-n_{x, \up}^d n_{x, \down}^d-\frac{1}{2}.
\end{align}
\end{proof}

\subsection{Strategy of the proof of Theorem \ref{GSPart} }

The following theorem is essential in the proof of Theorem \ref{GSPart}:
\begin{Thm}\label{MainPI}
The semigroup $\{e^{-\beta H}\}_{\beta \ge 0}$ is ergodic w.r.t. $\p$.
\end{Thm}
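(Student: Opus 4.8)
The goal is to show that the heat semigroup $\{e^{-\beta H}\}_{\beta\ge 0}$ is ergodic with respect to $\p=\int^{\oplus}_{\BbbR^{|\vLa|}}\p_{\rm e}\,d\bq$, where $H=H_0-R$ as in Lemma \ref{HHPT}. The strategy is the standard two-part recipe: (1) establish positivity preservation, $e^{-\beta H}\unrhd 0$ w.r.t.\ $\p$ for all $\beta\ge 0$; (2) establish the connectivity/irreducibility property, that for each $u,v\in\p\setminus\{0\}$ there is a $\beta\ge 0$ with $\la u|e^{-\beta H}v\ra>0$. Part (1) is the more routine step, and I would dispatch it using the tools already assembled in Section \ref{Sec2}. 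Part (2) is the crux and is exactly what the paper defers to Sections \ref{Sec4}--\ref{Sec8}.

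\textbf{Step 1: positivity preservation.}
First I would verify that $e^{-\beta H_0}\unrhd 0$ w.r.t.\ $\p$ for all $\beta\ge0$, by a Trotter product argument: decompose $H_0$ into its summands $T_\up(\VP)$, $T_\down(-\VP)$, $V_\up(\VP)$, $V_\down(-\VP)$, the diagonal density term $\tfrac{U^d_{\rm eff}}{2}\sum_x(n^d_{x,\up}+n^d_{x,\down})$, and $\omega_0\Np$, and check that each generates a positivity-preserving semigroup with respect to $\p$. Under the identification $\h_{\rm e}=\mathscr{L}^2(\mathfrak{E})$ and the fiber decomposition $\h=\int^\oplus\h_{\rm e}\,d\bq$, the electronic hopping and hybridization operators, after the Lang--Firsov and hole-particle transformations of Subsection \ref{DefoHam}, should act fiberwise in a form $A\otimes\vartheta A\vartheta$ (plus diagonal pieces), so Proposition \ref{PPI2} and its Corollary give $e^{tA\otimes\mathbbm{1}+t\mathbbm{1}\otimes\vartheta A\vartheta}\unrhd0$ w.r.t.\ $\mathfrak{C}=\p_{\rm e}$; the $\bq$-dependence is carried through Proposition \ref{BasicDP}. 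The phonon term $\omega_0\Np$ generates $e^{-\beta\omega_0\Np}$, which is positivity preserving w.r.t.\ $L^2_+(\BbbR^{|\vLa|})$ (Mehler kernel is positive), hence w.r.t.\ $\p$ by Proposition \ref{BasicDP2}(ii). Checking the signs of the off-diagonal matrix elements (in particular that the hole-particle transformation $W$ and the factors $e^{\pm\im\vP_{x,y}}$, $e^{\pm\im\vP_x}$ conspire to produce the correct sign structure relative to the CONS $\{|\bs X\ra\}$) is the one place requiring genuine care. Once $e^{-\beta H_0}\unrhd0$ is in hand, I would observe that $R\ge0$ as a multiplication-type operator that is also positivity preserving w.r.t.\ $\p$ (it is built from number operators $n^f_{x,\sigma}$, $n^d_{x,\sigma}$ and their complements, all diagonal in $\{|\bs X\ra\}$ with nonnegative entries), and invoke Lemma \ref{ppiexp1} to conclude $e^{-\beta H}=e^{-\beta(H_0-R)}\unrhd e^{-\beta H_0}\unrhd0$ w.r.t.\ $\p$ for all $\beta\ge0$.

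\textbf{Step 2: connectivity.}
For the second condition I would argue that, given $u,v\in\p\setminus\{0\}$, one can find $\beta$ with $\la u|e^{-\beta H}v\ra>0$. By the Duhamel (Dyson) expansion $e^{-\beta H}=\sum_{n\ge0}\int_{0\le s_1\le\cdots\le s_n\le\beta}e^{-(\beta-s_n)H_0}R\,e^{-(s_n-s_{n-1})H_0}R\cdots R\,e^{-s_1 H_0}\,ds$, all terms being $\unrhd0$, it suffices to show that a single term is strictly positive between $u$ and $v$; and since $\p$ is the closed conical hull of product vectors $\psi\otimes f$ (equation \eqref{PExp2}), it is enough to handle $u=\psi\otimes f$, $v=\phi\otimes g$ with $\psi,\phi\in\p_{\rm e}\setminus\{0\}$ and $f,g\in L^2_+\setminus\{0\}$, and then to reduce further to $\psi,\phi$ among a spanning set of $\p_{\rm e}$ (rank-one operators $|\bs X\ra\la\bs Y|$ with $\bs X,\bs Y\in\mathscr{C}$, whose convex/conical combinations are dense in $\mathscr{L}^2_+(\mathfrak{E})$). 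The task then becomes: show that repeated application of $H_0$ (hopping $T_\sigma$, hybridization $V_\sigma$, phonon kinetic energy hidden in $\omega_0\Np$ once one passes to the $q$-representation where $p_x=-\im\partial_{q_x}$) connects any electron configuration to any other while coupling to the phonon Gaussian, using that $G_d$ is connected and bipartite (Assumption \hyperlink{A2}{(A.2)}) and $V\neq0$ (Assumption \hyperlink{A1}{(A.1)}). This is a genuinely intricate combinatorial-analytic argument because the electronic and phononic sectors are entangled through the $\vP_x$-dependent phases, and because $\h_{\rm ph}$ is infinite-dimensional; this is precisely why the paper isolates it as an abstract theorem with five hypotheses to be verified model-by-model in Sections \ref{Sec5}--\ref{Sec8}. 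I would therefore, in this section, simply reduce Theorem \ref{MainPI} to that abstract connectivity theorem and defer its proof, exactly as the authors announce.

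\textbf{Main obstacle.} The hard part is Step 2: proving irreducibility of the semigroup on the full electron-phonon Hilbert space. Positivity preservation (Step 1) follows mechanically from the operator-inequality toolkit; but ergodicity requires a careful path-connectivity argument linking arbitrary electron configurations through the graph $G_d$ and the hybridization channel, all while the phonon field — an infinite-dimensional, unbounded sector — is dragged along via the noncommuting phases $e^{\pm\im\vP_x}$. Handling this is the technical heart of the paper and is what the subsequent sections are devoted to.
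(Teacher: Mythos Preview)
Your Step 1 is correct and closely parallels the paper's argument (Lemmas \ref{RProp}, \ref{PPH_0}, \ref{MonoE}): fiberwise positivity preservation for $e^{-\beta H_1}$ via Proposition \ref{PPI2}, positivity of the Mehler semigroup, Trotter, and finally Lemma \ref{ppiexp1} using $R\unrhd 0$.

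Your Step 2 sketch, however, has a genuine gap. The reduction you propose---``reduce to $u=\psi\otimes f$, then to rank-one operators $|\bs X\ra\la\bs Y|$ whose conical combinations are dense in $\mathscr{L}^2_+(\mathfrak{E})$''---is incorrect on two counts. First, for $\bs X\neq\bs Y$ the operator $|\bs X\ra\la\bs Y|$ is not positive semidefinite, hence not in $\p_{\rm e}=\mathscr{L}^2_+(\mathfrak{E})$ at all; and conical combinations of the diagonal ones $|\bs X\ra\la\bs X|$ give only operators diagonal in the $\{|\bs X\ra\}$ basis, which is far from dense in $\mathscr{L}^2_+(\mathfrak{E})$. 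Second, the reduction to product vectors is not innocent: $\p$ is only the \emph{closed} conical hull in \eqref{PExp2}, and a general $u\in\p\setminus\{0\}$ need not dominate any nonzero product $\psi\otimes f$, so one cannot simply pass to a single product term.

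The paper circumvents exactly this obstacle with machinery you have not anticipated. Rather than reducing $u$ to a spanning set, the paper builds, for each $\varphi\in\p\setminus\{0\}$, an operator $E_\varphi(\beta,\beta')$ (a product of projection-like operators $F_{\beta'}(\bs X_i)=P_{\bs X_i}e^{-\beta' H_0}Q_{\bs X_i}e^{-\beta' H_0}P_{\bs X_i}$ along a path in an auxiliary graph on $\mathscr{C}$) satisfying three things: it is $\unrhd 0$; it is dominated by the semigroup in a weak sense formalized by a new relation ``$\{e^{-tH}\}\succeq E_\varphi(\beta,\beta')$'' (Definition \ref{Def succ}, Lemma \ref{succ prod}); and, after rescaling, $E_\varphi(\beta,\beta')\varphi\to c|\bs F,\bs F;f\ra$ for some $c>0$ and $f\in L^2_+\setminus\{0\}$. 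The last point is the deepest: it rests on showing (Proposition \ref{AEE}) that if $\bs X$ is chosen with $\varphi_{\bs X,\bs X}\neq 0$ and $|\bs X|_\triangle$ \emph{maximal}, then the projection $\Ex_{\bs X}=E_{\bs X}\otimes E_{\bs X}$ collapses $\varphi$ to the single product vector $|\bs X,\bs X\ra\otimes\varphi_{\bs X,\bs X}$. This maximality trick is what extracts a product form from an arbitrary element of the cone, and it is precisely what your reduction is missing. Ergodicity then follows from an abstract criterion (Theorem \ref{ergodicity}) that packages these ingredients together with the easy fact that vectors of the form $|\bs F,\bs F;f\ra$ are connected to each other by the semigroup (Proposition \ref{ISP}).
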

The proof of this theorem is involved and lengthy and will be given in Sections \ref{Sec4}-\ref{Sec8}.

Given that Theorem \ref{MainPI} holds, Theorem \ref{GSPart} can be proved as follows: 
According to Theorem \ref{pff}, the ground state of $H$ is unique. Let $\psi$ be the ground state of $H$, then $\psi$ can be chosen to be strictly positive with respect to $\p$.
Since  $\bs H=\mathcal{U} H\mathcal{U}^*+\mathrm{const}.$ holds by Lemma \ref{HHPT}, we know that $\psi_{\rm g}=\mathcal{U} \psi$ is the ground state of $\bs H$.
Using \eqref{AnniIdn} and  Proposition \ref{PPI2}, we find that 
\be
\gamma_x \gamma_y \mathcal{U}^*S_x^{d, (+)} S_y^{d, (-)}\mathcal{U}=S_x^{d, (+)} S_y^{d, (-)} \unrhd 0\ \mbox{w.r.t. $\p$}.
\ee
Because  $S_x^{d, (+)} S_y^{d, (-)} \psi\neq 0$ holds by 
Lemma \ref{PPSP}, we obtain 
\be
\gamma_x \gamma_y \big\la S_x^{d, (+)} S_y^{d, (-)}\big\ra=\gamma_x \gamma_y \la \psi|\mathcal{U}^*S_x^{d, (+)} S_y^{d, (-)}\mathcal{U} \psi\big\ra=\la \psi|S_x^{d, (+)} S_y^{d, (-)}\psi\big\ra>0.
\ee
We can prove the remaining claims by applying the similar method to the other two-point correlation functions.

\subsection{The proof strategy for the model in which localized electrons and phonons interact} \label{StHf}
In this subsection, we explain the strategy of the proof for $\bs H_f$. Most parts of the proof are the same as in the case of $\bs H_d$. The main change is in part (Subsection \ref{DefoHam}), where the Hamiltonian is transformed into a form that is convenient for analysis.
We will explain this part in some detail.

Let us introduce the Lang--Firsov transformation for localized electrons:
\begin{align}
L_f
=-\im \frac{\sqrt2 g}{\omega_0}\sum_{x\in\vLa }n_x^fp_x.
\end{align}
Then, the following lemma corresponds to Lemma \ref{HLFT}:
\begin{Lemm}\label{HLFT2}
One obtains 
\begin{align}
&e^{\im \frac{\pi}{2}N_\mathrm{p}}e^{L_f} \boldsymbol{H}_f e^{-L_f}e^{-\im \frac{\pi}{2}N_\mathrm{p}}\no
&=T_\up +T_\down+(\varepsilon_f-\omega_0^{-1}g^2)\sum_{x\in\vLa } n_{x}^f +V_\up(-\VP) +V_\down(-\VP) +U_{\rm eff}^f\sum_{x\in \vLa } n_{x,  \uparrow}^fn_{x,  \downarrow}^f \no
&\quad\quad+U^d\sum_{x\in\vLa }n_{x, \up}^dn_{x, \down}^d+\omega_0\Np, 
\end{align}
where
\begin{align}
T_\sigma
&=\sum_{x,  y\in \vLa } (-t_{x,  y})d_{x,  \sigma}^* d_{y,  \sigma}.
\end{align}
\end{Lemm}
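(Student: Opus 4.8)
The statement of Lemma \ref{HLFT2} is an identity, so the plan is simply to verify it by direct computation, following the same pattern as the proof of Lemma \ref{HLFT}. The key observation is that the Lang--Firsov transformation $e^{L_f}$ acts only on the $f$-electron sector and the phonon sector, leaving the conduction electrons untouched. Concretely, I would first record the analogues of \eqref{LF1}--\eqref{LF3} with $n_x^d$ replaced by $n_x^f$: namely $e^{L_f}d_{x,\sigma}e^{-L_f}=d_{x,\sigma}$, $e^{L_f}f_{x,\sigma}e^{-L_f}=\exp(\im\frac{\sqrt2 g}{\omega_0}p_x)f_{x,\sigma}$, and $e^{L_f}b_xe^{-L_f}=b_x-\frac{g}{\omega_0}n_x^f$. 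These follow from the fact that $n_x^f$ commutes with $d_{y,\sigma}$ and the standard Baker--Campbell--Hausdorff identities used in Lemma \ref{HLFT}.

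Next I would apply these transformations term by term to $\boldsymbol H_f$. The hopping term $\sum_{x,y}(-t_{x,y})d_{x,\sigma}^*d_{y,\sigma}$ is invariant under $e^{L_f}$ since it involves only conduction operators, and it is also invariant under $e^{\im\frac\pi2 N_{\mathrm p}}$ since it contains no phonon operators; hence it contributes $T_\up+T_\down$. The hybridization term picks up phases: $e^{L_f}f_{x,\sigma}^*d_{x,\sigma}e^{-L_f}=f_{x,\sigma}^*\exp(-\im\frac{\sqrt2 g}{\omega_0}p_x)d_{x,\sigma}$, and then conjugating by $e^{\im\frac\pi2 N_{\mathrm p}}$ turns $p_x$ into $-q_x$ by \eqref{LF1}, producing $\exp(\im\frac{\sqrt2 g}{\omega_0}q_x)=\exp(\im\vP_x)$; collecting both summands gives $V_\up(-\VP)+V_\down(-\VP)$ in the notation of \eqref{DefVP}. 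For the phonon energy, $\omega_0\Np=\omega_0\sum_x b_x^*b_x=\frac{\omega_0}{2}\sum_x(p_x^2+q_x^2)-\frac{\omega_0}{2}|\vLa|$; substituting $b_x\mapsto b_x-\frac{g}{\omega_0}n_x^f$ and using $(n_x^f)^2=n_{x,\up}^f+n_{x,\down}^f+2n_{x,\up}^fn_{x,\down}^f$ together with the electron-phonon coupling term $g\sum_x n_x^f(b_x^*+b_x)$ yields, after the $e^{\im\frac\pi2 N_{\mathrm p}}$ conjugation leaves $\Np$ unchanged, the shift $-\omega_0^{-1}g^2\sum_x (n_x^f)^2$, which combines with the $\vepsilon_f n_x^f$ term and the $U^f n_{x,\up}^f n_{x,\down}^f$ term to give the stated $(\vepsilon_f-\omega_0^{-1}g^2)\sum_x n_x^f + U_{\rm eff}^f\sum_x n_{x,\up}^f n_{x,\down}^f$, since $U_{\rm eff}^f=U^f-2g^2/\omega_0$. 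The Coulomb term $U^d\sum_x n_{x,\up}^d n_{x,\down}^d$ is untouched throughout.

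The bookkeeping I expect to be the only mildly delicate point is tracking the sign of the phase in the hybridization term (whether one gets $V_\up(\VP)$ or $V_\up(-\VP)$), which depends on the interplay between the $-\im$ in the definition of $L_f$, the sign in $p_x=\frac{\im}{\sqrt2}(\overline{b_x^*-b_x})$, and the substitution $p_x\mapsto -q_x$ under $e^{\im\frac\pi2 N_{\mathrm p}}$; one should also confirm that $T_\up,T_\down$ acquire no phase at all, precisely because $L_f$ involves $n^f$ rather than $n^d$ — this is the structural difference from Lemma \ref{HLFT}, where conjugation by $e^{L_d}$ did dress the hopping term and split the hybridization phases asymmetrically. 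Everything else is a routine repetition of the computation behind Lemma \ref{HLFT}, so I would simply state that the identity follows by direct calculation using the $f$-analogues of \eqref{LF1}--\eqref{LF3}, exactly as in the proof of Lemma \ref{HLFT}.
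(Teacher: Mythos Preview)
Your proposal is correct and follows exactly the approach the paper intends: the paper gives no separate proof of Lemma \ref{HLFT2}, treating it as the obvious $f$-analogue of Lemma \ref{HLFT}, whose proof is just ``Direct calculation using \eqref{LF1}--\eqref{LF3}''. Your term-by-term verification, including the sign tracking for the hybridization phase and the $(n_x^f)^2=n_x^f+2n_{x,\up}^fn_{x,\down}^f$ expansion, is precisely that direct calculation carried out explicitly.
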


The following lemma  corresponds to Lemma \ref{HHPT}:

\begin{Lemm}\label{HHPT2}
Choose
$
\varepsilon_f=\frac{1}{2}(U^d-U^f)+2\omega_0^{-1}g^2
$. 
Set 
$\mathcal{U}_f=e^{-L_f}e^{-\im \frac{\pi}{2}N_\mathrm{p}}W$.
We define the self-adjoint operator $H_f$ by 
 $H_f=\mathcal{U}_f^*\boldsymbol{H}_f\mathcal{U}_f-U^d|\vLa |$ 
 Then, $H_f$ can be represented as 
\begin{align}
H_f
&=H^f_0 -R_f,
\end{align}
where
\begin{align}
H^f_0
&=T_\up+T_\down +V_\up(-\VP) +V_\down(\VP) 
 +\frac{U^d}{2}\sum_{x\in \vLa}(n_{x, \up}^d+n_{x, \down}^d) +\omega_0\Np, \label{DefH_0^f} \\
R_f
&= \frac{U_{\rm eff}^f}{2}\sum_{x\in \vLa } n_{x,  \uparrow}^fn_{x,  \downarrow}^f +
\frac{U_{\rm eff}^f}{2}\sum_{x\in \vLa } (\mathbbm{1}
-n_{x,  \uparrow}^f)(\mathbbm{1}-n_{x,  \downarrow}^f)+U^d\sum_{x\in\vLa }  n_{x, \up}^dn_{x, \down}^d.
\end{align}
\end{Lemm}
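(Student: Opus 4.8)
The plan is to mirror the proof of Lemma~\ref{HHPT} essentially verbatim, with the localized electrons now playing the role that the conduction electrons played there. Since $L_f^*=-L_f$ and $W$, $e^{-\im\frac{\pi}{2}\Np}$ are unitary, one has $\mathcal U_f^*\bs H_f\mathcal U_f=W^*\big(e^{\im\frac{\pi}{2}\Np}e^{L_f}\bs H_f e^{-L_f}e^{-\im\frac{\pi}{2}\Np}\big)W$, so Lemma~\ref{HLFT2} reduces the whole task to conjugating the explicit operator on its right-hand side (call it $\widetilde H_f$) by $W$, restricting to $\h$, and subtracting the $c$-number $U^d|\vLa|$.

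First I would recast the on-site part of $\widetilde H_f$ into a form adapted to the hole--particle transformation, exactly as in the proof of Lemma~\ref{HHPT}. Substituting the hypothesis $\varepsilon_f=\tfrac12(U^d-U^f)+2\omega_0^{-1}g^2$ and using $U_{\rm eff}^f=U^f-2g^2/\omega_0$ gives $\varepsilon_f-\omega_0^{-1}g^2=\tfrac12 U^d-\tfrac12 U_{\rm eff}^f$, so the $f$-part $(\varepsilon_f-\omega_0^{-1}g^2)\sum_x n_x^f+U_{\rm eff}^f\sum_x n_{x,\up}^f n_{x,\down}^f$ rewrites as $U_{\rm eff}^f\sum_x\big(-\tfrac12 n_x^f+n_{x,\up}^f n_{x,\down}^f\big)+\tfrac12 U^d\sum_x n_x^f$. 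On $\h$, where $N_{\rm e}\!\restriction\!\h=2|\vLa|$ forces $\sum_x n_x^f=2|\vLa|-\sum_x n_x^d$, the leftover $\tfrac12 U^d\sum_x n_x^f$ merges with $U^d\sum_x n_{x,\up}^d n_{x,\down}^d$ into $U^d\sum_x\big(-\tfrac12 n_x^d+n_{x,\up}^d n_{x,\down}^d\big)$ modulo a $c$-number.

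Then I would conjugate $\widetilde H_f\!\restriction\!\h$ by $W$ term by term, using \eqref{WProp}. For the (undressed) hoppings, $W^*T_\up W=T_\up$ trivially and $W^*T_\down W=T_\down$, the latter from $W^* d_{x,\down}^* d_{y,\down}W=\gamma_x\gamma_y(\delta_{x,y}-d_{y,\down}^* d_{x,\down})$ combined with $t_{x,y}=t_{y,x}$, the bipartiteness relation $\gamma_x\gamma_y=-1$ on $\{t_{x,y}\ne0\}$, and $t_{x,x}=0$. For the hybridization, $W^*V_\up(-\VP)W=V_\up(-\VP)$ since $W$ acts trivially on $\up$-operators, whereas $W^*V_\down(-\VP)W=V_\down(\VP)$ because the anticommutation relations give $W^*(f_{x,\down}^* d_{x,\down})W=d_{x,\down}^* f_{x,\down}$ and $W^*(d_{x,\down}^* f_{x,\down})W=f_{x,\down}^* d_{x,\down}$ while $\VP$ commutes with $W$. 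For the two on-site blocks I would invoke the identities already verified inside the proof of Lemma~\ref{HHPT}, namely $W^*\big(-\tfrac12 n_x^f+n_{x,\up}^f n_{x,\down}^f\big)W=-\tfrac12 n_{x,\up}^f n_{x,\down}^f-\tfrac12(\one-n_{x,\up}^f)(\one-n_{x,\down}^f)$ and $W^*\big(-\tfrac12 n_x^d+n_{x,\up}^d n_{x,\down}^d\big)W=\tfrac12 n_x^d-n_{x,\up}^d n_{x,\down}^d-\tfrac12$. Reassembling, the kinetic and hybridization terms together with $\tfrac{U^d}{2}\sum_x(n_{x,\up}^d+n_{x,\down}^d)$ form $H_0^f$, the two $W$-conjugated on-site blocks together with $U^d\sum_x n_{x,\up}^d n_{x,\down}^d$ form $-R_f$, and after subtracting $U^d|\vLa|$ the residual $c$-numbers cancel, giving $H_f=H_0^f-R_f$.

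I do not expect a genuine obstacle: the whole argument is a line-by-line transcription of the computation behind Lemma~\ref{HHPT}, with $f$ and $d$ interchanged in the Lang--Firsov step and a different additive constant. The only point that requires real care is the sign and $c$-number bookkeeping in the last two paragraphs --- in particular the flip $V_\down(-\VP)\mapsto V_\down(\VP)$, the invariance of the undressed hoppings under $W$, and the check that the additive constant fixed in the definition of $H_f$ is exactly the one annihilating the residual $c$-numbers generated by $N_{\rm e}\!\restriction\!\h=2|\vLa|$ and by the $\tfrac12$'s in the $W$-identities (the precise value of that constant being in any case immaterial to the subsequent ground-state analysis).
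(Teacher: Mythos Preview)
Your proposal is correct and is precisely the line-by-line transcription of the proof of Lemma~\ref{HHPT} that the paper intends; the paper gives no separate argument for Lemma~\ref{HHPT2}, simply presenting it as the analogue of Lemma~\ref{HHPT} with the roles of $d$ and $f$ electrons swapped in the Lang--Firsov step. Your caveat about the $c$-number bookkeeping is well placed: carrying the arithmetic through, the residual constant after applying the two $W$-identities and the constraint $\sum_x n_x^f=2|\vLa|-\sum_x n_x^d$ comes out as $\tfrac{U^d}{2}|\vLa|$ rather than $U^d|\vLa|$, so the additive constant in the stated definition of $H_f$ seems to be off by a factor of two --- but, as you already note, this is entirely immaterial for the subsequent ground-state analysis.
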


\begin{Rem}\label{Reason}
\upshape
Comparing Lemmas \ref{HHPT} and \ref{HHPT2}, we find that the transformed Hamiltonian $H_f$ is more straightforward in structure and easier to analyze. In a more detailed description, the hopping term ($T_{\sigma}$) of the conduction electrons in the Hamiltonian $H_f$ does not contain any phonon-related operators. On the other hand, in Lemma \ref{HHPT}, the hopping term ($T_{\sigma}(\pm \VP)$) of the transformed Hamiltonian $H$ contains the operators concerning phonons. Therefore, the analysis of $\bs H_d$ is technically much more complicated. Therefore, most of this paper will discuss $\bs H=\bs H_d$ in detail.

\end{Rem}

\section{Structure of the proof of Theorem \ref{MainPI}}\label{Sec4}
In the previous section, we found that Theorem \ref{MainPI} is essential in proving Theorem \ref{GSPart}.
In this section, we will give a broad overview of the structure of the proof of Theorem \ref{MainPI}; 
the more intricate parts of the proof are discussed in detail in Sections \ref{Sec5}-\ref{Sec8}.
In this section, only $\bs H=\bs H_d$ will be discussed in detail. See Remark \ref{Reason} for the reason.

\subsection{Abstraction of the structure of the proof}
This subsection aims to prove Theorem \ref{ergodicity}, which abstractly expresses the structure of the proof of Theorem \ref{MainPI}.

Let $\X$ be a complex Hilbert space. Suppose that we are given a  certain  Hilbert cone $\p$ in $\X$.

\begin{Def}\label{Def succ}\upshape
Let $A$ be a self-adjoint operator on $\X$, bounded from below. Let $B\in\mathscr{B}(\X)$. Suppose that 
$B\unrhd0 \wrt\p $ and $e^{-tA}\unrhd0\wrt\p $ for all $t\geq0$.
 We express   $\{e^{-tA}\}_{t\geq0}\succeq B\wrt\p $ if, 
for any  $u, v\in\p $ satisfying $\i<u| Bv>>0$,
 there exists a $t\ge 0$
such that  $\i<u| e^{-tA}v>>0$.
\end{Def}

\begin{Rem}\label{Yowai}
\rm
If $e^{-tA} \unrhd B$ w.r.t. $\p $ for all $t\ge 0$, then we readily confirm that $\{e^{-tA}\}_{t\geq0}\succeq B\wrt\p $.
From this, the inequality ``$\succeq$'' can be regarded as a more generalized concept than ``$\unrhd $''.
\end{Rem}

The following lemma is helpful in applying the new inequality-like notion introduced in Definition \ref{Def succ}.
\begin{Lemm}\label{succ prod}
Let $A$ be a self-adjoint operator on $\X$, bounded from below. Let $B_1,\ldots,B_n\in\mathscr{B}(\X)$.
Suppose that $e^{-tA}\unrhd0\wrt\p $ for all $t\geq0$.  Suppose that $B_j\unrhd0\wrt\p\ (j=1, \dots, n)$.
If $\{e^{-tA}\}_{t\geq0}\succeq B_j\wrt\p \ (j=1, \dots, n)$ hold, then we obtain
\begin{align}
\{e^{-tA}\}_{t\geq0}\succeq B_1\cdots B_n\wrt\p.  \label{succ prod-1}
\end{align}
\end{Lemm}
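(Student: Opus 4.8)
The plan is to prove the statement by induction on $n$, reducing everything to the case $n=2$. For $n=1$ there is nothing to prove. Suppose the claim holds for $n-1$, so that $\{e^{-tA}\}_{t\geq0}\succeq B_2\cdots B_n\wrt\p$; writing $C=B_2\cdots B_n$, we have $C\unrhd 0\wrt\p$ by Lemma \ref{PPBasic}(iv), and it suffices to show $\{e^{-tA}\}_{t\geq0}\succeq B_1 C\wrt\p$. Thus the whole lemma follows once we establish the two-factor statement: if $\{e^{-tA}\}_{t\geq0}\succeq B\wrt\p$ and $\{e^{-tA}\}_{t\geq0}\succeq C\wrt\p$ with $B,C\unrhd0\wrt\p$, then $\{e^{-tA}\}_{t\geq0}\succeq BC\wrt\p$.

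For the two-factor case, take $u,v\in\p$ with $\langle u|BCv\rangle>0$. The idea is to ``split'' the positivity across the product using a resolution of the identity adapted to the cone $\p$. By Definition \ref{DefPPInq}(ii), every vector of $\X$ decomposes through $\p$; more concretely, I would fix any orthonormal-type family or, better, invoke that $\p$ spans $\X_\BbbR$, so that $Cv = \sum_k w_k$ with each $w_k\ge 0\wrt\p$ (for instance via a spectral-type decomposition of $Cv$ against the cone, or by writing $Cv$ in a basis of $\p$ when $\p$ has a natural such basis as in the applications — but to stay abstract one uses only that $Cv\in\overline{\mathrm{coni}}(\p)$ since $Cv\ge0\wrt\p$). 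Then $0<\langle u|BCv\rangle=\langle u|B(Cv)\rangle$, and since $B\unrhd0\wrt\p$ each term $\langle u|Bw_k\rangle\ge0$, so there is some $w:=w_{k_0}\ge0\wrt\p$ with $\langle u|Bw\rangle>0$. By $\{e^{-tA}\}_{t\geq0}\succeq B\wrt\p$ there is $s\ge0$ with $\langle u|e^{-sA}w\rangle>0$. Now the roles reverse: rewrite $\langle u|e^{-sA}w\rangle = \langle (e^{-sA})^*u\,|\,w\rangle$ and note $(e^{-sA})^*u=e^{-sA}u\ge0\wrt\p$. Since $w$ was one summand of $Cv$ and $e^{-sA}u\ge0\wrt\p$, decomposing $e^{-sA}u$ against $\p$ and using $C\unrhd0\wrt\p$ lets me conclude $\langle e^{-sA}u|Cv\rangle>0$ provided I can guarantee no cancellation — here I would instead run the argument symmetrically: from $\langle u|Bw\rangle>0$ and the hypothesis on $B$, and then feeding the resulting strictly positive pairing into the hypothesis on $C$ with $w$ replaced by an appropriate component, obtaining $\langle u|e^{-s_1 A}\,e^{-s_2 A}v\rangle=\langle u|e^{-(s_1+s_2)A}v\rangle>0$, which is exactly $\{e^{-tA}\}_{t\geq0}\succeq BC\wrt\p$ with $t=s_1+s_2$.

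More carefully, the clean route for $n=2$ is: given $\langle u|BCv\rangle>0$, decompose $u\ge0$ and $v\ge0$ against $\p$ and observe $BCv\ge0\wrt\p$, so pick a component $v'\ge0\wrt\p$ of $v$ with $\langle u|BCv'\rangle>0$ (possible by Definition \ref{DefPPInq}(ii) and positivity of each piece). Set $z=Cv'\ge0\wrt\p$. Then $\langle u|Bz\rangle>0$, so by the first hypothesis there is $s\ge0$ with $\langle u|e^{-sA}z\rangle>0$, i.e. $\langle e^{-sA}u|Cv'\rangle>0$ with $e^{-sA}u\ge0\wrt\p$. Applying the second hypothesis to the pair $(e^{-sA}u, v')$ gives $s'\ge0$ with $\langle e^{-sA}u|e^{-s'A}v'\rangle>0$, hence $\langle u|e^{-(s+s')A}v'\rangle>0$, and since $v'$ is a component of $v$ and $e^{-(s+s')A}u\ge0\wrt\p$, we get $\langle u|e^{-(s+s')A}v\rangle = \langle e^{-(s+s')A}u|v\rangle\ge \langle e^{-(s+s')A}u|v'\rangle>0$ using positivity of the remaining components. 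This yields $\{e^{-tA}\}_{t\geq0}\succeq BC\wrt\p$ with $t=s+s'$.

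The main obstacle is the possibility of cancellation when decomposing $Cv$ (or $u$) into cone components: strict positivity of $\langle u|BCv\rangle$ does not immediately hand one a strictly positive summand unless one knows all summands are individually nonnegative, which is why it is essential to use the decomposition from Definition \ref{DefPPInq}(ii) together with the positivity-preserving property of $B$, $C$ and $e^{-tA}$ at each stage. The bookkeeping of which factor's hypothesis is applied to which pair, and ensuring the vectors fed in genuinely lie in $\p$ (not merely in $\X_\BbbR$), is the only delicate point; once the two-factor case is arranged so that no step produces a difference of cone vectors, the induction closes immediately.
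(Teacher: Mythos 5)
Your argument is correct and, once the unnecessary machinery is stripped away, it is the paper's own proof: apply $\succeq B_1$ to the pair $(u, Cv)$, rewrite using self-adjointness of $e^{-sA}$, then apply $\succeq C$ to the pair $(e^{-sA}u, v)$, and close the induction. (The paper splits the last factor off on the right rather than the first on the left, but that is immaterial.) The decomposition-into-cone-components detour, however, is both unnecessary and a bit of a red flag: since $u,v\in\p$ by the very definition of $\succeq$, and since $C\unrhd0$ and $e^{-tA}\unrhd0$ w.r.t.\ $\p$, the vectors $Cv$ and $e^{-sA}u$ already lie in $\p$. There is no sum, hence no cancellation to worry about, and nothing to ``split.'' The closing paragraph about ``the main obstacle'' being cancellation when decomposing $Cv$ indicates you have not quite internalized that the positivity-preserving hypotheses on $B_j$ and $e^{-tA}$ are precisely what keep every intermediate vector inside $\p$, so the two applications of $\succeq$ can be chained with no further argument. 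Also beware that for a general Hilbert cone there is no canonical ``component'' decomposition of a cone vector; Definition \ref{DefPPInq}(ii) decomposes arbitrary vectors of $\X$ through $\p$, which is a different thing — invoking it here suggests you are implicitly picturing $\p$ as a positive orthant with a preferred basis. In short: keep $v'=v$, drop every mention of decomposition, and the argument is exactly what the paper does.
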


\begin{proof}
We prove Lemma \ref{succ prod} by mathematical induction.

Since
$\{e^{-tA}\}_{t\geq0}\succeq B_1\wrt\p $, \eqref{succ prod-1} holds for  $n=1$.
Assume that \eqref{succ prod-1} holds for some $n$.
Set $C_n=B_1\cdots B_n$.
Because $\{e^{-tA}\}_{t\geq0}\succeq C_{n}\wrt\p $ holds,
for each  $u, v\in\p $ satisfying $\i<u|C_nB_{n+1}v>>0$,  there is a $t \ge 0$ such that 
$
\i<u| e^{-tA}B_{n+1}v>>0.
$
By using the assumption: $\{e^{-tA}\}_{t\geq0}\succeq B_{n+1}\wrt\p $, we see that  there exists an $s\ge 0$ satisfying 
$
\i<u| e^{-(t+s)A}v>>0,
$
which implies that $\{e^{-tA}\}_{t\geq0}\succeq C_nB_{n+1}\wrt\p$. Therefore, $\{e^{-tA}\}_{t\geq0}\succeq B_1\cdots B_n\wrt\p $ holds for any $n\in \BbbN$.
\end{proof}

\begin{Coro}\label{succ prod2}
Let $A$ be a self-adjoint operator on $\X$, bounded from below. Let $B_1,B_2\in\mathscr{B}(\X)$.
Suppose that $e^{-tA}\unrhd0\wrt\p $ for all $t\geq0$.  Suppose that $B_j\unrhd0\wrt\p\ (j=1,2)$.
If $\{e^{-tA}\}_{t\geq0}\succeq B_j\wrt\p \ (j=1, 2)$ hold, then, for every $\beta \ge 0$, we obtain
\begin{align}
\{e^{-tA}\}_{t\geq0}\succeq B_1e^{-\beta A}B_2 \wrt\p.
\end{align}
\end{Coro}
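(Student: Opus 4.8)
The plan is to deduce this immediately from Lemma \ref{succ prod} applied to the three bounded operators $B_1$, $e^{-\beta A}$, $B_2$, the only genuinely new ingredient being a trivial verification that $\{e^{-tA}\}_{t\geq0}\succeq e^{-\beta A}\wrt\p$. First I would record that observation: since $A$ is bounded from below and $\beta\ge 0$, we have $e^{-\beta A}\in\mathscr B(\X)$, and $e^{-\beta A}\unrhd0\wrt\p$ holds by hypothesis; moreover, if $u,v\in\p$ satisfy $\langle u| e^{-\beta A}v\rangle>0$, then the choice $t=\beta$ already gives $\langle u| e^{-tA}v\rangle>0$. Hence the defining condition in Definition \ref{Def succ} is met with $B=e^{-\beta A}$, i.e. $\{e^{-tA}\}_{t\geq0}\succeq e^{-\beta A}\wrt\p$.

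Next I would invoke Lemma \ref{succ prod} with $n=3$, taking the operators $(B_1,B_2,B_3)$ there to be $(B_1,\,e^{-\beta A},\,B_2)$ here. All hypotheses are in place: $e^{-tA}\unrhd0\wrt\p$ for every $t\ge 0$ is assumed; each of $B_1$, $e^{-\beta A}$, $B_2$ is $\unrhd0\wrt\p$; and $\{e^{-tA}\}_{t\geq0}\succeq$ each of these operators $\wrt\p$ — for $B_1$ and $B_2$ by hypothesis, and for $e^{-\beta A}$ by the previous paragraph. The conclusion of Lemma \ref{succ prod} is then precisely $\{e^{-tA}\}_{t\geq0}\succeq B_1e^{-\beta A}B_2\wrt\p$, and since $\beta\ge 0$ was arbitrary this finishes the argument.

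There is essentially no obstacle; the single point that deserves attention is the step $\{e^{-tA}\}_{t\geq0}\succeq e^{-\beta A}\wrt\p$, which is where one uses that $e^{-\beta A}$ is itself a member of the semigroup $\{e^{-tA}\}_{t\ge 0}$. In particular one should \emph{not} attempt to route this through Remark \ref{Yowai}, since $e^{-tA}\unrhd e^{-\beta A}\wrt\p$ generally fails for small $t$; the verification must go directly through Definition \ref{Def succ}.
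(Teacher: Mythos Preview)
Your proof is correct and follows exactly the same approach as the paper: first observe directly from Definition \ref{Def succ} that $\{e^{-tA}\}_{t\ge0}\succeq e^{-\beta A}\wrt\p$ (by taking $t=\beta$), then apply Lemma \ref{succ prod} with the three operators $B_1,\,e^{-\beta A},\,B_2$. Your explicit caution about not routing through Remark \ref{Yowai} is a nice clarifying remark but not needed for the argument itself.
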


\begin{proof}From Definition \ref{Def succ}, it is evident that $\{e^{-tA}\}_{t\geq0}\succeq e^{-\beta A} \wrt\p $  holds.
Hence, by applying  Lemma \ref{succ prod}, we see that  $\{e^{-tA}\}_{t\geq0}\succeq B_1e^{-\beta A}B_2 \wrt\p $ holds.
\end{proof}

The main theorem of this subsection is as follows: 
\begin{Thm}\label{ergodicity}
Let $A$ be a self-adjoint operator on $\X$, bounded from below.
Assume that \be
e^{-tA}\unrhd0\ \wrt\p  \quad( t\geq0). \label{ECon1}
\ee
Let $\mathfrak{I}$ be a subset of $\p \setminus\{0\}$ satisfying the following:  for any $\varphi, \psi\in\mathfrak{I}$, there exists a $\beta>0$ such that
\begin{align}
\i<\varphi| e^{-\beta A}\psi>>0. \label{ECon2}
\end{align}
Assume that, for any $u\in\p \setminus\{0\}$, there exists a family $\{E_u(\beta, \beta\rq{}) : \beta>0, \beta\rq{}>0\}$ of operators satisfying the following   {\rm (i)}, {\rm (ii)}, and {\rm (iii)}:
\begin{itemize}
\item[\rm (i)]  $E_u(\beta, \beta')\unrhd0\wrt\p $ for all $\beta>0$ and $ \beta\rq{}>0$.
\item[\rm (ii)]  $\{e^{-t A}\}_{t\geq0}\succeq E_u(\beta, \beta')\wrt\p $  for all $\beta>0$ and $ \beta\rq{}>0$. 
\item[\rm (iii)] The limit 
\begin{align}
u_0
=\lim_{\beta\to+0}\lim_{\beta'\to+0} E_u(\beta, \beta') u
\end{align}
exists. In addition, there exists a $\varphi\in\mathfrak{I}$ satisfying $u_0 \ge \vphi $ w.r.t. $\p $.
\end{itemize}
Then $\{e^{-\beta A}\}_{\beta\geq0}$ is  ergodic  w.r.t. $\p $.
\end{Thm}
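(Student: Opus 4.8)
\textbf{Proof proposal for Theorem \ref{ergodicity}.}

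The plan is to verify the two conditions in the definition of ergodicity for $\{e^{-\beta A}\}_{\beta\geq 0}$ with respect to $\p$. Condition (i) of ergodicity, namely $e^{-\beta A}\unrhd 0$ w.r.t. $\p$ for all $\beta\geq 0$, is precisely the hypothesis \eqref{ECon1}, so nothing is needed there. The entire work is in condition (ii): given arbitrary $u, v\in\p\setminus\{0\}$, we must produce a time $t\geq 0$ with $\la u| e^{-tA} v\ra > 0$.

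First I would fix $u, v\in\p\setminus\{0\}$ and apply the hypothesis to both of them, obtaining families $\{E_u(\beta,\beta')\}$ and $\{E_v(\beta,\beta')\}$ satisfying (i), (ii), (iii). The idea is to sandwich $v$ between suitable operators and use the $\succeq$-calculus. From (iii) applied to $u$ there is $\varphi\in\mathfrak{I}$ with $u_0 := \lim_{\beta\to+0}\lim_{\beta'\to+0} E_u(\beta,\beta')u \geq \varphi$ w.r.t. $\p$, and likewise there is $\psi\in\mathfrak{I}$ with $v_0 := \lim_{\beta\to+0}\lim_{\beta'\to+0} E_v(\beta,\beta')v \geq \psi$ w.r.t. $\p$. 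By \eqref{ECon2} there is $\gamma>0$ with $\la \varphi| e^{-\gamma A}\psi\ra > 0$. Since $e^{-\gamma A}\unrhd 0$ and $u_0\geq\varphi\geq 0$, $v_0\geq\psi\geq 0$ w.r.t. $\p$, monotonicity of the pairing against a positivity-preserving operator (Lemma \ref{PPBasic}(i) together with the reality/monotonicity structure) gives
\begin{align}
\la u_0| e^{-\gamma A} v_0\ra \geq \la \varphi| e^{-\gamma A}\psi\ra > 0.
\end{align}
Now take $\beta,\beta' > 0$ small enough — first $\beta'$, then $\beta$ — so that, by joint (sequential) continuity of the inner product under the limits in (iii), $\la E_u(\beta,\beta')u\,|\, e^{-\gamma A}\, E_v(\beta,\beta') v\ra > 0$ as well. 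Fix such $\beta,\beta'$.

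The key step is then to convert the positivity of $\la E_u(\beta,\beta')u\,|\, e^{-\gamma A}\, E_v(\beta,\beta') v\ra$ into the positivity of $\la u| e^{-tA} v\ra$ for some $t$. Write $E_u = E_u(\beta,\beta')$, $E_v = E_v(\beta,\beta')$. Since $E_u^*\unrhd 0$ w.r.t. $\p$ (Lemma \ref{PPBasic}(iii)) and $E_v\unrhd 0$, the operator $E_u^* e^{-\gamma A} E_v\unrhd 0$ by Lemma \ref{PPBasic}(iv), and the displayed inequality says $\la u|\, E_u^* e^{-\gamma A} E_v\, v\ra > 0$. The hypothesis (ii) gives $\{e^{-tA}\}_{t\geq 0}\succeq E_u(\beta,\beta')$ and $\{e^{-tA}\}_{t\geq 0}\succeq E_v(\beta,\beta')$; combining these with the trivial fact $\{e^{-tA}\}_{t\geq 0}\succeq e^{-\gamma A}$ (Definition \ref{Def succ} applied to $B = e^{-\gamma A}$), Lemma \ref{succ prod} — or more directly Corollary \ref{succ prod2} in the form $\{e^{-tA}\}_{t\geq 0}\succeq E_v^* e^{-\gamma A} E_u$ after taking adjoints, using $(E_u^* e^{-\gamma A} E_v)^* = E_v^* e^{-\gamma A} E_u$ and the symmetry of the inner-product condition — yields a single time $t\geq 0$ with $\la u| e^{-tA} v\ra > 0$. (Care is needed with the order of factors: one applies Lemma \ref{succ prod} to the ordered product $E_u^*\cdot e^{-\gamma A}\cdot E_v$, since each factor is $\succeq$-dominated by $\{e^{-tA}\}$, and the iterated argument in the proof of that lemma produces one aggregated time.) This establishes condition (ii) of ergodicity.

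The main obstacle I anticipate is the bookkeeping around the \emph{order of limits} and the interaction between the two limit processes in (iii) and the two $\succeq$-relations: one must first freeze the small parameters $\beta,\beta'$ for both $u$ and $v$ so that the strict inequality survives, and only then invoke the $\succeq$-calculus, which outputs a time $t$ depending on those frozen parameters — this is legitimate because ergodicity only asks for \emph{some} $t$ depending on $u,v$. A secondary technical point is justifying that $\la u_0| e^{-\gamma A} v_0\ra \geq \la \varphi| e^{-\gamma A}\psi\ra$ and the passage from $u_0, v_0$ back to the approximants; this rests on the monotonicity of $w\mapsto \la w| K w'\ra$ in each slot for $K\unrhd 0$ and $w, w'$ ranging in $\p$, i.e. on Lemma \ref{PPBasic}(i) applied to the differences $u_0 - \varphi$, $v_0 - \psi\in\p$. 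No weak-limit subtleties arise here because the limits in (iii) are assumed to exist in norm.
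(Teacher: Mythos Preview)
Your proof is correct and follows essentially the same route as the paper's own argument: establish $\la u_0| e^{-\gamma A} v_0\ra \geq \la\varphi|e^{-\gamma A}\psi\ra > 0$ by monotonicity in the cone, choose $\beta,\beta'$ small so that $\la E_u(\beta,\beta')u\,|\, e^{-\gamma A} E_v(\beta,\beta')v\ra > 0$, and then invoke Corollary~\ref{succ prod2} on the product $E_u^* e^{-\gamma A} E_v$ to extract a single time $t$. The only step both you and the paper leave implicit is that $\{e^{-tA}\}_{t\ge 0}\succeq E_u$ entails $\{e^{-tA}\}_{t\ge 0}\succeq E_u^*$; this follows at once from the self-adjointness of $e^{-tA}$ together with $\la u|B^*v\ra=\la v|Bu\ra$ for $u,v\in\p$, which is precisely the ``symmetry of the inner-product condition'' you allude to.
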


\begin{proof}
For any $u, v\in\p \setminus\{0\}$, there exist families of operators 
$\{E_u(\beta, \beta\rq{}) : \beta>0, \beta\rq{}>0\}$ and $\{E_v(\beta, \beta\rq{}) : \beta>0, \beta\rq{}>0\}$
satisfying the following (a), (b) and (c): (a)
$\{e^{-tA}\}_{t\geq0}\succeq E_u(\beta, \beta')$ and $ \{e^{-tA}\}_{t\geq0}\succeq E_v(\beta, \beta') \wrt\p $;
(b)  the following  limits exist: 
\begin{align}
u_0
=\lim_{\beta\to+0}\lim_{\beta'\to+0} E_u(\beta, \beta') u, \quad
v_0
=\lim_{\beta\to+0}\lim_{\beta'\to+0} E_v(\beta, \beta')v, 
\end{align}
 and  (c) there are $\varphi, \psi\in\mathfrak{I}$ satisfying 
 $u_0 \ge \vphi $ and $ v_0 \ge \psi$ w.r.t. $\p $.

For these $\vphi$ and $\psi$ in (c), from  the assumption \eqref{ECon2}, we can take a $\beta_0>0$ such that 
\begin{align}
\i<\varphi| e^{-\beta_0 A}\psi>>0. \label{ergodic-pos}
\end{align}
By using (b), we have
\begin{align}
\lim_{\beta\to+0} \lim_{\beta'\to+0}
\i<u| E_u(\beta, \beta')^* e^{-\beta_0 A} E_v(\beta, \beta')  v> =\i<u_0| e^{-\beta_0 A}v_0>.
\end{align}
Hence, for any $\varepsilon>0$, there is a $\beta_1>0$ such that, for all $0<\beta<\beta_1$, it holds that 
\begin{align}
\Big| \lim_{\beta'\to+0}\i<u| E_u(\beta, \beta')^* e^{-\beta_0 A} E_v(\beta, \beta')  v> -\i<u_0 | e^{-\beta_0 A}v_0> \Big| 
<\varepsilon.
\end{align}
Then we fix $\beta$ arbitrarily, satisfying $0<\beta<\beta_1$. 
For any  $\varepsilon'>0$, there exists a $\beta_1'(\beta)>0$ such that if $0<\beta'<\beta_1'(\beta)$, then it holds that 
\begin{align}
\Big| \lim_{\beta'\to+0} \i<u| E_u(\beta, \beta')^* e^{-\beta_0 A} E_v(\beta, \beta')  v> -\i<u| E_u(\beta, \beta')^* e^{-\beta_0 A} E_v(\beta, \beta')  v> \Big| 
<\varepsilon'.
\end{align}
Summing up the above, we have
\begin{align}
\Big| \i<u| E_u(\beta, \beta')^* e^{-\beta_0 A} E_v(\beta, \beta')  v> -\i<u_0 | e^{-\beta_0 A}v_0> \Big|
<\varepsilon +\varepsilon'.
\end{align}
Since $\varepsilon$ and $ \varepsilon'$ are arbitrary, from \eqref{ergodic-pos}, these can be chosen so that $\i<\varphi| e^{-\beta_0 A}\psi>>\varepsilon +\varepsilon'$.
Because $u_0 \ge \vphi $ and $ v_0\ge \psi$ \wrt $\p $, one obtains 
\begin{align}
\i<u| E_u(\beta, \beta')^* e^{-\beta_0 A} E_v(\beta, \beta')  v>
>\i<u_0 | e^{-\beta_0 A}v_0> -\varepsilon -\varepsilon'
\ge \i<\varphi| e^{-\beta_0 A}\psi> -\varepsilon -\varepsilon'
>0.
\end{align}
As $\{e^{-tA}\}_{t\geq0}\succeq E_u(\beta, \beta')^* e^{-\beta_0 A} E_v(\beta, \beta')\wrt\p $  holds due to Corollary \ref{succ prod2}, there exists a $t(\beta, \beta')\geq0$ such that 
$
\i<u| e^{-t(\beta, \beta')A}  v>>0,
$
which implies that $\{e^{-\beta A}\}_{\beta\geq0}$ is  ergodic  w.r.t. $\p $. 
\end{proof}

Before proceeding, let us clarify our strategy for the proof of Theorem \ref{MainPI}:

\subsubsection*{Strategy for the proof of Theorem \ref{MainPI}}
In Theorem \ref{ergodicity}, take $A=H$. Construct  appropriate $\mathfrak{I}$ and $E_u(\beta, \beta\rq{})$ that satisfy the conditions of Theorem \ref{ergodicity} and show that $\{e^{-\beta H}\}_{\beta \ge 0}$ is ergodic by applying the same theorem.

\subsection{Basic properties of electron configuration}

In this subsection, we give some basic definitions related to electron configurations, which are necessary to check the conditions to apply Theorem \ref{ergodicity} to $H$.
\subsubsection{Basic definitions}

\begin{Def}\label{DefNei}\rm 
Let ${\bs X}, {\bs Y}\in \mathscr{C}$. 
${\bs X}$ and $ {\bs Y}$ are said to be {\it adjacent}  if these satisfy either of the following two conditions:
\begin{itemize}
\item[\rm (i)] There exist $x, y\in \vLa\ (x\neq y)$ such that $X_d \triangle Y_d=\{x, y\}$ and $X_f=Y_f$, and furthemore $t_{x, y}\neq 0$, where $X_d \triangle Y_d$ represents the symmetric difference of $X_d$ and $Y_d$:
$X_d\triangle Y_d=(X_d\setminus Y_d) \cup (Y_d\setminus X_d)$.

\item[\rm (ii)] There exists an $x\in \vLa$ such that $X_d\triangle Y_d=\{x\}=X_f\triangle Y_f$.
\end{itemize}
A pair $\{\bs X, \bs Y\}$ is said to be an {\it edge} if $\bs X$ and $\bs Y$ are adjacent.
The set of all edges is denoted by $\mathscr{E}$. The graph defined by $\mathscr{G}=(\mathscr{C}, \mathscr{E})$ plays an essential role in the following discussion.
We say that $\{\bs X, \bs Y\}$ is a  {\it $d$-edge} (resp. {\it $(d, f)$-edge}) if it satisfies (i) (resp. (ii)).

A sequence ${\bs X}_1, \dots, {\bs X}_n$ consisting of elements of $\mathscr{C}$ is said to be a {\it path} connecting $\bs X_1$ and $\bs X_n$ if it satisfies $\{{\bs X}_i, {\bs X}_{i+1}\}\in \mathscr{E}\ (i=1, \dots, n-1)$. We denote such a path by ${\bs p}={\bs X}_1 {\bs X}_2\cdots{\bs X}_n$;   we refer to $n$ as the length of the path ${\bs p}$ and denote it by $|\bs p|$.

\end{Def}

The next proposition forms the basis for the following discussion.
\begin{Prop}\label{Connect}
The graph $\mathscr{G}$ is connected: for any $\bs X, \bs Y\in \mathscr{C}$, there exists a path connecting $\bs X$ and $\bs Y$.
\end{Prop}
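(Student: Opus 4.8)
\textbf{Proof plan for Proposition \ref{Connect}.}
The plan is to show that any electron configuration $\bs X = (X_d, X_f)$ can be connected by a path in $\mathscr{G}$ to a fixed reference configuration $\bs Z$; connectedness of $\mathscr{G}$ then follows by concatenating paths. A natural reference is a configuration in which $X_d$ and $X_f$ coincide, say $\bs Z = (Z, Z)$ with $|Z| = |\vLa|/2$ when $|\vLa|$ is even (and a suitable near-diagonal choice otherwise); but in fact it is cleaner to argue that any two configurations with the \emph{same} value of $|X_d|$ can be connected, and then separately that one can change $|X_d|$ by $1$ using a $(d,f)$-edge. So I would split the argument into two lemmas: (1) if $\bs X$ and $\bs Y$ have $|X_d| = |Y_d|$ (hence $|X_f| = |Y_f|$), they are connected by a path using only $d$-edges together with analogous ``$f$-hopping'' moves; (2) a $(d,f)$-edge lets us pass between configurations with $|X_d|$ and $|X_d| \pm 1$.

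For step (1), the key input is assumption \hyperlink{A2}{\bf (A. 2)}: the hopping graph $G_d = (\vLa, E)$ is connected. Fix $|X_d| = |Y_d|$. Since $G_d$ is connected, for any single conduction electron at a site $x \in X_d \setminus Y_d$ there is a path in $G_d$ from $x$ to some site in $Y_d \setminus X_d$; moving that electron one edge at a time along this path realizes a sequence of $d$-edges in $\mathscr{G}$ (each step is of type (i) in Definition \ref{DefNei}, with $t_{x_i, x_{i+1}} \neq 0$ and $X_f$ unchanged), provided intermediate sites are unoccupied — and if a site is occupied we first shuffle that blocking electron out of the way, again using connectedness of $G_d$. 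A clean way to make this rigorous is induction on $|X_d \triangle Y_d|$: if the symmetric difference is nonempty, produce one $d$-edge move that strictly decreases it, using a shortest $G_d$-path argument to handle blocking. This handles the $d$-sector; the $f$-sector is handled identically once we observe that a $(d,f)$-edge followed by its reverse at a neighboring configuration, or more simply a composition of two $(d,f)$-edges at sites joined by... — actually $f$-electrons have no direct hopping term, so $f$-electrons are moved \emph{only} via $(d,f)$-edges, which is exactly why step (2) must be interleaved with step (1) rather than applied once at the end.

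Given that subtlety, I would reorganize: prove by induction on the quantity $|X_d \triangle Y_d| + |X_f \triangle Y_f|$ that $\bs X$ and $\bs Y$ are connected. If $X_d \triangle Y_d$ contains a pair of sites and a $d$-edge strictly reduces $|X_d \triangle Y_d|$ without touching $X_f$, apply step (1)'s move. Otherwise, use a $(d,f)$-edge of type (ii): such a move flips occupation of $d$ and $f$ simultaneously at one site $x$, which can be used to transfer an $f$-electron's ``discrepancy'' onto the $d$-sector (where it becomes movable by hopping) and back. One must check that the relevant site $x$ can always be chosen so that the move makes progress; here a counting/parity argument comparing $|X_d|, |Y_d|, |X_f|, |Y_f|$ and the fact that $|X_d| + |X_f| = |Y_d| + |Y_f| = |\vLa|$ is used, together again with connectedness of $G_d$ to route conduction electrons to the site where the $(d,f)$-move is needed.

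The main obstacle I anticipate is \emph{not} any single move but the bookkeeping that shows progress can always be made: one has to rule out ``stuck'' configurations where every available $d$-edge increases $|X_d \triangle Y_d|$ and no useful $(d,f)$-edge is available. I expect this is handled by always working with a \emph{shortest} path in $G_d$ between a source site (occupied in $\bs X$, unoccupied in $\bs Y$, or vice versa) and a sink site, pushing blocking electrons aside along branches; since $G_d$ is connected and finite this terminates. Because the full case analysis is lengthy, I would expect the detailed proof to be deferred to an appendix (indeed the excerpt mentions Appendix \ref{PfConnect}), and here merely record the strategy: reduce to the single-species hopping case via $(d,f)$-edges, and invoke connectedness of $G_d$ for the single-species case.
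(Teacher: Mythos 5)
Your proposal takes a genuinely different route from the paper's. Rather than a pairwise induction on the symmetric-difference potential $|X_d\triangle Y_d|+|X_f\triangle Y_f|$, the paper fixes the single hub $\bs F=(\varnothing,\vLa)$ and proves every $\bs X$ reaches $\bs F$ in two stages: a ``simplification'' step (Lemma~\ref{Simplified}) that uses only $(d,f)$-edges to convert each lone $d$-electron at a site in $X_d\setminus X_f$ into a lone $f$-electron, landing on the $f$-dominated configuration $S\bs X=(X_d\cap X_f,\,X_d\cup X_f)$; and a peeling step (Lemma~\ref{XtoF}) on $f$-dominated configurations, where $\vLa$ splits into the empty set $Y=\vLa\setminus X_f$, the singly-$f$-occupied belt $X_f\setminus X_d$, and the doubly occupied core $X_d\cap X_f$, with $|Y|=|X_d\cap X_f|$. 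Each round of the peeling step finds a $G_d$-path whose interior lies in $X_f\setminus X_d$ from a boundary site of $X_d\cap X_f$ to a boundary site of $Y$, marches a $d$-electron along it by $d$-edges, and converts it to an $f$-electron at the far end with one $(d,f)$-edge, removing one empty and one doubly occupied site per round. This organization buys a lot: because the walking path lies in the $d$-electron-free belt, the ``shuffle blocking electrons aside'' maneuver you flagged as the anticipated headache never arises; and because everything is routed through the fixed hub, there is no potential function to control. Your own candidate potential has exactly the weakness you half-notice: a single $(d,f)$-edge changes each of $|X_d\triangle Y_d|$ and $|X_f\triangle Y_f|$ by $\pm1$, so the sum need not decrease after one move, and you would have to package composite moves (or in effect reorganize around a hub anyway) to guarantee termination. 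You correctly identify the central obstruction --- $f$-electrons have no hopping term and must be shuttled as $d$-electrons via $(d,f)$-edges --- and you correctly guess at the mechanism, but the key inductive claim that progress is always possible is left open in your sketch; the paper's hub-and-two-stage decomposition is precisely the device that makes that termination question disappear.
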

The proof of Proposition \ref{Connect} is somewhat involved and is given in Appendix \ref{PfConnect}.

\subsubsection{Basic operators associated with electronic configurations}
Here we define several operators necessary for the proof of Theorem \ref{MainPI}.

For each $X\subseteq \vLa$ and $\sigma=\up, \down$, we set
\begin{align}
P_{X, \sigma}^d=\prod_{x\in  X} n_{x, \sigma}^d,\quad \overline{P}^d_{ X, \sigma}=\prod_{x\in  X} \overline{n}_{x, \sigma}^d, \quad
P^f_{ X, \sigma}=\prod_{x\in  X} n_{x, \sigma}^f,\quad \overline{P}^f_{  X, \sigma}=\prod_{x\in  X} \overline{n}_{x, \sigma}^f,  \label{DefPPPP}
\end{align}
where
\be
 \overline{n}_{x, \sigma}^d=\mathbbm{1}-n_{x, \sigma}^d,\quad
 \overline{n}_{x, \sigma}^f=\mathbbm{1}-n_{x, \sigma}^f.
\ee

\begin{Def} \upshape
\begin{itemize}
\item[\rm (i)]
For each ${\bs X}=(X_d, X_f)\in \mathscr{C}$, define
\begin{align}
P_{\boldsymbol{X}}
=\prod_{\sigma=\up, \down}P_{X_f, \sigma}^f \overline{P}^f_{\vLa\setminus X_f, \sigma},  \quad
Q_{\boldsymbol{X}}
=\prod_{\sigma=\up, \down} P^f_{X_d, \sigma} \overline{P}^f_{\vLa \setminus X_d, \sigma}.
\label{Def oPX-1}
\end{align}
\item[(ii)]
For each $s>0$ and $\boldsymbol{X}\in\mathscr{C}$, define
\begin{align}
F_s(\boldsymbol{X})=P_{\boldsymbol{X}} e^{-s H_0 } Q_{\boldsymbol{X}} e^{-s H_0} P_{\boldsymbol{X}}, 
\end{align}
where $H_0$ is given by \eqref{DefH_0}.

\end{itemize}
\end{Def}

\subsection{Structure of the proof of Theorem \ref{MainPI}}
\subsubsection{Five key propositions}
Here we give five propositions necessary to prove Theorem \ref{MainPI}. Each proposition corresponds to an assumption in Theorem \ref{ergodicity}.
The proofs of the propositions are rather lengthy and will be given in separate sections.

The following proposition is  fundamental:
\begin{Prop}\label{BasePP}
$e^{-\beta H} \unrhd 0$ w.r.t. $\p $ for all $\beta \ge 0$.
\end{Prop}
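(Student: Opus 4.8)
The plan is to establish Proposition \ref{BasePP} by decomposing $e^{-\beta H}$ through the Lie--Trotter product formula and verifying that every elementary factor preserves the Hilbert cone $\p$. Recall from Lemma \ref{HHPT} that $H=H_0-R$ with $H_0$ given by \eqref{DefH_0} and $R$ a bounded, diagonal (in the occupation-number basis) operator that is manifestly nonnegative; in fact $R$ is a nonnegative linear combination of products of number operators $n^f_{x,\up}n^f_{x,\down}$, $(\one-n^f_{x,\up})(\one-n^f_{x,\down})$, and $n^d_{x,\up}n^d_{x,\down}$, each of which is diagonal in the basis $\{|\bs X\ra\}$ and hence, after the identification $\h_{\rm e}=\mathscr L^2(\mathfrak E)$, acts as $\mathcal L(\cdot)\mathcal R(\cdot)$ on rank-one projections; by Proposition \ref{PPI2} such operators are positivity preserving w.r.t. $\mathscr L^2_+(\mathfrak E)$, and tensoring with $\one$ on $\h_{\rm ph}$ preserves positivity w.r.t. $\p$ by Proposition \ref{BasicDP2}. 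Consequently $e^{sR}\unrhd 0$ w.r.t. $\p$ for all $s\ge 0$, and by Lemma \ref{ppiexp1} it then suffices to prove $e^{-\beta H_0}\unrhd 0$ w.r.t. $\p$ for all $\beta\ge 0$, since $e^{-\beta H}=e^{-\beta(H_0-R)}\unrhd e^{-\beta H_0}$ once the latter is established.

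To handle $e^{-\beta H_0}$, I would split $H_0$ from \eqref{DefH_0} into its summands: the hopping-plus-phase terms $T_\up(\VP)+T_\down(-\VP)$, the hybridization terms $V_\up(\VP)+V_\down(-\VP)$, the on-site term $\tfrac{U^d_{\rm eff}}{2}\sum_x(n^d_{x,\up}+n^d_{x,\down})$, and the phonon term $\omega_0\Np$. Each of these is either self-adjoint and bounded or generates a contraction semigroup, so the Trotter formula
\be
e^{-\beta H_0}=\lim_{n\to\infty}\Big(e^{-\frac{\beta}{n}T}e^{-\frac{\beta}{n}V}e^{-\frac{\beta}{n}U}e^{-\frac{\beta}{n}\omega_0\Np}\Big)^n
\ee
holds strongly; by Lemma \ref{Wcl} it is enough to check $\unrhd 0$ w.r.t. $\p$ for each individual exponential factor, and then the product and the weak limit preserve the inequality. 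The phonon term: $\omega_0\Np=\mathbbm 1\otimes\omega_0\Np$ and $e^{-\beta\omega_0\Np}$ is the Mehler/Ornstein--Uhlenbeck semigroup on $L^2(\BbbR^{|\vLa|})$, which is positivity preserving w.r.t. $L^2_+(\BbbR^{|\vLa|})$, hence $\unrhd 0$ w.r.t. $\p$ by Proposition \ref{BasicDP2}(ii). The on-site term is diagonal and nonnegative-exponentiated exactly as $R$ above.

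The genuinely model-specific work is the hopping and hybridization terms, and I expect the hopping term $T_\up(\VP)+T_\down(-\VP)$ to be the main obstacle, precisely because of the phonon-dependent phases $e^{\pm\im\vP_{x,y}}$ with $\vP_x=\tfrac{\sqrt2 g}{\omega_0}q_x$ (this is exactly the complication flagged in Remark \ref{Reason}). On a fixed fiber over $\bs q\in\BbbR^{|\vLa|}$, $T_\up(\VP)+T_\down(-\VP)$ acts on $\h_{\rm e}=\mathscr L^2(\mathfrak E)$; using \eqref{AnniIdn}, $d_{x,\up}=\hd_x\otimes\one$ corresponds to $\mathcal L(\hd_x)$ and $d_{x,\down}=(-1)^{\hat N}\otimes\hd_x$ corresponds to $\mathcal R(\vartheta\,(-1)^{\hat N}\hd_x^*\vartheta)$ up to the standard bookkeeping, so a term $(-t_{x,y})d^*_{x,\up}d_{y,\up}e^{\im\vP_{x,y}}$ in $T_\up$ pairs with the conjugate term $(-t_{x,y})d^*_{x,\down}d_{y,\down}e^{-\im\vP_{x,y}}$ in $T_\down(-\VP)$ to give an operator of the form $\mathcal L(A)\mathcal R(A^*)$ (the phases $e^{\pm\im\vP_{x,y}}$ being scalars on the fiber, they combine into a single $A$ and its adjoint), which is $\unrhd 0$ w.r.t. $\mathscr L^2_+(\mathfrak E)$ by Proposition \ref{PPI2}. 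Here the bipartiteness assumption \hyperlink{A2}{(A.2)} and the sign $\gamma_x$ built into the hole-particle transformation $W$ in \eqref{WProp} are what guarantee that $-t_{x,y}$ and the accompanying signs line up so that the coefficient is nonnegative and the left/right factors are genuine adjoints; one checks this edge by edge. Granting that, $e^{-\beta(T_\up(\VP)+T_\down(-\VP))}$ is decomposable with each fiber $\unrhd 0$ w.r.t. $\mathscr L^2_+(\mathfrak E)$, hence $\unrhd 0$ w.r.t. $\p$ by Proposition \ref{BasicDP}. The hybridization term $V_\up(\VP)+V_\down(-\VP)$ is treated identically — again pairing the $\sigma=\up$ piece with its $\sigma=\down$ partner into an $\mathcal L(A)\mathcal R(A^*)$ form, using $V\in\BbbR$ and the $W$-signs to fix the overall sign — so its exponential is likewise $\unrhd 0$ w.r.t. $\p$. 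Combining all factors through Trotter and Lemma \ref{Wcl} yields $e^{-\beta H_0}\unrhd 0$ w.r.t. $\p$, and then Lemma \ref{ppiexp1} gives $e^{-\beta H}\unrhd e^{-\beta H_0}\unrhd 0$ w.r.t. $\p$, completing the proof.
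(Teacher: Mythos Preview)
Your overall strategy matches the paper's exactly: show $R\unrhd 0$ w.r.t.\ $\p$, reduce via Lemma \ref{ppiexp1} to $e^{-\beta H_0}\unrhd 0$, then Trotter $H_0$ into a purely fermionic part and $\omega_0\Np$, handling the latter via the Mehler semigroup and Proposition \ref{BasicDP2}(ii). The paper carries this out as Lemmas \ref{RProp}, \ref{PPH_0}, and \ref{MonoE}.

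However, your justification for $e^{-\beta(T_\up(\VP)+T_\down(-\VP))}\unrhd 0$ on each fiber is not right as written. You say an individual pair of terms $(-t_{x,y})d^*_{x,\up}d_{y,\up}e^{\im\vP_{x,y}}$ and $(-t_{x,y})d^*_{x,\down}d_{y,\down}e^{-\im\vP_{x,y}}$ combine into something of the form $\mathcal L(A)\mathcal R(A^*)$; they do not---they combine into $\mathcal L(A)+\mathcal R(A^*)$, a \emph{sum}, which is not $\unrhd 0$ in general, and no ``nonnegative coefficient'' or edge-by-edge sign check helps here. What actually works (and is what the paper does) is to observe that on each fiber the \emph{entire} fermionic operator $H_1$ has the form $\hat H_1(\VP(\bs q))\otimes\one+\one\otimes\vartheta\hat H_1(\VP(\bs q))\vartheta$ with $\hat H_1$ self-adjoint on $\mathfrak E$; since left and right multiplications commute, the exponential factorizes exactly as
\[
e^{-\beta H_1}=\int^{\oplus}_{\BbbR^{|\vLa|}} e^{-\beta\hat H_1(\VP(\bs q))}\otimes\vartheta\, e^{-\beta\hat H_1(\VP(\bs q))}\,\vartheta\, d\bs q,
\]
and \emph{this} is $\mathcal L(B)\mathcal R(B^*)$ fiberwise with $B=e^{-\beta\hat H_1}$, hence $\unrhd 0$ by Propositions \ref{PPI2} and \ref{BasicDP}. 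Your finer four-fold Trotter split ($T$, $V$, $U$, $\Np$ separately) is harmless---each of $T_\up+T_\down$, $V_\up+V_\down$, and the diagonal $U$-term separately has this left--right structure---but the mechanism is factorization of the \emph{exponential}, not positivity of the Hamiltonian terms themselves. Note also that bipartiteness was already consumed in deriving Lemma \ref{HHPT} (it is what makes $W^*T_\down(\VP)W=T_\down(-\VP)$); once $H_0$ is in the form \eqref{DefH_0}, no further sign-matching is needed.
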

We prove Proposition \ref{BasePP}  in Section \ref{Sec5}.
Proposition \ref{BasePP} corresponds to the condition \eqref{ECon1} in Theorem \ref{ergodicity}.

The following particular electron configuration frequently appears in the following discussions.

\begin{Def}\rm 
We define the electron configuration  $\boldsymbol{F}=(F_d,F_f)\in \mathscr{C}$ by $F_d=\varnothing$ and $ F_f=\vLa$.
\end{Def}

For each $\bs X, \bs Y\in \mathscr{C}$ and $f\in L^2(\BbbR^{|\vLa|})$,  we set
\be
\ket{\bs X, \bs Y; f}=\ket{\bs X, \bs Y}\otimes f.
\ee

The following proposition corresponds to \eqref{ECon2} in Theorem \ref{ergodicity}:
\begin{Prop}\label{ISP}
For any $f, g\in L_+^2(\BbbR^{|\vLa|})\setminus \{0\}$, there exists a $\beta >0$ such that 
\be
\la \bs F, \bs F; f|e^{-\beta H} |\bs F, \bs F; g\ra>0.
\ee
\end{Prop}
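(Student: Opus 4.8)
The plan is to reduce the assertion to a connectivity statement about the graph $\mathscr{G}=(\mathscr{C},\mathscr{E})$ together with positivity‑preservation of $e^{-\beta H}$ with respect to $\p$. First I would record that the state $|\bs F,\bs F\ra$ sits naturally inside $\p_{\rm e}=\mathscr L^2_+(\mathfrak E)$: under the identification $\h_{\rm e}\underset{\Psi_\vartheta}{=}\mathscr L^2(\mathfrak E)$, the vector $|\bs F,\bs F\ra$ corresponds to the rank‑one projection $|\bs F\ra\la\bs F|\ge 0$, hence $|\bs F,\bs F;f\ra\ge 0$ w.r.t.\ $\p$ whenever $f\ge 0$. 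Thus both $|\bs F,\bs F;f\ra$ and $|\bs F,\bs F;g\ra$ are nonzero vectors in $\p$. The claim $\la \bs F,\bs F;f|e^{-\beta H}|\bs F,\bs F;g\ra>0$ for some $\beta>0$ is then exactly the ``ergodicity at the single configuration $\bs F$'' statement, and by Proposition \ref{BasePP} the matrix element is always $\ge 0$; the content is strict positivity for at least one $\beta$.

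The core of the argument is a perturbative/Duhamel expansion of $e^{-\beta H}$. Write $H=H_0-R$ as in Lemma \ref{HHPT} and expand
\begin{align}
e^{-\beta H}=\sum_{n\ge 0}\int_{0\le s_1\le\cdots\le s_n\le\beta} e^{-(\beta-s_n)H_0}R\,e^{-(s_n-s_{n-1})H_0}R\cdots R\,e^{-s_1 H_0}\,ds_1\cdots ds_n,
\end{align}
and similarly resolve $e^{-sH_0}$ in the hopping/hybridization terms $T_\sigma(\pm\VP)$ and $V_\sigma(\pm\VP)$ by a further Dyson expansion, so that the whole semigroup becomes an integral over products of the elementary operators $d^*_{x,\sigma}d_{y,\sigma}e^{\pm\im\vP_{x,y}}$, $f^*_{x,\sigma}d_{x,\sigma}e^{\mp\im\vP_x}$ (and adjoints), local number operators, and the free phonon evolution $e^{-t\omega_0\Np}$, interspersed. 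Each summand maps $\p$ into $\p$ (this is the content of the proof of Proposition \ref{BasePP}, which I am allowed to invoke), so every term in the expansion contributes a nonnegative matrix element; hence it suffices to exhibit \emph{one} term whose matrix element between $|\bs F,\bs F;f\ra$ and $|\bs F,\bs F;g\ra$ is strictly positive. Because $\bs F=(\varnothing,\vLa)$ and the hybridization operators $f^*_{x,\sigma}d_{x,\sigma}$, $d^*_{x,\sigma}f_{x,\sigma}$ move one unit between the $d$‑ and $f$‑orbitals at the same site, while the $d$‑hopping moves a $d$‑electron along an edge of $G_d$, any product of such operators that takes $\bs F$ back to $\bs F$ corresponds precisely to a closed path in $\mathscr G$ based at $\bs F$; by Proposition \ref{Connect} such closed paths exist and pass only through admissible edges, so the corresponding operator product $\mathcal O$ satisfies $\la\bs F|\mathcal O\,|\bs F\ra\ne 0$ in the electronic sector (the matrix element is, up to sign, a power of $V$ times a product of $\pm t_{x,y}$, and one checks the sign works out — or one squares, using $\mathcal O\mathcal O^*$‑type terms which are manifestly positive).

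The phonon factor is then the remaining point: the elementary operators carry phases $e^{\pm\im\vP_x}$, $e^{\pm\im\vP_{x,y}}$ where $\vP_x=\tfrac{\sqrt2 g}{\omega_0}q_x$ is a bounded multiplication operator, and interleaved free evolutions $e^{-t\omega_0\Np}$ which have a strictly positive integral kernel (the Mehler kernel) w.r.t.\ $L^2_+(\BbbR^{|\vLa|})$. A product of strictly‑positive‑kernel operators conjugated by unitary multiplication operators of modulus one still has an everywhere‑nonvanishing kernel, but possibly complex; however, after taking the real part via the $e^{\pm\im\vP}$/$e^{\mp\im\vP}$ pairing inherent in $T_\sigma(\VP)+T_\sigma(-\VP)$ and $V_\sigma(\VP)+V_\sigma(-\VP)$ in $H_0$ (see \eqref{DefH_0}), one gets genuinely positive contributions — this is exactly the mechanism behind the positivity‑preservation proof I am quoting. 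Concretely, for the chosen closed path one picks, at each step, the term with the phase and its conjugate arranged so the net phonon operator is a positive combination (e.g.\ $\cos(\vP_{x,y})$‑type), and then $\la f|\,(\text{positive kernel})\,|g\ra>0$ because $f,g\ge 0$ are not identically zero. The main obstacle I anticipate is precisely this bookkeeping: ensuring that among the many terms produced by the double Dyson expansion there is a single one that is simultaneously (a) a closed path in $\mathscr G$ from $\bs F$ to $\bs F$ — guaranteed abstractly by Proposition \ref{Connect} but here trivial since we start and end at $\bs F$, so the shortest such loop, e.g.\ create a $d$‑electron by hybridization and destroy it back, suffices — and (b) produces a strictly positive (not merely nonnegative, not complex) phonon matrix element; handling the phases correctly, rather than the fermionic combinatorics, is where the care is needed. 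Once both factors are pinned down, positivity of the single term plus nonnegativity of all others (Proposition \ref{BasePP}) finishes the proof.
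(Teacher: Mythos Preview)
Your proposal has a genuine gap, and it also misses a much simpler route that the paper takes.

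\medskip
\textbf{The gap.} You write ``each summand maps $\p$ into $\p$ (this is the content of the proof of Proposition \ref{BasePP})'' after performing the \emph{double} Dyson expansion. This is not correct. The first Duhamel expansion of $e^{-\beta H}$ in terms of $e^{-sH_0}$ and $R$ does produce summands that are $\unrhd 0$ w.r.t.\ $\p$, because both $e^{-sH_0}\unrhd 0$ (Lemma \ref{PPH_0}) and $R\unrhd 0$ (Lemma \ref{RProp}). But once you further expand $e^{-sH_0}$ around $e^{-s\omega_0\Np}$ with perturbation $H_1$, the individual terms are \emph{not} positivity preserving: a single factor such as $d^*_{x,\up}d_{y,\up}e^{\im\vP_{x,y}}$ acts only on the spin-$\up$ sector and carries a genuine complex phase, so it is not of the form $A\otimes\vartheta A\vartheta$ required by Proposition \ref{PPI2}. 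The proof of Proposition \ref{BasePP} establishes positivity of $e^{-\beta H_0}$ as a \emph{whole} (via the fiberwise tensor structure and Trotter), not term by term in a Dyson series. Consequently your key step---``exhibit one strictly positive term and all others are nonnegative''---fails at the level of the second expansion. Your own remark that the phase bookkeeping ``is where the care is needed'' is exactly right, but the scheme you outline does not handle it.

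\medskip
\textbf{What the paper does instead.} The paper exploits that $\bs F=(\varnothing,\vLa)$ is special: there are no $d$-electrons, so the zeroth-order term of the Dyson expansion of $e^{-\beta H_0}$ around $e^{-\beta\omega_0\Np}$ already gives
\[
\bra{\bs F,\bs F;f}\,e^{-\beta\omega_0\Np}\,\ket{\bs F,\bs F;g}=\la f|e^{-\beta\omega_0\Np}g\ra>0
\]
by the positivity-improving property of the harmonic-oscillator semigroup. One then checks (Lemma \ref{H_1(s)Inqs}) that the first-order term vanishes and the $n$-th order term is bounded in \emph{absolute value} by $\alpha^n\la f|e^{-\beta\omega_0\Np}g\ra/n!$, so for $\beta$ small the zeroth-order term dominates and $\bra{\bs F,\bs F;f}e^{-\beta H_0}\ket{\bs F,\bs F;g}>0$. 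Finally Lemma \ref{MonoE} ($e^{-\beta H}\unrhd e^{-\beta H_0}$) transfers this to $H$. No closed paths, no Proposition \ref{Connect}, and no delicate phase cancellation are needed: the ``trivial loop'' (stay at $\bs F$, move nothing electronically) already does the job.
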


Proposition \ref{ISP} is proved in Section \ref{Sec6}.
The following proposition corresponds to the condition (i) of Theorem \ref{ergodicity}.
\begin{Prop}\label{BP1}
For all $s \ge 0$ and ${\bs X} \in \mathscr{C}$, 
$F_s(\bs X)\unrhd 0$ w.r.t. $\p $ holds.
\end{Prop}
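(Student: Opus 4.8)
The plan is to establish $F_s(\bs X)\unrhd 0$ w.r.t. $\p$ by decomposing $F_s(\bs X)=P_{\bs X}\,e^{-sH_0}\,Q_{\bs X}\,e^{-sH_0}\,P_{\bs X}$ into a product of operators each of which is positivity preserving, and then invoking Lemma~\ref{PPBasic}(iv). There are three types of factors to treat: the projection-type operators $P_{\bs X}$ and $Q_{\bs X}$, and the heat semigroup $e^{-sH_0}$. For the projections, note that $P_{\bs X}$ and $Q_{\bs X}$ are built out of the commuting operators $n_{x,\sigma}^f$, $n_{x,\sigma}^d$ and their complements $\overline n_{x,\sigma}^f$, $\overline n_{x,\sigma}^d$ via \eqref{DefPPPP}--\eqref{Def oPX-1}. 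Under the identification $\h_{\rm e}=\mathscr{L}^2(\mathfrak{E})$ and $\h=\int^\oplus_{\BbbR^{|\vLa|}}\h_{\rm e}\,d\bq$, each such number operator acts as a left multiplication $\mathcal{L}(a)$ with $a$ a diagonal (hence self-adjoint, commuting) projection in the CONS $\{|\bs X\ra\}$; since $\vartheta$ fixes each $|\bs X\ra$, one has $\vartheta a^*\vartheta = a$ for these $a$, so $\mathcal{L}(a)=\mathcal{R}(\vartheta a^*\vartheta)=\mathcal{R}(a)$ as operators on the relevant cone, and more to the point $\mathcal{L}(a)\mathcal{R}(a)\unrhd 0$ w.r.t. $\mathscr{L}^2_+(\mathfrak{E})$ by Proposition~\ref{PPI2} applied to the self-adjoint projection $a$. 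Since a product of mutually commuting projections of this form is again of this form, both $P_{\bs X}$ and $Q_{\bs X}$ are positivity preserving w.r.t. $\p_{\rm e}$, hence w.r.t. $\p$ by Proposition~\ref{BasicDP2}(i).

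For the semigroup factor, I would appeal to Proposition~\ref{BasePP}: once $e^{-\beta H}\unrhd 0$ w.r.t. $\p$ is known, the same machinery gives $e^{-sH_0}\unrhd 0$ w.r.t. $\p$, because $H_0$ has exactly the same structure as $H$ minus the manifestly positivity-preserving perturbation $R$ (whose $n^f$- and $n^d$-terms are again products of the diagonal projections above, so $R\unrhd 0$ w.r.t. $\p$), and Lemma~\ref{ppiexp1} then yields $e^{-sH_0}=e^{-s(H+R)}\cdot(\text{correction})$\,—\,more cleanly, $H_0 = H + R$ with $R\unrhd 0$, so $e^{-sH_0}\unrhd e^{-sH}\unrhd 0$ w.r.t. $\p$ by Lemma~\ref{ppiexp1}. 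Alternatively, and this is the route I would actually take to keep the argument self-contained, I would prove $e^{-sH_0}\unrhd 0$ directly by the same Trotter-product / Lang--Firsov analysis that underlies Proposition~\ref{BasePP}, which will be developed in Section~\ref{Sec5} anyway; so here I simply cite that the analysis of Proposition~\ref{BasePP} applies verbatim to $H_0$.

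With all three factors positivity preserving w.r.t. $\p$, Lemma~\ref{PPBasic}(iv) (applied repeatedly) gives
\begin{align}
F_s(\bs X)=P_{\bs X}\,e^{-sH_0}\,Q_{\bs X}\,e^{-sH_0}\,P_{\bs X}\unrhd 0 \wrt \p
\end{align}
for every $s\ge 0$ and $\bs X\in\mathscr{C}$, which is the claim; the case $s=0$ is trivial since then $F_0(\bs X)=P_{\bs X}Q_{\bs X}P_{\bs X}$, a product of positivity-preserving projections. The main obstacle is the honest verification that $e^{-sH_0}\unrhd 0$ w.r.t.\ $\p$: unlike the finite-dimensional electronic factors, $H_0$ contains the unbounded phonon operators through $\omega_0\Np$ and, via the Lang--Firsov-transformed hopping and hybridization terms $T_\sigma(\pm\VP)$ and $V_\sigma(\pm\VP)$, through the multiplication operators $e^{\pm\im\vP_{x,y}}$ and $e^{\pm\im\vP_x}$; controlling the positivity-preservation of the heat semigroup in the presence of these requires the fiber-direct-integral apparatus of Section~\ref{Sec2} together with a careful Duhamel/Trotter expansion, which is precisely the content of Proposition~\ref{BasePP} and is deferred to Section~\ref{Sec5}. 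Everything else here is bookkeeping with commuting diagonal projections.
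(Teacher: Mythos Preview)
Your overall strategy---show $P_{\bs X},\,Q_{\bs X},\,e^{-sH_0}\unrhd 0$ w.r.t.\ $\p$ separately, then multiply via Lemma~\ref{PPBasic}(iv)---is exactly the paper's approach. Two points in your execution need correction, though.

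First, your treatment of $P_{\bs X}$ and $Q_{\bs X}$ is muddled. The claim ``$\mathcal{L}(a)=\mathcal{R}(\vartheta a^*\vartheta)=\mathcal{R}(a)$'' is false: left and right multiplication by a nonscalar operator are never equal on $\mathscr{L}^2(\mathfrak{E})$. An individual number operator such as $n_{x,\up}^f=\hat n_x^f\otimes\mathbbm{1}=\mathcal{L}(\hat n_x^f)$ is \emph{not} positivity preserving w.r.t.\ $\mathscr{L}^2_+(\mathfrak{E})$ (for $\xi\ge 0$, $a\xi$ need not even be self-adjoint). The correct observation is that $P_{\bs X}$ contains the product over \emph{both} spins: under \eqref{AnniIdn} one has $n_{x,\up}^f=\hat n_x^f\otimes\mathbbm{1}$ and $n_{x,\down}^f=\mathbbm{1}\otimes\hat n_x^f$, so with $A=\hat P^f_{X_f}\overline{\hat P}^f_{\vLa\setminus X_f}$ (a real diagonal projection, hence $\vartheta A\vartheta=A=A^*$) one gets $P_{\bs X}=A\otimes\vartheta A\vartheta=\mathcal{L}(A)\mathcal{R}(A^*)\unrhd 0$ by Proposition~\ref{PPI2}, and likewise for $Q_{\bs X}$. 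This is how the paper argues.

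Second, your attempt to deduce $e^{-sH_0}\unrhd 0$ from Proposition~\ref{BasePP} via Lemma~\ref{ppiexp1} has the inequality backwards. Lemma~\ref{ppiexp1} gives $e^{-t(A-B)}\unrhd e^{-tA}$ when $B\unrhd 0$; since $H=H_0-R$ with $R\unrhd 0$, this yields $e^{-sH}\unrhd e^{-sH_0}$ (this is Lemma~\ref{MonoE}), not $e^{-sH_0}\unrhd e^{-sH}$. Your ``alternative route''---proving $e^{-sH_0}\unrhd 0$ directly by the fiber/Trotter argument---is the correct one, and is precisely Lemma~\ref{PPH_0}; in fact the paper establishes $e^{-sH_0}\unrhd 0$ \emph{first} and derives Proposition~\ref{BasePP} from it, not the other way around.
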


We prove Proposition \ref{BP1}  in  Section \ref{Sec5}.

In the proof of Theorem \ref{MainPI}, the following operator plays a crucial role: 
\begin{Def}\rm 
For each $\beta\ge 0, \beta\rq{}\ge 0$ and 
path ${\bs p}={\bs X}_1{\bs X}_2\cdots {\bs X}_n$, we define
\be
F_{\beta, \beta\rq{}}(\bs p)=F_{\beta\rq{}}({\bs X}_1) e^{-\beta H_0} F_{\beta\rq{}}({\bs X}_2)e^{-\beta H_0}\cdots e^{-\beta H_0} F_{\beta\rq{}}({\bs X}_n).
\ee
\end{Def}

Regarding $F_{\beta, \beta\rq{}}(\bs p)$, the following two propositions hold.

\begin{Prop}\label{BP2} For any path $\bs p$, $\beta\ge 0 $ and $ \beta\rq{}\ge 0$, we have
\be
\{e^{-t H}\}_{t\ge 0} \succeq F_{\beta, \beta\rq{}}(\bs p)\ \ \wrt \p.
\ee
\end{Prop}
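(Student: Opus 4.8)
The plan is to establish Proposition \ref{BP2} by combining the multiplicative structure of $F_{\beta, \beta'}(\bs p)$ with Lemma \ref{succ prod} (together with Corollary \ref{succ prod2}). The operator $F_{\beta, \beta'}(\bs p)$ is, by definition, a product of building blocks $F_{\beta'}(\bs X_i)$ interleaved with factors $e^{-\beta H_0}$. Since we already know from Proposition \ref{BP1} that each $F_{\beta'}(\bs X_i) \unrhd 0$ w.r.t. $\p$, and since $e^{-\beta H_0} \unrhd 0$ w.r.t. $\p$ (this should be part of, or an immediate consequence of, the forthcoming Section \ref{Sec5} analysis; in any event $H_0$ has the same positivity-preserving structure used throughout), the whole product is $\unrhd 0$ w.r.t. $\p$ by Lemma \ref{PPBasic}(iv). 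The real content is the ``$\succeq$'' relation relative to the semigroup generated by the \emph{full} Hamiltonian $H = H_0 - R$.

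The key step is therefore a lemma of the form $\{e^{-tH}\}_{t\geq 0} \succeq F_{\beta'}(\bs X)\wrt \p$ for every $\beta' \ge 0$ and every $\bs X \in \mathscr{C}$, and the companion statement $\{e^{-tH}\}_{t\geq 0}\succeq e^{-\beta H_0}\wrt \p$. Granting these two facts, Proposition \ref{BP2} follows immediately: Lemma \ref{succ prod} says that $\succeq$ is stable under taking products of operators each of which is $\succeq$-dominated by $\{e^{-tH}\}_{t\ge 0}$, so the product $F_{\beta'}(\bs X_1) e^{-\beta H_0} F_{\beta'}(\bs X_2)\cdots e^{-\beta H_0} F_{\beta'}(\bs X_n) = F_{\beta, \beta'}(\bs p)$ is $\succeq$-dominated as well. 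For the comparison $\{e^{-tH}\}_{t\geq 0}\succeq e^{-\beta H_0}$, the natural route is Lemma \ref{ppiexp1}: since $R \unrhd 0$ w.r.t. $\p$ (it is a nonnegative combination of number-operator products, which are diagonal and nonnegative in the occupation basis, hence positivity preserving on $\p_{\rm e}$ and on $\p$) and $e^{-tH_0}\unrhd 0$, we get $e^{-tH} = e^{-t(H_0 - R)} \unrhd e^{-tH_0}\wrt \p$ for all $t \ge 0$, and by Remark \ref{Yowai} this yields $\{e^{-tH}\}_{t\geq 0}\succeq e^{-\beta H_0}\wrt \p$ for every $\beta \ge 0$.

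I expect the main obstacle to be the first comparison, $\{e^{-tH}\}_{t\geq 0}\succeq F_{\beta'}(\bs X)\wrt \p$, because here the dominated operator $F_{\beta'}(\bs X) = P_{\bs X} e^{-\beta' H_0} Q_{\bs X} e^{-\beta' H_0} P_{\bs X}$ is \emph{not} of the form $e^{-tH}$ minus something positive — it contains the projections $P_{\bs X}$, $Q_{\bs X}$ which annihilate large parts of the space. The argument cannot be a naive operator inequality; instead one must show that whenever $\i<u| F_{\beta'}(\bs X) v> > 0$ for some $u, v \in \p$, the positivity of the matrix elements of $e^{-\beta' H_0}$ between the relevant occupation sectors (these matrix elements being, in turn, controlled by the hopping and hybridization terms $T_\sigma(\pm\VP)$, $V_\sigma(\pm\VP)$ via a Duhamel/Dyson expansion) ``propagates'' to a strictly positive matrix element $\i<u| e^{-tH} v> > 0$ for some $t$. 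This is presumably exactly the statement isolated as one of the ``five key propositions'' and proved in the later sections (it looks like the content feeding into Proposition \ref{BP2} is meant to rely on results of Sections \ref{Sec5}--\ref{Sec8}); so in the present section the honest move is to \emph{reduce} Proposition \ref{BP2} to that propagation lemma plus Lemma \ref{succ prod}, and defer the propagation estimate itself.

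Concretely, the steps I would carry out, in order, are: (1) record that $R \unrhd 0\wrt \p$ and $e^{-tH_0}\unrhd 0\wrt \p$, hence by Lemma \ref{ppiexp1} that $e^{-tH}\unrhd e^{-tH_0}\wrt \p$, and conclude via Remark \ref{Yowai} that $\{e^{-tH}\}_{t\geq 0}\succeq e^{-\beta H_0}\wrt \p$ for all $\beta \ge 0$; (2) invoke the key propagation lemma (to be established in Sections \ref{Sec5}--\ref{Sec8}) giving $\{e^{-tH}\}_{t\geq 0}\succeq F_{\beta'}(\bs X)\wrt \p$ for every $\beta' \ge 0$ and $\bs X \in \mathscr{C}$; (3) note $F_{\beta'}(\bs X_i)\unrhd 0\wrt \p$ (Proposition \ref{BP1}) and $e^{-\beta H_0}\unrhd 0 \wrt \p$, so all factors in $F_{\beta, \beta'}(\bs p)$ are positivity preserving; (4) apply Lemma \ref{succ prod} to the ordered product $F_{\beta'}(\bs X_1), e^{-\beta H_0}, F_{\beta'}(\bs X_2), \dots, e^{-\beta H_0}, F_{\beta'}(\bs X_n)$ to conclude $\{e^{-tH}\}_{t\geq 0}\succeq F_{\beta, \beta'}(\bs p)\wrt \p$, which is the assertion. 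The one subtlety worth flagging in the write-up is that Lemma \ref{succ prod} is stated for a finite ordered list of bounded operators, so one should simply read $F_{\beta, \beta'}(\bs p)$ as such a list of length $2n-1$; no further care is needed.
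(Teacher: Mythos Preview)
Your reduction strategy is correct and matches the paper exactly: one shows $\{e^{-tH}\}_{t\ge 0}\succeq e^{-\beta H_0}$ via Lemma~\ref{ppiexp1} and Remark~\ref{Yowai}, separately shows $\{e^{-tH}\}_{t\ge 0}\succeq F_{\beta'}(\bs X)$ for each $\bs X$, and then invokes Lemma~\ref{succ prod} on the ordered list of factors. The paper does precisely this; the single-block statement is Lemma~\ref{SemiOpLow}.

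However, that single-block lemma is not deferred elsewhere: it \emph{is} the content of Section~\ref{Sec7}, and your sketch of its proof is off. The mechanism is not ``propagation through hopping and hybridization terms $T_\sigma(\pm\VP)$, $V_\sigma(\pm\VP)$''; it is the Coulomb term $R$ (more precisely its $f$-part $R_0$). Concretely: since each summand $n_{x,\up}^f n_{x,\down}^f$ and $(1-n_{x,\up}^f)(1-n_{x,\down}^f)$ in $R_0$ is $\unrhd 0$ w.r.t.\ $\p$, one has $R_0\unrhd (U^f/2)\,n_{x,\up}^f n_{x,\down}^f$ and $R_0\unrhd (U^f/2)(1-n_{x,\up}^f)(1-n_{x,\down}^f)$ for every $x$. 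Multiplying $|\vLa|$ such inequalities (Lemma~\ref{InqSeki}) gives $R_0^{|\vLa|}\unrhd (U^f/2)^{|\vLa|}P_{\bs X}$ and likewise $R_0^{|\vLa|}\unrhd (U^f/2)^{|\vLa|}Q_{\bs X}$, hence
\[
R^{|\vLa|}e^{-\beta' H_0}R^{|\vLa|}e^{-\beta' H_0}R^{|\vLa|}\ \unrhd\ (U^f/2)^{3|\vLa|}\,F_{\beta'}(\bs X)\quad\wrt\p.
\]
Now expand $e^{-tH}$ by Duhamel in powers of $R$ (not of the hopping): $e^{-tH}=\sum_{n\ge 0}\mathscr{D}_n(t)$ with $\mathscr{D}_n(t)=\int_{\Delta_n(t)}R(s_1)\cdots R(s_n)e^{-tH_0}\,d^n\bs s\unrhd 0$. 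For $n=3|\vLa|$ and $t=2\beta'$, the integrand evaluated at the corner $\bs s_0=(0,\dots,0,\beta',\dots,\beta',2\beta',\dots,2\beta')$ is exactly $R^{|\vLa|}e^{-\beta' H_0}R^{|\vLa|}e^{-\beta' H_0}R^{|\vLa|}$, which by the displayed inequality has strictly positive matrix element between $u,v\in\p$ whenever $\langle u|F_{\beta'}(\bs X)v\rangle>0$. Strong continuity in $\bs s$ then yields $\langle u|\mathscr{D}_n(t)v\rangle>0$, hence $\langle u|e^{-tH}v\rangle>0$. This is the missing idea; the hopping and hybridization terms play no role here.
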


The proof of Proposition \ref{BP2} is given in Section \ref{Sec7}. This proposition corresponds to the condition (ii) of Theorem \ref{ergodicity}.

Any $\vphi\in \p\setminus \{0\}$  can be expressed as follows: 
\be
\vphi=\sum_{\bs X, \bs Y\in \mathscr{C}} \ket{ \bs X, \bs Y}\otimes \vphi_{\bs X, \bs Y},\quad \vphi_{\bs X, \bs Y}\in L^2(\BbbR^{|\vLa|}). \label{vphiDec}
\ee
Note that $\vphi_{\bs X, \bs X}\ge 0$ w.r.t. $L_+^2(\BbbR^{|\vLa|})$ holds and there exists an $\bs X\in \mathscr{C}$ such that $\vphi_{\bs X, \bs X}\neq 0$\footnote{The reasoning is as follows: assume that $\vphi_{\bs X, \bs X}=0$ for all $\bs X\in \mathscr{C}$.
Under the identification \eqref{IdnE}, we have $\Tr_{\mathscr{L}^2(\mathfrak{E})}[\vphi]=0$, which leads to $\vphi=0$, where $\Tr_{\mathscr{L}^2(\mathfrak{E})}$ indicates the partical trace  with respect to ${\mathscr{L}^2(\mathfrak{E})}$. This contradicts with $\vphi\neq 0$. 
}.
Denote by $\bs X(\vphi)$ one of the $\bs X$'s such that $\vphi_{\bs X, \bs X} \neq 0$ and  $|\bs X|_{\triangle}$ is maximal,  where $|{\bs X}|_\triangle :=|X_d\triangle X_f|$.

\begin{Prop}\label{BP3}
For a given $\vphi\in \p \setminus\{0\}$, let $\bs p=\bs X_0{\bs X}_1\cdots {\bs X}_{n+1}$ be a path connecting $\bs F$ and $\bs X(\vphi)$. Then, there exist  some $c>0$ and $f\in L_+^2(\BbbR^{|\vLa|})\setminus \{0\}$ such that 
\be
\lim_{\beta\to+0}\lim_{\beta'\to+0} \beta^{-2n-2} \beta^{\rq{}-4d}F_{\beta, \beta\rq{}}(\bs p)\vphi=c \ket{{\bs F}, {\bs F}; f}, 
\ee
where $d=\sum_{i=1}^n|{\bs X}_i|_\triangle+|{\bs F}|_\triangle$.
\end{Prop}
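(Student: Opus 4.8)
\textbf{Plan for the proof of Proposition \ref{BP3}.}
The strategy is to analyze the action of each building block $F_{\beta'}(\bs X_i)=P_{\bs X_i}e^{-\beta' H_0}Q_{\bs X_i}e^{-\beta' H_0}P_{\bs X_i}$ on vectors of the form $\ket{\bs Y, \bs Z; h}$ in the small-$\beta'$ limit, then propagate this through the path via the intermediate factors $e^{-\beta H_0}$. First I would expand $e^{-\beta' H_0}$ to the order needed so that $P_{\bs X_i}e^{-\beta' H_0}Q_{\bs X_i}$ picks out the lowest-order nonvanishing contribution: the projections $P_{\bs X_i}$ and $Q_{\bs X_i}$ force the $f$-occupation pattern before and after to be $X_f$ and $X_d$ respectively, so $H_0$ must supply exactly $|X_d\triangle X_f|$ hybridization hops $V_\sigma(\pm\VP)$ to convert one pattern into the other; each such hop comes with one power of $\beta'$, explaining the $\beta'^{2|\bs X_i|_\triangle}$ rescaling coming from the two $e^{-\beta' H_0}$ factors around $Q_{\bs X_i}$, and the $\beta'^{-4d}$ overall power in the statement (the factor $4=2\times 2$ accounting for the two $e^{-\beta' H_0}$ in each $F_{\beta'}$ and the sum over the path plus the $\bs F$ endpoint). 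I would make this precise with the norm-convergent Duhamel/Taylor expansion $e^{-\beta' H_0}=\sum_{k}\frac{(-\beta')^k}{k!}H_0^k$ valid since $H_0$ is not bounded but $e^{-\beta'H_0}$ is analytic in a suitable sense after the Lang--Firsov/hole-particle conjugation; more carefully, one isolates the relevant finite-dimensional electronic sector and treats the phonon part via $e^{-\beta'\omega_0\Np}\to\mathbbm 1$ and the fact that $V_\sigma(\pm\VP)$ acts boundedly.

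Next I would iterate over the path. Because $\bs p=\bs X_0\bs X_1\cdots\bs X_{n+1}$ with $\bs X_0=\bs X_{n+1}=\bs F$, and consecutive $\bs X_i,\bs X_{i+1}$ are adjacent (so they differ by a single $d$-hop or a single $(d,f)$-hop), the intermediate factors $e^{-\beta H_0}$ must each supply exactly one hop to connect the projected configurations — this gives the $\beta^{-2n-2}$ rescaling (there are $n+1$ intervening $e^{-\beta H_0}$'s between the $n+2$ path vertices, but the endpoints $\bs F$ are handled by $P_{\bs F}=Q_{\bs F}$ being the same projection, so careful bookkeeping of which hops are $d$-type versus hybridization-type is needed; the exponent $2n+2$ should emerge once one checks each adjacency contributes the claimed power). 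At each stage the surviving term is of the form (positive constant) $\times$ (product of hopping/hybridization amplitudes) $\times\ket{\bs X_{i+1},\bs X_{i+1};\,(\text{phonon factor})}$, and since $\bs X_{n+1}=\bs F$ the final electronic vector is $\ket{\bs F,\bs F}$. The accumulated phonon factor $f$ is a product of Gaussians coming from the $e^{\pm\im\VP_x}$ and $e^{\pm\im\VP_{x,y}}$ multipliers acting on the initial $\vphi_{\bs X(\vphi),\bs X(\vphi)}$ (which is $\ge 0$ and nonzero by the footnoted argument), together with factors $e^{-\beta'\omega_0\Np}$ that tend to the identity; hence $f\ge 0$ w.r.t. $L^2_+(\BbbR^{|\vLa|})$ and $f\neq 0$ because multiplication by a nowhere-vanishing Gaussian preserves nonvanishing. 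The constant $c$ is positive because the hopping amplitudes $(-t_{x,y})$ and $V$ enter squared (each hop is used once in the ``ket'' direction and once in the ``bra''-adjacency direction through $\mathcal L(\cdot)\mathcal R(\cdot^*)$ structure) — this is exactly the mechanism behind Proposition \ref{PPI2}.

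The computation organizes cleanly if one works throughout in the $\mathscr L^2(\mathfrak E)$ picture and uses the identification $A\otimes\vartheta A\vartheta$, so that $F_{\beta'}(\bs X_i)$ and $F_{\beta,\beta'}(\bs p)$ are manifestly of the form $\mathcal L(\cdot)\mathcal R(\cdot^*)$ times positive scalars, making both the positivity (already Proposition \ref{BP1}) and the positivity of $c$ transparent. The main obstacle I anticipate is the careful power-counting: one must verify that for \emph{each} path vertex $\bs X_i$ the leading contribution in $\beta'$ is exactly of order $\beta'^{2|\bs X_i|_\triangle}$ with a nonzero coefficient (rather than vanishing due to a cancellation or due to the projections killing the naive leading term), and similarly that each intervening $e^{-\beta H_0}$ contributes exactly one hop with nonzero amplitude — this requires knowing that the path can be chosen so that consecutive adjacent configurations are genuinely connected by a \emph{nonzero} matrix element of $H_0$, which is where the bipartiteness and connectivity (Assumption \hyperlink{A2}{\bf (A.\,2)} and Proposition \ref{Connect}) and the definition of adjacency (requiring $t_{x,y}\neq 0$) are essential. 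A secondary technical point is justifying the interchange of the two limits $\lim_{\beta\to+0}\lim_{\beta'\to+0}$ with the (finite) expansions; this is routine since after projecting onto the relevant finite electronic sector everything is a finite sum of terms each analytic in $\beta,\beta'$, with the phonon operators entering only through bounded multiplication operators and the contraction semigroup $e^{-\beta'\omega_0\Np}$.
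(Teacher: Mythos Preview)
Your overall strategy---Taylor/Duhamel expand each $F_{\beta'}(\bs X_i)$ to extract $\beta'^{4|\bs X_i|_\triangle}$, then expand each $e^{-\beta H_0}$ between adjacent projections to extract $\beta^2$ per edge---matches the paper's route (Propositions \ref{Projection3} and \ref{connectivity4}). However, there is one genuine gap and one misreading that would derail the argument.

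\textbf{The misreading.} You write ``$\bs X_0=\bs X_{n+1}=\bs F$''. The path connects $\bs F$ to $\bs X(\varphi)$, so $\bs X_0=\bs F$ and $\bs X_{n+1}=\bs X(\varphi)$. In $F_{\beta,\beta'}(\bs p)\varphi$, the \emph{rightmost} block $F_{\beta'}(\bs X_{n+1})=F_{\beta'}(\bs X(\varphi))$ acts on $\varphi$ first, and the leftmost block $F_{\beta'}(\bs F)$ produces the final $\ket{\bs F,\bs F}$. Your bookkeeping of ``$n+1$ intervening $e^{-\beta H_0}$'' is correct, but each edge needs \emph{two} hops (one per spin), giving $\beta^2$ per edge and hence $\beta^{2n+2}$; ``exactly one hop'' is not enough.

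\textbf{The substantive gap.} The leading-$\beta'$ behaviour of $F_{\beta'}(\bs X)$ is, up to a constant, the projection $\mathbb{E}_{\bs X}=E_{\bs X}\otimes E_{\bs X}$ (Proposition \ref{Projection3}). But $E_{\bs X}$ is \emph{not} rank one: it fixes the $f$-configuration and the $d$-occupation only on $X_d\triangle X_f$, leaving the $d$-occupation on $\vLa\setminus(X_d\triangle X_f)$ free. So a priori
\[
\mathbb{E}_{\bs X(\varphi)}\varphi=\sum_{\bs X',\bs X''}\ket{\bs X',\bs X''}\otimes\varphi_{\bs X',\bs X''},
\]
a sum over many configurations, and the subsequent edge-by-edge propagation to $\ket{\bs F,\bs F}$ breaks down. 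The paper resolves this with Proposition \ref{AEE} (via Lemma \ref{phi max}): it is precisely the \emph{maximality} of $|\bs X(\varphi)|_\triangle$ among the diagonal components of $\varphi$ that forces any $\bs X'$ with $E_{\bs X(\varphi)}\ket{\bs X'}=\ket{\bs X'}$ and $\varphi_{\bs X',\bs X'}\neq 0$ to satisfy $\bs X'=\bs X(\varphi)$, collapsing the sum to the single term $\ket{\bs X(\varphi),\bs X(\varphi)}\otimes\varphi_{\bs X(\varphi),\bs X(\varphi)}$. You invoke $\varphi_{\bs X(\varphi),\bs X(\varphi)}$ directly without explaining why only this term survives; this is the missing idea. (A minor related correction: at leading order the $e^{\pm i\vP}$ phases cancel in pairs inside each $F_{\beta'}(\bs X_i)$, so the final phonon vector is $f=\varphi_{\bs X(\varphi),\bs X(\varphi)}$ itself, not a Gaussian multiple of it; the residual phonon operator $J_\beta(\bs p)$ from the $e^{-\beta H_0}$ factors tends to $\mathbbm 1$.)
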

The proof of Proposition \ref{BP3} is given in Section \ref{Sec8}.
From this proposition, we can show the condition (iii) of Theorem \ref{ergodicity}.
The proof of Proposition \ref{BP3} is the most demanding of the five propositions listed here.

\subsubsection{Proof of Theorem \ref{MainPI} given Propositions \ref{BasePP}, \ref{ISP}, \ref{BP1}, \ref{BP2} and \ref{BP3}}

Set $\mathfrak{I}=\big\{\alpha|\boldsymbol{F}, \boldsymbol{F}; f\rangle\,:\,\alpha>0,\ f\in L_+^2(\BbbR^{|\vLa|})\setminus \{0\}\big\}$. 
According to Proposition \ref{ISP}, for any $f, g\in L^2_+(\BbbR^{|\vLa|}) \setminus \{0\}$, it holds that 
\begin{align}
\i<\boldsymbol{F}, \boldsymbol{F}; f| e^{-\beta H} |\boldsymbol{F}, \boldsymbol{F}; g>>0\quad (\beta > 0).
\end{align}

For a given $\vphi \in \p \setminus\{0\}$, let  $\bs p$  be the path given in Proposition \ref{BP3}.
For each $\beta>0$ and $\beta\rq{}>0$, define
\be
E_{\vphi}(\beta, \beta')
=\beta^{-2n-2}\beta'^{-4d} F_{\beta, \beta\rq{}}(\bs p).
\ee

From Proposition \ref{BP1}, $E_{\vphi}(\beta, \beta') \unrhd 0$ w.r.t. $\p $ holds. According to  Proposition \ref{BP2},  it can be deduced that 
\be
\{e^{-t H}\}_{t\ge 0} \succeq E_{\vphi}(\beta, \beta') \quad \wrt \p.
\ee
Furthermore,  based on Proposition \ref{BP3}, there exist $c>0$ and $f\in L_+^2(\BbbR^{|\vLa|})\setminus \{0\}$ such that 
\be
\lim_{\beta\to+0}\lim_{\beta'\to+0} E_{\vphi}(\beta, \beta')\vphi=c \ket{{\bs F}, {\bs F}; f} \quad(c>0).
\ee

Consequently, it is evident that all the assumptions of  Theorem \ref{ergodicity} are valid (with $u_0=\vphi=c \ket{{\bs F}, {\bs F}; f}$).
Thus, based on the same theorem, we conclude that $\{e^{-\beta H}\}_{\beta\ge 0}$ is ergodic w.r.t. $\p $.
\qed

\section{Proofs of Propositions \ref{BasePP} and  \ref{BP1}}\label{Sec5}
\subsection{Proof of Proposition \ref{BasePP}}
The operators $R_0$ and $R_1$ are defiend as
\begin{align}
R_0&=\frac{U^f}{2}\sum_{x\in \vLa } n_{x,  \uparrow}^fn_{x,  \downarrow}^f +
\frac{U^f}{2}\sum_{x\in \vLa } (\mathbbm{1}
-n_{x,  \uparrow}^f)(\mathbbm{1}-n_{x,  \downarrow}^f),\label{DefR0}\\
R_1&=U_{\mathrm{eff}}^d\sum_{x\in\vLa }  n_{x, \up}^dn_{x, \down}^d.
\end{align}

\begin{Lemm}\label{RProp}
We have the following:
\begin{itemize}
\item[\rm (i)] $R_0\unrhd 0$ w.r.t. $\p$.
\item[\rm (ii)] $R_1\unrhd 0$ w.r.t. $\p$.
\end{itemize}
\end{Lemm}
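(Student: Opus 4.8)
The goal is to show that $R_0 \unrhd 0$ and $R_1 \unrhd 0$ with respect to the Hilbert cone $\p$. Recall from \eqref{PExp2} that $\p$ is the closure of the conical hull of vectors of the form $\psi \otimes f$ with $\psi \in \p_{\rm e} = \mathscr{L}^2_+(\mathfrak{E})$ and $f \in L^2_+(\BbbR^{|\vLa|})$. Since $R_0$ and $R_1$ act only on the electronic Hilbert space $\h_{\rm e}$ (they are built from number operators $n_{x,\sigma}^f$ and $n_{x,\sigma}^d$), it suffices by Proposition \ref{BasicDP2}(i) to show that the corresponding operators on $\h_{\rm e} = \mathfrak{E} \otimes \mathfrak{E}$ are positivity preserving with respect to $\p_{\rm e} = \mathscr{L}^2_+(\mathfrak{E})$.

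The plan is to express each summand in the form $\mathcal{L}(A)\mathcal{R}(A^*)$ and invoke Proposition \ref{PPI2}, together with the closure-under-sums and closure-under-products properties from Lemma \ref{PPBasic}(ii),(iv). Concretely, under the identification \eqref{FIdn}--\eqref{AnniIdn}, the number operator $n_{x,\sigma}^f$ becomes either $\hat{n}_x^f \otimes \one$ (for $\sigma = \up$) or a sign times something acting on the second factor (for $\sigma = \down$); the key point is that $n_{x,\up}^f n_{x,\down}^f$ and $(\one - n_{x,\up}^f)(\one - n_{x,\down}^f)$ are each of the product form $a \otimes \vartheta a \vartheta$ for a suitable self-adjoint projection $a$ on $\mathfrak{E}$. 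For instance, writing $\hat{n}_x^f$ for the $f$-number operator at $x$ on $\mathfrak{E}$, one gets $n_{x,\up}^f n_{x,\down}^f = \hat{n}_x^f \otimes \hat{n}_x^f$ and $(\one-n_{x,\up}^f)(\one-n_{x,\down}^f) = (\one - \hat{n}_x^f) \otimes (\one - \hat{n}_x^f)$, using that $\vartheta$ fixes these real projections (they are diagonal in the CONS $\{|\bs X\ra\}$). By \eqref{LRIden}, $a \otimes \vartheta a \vartheta = \mathcal{L}(a)\mathcal{R}(a)$ (with $a^* = a$), which is $\unrhd 0$ w.r.t. $\mathscr{L}^2_+(\mathfrak{E})$ by Proposition \ref{PPI2}. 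The same reasoning applies to $n_{x,\up}^d n_{x,\down}^d$ for part (ii). Summing over $x$ with the nonnegative coefficients $U^f/2 \ge 0$ (from $U^f > 0$, though actually we only need the expression to be a sum of such terms — note $R_0$ and $R_1$ are defined with these coefficients regardless) and $U_{\rm eff}^d$, and then tensoring with $\one$ on $\h_{\rm ph}$, gives the claims. Here I would be slightly careful: Lemma \ref{RProp} as stated does not assume sign conditions on $U^f$ or $U_{\rm eff}^d$, so strictly the conclusion should either carry those hypotheses or — more likely the intended reading — the lemma is invoked only when $U^f > 0$ and $U_{\rm eff}^d \ge 0$, which are exactly the standing assumptions of Theorem \ref{MainTh1}(i).

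The main obstacle, and the only place requiring genuine care, is verifying the algebraic identities expressing $n_{x,\up}^f n_{x,\down}^f$ and $(\one - n_{x,\up}^f)(\one - n_{x,\down}^f)$ (and likewise for the $d$-operators) as tensor products $a \otimes b$ across the $\mathfrak{E} \otimes \mathfrak{E}$ decomposition, and then checking that the relevant operator $b$ equals $\vartheta a \vartheta$ — i.e., that $a$ is a \emph{real} operator in the basis $\{|\bs X\ra\}$. This hinges on the specific form of the identification \eqref{AnniIdn}: the $\down$-spin annihilation operators carry a factor $(-1)^{\hat N}$, but since the number operators are quadratic in the fermion operators these signs square away and one is left with clean tensor factors that are diagonal projections in the configuration basis, hence manifestly $\vartheta$-invariant. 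Once this bookkeeping is done, the positivity-preserving property follows mechanically from Proposition \ref{PPI2} and Lemma \ref{PPBasic}, and the lift to $\h = \int^\oplus \h_{\rm e}\, d\bq$ is immediate from Proposition \ref{BasicDP2}(i). I would organize the write-up as: (1) reduce to $\h_{\rm e}$ via Proposition \ref{BasicDP2}(i); (2) establish the tensor-product identities and $\vartheta$-invariance; (3) apply Proposition \ref{PPI2} summand-by-summand and conclude with Lemma \ref{PPBasic}(ii).
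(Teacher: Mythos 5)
Your proposal is correct and follows essentially the same route as the paper: express each summand as $a\otimes\vartheta a\vartheta$ under the identification \eqref{IdnE}, invoke Proposition \ref{PPI2} for positivity preservation on $\p_{\rm e}$, and lift to $\p$ via Proposition \ref{BasicDP2}. Your side remark about the standing sign hypotheses $U^f>0$, $U_{\rm eff}^d\ge 0$ (inherited from Theorem \ref{MainTh1}(i)) and about the $(-1)^{\hat N}$ factors squaring away in the quadratic number operators is accurate and supplies detail the paper leaves implicit.
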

\begin{proof}
(i) By using the identification \eqref{IdnE} and Proposition \ref{PPI2}, we have
\begin{align}
n_{x, \up}^fn_{x, \down}^f=\hn_x^f \otimes \vartheta \hn_x^f \vartheta \unrhd 0, \quad
(\mathbbm{1}-n_{x, \up}^f)(\mathbbm{1}-n_{x, \down}^f)=(\mathbbm{1}-\hn_x^f) \otimes \vartheta (\mathbbm{1}-\hn_x^f) \vartheta \unrhd 0\ 
 \mbox{w.r.t. $\p_{\rm e}$},\label{fnPP}
\end{align}
which implies that $R_0 \unrhd 0$ w.r.t. $\p_{\rm e}$. Hence, applying Proposition \ref{BasicDP2}, we obtain the desired assertion. 

(ii) can be proved in the similar way as (i).
\end{proof}

\begin{Lemm}\label{PPH_0}
$e^{-\beta H_0} \unrhd 0$ w.r.t. $\p$
for all $\beta \ge 0$.
\end{Lemm}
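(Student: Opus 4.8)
The plan is to decompose $H_0$ given in \eqref{DefH_0} into a sum of terms, each of which either generates a semigroup that is positivity preserving w.r.t.\ $\p$, or is itself a bounded operator $\unrhd 0$ w.r.t.\ $\p$, and then to assemble the claim using the Trotter product formula together with Lemma \ref{ppiexp1} and the basic stability properties in Lemma \ref{PPBasic}. Recall
\[
H_0=T_\up(\VP)+T_\down(-\VP)+V_\up(\VP)+V_\down(-\VP)+\frac{U^d_{\rm eff}}{2}\sum_{x\in\vLa}(n_{x,\up}^d+n_{x,\down}^d)+\omega_0\Np,
\]
so there are essentially four types of contributions: the diagonal one-body term $\frac{U^d_{\rm eff}}{2}\sum_x(n_{x,\up}^d+n_{x,\down}^d)$, the phonon term $\omega_0\Np$, the hopping terms $T_\sigma(\pm\VP)$, and the hybridization terms $V_\sigma(\pm\VP)$.

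First I would handle the phonon part. Since $\Np$ acts only on $\h_{\rm ph}=L^2(\BbbR^{|\vLa|})$ and, after the transformation in Subsection \ref{DefoHam}, $\omega_0\Np=\frac{\omega_0}{2}\sum_x(p_x^2+q_x^2-1)$ is (up to a constant) the harmonic oscillator Hamiltonian, its heat semigroup is given by the Mehler kernel, which is positivity preserving w.r.t.\ $L^2_+(\BbbR^{|\vLa|})$; by Proposition \ref{BasicDP2}(ii) this lifts to $e^{-t\omega_0\Np}\unrhd0$ w.r.t.\ $\p$. The diagonal term $\frac{U^d_{\rm eff}}{2}\sum_x(n_{x,\up}^d+n_{x,\down}^d)$ acts only on $\h_{\rm e}$ and is diagonal in the CONS $\{|\bs X\ra\}$, hence under the identification $\h_{\rm e}=\mathscr L^2(\mathfrak E)$ its exponential is of the form $\mathcal L(e^{-tA_\up})\mathcal R(e^{-tA_\down})$ with $A_\sigma\ge0$, which is $\unrhd0$ w.r.t.\ $\p_{\rm e}=\mathscr L^2_+(\mathfrak E)$ by Proposition \ref{PPI2} (or directly: conjugation $\xi\mapsto e^{-tA}\xi e^{-tB}$ preserves positive operators when $A,B$ are self-adjoint), and then Proposition \ref{BasicDP2}(i) promotes it to $\p$.

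The heart of the matter is the hopping and hybridization terms $T_\sigma(\pm\VP)+V_\sigma(\pm\VP)$; these couple the two fermionic copies of $\mathfrak E$ and the phonons, and one must check that $e^{t(T_\up(\VP)+T_\down(-\VP)+V_\up(\VP)+V_\down(-\VP))}\unrhd0$ w.r.t.\ $\p$. The key structural observation — this is the analogue of Proposition \ref{PPI2} and the reason for the particular sign assignment $(\VP$ for $\up$, $-\VP$ for $\down)$ and for the hole--particle transformation \eqref{WProp} — is that, in the $\mathscr L^2(\mathfrak E)$ picture with the antiunitary $\vartheta$, the down-spin operators act by right multiplication as $\vartheta(\text{(up-spin operator)})^*\vartheta$, so that a term like $-t_{x,y}(\,\hd_x^*\hd_y e^{\im\vP_{x,y}}\otimes\one + \one\otimes\vartheta\,\hd_x^*\hd_y e^{\im\vP_{x,y}}\vartheta\,)$ is, fiberwise in $\bs q$, of the form $\mathcal L(A)+\mathcal R(A^*)$ for a suitable $A$ (with real matrix elements in the $|\bs X\ra$ basis, after absorbing the signs $\gamma_x$ via $W$), whence $e^{t(\mathcal L(A)+\mathcal R(A^*))}=\mathcal L(e^{tA})\mathcal R(e^{tA^*})\unrhd0$ w.r.t.\ $\mathscr L^2_+(\mathfrak E)$ by Proposition \ref{PPI2}. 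Thus for each fixed $\bs q$ the fiber operator $e^{t(T_\up(\VP)+T_\down(-\VP)+V_\up(\VP)+V_\down(-\VP))}(\bs q)\unrhd0$ w.r.t.\ $\p_{\rm e}$, and Proposition \ref{BasicDP} gives $\unrhd0$ w.r.t.\ $\p=\int^\oplus\p_{\rm e}\,d\bs q$. (One small check is that the reality of the matrix elements survives the hybridization terms and the phases $e^{\pm\im\vP_x}$; since $\vP_x$ is a real multiplication operator acting on the phonon fiber and the fiber is taken at fixed $\bs q$, the phases are just complex scalars and the left/right multiplication structure, hence positivity, is unaffected — this uses that $\vartheta$ acts trivially on the phonon variables and complex-conjugates scalars.)

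Finally I would combine the pieces. Write $H_0 = K + M$ where $K = T_\up(\VP)+T_\down(-\VP)+V_\up(\VP)+V_\down(-\VP)$ and $M = \frac{U^d_{\rm eff}}{2}\sum_x(n_{x,\up}^d+n_{x,\down}^d)+\omega_0\Np$. By the above, $e^{-tK}$ need not be positivity preserving (it is $e^{+tK}$ that is, after a sign), so more carefully one sets it up so that each exponential appearing is of the correct sign: $T_\sigma(\pm\VP)$ and $V_\sigma(\pm\VP)$ have off-diagonal matrix elements of a definite sign in the $|\bs X\ra$ basis after the hole--particle transformation, so that $-K$ has nonpositive off-diagonal elements and hence $e^{-tK}\unrhd0$. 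Then by the Trotter--Kato product formula,
\[
e^{-\beta H_0}=\lim_{n\to\infty}\Big(e^{-\frac\beta n K}\,e^{-\frac\beta n M}\Big)^n,
\]
where the limit is in the strong (hence weak) operator topology; since each factor is $\unrhd0$ w.r.t.\ $\p$ by the previous paragraphs and $\p$ is closed under products (Lemma \ref{PPBasic}(iv)), each finite product is $\unrhd0$ w.r.t.\ $\p$, and Lemma \ref{Wcl} passes this to the weak limit, yielding $e^{-\beta H_0}\unrhd0$ w.r.t.\ $\p$ for all $\beta\ge0$. The main obstacle is the verification in the third paragraph that the coupled hopping--hybridization generator $-K$, with the Lang--Firsov-induced phonon phases $e^{\pm\im\vP_{x,y}}$ and $e^{\pm\im\vP_x}$, really has the left-times-right multiplication form with a real (indeed sign-definite) kernel on each phonon fiber; this is exactly where the sign choice $\VP$ vs.\ $-\VP$ and the hole--particle transformation $W$ are essential, and it is the one step that is genuinely model-specific rather than formal.
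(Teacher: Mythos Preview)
Your approach is essentially the paper's: split $H_0$ as (fermionic part) $+$ $\omega_0\Np$, verify positivity preservation of $e^{-\beta(\text{fermionic part})}$ fiberwise in $\bs q$ via the $\mathcal L(A)\mathcal R(A^*)$ structure (Proposition \ref{PPI2} and its Corollary), lift via the direct integral (Proposition \ref{BasicDP}), handle $e^{-\beta\omega_0\Np}$ via Proposition \ref{BasicDP2}(ii), and combine by Trotter--Kato and Lemma \ref{Wcl}. The paper's only organizational difference is that it keeps the diagonal density term $\tfrac{U^d_{\rm eff}}{2}\sum_x(n^d_{x,\up}+n^d_{x,\down})$ together with the hopping and hybridization in a single operator $H_1$, rather than moving it into $M$; this is immaterial since that term already has the form $\hat A\otimes\one+\one\otimes\vartheta\hat A\vartheta$.

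One correction, though: your final paragraph is confused and should be dropped. You write that ``$e^{-tK}$ need not be positivity preserving (it is $e^{+tK}$ that is)'' and then appeal to ``$-K$ has nonpositive off-diagonal elements''. Neither is correct for the cone in play. The Hilbert cone $\p_{\rm e}=\mathscr L^2_+(\mathfrak E)$ is \emph{not} the cone of vectors with nonnegative coefficients in the basis $\{|\bs X\ra\otimes|\bs Y\ra\}$, so a Perron--Frobenius sign condition on off-diagonals is irrelevant (and in any case $-K$ has genuinely complex off-diagonals because of the phases $e^{\pm\im\vP_x(\bs q)}$). The actual mechanism is the one you already identified in your third paragraph: by Proposition \ref{PPI2} and its Corollary, $e^{sA}\otimes\vartheta e^{sA}\vartheta\unrhd 0$ w.r.t.\ $\mathscr L^2_+(\mathfrak E)$ for \emph{every} bounded $A$ and \emph{every} $s\in\BbbR$, with no reality or sign hypothesis whatsoever. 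This is exactly how the paper proceeds: it records the single identity $\vartheta\hat H_1(\VP(\bs q))\vartheta=\hat H_1(-\VP(\bs q))$, so that $H_1=\int^\oplus\big(\hat H_1(\VP(\bs q))\otimes\one+\one\otimes\vartheta\hat H_1(\VP(\bs q))\vartheta\big)\,d\bs q$, and concludes $e^{-\beta H_1}\unrhd 0$ directly, with no discussion of phases, reality, or signs of matrix elements. Your concerns about the phases $e^{\pm\im\vP_x}$ surviving and about a ``sign-definite kernel'' are therefore unnecessary; once you excise them, your argument coincides with the paper's.
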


\begin{proof}
We split the operator $H_0$ as follows:
\be
H_0=H_1+\omega_0\Np,
\ee
where
\begin{align}
H_1
=T_\up(\VP) +T_\down(-\VP) +V_\up(\VP) +V_\down(-\VP) +\frac{ U^d_{\mathrm{eff}}}{2}
\sum_{x\in\vLa}(n_{x, \up}^d +n_{x, \down}^d). \label{DefH_1}
\end{align}
For each $\bs q\in \BbbR^{|\vLa|}$, we define the self-adjoint operators $\hat{H}_1(\pm \VP(\bs q))$ acting on $\mathfrak{E}$ by
\be
\hat{H}_1(\pm \VP(\bs q))
=\sum_{x,  y\in \Lambda}  (-t_{x,  y})e^{\pm \im \vP_{x, y}(\bs q)}\hd_{x}^* \hd_{y}
-V\sum_{x\in\Lambda}(e^{\mp \im \vP_x(\bs q)}\hf_{x}^* \hd_{x}+ e^{\pm \im \vP_x(\bs q)}\hd^*_{x} \hf_{x})
+
\frac{ U^d_{\mathrm{eff}}}{2}
\sum_{x\in\vLa} \hn_{x}^d, \label{DefhH_1}
\ee
where, for each $\bq\in \BbbR^{|\vLa|}$, $\vP_{x, y}(\bq)$ and $ \vP_x(\bq)$ are the values at $\bq$ of the functions $\vP_{x, y}$ and $ \vP_x$, respectively.
Because $\vartheta \hat{H_1}(\VP(\bs q)) \vartheta =\hat{H_1}(-\VP(\bs q))$ holds, we obtain
\be
H_1=\int^{\oplus}_{\BbbR^{|\vLa|}}\hat{H}_1(\VP(\bs q)) \otimes \mathbbm{1} d\bs q +
\int^{\oplus}_{\BbbR^{|\vLa|}} \mathbbm{1}\otimes \vartheta \hat{H}_1(\VP(\bs q)) \vartheta d\bs q
\ee
under the identification \eqref{HeFiber}. 
Hence, using Propositions \ref{BasicDP} and \ref{PPI2}, we have
\be
e^{-\beta H_1}=\int^{\oplus}_{\BbbR^{|\vLa|}} \Big(e^{-\beta \hat{H_1}(\VP(\bs q))} \Big)\otimes \Big(\vartheta e^{-\beta \hat{H_1}(\VP(\bs q))} \vartheta \Big) d\bq \unrhd 0\ \mbox{w.r.t. $\p$}.
\ee
On the other hand, since $e^{-\beta \Np} \unrhd 0$ w.r.t. $L^2_+(\BbbR^{|\vLa|})$ for all $\beta \ge 0$, it follows from  Proposition \ref{BasicDP2}  that $\mathbbm{1} \otimes e^{-\beta \Np} \unrhd 0$ w.r.t. $\p$ for all $\beta \ge 0$.
Combining the above considerations with Trotter--Kato\rq{}s product formula \cite[Theorem S.20]{Reed1981}, we find that 
\be
e^{-\beta H_0}=\lim_{n\to \infty}\Big(e^{-\beta H_1/n} e^{-\beta \omega_0 \Np/n}\Big)^n \unrhd 0  \ \mbox{w.r.t. $\p$}, 
\ee
where we have used Lemma \ref{Wcl}.
\end{proof}

Proposition \ref{BasePP} can be concluded from the following lemma:
\begin{Lemm}\label{MonoE}
For every $\beta \ge 0$, we have
$
e^{-\beta H}
\unrhd e^{-\beta H_0}\wrt \p.
$  In particular, it holds from Lemma \ref{PPH_0} that $e^{-\beta H}\unrhd 0$ w.r.t. $\p$ for all $\beta \ge 0$.
\end{Lemm}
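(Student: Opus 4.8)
\textbf{Proof proposal for Lemma \ref{MonoE}.}

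The plan is to recognize the decomposition $H = H_0 - R$ from Lemma \ref{HHPT} and to apply Lemma \ref{ppiexp1} with the roles $A = H_0$ and $B = R$. To do this, two ingredients must be in place: first, that the semigroup generated by $H_0$ is positivity preserving, i.e.\ $e^{-tH_0}\unrhd 0$ w.r.t.\ $\p$ for all $t\ge 0$; and second, that the perturbation $R$ is a bounded operator with $R\unrhd 0$ w.r.t.\ $\p$. The first ingredient is exactly Lemma \ref{PPH_0}, already established. For the second, recall that $R = R_0 + R_1$ in the notation of \eqref{DefR0}: indeed, comparing the expression for $R$ in Lemma \ref{HHPT} with the definitions of $R_0$ and $R_1$, one has $R = \frac{U^f}{2}\sum_x n^f_{x,\up}n^f_{x,\down} + \frac{U^f}{2}\sum_x(\mathbbm{1}-n^f_{x,\up})(\mathbbm{1}-n^f_{x,\down}) + U^d_{\rm eff}\sum_x n^d_{x,\up}n^d_{x,\down} = R_0 + R_1$.

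The key steps, in order, are: (1) invoke Lemma \ref{RProp} to conclude $R_0\unrhd 0$ and $R_1\unrhd 0$ w.r.t.\ $\p$, hence by part (ii) of Lemma \ref{PPBasic} that $R = R_0 + R_1\unrhd 0$ w.r.t.\ $\p$; (2) note that $R$ is bounded, since each $n^f_{x,\sigma}$, $n^d_{x,\sigma}$ is a bounded (indeed projection) operator on $\mathfrak{E}$ and $R$ acts as $R\otimes\mathbbm{1}$ on $\h = \h_{\rm e}\otimes\h_{\rm ph}$, with $\vLa$ finite; (3) note $H_0$ is self-adjoint and bounded below (it is a sum of the bounded operator $H_1$ and the nonnegative operator $\omega_0 N_{\rm p}$, as in the proof of Lemma \ref{PPH_0}); (4) apply Lemma \ref{ppiexp1} with $A = H_0$, $B = R$, which yields $e^{-t(H_0 - R)}\unrhd e^{-tH_0}$ w.r.t.\ $\p$, that is, $e^{-tH}\unrhd e^{-tH_0}$ w.r.t.\ $\p$ for all $t\ge 0$; (5) combine with Lemma \ref{PPH_0} and part (ii) of Definition \ref{DefPPInq}-based transitivity (or simply the definition of $\unrhd$) to deduce $e^{-tH}\unrhd 0$ w.r.t.\ $\p$, which is Proposition \ref{BasePP}.

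I do not expect a genuine obstacle here, as this lemma is essentially a bookkeeping application of the machinery assembled in Sections \ref{Sec2}--\ref{Sec5}; the only point requiring a modicum of care is the verification that the operator $R$ appearing in Lemma \ref{HHPT} coincides with $R_0 + R_1$, which is immediate from the definitions, together with the observation that the operator inequalities $R_0\unrhd 0$ and $R_1\unrhd 0$ were proved w.r.t.\ $\p_{\rm e}$ and then lifted to $\p$ via Proposition \ref{BasicDP2} (part (i)), exactly as in the proof of Lemma \ref{RProp}. One should also remark that $H$ and $H_0$ both preserve the reality w.r.t.\ $\p$, so that the notation $e^{-\beta H}\unrhd e^{-\beta H_0}$ is meaningful; this holds because $H_0$ and $R$ are built from the real-structure-preserving operators $\hd_x$, $\hf_x$ and their adjoints, together with $q_x$, $p_x$, which is implicit in the constructions above.
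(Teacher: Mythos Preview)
Your proposal is correct and follows essentially the same approach as the paper's own proof: invoke Lemma \ref{RProp} to get $R=R_0+R_1\unrhd 0$ w.r.t.\ $\p$, then apply Lemma \ref{ppiexp1} with $A=H_0$, $B=R$, using Lemma \ref{PPH_0} for the hypothesis $e^{-tH_0}\unrhd 0$. The paper's proof is a two-line version of exactly this; your additional remarks about boundedness of $R$, self-adjointness and semiboundedness of $H_0$, and reality preservation are correct elaborations of points the paper leaves implicit.
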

\begin{proof}
From  Lemma \ref{RProp}, we know that $R=R_0+R_1\unrhd 0$ w.r.t. $\p$. Since $H=H_0-R$, we obtain the desired assertion by Lemmas  \ref{ppiexp1} and \ref{PPH_0}.
\end{proof}

This completes the proof of Proposition \ref{BasePP}. \qed

\subsection{Proof of Proposition \ref{BP1}}

For each $X\subseteq \vLa$, we define the orthogonal projections on $\mathfrak{E}$ by 
\begin{align}
\hat{P}_{X}^d=\prod_{x\in  X} \hn_{x}^d,\quad \overline{\hat{P}}^d_{ X}=\prod_{x\in  X} \overline{\hn}_{x}^d, \quad
\hat{P}^f_{ X}=\prod_{x\in  X} \hn_{x}^f,\quad \overline{\hat{P}}^f_{  X}=\prod_{x\in  X} \overline{\hn}_{x}^f, \label{DefPPPP2}
\end{align}
where 
\be
 \overline{\hn}_{x}^d=\mathbbm{1}-\hn_{x}^d,\quad
 \overline{\hn}_{x}^f=\mathbbm{1}-\hn_{x}^f.
\ee
We readily confirm that, for each $\bs X=(X_d, X_f)\in \mathscr{C}$, 
\be
P_{\bs X}=\hat{P}^f_{X_f} \overline{\hat{P}}^f_{  \vLa\setminus X_f} \otimes  \vartheta \hat{P}^f_{X_f} \overline{\hat{P}}^f_{  \vLa\setminus X_f} \vartheta,\quad
Q_{\bs X} =\hat{P}_{X_d}^f \overline{\hat{P}}^f_{  \vLa\setminus X_d} \otimes  \vartheta\hat{P}_{X_d}^f \overline{\hat{P}}^f_{  \vLa\setminus X_d} \vartheta.
\ee
From Proposition \ref{PPI2}, we know that $P_{\bs X} \unrhd 0$ and $Q_{\bs X} \unrhd 0$ w.r.t. $\p$. Additionally, it holds that  $e^{-s H_0} \unrhd 0$ w.r.t. $\p$ ($s\ge 0$) from Lemma \ref{PPH_0}. 
Combining the above considerations with Lemma \ref{PPBasic},
we obtain  $F_s(\bs X) \unrhd 0$ w.r.t. $\p$. \qed

\section{Proof of Proposition \ref{ISP}}\label{Sec6}

According to Duhamel's formula, one has 
\begin{align}
&\i<{\bs F}, {\bs F};  g| e^{-\beta H_0} |{\bs F}, {\bs F};  h> \no
&= \i<{\bs F}, {\bs F};  g| e^{-\beta \omega_0N_\mathrm{p}} |{\bs F}, {\bs F};  h> \no
&\quad +\sum_{n\geq1}(-\beta)^n \int_{0\leq s_1\leq\cdots\leq s_n\leq1} \i<{\bs F}, {\bs F};  g| 
H_1(s_1) \cdots H_1(s_n) e^{-\beta\omega_0N_\mathrm{p}}  |{\bs F}, {\bs F};  h> \,d^n \bs s \label{DuhaDuha1}
\end{align}
 for all $g, h\in L^2_+(\BbbR^{|\vLa|})\setminus\{0\}$, where
$H_1(s)=e^{-s \beta \omega_0 \Np} H_1e^{s \beta \omega_0 \Np}$; $H_1$ is given by \eqref{DefH_1}.  Note that 
the right-hand side of \eqref{DuhaDuha1} converges in the operator norm topology.

\begin{Lemm}\label{H_1(s)Inqs}
For eavery $g, h\in L^2_+(\BbbR^{|\vLa|})\setminus\{0\}$, one obtains the following:
\begin{itemize}
\item[\rm (i)] $\i<{\bs F}, {\bs F};  g| H_1(s) e^{-\beta \omega_0 \Np} |{\bs F}, {\bs F}; h>=0$.
\item[\rm (ii)]
If $0\le s_1\le s_2\le \cdots \le s_n\le 1$, then 
\be
\Big|\i<{\bs F}, {\bs F};  g| 
H_1(s_1) \cdots H_1(s_n) e^{-\beta\omega_0N_\mathrm{p}}  |{\bs F}, {\bs F};  h>
\Big| \le \alpha^n\i<g| e^{-\beta \omega_0N_\mathrm{p}}h> 
\ee
holds, where 
\begin{align}
\alpha
=2\sum_{x, y\in\vLa} |t_{x, y}| +2V|\vLa | +2\sum_{x, y\in\vLa} |U_{\mathrm{eff}, x, y}|.
\end{align}
\end{itemize}
\end{Lemm}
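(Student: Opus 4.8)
Part (i) follows at once from the structure of $H_1$ together with the fact that the electronic part of $\ket{\bs F,\bs F}$ carries no conduction electrons. By \eqref{DefH_1}, $H_1$ is a finite sum of monomials, each a scalar times a product of at most two of the operators $d_{x,\sigma},d_{x,\sigma}^*,f_{x,\sigma},f_{x,\sigma}^*$ times a phonon operator which is either a unitary multiplication operator ($e^{\pm\im\vP_{x,y}}$ or $e^{\pm\im\vP_x}$) or $\mathbbm{1}$. Every monomial carrying a factor $d_{y,\sigma}$ or $n_{x,\sigma}^d$ annihilates $\ket{\bs F,\bs F}$; the only ones that survive are the hybridization terms proportional to $d_{x,\sigma}^*f_{x,\sigma}$, which send $\ket{\bs F,\bs F}$ into a vector with one conduction electron, hence orthogonal to $\ket{\bs F,\bs F}$ in $\mathfrak{E}\otimes\mathfrak{E}$. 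Since the phonon operators act on the other tensor factor, $\la\bs F,\bs F;g|H_1(s)e^{-\beta\omega_0\Np}|\bs F,\bs F;h\ra$ is a finite sum of terms each of which factorizes into a vanishing electronic inner product times a phonon factor; this gives (i).

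For part (ii) I would begin with the telescoping identity
\begin{align}
H_1(s_1)\cdots H_1(s_n)\,e^{-\beta\omega_0\Np}=e^{-t_0\beta\omega_0\Np}H_1\,e^{-t_1\beta\omega_0\Np}H_1\cdots H_1\,e^{-t_n\beta\omega_0\Np},
\end{align}
where $t_0=s_1$, $t_i=s_{i+1}-s_i$ $(1\le i\le n-1)$, $t_n=1-s_n$ are nonnegative with $\sum_{i=0}^n t_i=1$; in this bare form every factor is bounded (the un-conjugated $H_1$ has a unitary, in particular bounded, phonon part), so no domain issues intervene. Writing $H_1=\sum_a c_a\,O_a\otimes M_a$ as a sum over monomials and moving each $e^{-t\beta\omega_0\Np}=\mathbbm{1}_{\rm e}\otimes e^{-t\beta\omega_0\Np}$ onto the phonon factor, the matrix element in (ii) becomes the finite sum
\begin{align}
&\la\bs F,\bs F;g|H_1(s_1)\cdots H_1(s_n)e^{-\beta\omega_0\Np}|\bs F,\bs F;h\ra\no
&\quad=\sum_{a_1,\dots,a_n}\Big(\prod_{i=1}^n c_{a_i}\Big)\big\langle\bs F,\bs F\big|O_{a_1}\cdots O_{a_n}\big|\bs F,\bs F\big\rangle\no
&\qquad\qquad\times\big\langle g\big|e^{-t_0\beta\omega_0\Np}M_{a_1}e^{-t_1\beta\omega_0\Np}\cdots M_{a_n}e^{-t_n\beta\omega_0\Np}\big|h\big\rangle.
\end{align}
Each electronic inner product is bounded in modulus by $\prod_i\|O_{a_i}\|\le 1$, and a step-by-step count (using that at most one hybridization monomial per site--spin acts nontrivially on a given electron configuration, while the hopping and number monomials contribute at most $2\sum_{x,y}|t_{x,y}|$ and $|U^d_{\rm eff}||\vLa|$) shows $\sum_{\vec a}\prod_i|c_{a_i}|\cdot\big|\langle\bs F,\bs F|O_{a_1}\cdots O_{a_n}|\bs F,\bs F\rangle\big|\le\alpha^n$.

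The remaining --- and genuinely substantive --- step is the uniform bound
\begin{align}
\Big|\big\langle g\big|e^{-t_0\beta\omega_0\Np}M_{a_1}e^{-t_1\beta\omega_0\Np}\cdots M_{a_n}e^{-t_n\beta\omega_0\Np}\big|h\big\rangle\Big|\le\la g|e^{-\beta\omega_0\Np}h\ra,
\end{align}
the obstacle being that the multiplication operators $M_{a_j}$ do not commute with the free phonon semigroup. I would pass to the Schr\"odinger representation, in which $\omega_0\Np=\tfrac{\omega_0}{2}(-\Delta+|\bs q|^2-|\vLa|)$, so that $e^{-t\beta\omega_0\Np}$ is, up to the positive factor $e^{t\beta\omega_0|\vLa|/2}$, the harmonic oscillator heat semigroup and hence has the strictly positive Mehler kernel $K_t(\bs q,\bs q')$, satisfying the Chapman--Kolmogorov identity $\int K_t(\bs q,\cdot)K_{t'}(\cdot,\bs q'')=K_{t+t'}(\bs q,\bs q'')$. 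Writing the phonon factor as an iterated integral against $\overline{g}$, $K_{t_0}$, $M_{a_1}(\cdot)$, $K_{t_1}$, $\dots$, $M_{a_n}(\cdot)$, $K_{t_n}$, $h$, using $|M_{a_j}(\bs q)|\le 1$ together with $g,h\ge 0$ and $K_t\ge 0$, one dominates it by the same integral with every $M_{a_j}$ replaced by $1$; collapsing the intermediate integrations by Chapman--Kolmogorov (here $\sum_i t_i=1$ enters) produces exactly $\la g|e^{-\beta\omega_0\Np}h\ra$. Combining the three estimates gives $|\cdots|\le\alpha^n\la g|e^{-\beta\omega_0\Np}h\ra$. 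The crux is precisely the non-commutation of the phonon phases with $e^{-t\beta\omega_0\Np}$, which is resolved by moving to the positive integral kernel of the oscillator semigroup and telescoping the kernel factors.
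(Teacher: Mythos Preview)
Your argument is correct and follows essentially the same route as the paper's proof: expand $H_1$ into monomials, factor each term into an electronic matrix element times a phonon matrix element, bound the electronic part trivially by operator norms, and bound the phonon part by $\la g|e^{-\beta\omega_0\Np}h\ra$ using positivity of the oscillator semigroup.

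The only notable difference is in how you justify the phonon bound. The paper uses the abstract pointwise inequality $|e^{-t\omega_0\Np}f|\le e^{-t\omega_0\Np}|f|$ (which holds for any positivity-preserving operator) and iterates it through the chain, absorbing the unimodular multiplication operators at each step. You instead pass to the explicit positive Mehler kernel and use Chapman--Kolmogorov. These are equivalent: the Mehler kernel positivity is exactly what makes $e^{-t\omega_0\Np}\unrhd 0$ w.r.t.\ $L^2_+(\BbbR^{|\vLa|})$, and Chapman--Kolmogorov is the kernel translation of the semigroup property that lets the paper collapse $e^{-t_0\beta\omega_0\Np}\cdots e^{-t_n\beta\omega_0\Np}$ into $e^{-\beta\omega_0\Np}$. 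Your version is slightly more concrete; the paper's is slightly more portable (it would apply verbatim to any $\unrhd 0$ semigroup in place of the oscillator). Also, your ``step-by-step count'' for the electronic sum is more refined than needed --- the crude bound $\sum_a|c_a|\le\alpha$ per factor, together with $|\langle\bs F,\bs F|O_{a_1}\cdots O_{a_n}|\bs F,\bs F\rangle|\le 1$, already gives $\alpha^n$ directly.
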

\begin{proof}
(i) For any $x, y\in \vLa$, it holds that $\bra{\bs F, \bs F} d_{x, \sigma}^*d_{y, \sigma}\ket{\bs F, \bs F}=0$.
Hence, for all  $g', h'\in L_+^2(\BbbR^{|\vLa|})$ , we have
\begin{align}
\bra{\bs F, \bs F; g\rq{}} T_{\sigma}(\pm \VP) \ket{\bs F, \bs F; h\rq{}}
=\sum_{x, y\in \vLa} (-t_{x, y})\bra{\bs F, \bs F} d_{x, \sigma}^*d_{y, \sigma}\ket{\bs F, \bs F} \bra{ g\rq{}}e^{\pm\im  \vP_{x, y}} \ket{h\rq{}}=0.
\end{align}
Similarly, we have
\be
\bra{\bs F, \bs F; g\rq{}} V(\pm \VP) \ket{\bs F, \bs F; h\rq{}}=0,\quad \bra{\bs F, \bs F; g\rq{}} R\ket{\bs F, \bs F; h\rq{}}=0.
\ee
To sum up the above, we have $\bra{\bs F, \bs F; g\rq{}}H_1\ket{\bs F, \bs F; h\rq{}}=0$. 
Therefore, choosing $g\rq{}=e^{-s \beta \omega_0\Np}g $ and $ h\rq{}=e^{-(1-s)\beta \omega_0\Np}h$, we conclude that 
\begin{align}
\i<{\bs F}, {\bs F};  g| H_1(s) e^{-\beta \omega_0 \Np} |{\bs F}, {\bs F}; h>
=\bra{\bs F, \bs F; g\rq{}}H_1\ket{\bs F, \bs F; h\rq{}}=0.
\end{align}

(ii) To avoid unnecessary complications, we prove the case $n=2$. The general $n$ case can be proved similarly.

Because $e^{-t\omega_0\Np} \unrhd 0$ w.r.t. $L^2_+(\BbbR^{|\vLa|})$ for all $t \ge 0$, we readily confirm that 
\be
\big|
e^{-t\omega_0\Np} f
\big| \le e^{-t\omega_0\Np} |f|\quad (f\in L^2(\BbbR^{|\vLa|})).  \label{NpAbsInq}
\ee
Let us illustrate the idea of the proof with the following term that appears when we expand  $I:=\i<{\bs F}, {\bs F};  g| 
H_1(s_1)  H_1(s_2) e^{-\beta\omega_0N_\mathrm{p}}  |{\bs F}, {\bs F};  h>$: 
\be
\i<{\bs F}, {\bs F};  g| 
T_{\up}(\VP)(s_1) V_{\down}(-\VP)(s_2) e^{-\beta\omega_0N_\mathrm{p}}  |{\bs F}, {\bs F};  h>. \label{ExTerm}
\ee
First, let us express this term as
\begin{align}
\eqref{ExTerm}
=&\sum_{x_1, y_1}\sum_{x_2} (-t_{x_1, y_1}) V \bra{\bs F, \bs F} d_{x_1, \up}^*d_{y_1, \up}d_{x_2, \down}^* f_{x_2, \down}\ket{\bs F, \bs F}\no
&\qquad\times \bra{g} e^{-s_1 \beta \omega_0\Np} e^{\im \vP_{x_1, y_1}} e^{-(s_2-s_1) \beta \omega_0\Np} e^{\im \vP_{x_2}}e^{-(1-s_2) \beta \omega_0\Np}  h\ra.
\end{align} 
Using the inequality \eqref{NpAbsInq} twice over, we obtain 
\begin{align}
\Big|
\bra{g} e^{-s_1 \beta \omega_0\Np} e^{\im \vP_{x_1, y_1}} e^{-(s_2-s_1) \beta \omega_0\Np} e^{\im \vP_{x_2}}e^{-(1-s_2) \beta \omega_0\Np} h\ra
\Big|\le \i<g| e^{-\beta \omega_0N_\mathrm{p}}h> , 
\end{align}
which leads to 
\be
|\eqref{ExTerm}| \le \bigg(\sum_{x_1, y_1}\sum_{x_2} |t_{x_1, y_1}| V \bigg)\i<g| e^{-\beta \omega_0N_\mathrm{p}}h>.
\ee
We can similarly evaluate the other terms that appear when we expand $I$. From this, we obtain the desired inequality when $n=2$.
\end{proof}

If $\beta< e^{-\alpha/2} $, then from Lemma \ref{H_1(s)Inqs} and \eqref{DuhaDuha1}, we obtain
\begin{align}
\i<{\bs F}, {\bs F}; g| e^{-\beta H_0} |{\bs F}, {\bs F};  h> 
&\geq \i<g| e^{-\beta \omega_0N_\mathrm{p}}h> -\beta^2\sum_{n\geq2}\frac{\alpha^n}{n!} \i<g| e^{-\beta \omega_0N_\mathrm{p}}h> \no
&\geq (1-\beta^2e^\alpha)\i<g| e^{-\beta \omega_0N_\mathrm{p}}h> >0,
\end{align}
where,  in the last inequality, we have  used the fact that $e^{-\beta \omega_0\Np} \rhd 0$ w.r.t. $L^2(\BbbR^{|\vLa|})_+$ for all $\beta>0$. Therefore, by applying Lemma \ref{MonoE},  we see that if    $0<\beta< e^{-\alpha/2} $,  then 
\begin{align}
\i<\psi| e^{-\beta H}\varphi>
\geq \i<{\bs F}, {\bs F};  g| e^{-\beta H_0} |{\bs F},  {\bs F}; h>
>0
\end{align}
holds.  This completes the proof of Proposition \ref{ISP}. \qed

\section{Proof of Proposition \ref{BP2}}\label{Sec7}

In order to prove Proposition \ref{BP2}, we provide some lemmas.

First, we prove the following lemma concerning $R_0$ defined by \eqref{DefR0}:
\begin{Lemm}\label{Projection4}
For each $\boldsymbol{X}=(X_d,X_f)\in \mathscr{C}$ and $\beta \ge 0$, we have
\begin{align}
R_0 ^{|\vLa |} e^{-\beta H_0} R_0 ^{|\vLa |} e^{-\beta H_0} R_0 ^{|\vLa |}
\unrhd \left( \frac{U^f}{2} \right)^{3|\vLa |} F_{\beta}({\bs X}) \wrt\p.
\end{align}
\end{Lemm}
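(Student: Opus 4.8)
The plan is to split the claim into two ``one-sided'' operator inequalities and then glue them together with the factors $e^{-\beta H_0}$ via Lemma~\ref{InqSeki}. Concretely, I would first establish
\[
R_0^{|\vLa|}\unrhd\Big(\tfrac{U^f}{2}\Big)^{|\vLa|}P_{\bs X}\unrhd 0
\quad\text{and}\quad
R_0^{|\vLa|}\unrhd\Big(\tfrac{U^f}{2}\Big)^{|\vLa|}Q_{\bs X}\unrhd 0
\qquad\wrt\p,
\]
and then iterate Lemma~\ref{InqSeki}, using $e^{-\beta H_0}\unrhd 0\wrt\p$ (Lemma~\ref{PPH_0}).

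For the one-sided inequalities, write $R_0=\tfrac{U^f}{2}\sum_{x\in\vLa}c_x$ with $c_x=n_{x,\up}^f n_{x,\down}^f+(\one-n_{x,\up}^f)(\one-n_{x,\down}^f)$. Since $n_{x,\up}^f n_{x,\down}^f$ and $(\one-n_{x,\up}^f)(\one-n_{x,\down}^f)$ are mutually orthogonal projections commuting with all $n_{y,\tau}^f$ for $y\neq x$, each $c_x$ is an orthogonal projection and $\{c_x\}_{x\in\vLa}$ is a commuting family. Exactly as in the proof of Lemma~\ref{RProp} (Proposition~\ref{PPI2} followed by Proposition~\ref{BasicDP2}), $n_{x,\up}^f n_{x,\down}^f\unrhd 0$ and $(\one-n_{x,\up}^f)(\one-n_{x,\down}^f)\unrhd 0\wrt\p$, hence $c_x\unrhd 0\wrt\p$ by Lemma~\ref{PPBasic}; consequently every product of the $c_x$'s is $\unrhd 0\wrt\p$. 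Expanding $\big(\sum_{x\in\vLa}c_x\big)^{|\vLa|}$, the $|\vLa|!$ summands indexed by permutations of $\vLa$ all collapse (by commutativity) to $\prod_{x\in\vLa}c_x$, while each remaining summand collapses (using $c_x^2=c_x$) to $\prod_{x\in S}c_x$ for some $S\subsetneq\vLa$ and is $\unrhd 0$; therefore
\[
R_0^{|\vLa|}\unrhd\Big(\tfrac{U^f}{2}\Big)^{|\vLa|}|\vLa|!\prod_{x\in\vLa}c_x\unrhd\Big(\tfrac{U^f}{2}\Big)^{|\vLa|}\prod_{x\in\vLa}c_x\qquad\wrt\p,
\]
where the last step uses $U^f>0$ and $|\vLa|!\ge 1$. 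Finally, expanding the product,
\[
\prod_{x\in\vLa}c_x=\sum_{Y\subseteq\vLa}\ \prod_{x\in Y}n_{x,\up}^f n_{x,\down}^f\ \prod_{x\in\vLa\setminus Y}(\one-n_{x,\up}^f)(\one-n_{x,\down}^f),
\]
a sum of mutually orthogonal projections, all $\unrhd 0\wrt\p$; the summand $Y=X_f$ is precisely $P_{\bs X}$ and the summand $Y=X_d$ is precisely $Q_{\bs X}$, so $\prod_{x\in\vLa}c_x\unrhd P_{\bs X}$ and $\prod_{x\in\vLa}c_x\unrhd Q_{\bs X}\wrt\p$, which yields the two displayed inequalities.

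To conclude, put $c=(\tfrac{U^f}{2})^{|\vLa|}$; all operators below are reality preserving w.r.t. $\p$ (being $\unrhd 0$, or products of such). Applying Lemma~\ref{InqSeki} four times --- first to $R_0^{|\vLa|}\unrhd cP_{\bs X}$ paired with $e^{-\beta H_0}\unrhd e^{-\beta H_0}$; then to the resulting inequality paired with $R_0^{|\vLa|}\unrhd cQ_{\bs X}$; then multiplying on the right by $e^{-\beta H_0}\unrhd e^{-\beta H_0}$; then by $R_0^{|\vLa|}\unrhd cP_{\bs X}$ --- gives
\[
R_0^{|\vLa|}e^{-\beta H_0}R_0^{|\vLa|}e^{-\beta H_0}R_0^{|\vLa|}\unrhd c^3\,P_{\bs X}e^{-\beta H_0}Q_{\bs X}e^{-\beta H_0}P_{\bs X}=\Big(\tfrac{U^f}{2}\Big)^{3|\vLa|}F_\beta(\bs X)\qquad\wrt\p,
\]
which is the assertion.

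The main obstacle here is not a deep step but the bookkeeping in the expansion of $R_0^{|\vLa|}$: one must check that \emph{every} term other than the distinguished ``full product'' $\prod_{x\in\vLa}c_x$ is itself $\unrhd 0$, so that it may simply be discarded rather than estimated. This is exactly where the orthogonality $n_{x,\up}^f n_{x,\down}^f\cdot(\one-n_{x,\up}^f)(\one-n_{x,\down}^f)=0$, the mutual commutativity of the $c_x$, and the stability of ``$\unrhd 0$'' under products (Lemma~\ref{PPBasic}) all enter; the same three facts also make the identification of the two distinguished summands of $\prod_{x\in\vLa}c_x$ with $P_{\bs X}$ and $Q_{\bs X}$ transparent.
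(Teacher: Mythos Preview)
Your proof is correct and follows essentially the same strategy as the paper: establish $R_0^{|\vLa|}\unrhd(\tfrac{U^f}{2})^{|\vLa|}P_{\bs X}$ and $R_0^{|\vLa|}\unrhd(\tfrac{U^f}{2})^{|\vLa|}Q_{\bs X}$, then glue via Lemma~\ref{InqSeki} and Lemma~\ref{PPH_0}. The paper's execution of the first step is slightly more direct: rather than expanding $R_0^{|\vLa|}$ multinomially and isolating $\prod_x c_x$, it simply observes that for each fixed $x$ one has $R_0\unrhd\tfrac{U^f}{2}\,n_{x,\up}^f n_{x,\down}^f$ or $R_0\unrhd\tfrac{U^f}{2}\,(\one-n_{x,\up}^f)(\one-n_{x,\down}^f)$ (dropping all other nonnegative summands), and then applies Lemma~\ref{InqSeki} $|\vLa|$ times, choosing for each $x$ the factor dictated by $X_f$ (resp.\ $X_d$), to arrive immediately at $P_{\bs X}$ (resp.\ $Q_{\bs X}$); this bypasses your two expansion steps and the extra factor $|\vLa|!$.
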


\begin{proof}
 From \eqref{fnPP}, 
$n_{x, \up}^fn_{x, \down}^f\unrhd0 $ and $ (\mathbbm{1}-n_{x, \up}^f)(\mathbbm{1}-n_{x, \down}^f)\unrhd0\wrt\p $, so we know that 
\be
R_0  \unrhd \frac{U^f}{2}n_{x, \up}^fn_{x, \down}^f,\ R_0  \unrhd \frac{U^f}{2} (\mathbbm{1}-n_{x, \up}^f)(\mathbbm{1}-n_{x, \down}^f)\ \mbox{
w.r.t. $\p $}.
\ee
Thus, from \eqref{Def oPX-1}, the definitions of 
 $P_{\bs X}$ and $ Q_{\bs X}$, we have
\be
R_0 ^{|\vLa |} \unrhd  \left( \frac{U^f}{2} \right)^{|\vLa |}P_{\boldsymbol{X}},\quad R_0 ^{|\vLa |} \unrhd \left( \frac{U^f}{2} \right)^{|\vLa |}Q_{\boldsymbol{X}} \wrt\p.
\ee
Accordingly, by applying Lemmas \ref{InqSeki} and \ref{PPH_0}, we obtain the desired assertion.
\end{proof}

\begin{Lemm}\label{SemiOpLow}
For each $\bs X\in \mathscr{C}$ and $\beta\ge 0$, it holds that 
$
\{e^{-tH}\}_{t\geq0}\succeq F_\beta({\bs X}) \wrt\p 
$.
\end{Lemm}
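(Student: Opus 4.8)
The plan is to derive $\{e^{-tH}\}_{t\ge 0}\succeq F_\beta(\bs X)$ from the chain of operator inequalities
\[
e^{-tH}\unrhd e^{-tH_0}\unrhd \Big(\tfrac{U^f}{2}\Big)^{-3|\vLa|} R_0^{|\vLa|}e^{-\beta H_0}R_0^{|\vLa|}e^{-\beta H_0}R_0^{|\vLa|}\ \ (\text{in a suitable sense})
\]
together with the definition of $\succeq$ and the fact that $R_0$ is built out of the interaction term $R$ that was subtracted off in forming $H$. First I would recall from Lemma \ref{MonoE} that $e^{-tH}\unrhd e^{-tH_0}\wrt\p$ for all $t\ge 0$, so it suffices to show $\{e^{-tH_0}\}_{t\ge 0}\succeq F_\beta(\bs X)\wrt\p$; actually, since $\succeq$ is implied by $\unrhd$ only when the dominating operator is itself a semigroup element, the route must instead be to exhibit, for each pair $u,v\in\p$ with $\la u|F_\beta(\bs X)v\ra>0$, a specific time $t$ at which $\la u|e^{-tH}v\ra>0$.

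The mechanism for producing such a $t$ is Duhamel expansion of $e^{-tH}=e^{-t(H_0-R)}$ in powers of $R$. Expanding,
\[
e^{-tH}=\sum_{k\ge 0}\int_{0\le s_1\le\cdots\le s_k\le t} e^{-s_1 H_0} R\, e^{-(s_2-s_1)H_0} R\cdots R\, e^{-(t-s_k)H_0}\,d^k\bs s,
\]
and every summand is $\unrhd 0\wrt\p$ because $e^{-sH_0}\unrhd 0$ (Lemma \ref{PPH_0}) and $R=R_0+R_1\unrhd 0$ (Lemma \ref{RProp}). Hence, for $u,v\in\p$ and any $t$,
\[
\la u|e^{-tH}v\ra\ \ge\ \la u|\,\text{(any single Duhamel summand)}\,v\ra\ \ge\ 0.
\]
I would single out the $k=3|\vLa|$ term and, within it, the piece where all three blocks of $|\vLa|$ consecutive $R$'s are taken to be the $R_0$ part, evaluated near the times $s$ clustering so that the three $e^{-sH_0}$ factors separating the blocks become $e^{-\beta H_0}$, $e^{-\beta H_0}$, and the outer two factors. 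Concretely, I would first establish the norm-convergent expansion, then note that the integrand is continuous in the $s_j$, so that integrating over a small neighborhood of the configuration $s_1=\cdots=s_{|\vLa|}=0$, $s_{|\vLa|+1}=\cdots=s_{2|\vLa|}=\beta$-ish, etc., and using positivity to discard all other contributions, yields
\[
\la u|e^{-tH}v\ra\ \ge\ c(t,\beta)\,\big\la u\big|\,R_0^{|\vLa|}e^{-\beta H_0}R_0^{|\vLa|}e^{-\beta H_0}R_0^{|\vLa|}\,v\big\ra
\]
for suitable $t>2\beta$ and some $c(t,\beta)>0$. Then Lemma \ref{Projection4} gives
$R_0^{|\vLa|}e^{-\beta H_0}R_0^{|\vLa|}e^{-\beta H_0}R_0^{|\vLa|}\unrhd (U^f/2)^{3|\vLa|}F_\beta(\bs X)\wrt\p$, so $\la u|e^{-tH}v\ra\ge c(t,\beta)(U^f/2)^{3|\vLa|}\la u|F_\beta(\bs X)v\ra>0$, which is exactly $\{e^{-tH}\}_{t\ge 0}\succeq F_\beta(\bs X)\wrt\p$.

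The main obstacle I expect is making the ``discard all other Duhamel contributions by positivity, keep one nonnegative piece'' step fully rigorous: one must check that the terms being dropped are each individually $\unrhd 0\wrt\p$ (which follows since $R_0,R_1\unrhd 0$ and $e^{-sH_0}\unrhd 0$, so every product is, using Lemma \ref{PPBasic}(iv) and Lemma \ref{InqSeki}), and that the retained lower bound is genuinely $\ge c\cdot\la u|R_0^{|\vLa|}e^{-\beta H_0}R_0^{|\vLa|}e^{-\beta H_0}R_0^{|\vLa|}v\ra$ with $c>0$ — this requires extracting a clean sub-simplex of the time-ordered integration domain on which the integrand dominates the target expression times a positive constant, and controlling the dependence of $t$ on $\beta$. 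A secondary subtlety is that $R$ is bounded (it is a finite sum of products of bounded number operators), so the Duhamel series converges in operator norm and term-by-term manipulations are justified; I would state this explicitly. Everything else — positivity preservation of each factor, the final application of Lemma \ref{Projection4} — is routine given the lemmas already proved.
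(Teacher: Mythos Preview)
Your proposal is correct and follows essentially the same route as the paper: Duhamel-expand $e^{-tH}$ in powers of $R$, keep the order-$3|\vLa|$ term, use $R\unrhd R_0\unrhd 0$ together with Lemma~\ref{Projection4} to bound the integrand below by a positive multiple of $F_\beta(\bs X)$, and invoke positivity of all remaining terms. The paper's execution is slightly cleaner than what you anticipate as the ``main obstacle'': it takes $t=2\beta$ exactly, evaluates the integrand $K_{n,t}(\bs s)$ at the single point $\bs s_0=(0,\dots,0,\beta,\dots,\beta,2\beta,\dots,2\beta)$ to get $\langle u|K_{n,t}(\bs s_0)v\rangle>0$, and then uses strong continuity of $\bs s\mapsto K_{n,t}(\bs s)$ to conclude $\langle u|\mathscr{D}_n(t)v\rangle>0$ directly, avoiding the need to carve out a sub-simplex or track an explicit constant $c(t,\beta)$.
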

\begin{proof}
Since $H=H_0-R$, one gets from Duhamel's formula that 
\be
e^{-tH}=\sum_{n=0}^{\infty} \mathscr{D}_n(t) \label{Duha1},
\ee
where $\mathscr{D}_0(t)=e^{-tH_0}$ and 
\begin{align}
\mathscr{D}_n(t) &= \int_{\Delta_n(t)} R(s_1) \cdots R(s_n) e^{-t H_0} d^n {\bs s},\quad R(s)=e^{-s H_0 }R e^{sH_0},\\
\Delta_n(t) &=\{{\bs s}=(s_1, \dots, s_n)\in \BbbR^n : 0\le s_1\le \cdots \le s_n \le t\}.
\end{align}
Note that the right-hand side of 
\eqref{Duha1} converges in the operator norm topology.
Since $R\unrhd 0 $ and $ e^{-sH_0} \unrhd 0$ w.r.t. $\p$ from 
Lemmas \ref{RProp} and \ref{PPH_0} respectively, we have
\be
K_{n, t}(\bs s)= R(s_1) \cdots R(s_n) e^{-t H_0} \unrhd 0\quad \mbox{w.r.t. $\p$}\quad ({\bs s} \in \Delta_n(t)).
\ee
Hence, by using Lemma \ref{Wcl}, we see that $\mathscr{D}_n(t) \unrhd 0$ w.r.t. $\p$. Therefore, we have
\be
e^{-tH} \unrhd \mathscr{D}_n(t)\quad \mbox{w.r.t. $\p$}\quad (n\in \{0\}\cup \BbbN).\label{LowSemi}
\ee

Now, assume that $\la \vphi|F_{\beta}(\bs X) \psi\ra>0$ holds for some $\vphi, \psi\in \p \setminus \{0\}$.
Choose $n=3|\vLa|$ and $ t=2\beta$. Define 
$\bs s_0\in \Delta_n(t)$ as 
\be
\bs s_0=\Big(\underbrace{0, \dots, 0}_{|\vLa|},
\underbrace{\beta, \dots, \beta}_{|\vLa|}, \underbrace{2\beta, \dots, 2\beta}_{|\vLa|}
\Big).
\ee
Then, we readily confirm that 
\be
K_{n, t}(\bs s_0)=R^{|\vLa|} e^{-\beta H_0} R^{|\vLa|} e^{-\beta H_0} R^{|\vLa|}. \label{KExp}
\ee
Since $R_i\unrhd 0$ w.r.t. $\p$ ($i=0, 1$) from  Lemma \ref{RProp}, we get $R\unrhd R_0$ w.r.t. $\p$. Hence, by using Lemmas \ref{InqSeki}, \ref{PPH_0} and \ref{Projection4}, one finds that 
\be
K_{n, t}(\bs s_0)\unrhd R_0^{|\vLa|} e^{-\beta H_0} R_0^{|\vLa|} e^{-\beta H_0} R_0^{|\vLa|} \unrhd \left( \frac{U^f}{2} \right)^{3|\vLa |} F_{\beta}({\bs X}) \wrt\p,
\ee
which implies that 
\be
\la \vphi|K_{n, t}(\bs s_0)\psi\ra \ge \left( \frac{U^f}{2} \right)^{3|\vLa |} \la \vphi|F_{\beta}({\bs X})\psi\ra>0.
\ee
Because $K_{n, t}(\bs s)$ is strongly continuous in  $\bs s$, we obtain $\la \vphi|\mathscr{D}_n(t) \psi\ra>0$. Combining this with \eqref{LowSemi} we conclude that $\la \vphi|e^{-tH} \psi\ra>0$. This completes the proof of Lemma \ref{SemiOpLow}.
\end{proof}

\subsubsection*{Completion of the proof of Proposition \ref{BP2}}
From Lemma \ref{MonoE}, $e^{-\beta H}\unrhd e^{-\beta H_0}\wrt\p \ (\beta \ge 0)$ holds.
Furthermore, according to  Lemma \ref{SemiOpLow},  it follows that $\{e^{-tH}\}_{t\geq0}\succeq F_\beta({\bs X}) \wrt\p$ for each $\bs X\in \mathscr{C}$ and $\beta \ge 0$.
Hence, by using Lemma \ref{succ prod} and Remark \ref{Yowai}, we can conclude that $\{e^{-tH}\}_{t\geq0}\succeq F_{\beta, \beta'}(\bs p) \wrt\p$ holds. \qed

\section{Proof of Proposition \ref{BP3}}\label{Sec8}

\subsection{Outline of the proof}

To prove Proposition \ref{BP3}, we prepare three propositions.

To state the first proposition, we introduce some symbols: for each 
$\bs X=(X_f, X_d)\in \mathscr{C}$, define
\be
\mathbb{E}_{\bs X}=\prod_{\sigma=\up, \down} P^d_{X_d\setminus X_f, \sigma} \overline{P}^d_{X_f\setminus X_d, \sigma} P_{X_f, \sigma}^f \overline{P}^f_{\vLa\setminus X_f, \sigma}, 
\ee
where the operators $P^f_{X, \sigma}, P^d_{X, \sigma}$, etc. are defined in \eqref{DefPPPP2}. 
$\mathbb{E}_{\bs X}$ is an orthogonal projection and plays an essential role in the following discussion.

\begin{Prop}\label{Projection3} For every $\bs X\in \mathscr{C}$ and $\beta>0$, it holds that 
\begin{align}
F_{\beta}(\bs X)
&=\beta^{4|\bs X|_{\triangle}} V^{4|\bs X|_{\triangle} }\Ex_{\boldsymbol{X}}
+o(\beta^{4|\bs X|_{\triangle}}),
\end{align}
where  $o(\beta^n)$ is some bounded operator satisfying 
\be
\lim_{\beta\to +0}\frac{o(\beta^n)}{\beta^n} \vphi=0\quad (\vphi\in \h).
\ee
\end{Prop}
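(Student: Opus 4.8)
The plan is to expand $F_\beta(\bs X) = P_{\bs X} e^{-\beta H_0} Q_{\bs X} e^{-\beta H_0} P_{\bs X}$ in powers of $\beta$ via the Duhamel (Dyson) series for $e^{-\beta H_0}$, and extract the leading term. First I would write $e^{-\beta H_0} = \sum_{n\ge 0} (-\beta)^n \int_{\Delta_n(1)} H_0(s_1)\cdots H_0(s_n)\, d^n\bs s$ after factoring out a suitable free piece, but it is cleaner to split $H_0 = K + \omega_0 \Np + \tfrac{U^d_{\rm eff}}{2}\sum_x n_x^d$ where $K = T_\up(\VP)+T_\down(-\VP)+V_\up(\VP)+V_\down(-\VP)$, and note that the phonon part $\omega_0\Np$ commutes with everything relevant and the on-site $d$-number term is diagonal in the configuration basis $\ket{\bs X}$. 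The key observation is that $P_{\bs X}$ projects onto configurations whose $f$-part equals $X_f$, while $Q_{\bs X}$ projects onto configurations whose $f$-part equals $X_d$ (note the swap: $Q_{\bs X}$ involves $P^f_{X_d,\sigma}$). So to get a nonzero contribution, $e^{-\beta H_0}$ must carry an electronic configuration with $f$-sector $X_f$ to one with $f$-sector $X_d$, and then back.

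The next step is to identify which terms of $H_0$ can change the $f$-sector: only $V_\sigma(\pm\VP)$, which moves one electron between a $d$-orbital and an $f$-orbital at the same site, changes $n^f$. The hopping terms $T_\sigma(\pm\VP)$ preserve every $n_x^f$. Hence, starting from a configuration with $f$-part $X_f$, reaching one with $f$-part $X_d$ requires at least $|X_d \triangle X_f| = |\bs X|_\triangle$ applications of $V$-type operators in each $e^{-\beta H_0}$ factor — more precisely, $2|\bs X|_\triangle$ of them, counting both the conduction-to-localized and localized-to-conduction moves needed to realize each symmetric-difference site while keeping the total $d$-occupations consistent with $\mathscr C$. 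Since each $V$-vertex in the Dyson series comes with a factor $\beta$, the lowest order at which $Q_{\bs X} e^{-\beta H_0} P_{\bs X}$ (and symmetrically the outer factor) can be nonzero is $\beta^{2|\bs X|_\triangle}$, giving $\beta^{4|\bs X|_\triangle}$ overall. I would then compute the coefficient: at this lowest order the time-ordered integral $\int_{\Delta_n(1)} d^n\bs s = 1/n!$ combinatorially, but more usefully one checks that the product of the $2|\bs X|_\triangle$ $V$-vertices acting on $P_{\bs X}$, with all other vertices set to identity and all $e^{-sH_0}$ propagators contracted to $\one$ as $\beta\to 0$, reconstructs exactly $V^{2|\bs X|_\triangle}$ times the projection $\mathbb E_{\bs X}$ onto the intermediate configuration (the one with $f$-sector $X_d$ and $d$-sector encoding $X_d\setminus X_f$ occupied, $X_f\setminus X_d$ empty), and sandwiching by the second propagator and $P_{\bs X}$ brings it back, yielding $V^{4|\bs X|_\triangle}\mathbb E_{\bs X}$ as the coefficient. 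One must verify the hole-particle/$\vartheta$ bookkeeping so that the two factors of $\Psi_\vartheta$-transported operators combine to the stated $\mathbb E_{\bs X}$ (which is manifestly of the form $\hat P \otimes \vartheta \hat P \vartheta$ under \eqref{IdnE}) and that no sign or phase $e^{\pm\im\vP_x}$ survives — at $\beta\to 0$ the phases $\vP_{x,y},\vP_x$ appear only inside vanishing propagator insertions or evaluate against the $\Np$-vacuum structure, but one should confirm the $V$-vertices themselves carry $e^{\pm\im\vP_x}$ which does contribute; here the point is that $e^{\im\vP_x}$ is a unitary multiplication operator on $L^2(\BbbR^{|\vLa|})$ and the claim is about the operator $F_\beta(\bs X)$ as a whole, so the phases are retained in the $o(\beta^{\cdot})$-controlled remainder but in fact, since each $V_\sigma$ vertex carries exactly one phase and they pair up $e^{\im\vP_x}e^{-\im\vP_x}=\one$ along a closed loop $X_f\to X_d\to X_f$, the leading coefficient is phase-free — I would make this cancellation explicit.

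The remainder estimate is then routine: every term in the Dyson expansion with either (a) more than $2|\bs X|_\triangle$ $V$-vertices in one factor, or (b) any $T$-vertex or $n^d$-vertex inserted, or (c) any propagator expanded beyond $\one$, carries an extra power of $\beta$; since $H_0$ is bounded below and the series converges in operator norm on each fixed total-particle sector (which is finite-dimensional for the electrons, and the phonon part is handled by $e^{-\beta\omega_0\Np}$ being a contraction), one gets a uniform bound $\|F_\beta(\bs X) - \beta^{4|\bs X|_\triangle}V^{4|\bs X|_\triangle}\mathbb E_{\bs X}\| \le C\beta^{4|\bs X|_\triangle + 1}$ on the relevant subspace, which is stronger than the stated $o(\beta^{4|\bs X|_\triangle})$. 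The main obstacle I anticipate is the precise combinatorics of step two: showing that \emph{exactly} $2|\bs X|_\triangle$ $V$-vertices are needed (not more, not fewer) and that they assemble into $\mathbb E_{\bs X}$ rather than a more complicated operator — this requires carefully tracking, in the fermionic Fock space $\mathfrak E \otimes \mathfrak E$, how each $V$-vertex $f^*_{x,\sigma}d_{x,\sigma}$ or $d^*_{x,\sigma}f_{x,\sigma}$ acts on $\ket{\bs X}$-type basis vectors, respecting the anticommutation signs and the $(-1)^{\hat N}$ factors from the identification \eqref{AnniIdn}. I would organize this by fixing the intermediate configuration $\bs X' = (X_d, X_f)$ with $d$ and $f$ roles appropriately swapped, noting $|\bs X'|_\triangle = |\bs X|_\triangle$ by symmetry of $\triangle$, and checking that $Q_{\bs X}(\text{leading }V\text{-term})P_{\bs X} = V^{2|\bs X|_\triangle}$ times the projection onto this unique $\bs X'$, with the outer sandwich then producing $\mathbb E_{\bs X}$; the lengthy verification of the sign bookkeeping is presumably why the authors defer part of it (Proposition \ref{Projection3}'s proof, or the related Proposition \ref{Connect}) to Appendix \ref{PfProjection3}.
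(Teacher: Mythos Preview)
Your approach is essentially the same as the paper's: Duhamel-expand $e^{-\beta H_0}$ around $e^{-\beta\omega_0\Np}$, observe that only the hybridization vertices $V_\sigma(\pm\VP)$ change the $f$-occupation, count that $2|\bs X|_\triangle$ such vertices (namely $|\bs X|_\triangle$ for each spin) are needed in each $e^{-\beta H_0}$ factor, and verify the coefficient. The paper organizes this via an intermediate operator $K_{\bs X}$ with $Q_{\bs X}e^{-\beta H_0}P_{\bs X}=\beta^{2N}V^{2N}K_{\bs X}+o(\beta^{2N})$ and then $K_{\bs X}^*K_{\bs X}=\Ex_{\bs X}$; the combinatorial factor indeed collapses to $1$ via $\tfrac{1}{(2N)!}\binom{2N}{N}(N!)^2=1$. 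Two points in your write-up need correction.

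\emph{Phase cancellation.} The phases do not cancel ``along the closed loop $X_f\to X_d\to X_f$''; they cancel already within a single leg $Q_{\bs X}e^{-\beta H_0}P_{\bs X}$, by the $\up/\down$ pairing. After the hole-particle transformation one has $V_\up(\VP)+V_\down(-\VP)$ in $H_0$, so at each site $x\in X_d\triangle X_f$ the $\up$-vertex contributes $e^{\pm\im\vP_x}$ while the $\down$-vertex (which must make the same $f$-transition, since $P_{\bs X}$ and $Q_{\bs X}$ are spin-symmetric) contributes $e^{\mp\im\vP_x}$. Their product is $\mathbbm{1}$, so the leading term of $Q_{\bs X}e^{-\beta H_0}P_{\bs X}$ is already phase-free and purely electronic.

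\emph{Remainder estimate.} Your claim of a uniform operator-norm bound $\|F_\beta(\bs X)-\beta^{4N}V^{4N}\Ex_{\bs X}\|\le C\beta^{4N+1}$ is too strong. The phonon number operator $\Np$ is unbounded, so $e^{-s\beta\omega_0\Np}\to\mathbbm{1}$ only strongly, not in norm; consequently $G_n(\beta)=\tfrac{\beta^n}{n!}H_1^n+o(\beta^n)$ holds only in the strong topology (vector by vector). This is precisely why the proposition is stated with $o(\beta^{4N})$ acting on each $\vphi\in\h$, not with $O(\beta^{4N+1})$ in norm. Your argument still goes through, but you must downgrade the remainder accordingly.

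A minor structural remark: the object you call ``the projection $\Ex_{\bs X}$ onto the intermediate configuration'' is not what appears after one propagator. A single factor $Q_{\bs X}e^{-\beta H_0}P_{\bs X}$ yields a non-projection transfer operator $K_{\bs X}$ (moving electrons between $d$- and $f$-orbitals on $X_d\triangle X_f$); the projection $\Ex_{\bs X}$, which lives on the $P_{\bs X}$-side with $f$-sector $X_f$, arises only as $K_{\bs X}^*K_{\bs X}$ after the full sandwich.
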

The proof of Proposition \ref{Projection3} is given in Appendix \ref{PfProjection3}.

For each path $\bs p=\bs X_1\cdots\bs X_n$, define
\be
\Ex_{\beta}(\bs p)=
\mathbb{E}_{\bs X_1}e^{-\beta H_0} \mathbb{E}_{\bs X_2} e^{-\beta H_0} \cdots \mathbb{E}_{\bs X_{n-1}} e^{-\beta H_0} \mathbb{E}_{\bs X_n}.
\ee

\begin{Prop}\label{connectivity4}
For any given $\boldsymbol{X}\in \mathscr{C}$, consider a path $\bs p=\bs X_0\bs X_1\cdots\bs X_{n+1}\ (\bs X_0=\bs F,\ \bs X_{n+1}=\bs X)$ connecting $\bs X$ and $\bs F$.
Then, one obtains 
\begin{align}
\Ex_{\beta}(\bs p)
=\beta^{2n+2}J_{\beta}(\bs p)D({\bs p})+o(\beta^{2n+2}),  \label{connectivity4-1}
\end{align}
where $D({\bs p})$ is a bounded operator on $\h_{\rm e}$ satisfying 
\begin{align}
D({\bs p}) \ket{{\bs X}, {\bs X}}
=c \ket{{\bs F}, {\bs F}}\quad(c>0),
\end{align}
 and $J_{\beta}(\bs p)$ is a bounded operator on $L^2(\BbbR^{|\vLa|})$satisfying
\be
 \lim_{\beta\to +0} J_{\beta}(\bs p)=\mathbbm{1} \quad \mbox{in the strong operator topology.}\label{ConTo1}
\ee
\end{Prop}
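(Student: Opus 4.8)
The plan is to expand each of the $n+1$ heat kernels $e^{-\beta H_0}$ appearing in $\Ex_\beta(\bs p)$ by Duhamel's formula and to track, slot by slot, the lowest power of $\beta$ that survives the flanking projections $\Ex_{\bs X_i}, \Ex_{\bs X_{i+1}}$; this is a computation in the same spirit as the proof of Proposition \ref{Projection3} (Appendix \ref{PfProjection3}). Write $H_0=\omega_0\Np+H_1^{\mathrm{od}}+H_1^{\mathrm{diag}}$, where $H_1=H_1^{\mathrm{od}}+H_1^{\mathrm{diag}}$ is the operator of \eqref{DefH_1}, $H_1^{\mathrm{od}}=T_\up(\VP)+T_\down(-\VP)+V_\up(\VP)+V_\down(-\VP)$ is the part that changes the electron configuration, and $H_1^{\mathrm{diag}}=\tfrac{U^d_{\mathrm{eff}}}{2}\sum_x(n^d_{x,\up}+n^d_{x,\down})$ is, together with $\Np$, diagonal in the configuration basis and commutes with every $\Ex_{\bs X}$. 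Iterating Duhamel's formula around $\omega_0\Np$ writes $e^{-\beta H_0}$ as a norm-convergent series of iterated time integrals over $[0,\beta]$ of products $e^{-s_1\omega_0\Np}(-H_1)e^{-(s_2-s_1)\omega_0\Np}(-H_1)\cdots$, so $\Ex_\beta(\bs p)$ becomes a series indexed by the number of $H_1$-insertions in each of the $n+1$ slots.

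The decisive point is that a term in which fewer than $2(n+1)$ of the insertions come from $H_1^{\mathrm{od}}$ vanishes. This rests on the fact that $\Ex_{\bs X_i}$ pins the entire $f$-electron configuration (``$X_{i,f}$ doubly occupied'') together with the $d$-occupations on $X_{i,d}\triangle X_{i,f}$, while a single off-diagonal insertion moves exactly one electron; combining this with the adjacency of consecutive $\bs X_i$ and with $\bs X_0=\bs F$, for which $\Ex_{\bs F}$ is the rank-one projection onto $\ket{\bs F,\bs F}$ (since $F_d\cap F_f=\varnothing$), a combinatorial analysis of the graph $\mathscr G$ and the operators $\Ex_{\bs X}$ shows that crossing each edge costs two off-diagonal insertions — two hybridizations for a $(d,f)$-edge, two hoppings for a $d$-edge. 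Hence $\Ex_\beta(\bs p)=O(\beta^{2n+2})$ and the leading term comes only from terms with exactly two off-diagonal insertions in every slot. In such a minimal term the two insertions in the $i$-th slot are forced: for a $(d,f)$-edge at $x$ they are the spin-$\up$ and spin-$\down$ hybridization operators $Vf^*_{x,\sigma}d_{x,\sigma}e^{\mp\im\vP_x}$ in the direction dictated by the edge, carrying coefficient $V^2$ and phonon factor $e^{-\im\vP_x}e^{+\im\vP_x}$; for a $d$-edge at $\{x,y\}$ they are $(-t_{xy})d^*_{x,\up}d_{y,\up}e^{\im\vP_{x,y}}$ and $(-t_{xy})d^*_{x,\down}d_{y,\down}e^{-\im\vP_{x,y}}$, carrying $t_{xy}^2$ and phonon factor $e^{\im\vP_{x,y}}e^{-\im\vP_{x,y}}$. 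In every case the two phonon multiplication operators are mutually inverse and are separated only by factors $e^{-s\omega_0\Np}$ with $0\le s\le\beta$, and the time integral over each slot contributes $\tfrac{\beta^2}{2}+o(\beta^2)$.

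Collecting all slots yields $\Ex_\beta(\bs p)=\beta^{2n+2}J_\beta(\bs p)D(\bs p)+o(\beta^{2n+2})$, where $D(\bs p)\in\mathscr B(\h_{\rm e})$ is the ordered electronic product $\Ex_{\bs X_0}\mathcal O_1\Ex_{\bs X_1}\cdots\mathcal O_{n+1}\Ex_{\bs X_{n+1}}$ of the spin-pair operators $\mathcal O_i$ (a positive constant, $t_{xy}^2$ or $V^2$, times $d^*_{x,\up}d_{y,\up}d^*_{x,\down}d_{y,\down}$ or $f^*_{x,\up}d_{x,\up}f^*_{x,\down}d_{x,\down}$), and $J_\beta(\bs p)\in\mathscr B(L^2(\BbbR^{|\vLa|}))$ is the product of all phonon factors and $e^{-s\omega_0\Np}$ operators. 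Since $\|e^{\pm\im\vP}\|=1$ and $\|e^{-s\omega_0\Np}\|\le 1$, $J_\beta(\bs p)$ is uniformly bounded; since $e^{-s\omega_0\Np}\to\mathbbm{1}$ strongly as $s\to0$ while the phonon factors pair off, $J_\beta(\bs p)\to\mathbbm{1}$ strongly as $\beta\to0$, which is \eqref{ConTo1}. For $D(\bs p)\ket{\bs X,\bs X}=c\ket{\bs F,\bs F}$ with $c>0$: under the identification $\h_{\rm e}=\mathscr L^2(\mathfrak E)$ and \eqref{AnniIdn}, \eqref{LRIden}, each $\mathcal O_i$ is a positive multiple of $\mathcal L(A_i)\mathcal R(A_i^*)$ with $A_i=\hd^*_x\hd_y$ (a $d$-edge) or $A_i=\hf^*_x\hd_x$ (a $(d,f)$-edge) — the spin-$\up$ factor being a left multiplication and the spin-$\down$ factor, by \eqref{LRIden}, a right multiplication by the $\vartheta$-conjugate, which equals $A_i^*$ because these operators are real in the configuration basis — and likewise $\Ex_{\bs X_i}=\hat{\Ex}_{\bs X_i}\otimes\hat{\Ex}_{\bs X_i}=\mathcal L(\hat{\Ex}_{\bs X_i})\mathcal R(\hat{\Ex}_{\bs X_i})$. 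By Proposition \ref{PPI2} all $\mathcal O_i$ and $\Ex_{\bs X_i}$ are $\unrhd0$ w.r.t. $\mathscr L^2_+(\mathfrak E)$, so $D(\bs p)\unrhd0$ and $D(\bs p)\ket{\bs X,\bs X}=D(\bs p)(\ket{\bs X}\bra{\bs X})\ge0$; since $\Ex_{\bs X_0}=\Ex_{\bs F}$ is the rank-one projection onto $\ket{\bs F,\bs F}$, this image lies in $\BbbC\ket{\bs F,\bs F}$ and hence equals $c\ket{\bs F,\bs F}$ with $c=\la\bs F,\bs F|D(\bs p)|\bs X,\bs X\ra\ge0$; finally $c>0$ because $\bs p$ is a path, so each $\mathcal O_i$ has a nonzero matrix element linking a configuration compatible with $\Ex_{\bs X_{i-1}}$ to one compatible with $\Ex_{\bs X_i}$, and these strictly positive contributions compose nontrivially (equivalently, $D(\bs p)^*\ket{\bs F,\bs F}$ is strictly positive w.r.t. $\mathscr L^2_+(\mathfrak E)$ on the finite-dimensional span of the relevant configurations, and Lemma \ref{PPSP} applies).

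It remains to bound the error: every non-minimal Duhamel term carries at least $\beta^{2n+3}$ by the counting above, with operator-norm-bounded coefficient (each $\Ex_{\bs X_i}$ is a projection, $H_1^{\mathrm{diag}}$ is bounded on $\h_{\rm e}$, and the phonon operators have norm $\le1$, the unbounded $\Np$ occurring only inside $e^{-s\omega_0\Np}$), so the tail sums to an $o(\beta^{2n+2})$ operator in the sense used in Proposition \ref{Projection3}. The main obstacle is the order-counting step of the second paragraph: proving that no term with fewer than $2(n+1)$ off-diagonal insertions survives the projections $\Ex_{\bs X_0},\dots,\Ex_{\bs X_{n+1}}$. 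This is a purely combinatorial statement about electron configurations, the graph $\mathscr G$, and the structure of the $\Ex_{\bs X}$; everything else is bookkeeping together with the operator-inequality input of Proposition \ref{PPI2}.
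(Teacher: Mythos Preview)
Your outline matches the paper's route: Duhamel-expand each $e^{-\beta H_0}$, show the zeroth and first orders vanish between consecutive projections, pin down the second-order term, then compose along the path. The obstacle you flag is precisely what the paper packages into Proposition~\ref{PropNei} (identities \eqref{E1}--\eqref{E4}) and Lemma~\ref{EBase}. In particular, \eqref{E1} gives $E_{\bs X}E_{\bs Y}=0$ and kills the zeroth order; the first order vanishes because each summand of $H_1$ touches only one tensor factor, leaving a factor $E_{\bs X}E_{\bs Y}=0$ on the other; and \eqref{E4} is what forces the second-order contribution to be the single spin-pair operator $B^{\vepsilon}_{x,y}$ at the edge. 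This last point is not entirely automatic, since $E_{\bs X}$ does \emph{not} constrain the $d$-occupation on $X_d\cap X_f$, and the paper handles it by the case analysis in the proof of Proposition~\ref{PropNei}.

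Two small corrections. First, ``$\tfrac{\beta^2}{2}+o(\beta^2)$'' per slot should be $\beta^2+o(\beta^2)$: both time-orderings of the $\up$- and $\down$-insertions contribute, which is why the paper's $J_{x,y}(\beta)\to\mathbbm{1}$ rather than $\tfrac12\mathbbm{1}$. Second, the appeal to Lemma~\ref{PPSP} for $c>0$ does not work as stated, since $\ket{\bs X,\bs X}$ is not strictly positive in $\mathscr L^2_+(\mathfrak E)$. Your direct argument is the right one and coincides with the paper's: under $\Psi_\vartheta$ every factor of $D(\bs p)$ has the form $\hat A\otimes\hat A$ with $\hat A$ real in the configuration basis, so $D(\bs p)=\hat A\otimes\hat A$ for a single operator $\hat A=E_{\bs F}\hat A_1 E_{\bs X_1}\cdots\hat A_{n+1}E_{\bs X}$ on $\mathfrak E$; hence $D(\bs p)\ket{\bs X,\bs X}=(\hat A\ket{\bs X})\otimes(\hat A\ket{\bs X})$ and $c=\langle\bs F|\hat A|\bs X\rangle^2>0$ by iterated use of \eqref{E2}.
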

We prove Proposition \ref{connectivity4} in Subsection \ref{PfCo4}.

\begin{Prop}\label{AEE}
For each 
$\varphi\geq0\wrt \p $ with $\vphi\neq 0$, there exists a $\boldsymbol{Z}\in \mathscr{C}$ such that $\varphi_{\bs Z, \bs Z}\neq 0$, 
where $\vphi_{\bs Z, \bs Z}$ is defined by \eqref{vphiDec}.
Let $\boldsymbol{X}$ be one of such $\boldsymbol{Z}$ with the largest $|{\bs Z}|_{\triangle}$.
Then, we have 
\begin{align}
\Ex_{\boldsymbol{X}}\varphi
=|\boldsymbol{X},\boldsymbol{X} \rangle\otimes  \vphi_{\bs X, \bs X}. \label{AEE-1}
\end{align}

\end{Prop}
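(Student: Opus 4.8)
The plan is to read off the structure of $\Ex_{\bs X}$ in the occupation–number basis and then use positivity of $\varphi$ together with the maximality of $|\bs X|_\triangle$ to annihilate every contribution except the diagonal one at $\bs X$. First I would unwind $\Ex_{\bs X}$: using the identifications \eqref{FIdn}, \eqref{IdnE} and \eqref{AnniIdn}, every factor in $\Ex_{\bs X}$ becomes a product of one–sector number operators $\hnd_x,\hnf_x$ on $\mathfrak{E}$, the up–spin ones acting on the first $\mathfrak{E}$–factor and the down–spin ones on the second. Since these number operators are diagonal in the CONS $\{\ket{\bs Y}:\bs Y\in\mathscr{C}\}$, one gets $\Ex_{\bs X}=\pi_{\bs X}\otimes\pi_{\bs X}$ as an operator on $\h_{\rm e}=\mathfrak{E}\otimes\mathfrak{E}$ (extended to $\h$ with constant fiber via \eqref{BOI}), where $\pi_{\bs X}$ is the orthogonal projection onto $\mathrm{span}\{\ket{\bs Y}:\bs Y\in\mathcal{A}_{\bs X}\}$ and
\[
\mathcal{A}_{\bs X}=\bigl\{\bs Y=(Y_d,Y_f)\in\mathscr{C}\ :\ Y_f=X_f,\ X_d\setminus X_f\subseteq Y_d,\ Y_d\cap(X_f\setminus X_d)=\varnothing\bigr\}.
\]
Because $\pi_{\bs X}$ has real matrix entries in this basis, $\vartheta\pi_{\bs X}\vartheta=\pi_{\bs X}$, so under $\Psi_\vartheta$ the operator $\Ex_{\bs X}$ acts as $\xi\mapsto\pi_{\bs X}\xi\pi_{\bs X}$, i.e. $\Ex_{\bs X}\ket{\bs Y,\bs Z}=\mathbbm{1}[\bs Y\in\mathcal{A}_{\bs X}]\,\mathbbm{1}[\bs Z\in\mathcal{A}_{\bs X}]\,\ket{\bs Y,\bs Z}$. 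Expanding $\varphi$ as in \eqref{vphiDec} and using that $\Ex_{\bs X}$ is decomposable, this gives $\Ex_{\bs X}\varphi=\sum_{\bs Y,\bs Z\in\mathcal{A}_{\bs X}}\ket{\bs Y,\bs Z}\otimes\varphi_{\bs Y,\bs Z}$.

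Next I would cut down the surviving indices using positivity. Since $\varphi\geq0$ w.r.t. $\p=\int^{\oplus}\p_{\rm e}\,d\bs q$, for a.e. $\bs q$ the fiber $\varphi(\bs q)=\sum_{\bs Y,\bs Z}\varphi_{\bs Y,\bs Z}(\bs q)\ket{\bs Y}\bra{\bs Z}$ is a positive semidefinite operator on $\mathfrak{E}$ with $\varphi_{\bs Y,\bs Z}(\bs q)=\la\bs Y|\varphi(\bs q)|\bs Z\ra$, and the Cauchy–Schwarz inequality for positive semidefinite operators yields $|\varphi_{\bs Y,\bs Z}(\bs q)|^2\leq\varphi_{\bs Y,\bs Y}(\bs q)\,\varphi_{\bs Z,\bs Z}(\bs q)$ a.e. Hence $\varphi_{\bs Y,\bs Z}\neq0$ forces $\varphi_{\bs Y,\bs Y}\neq0$ and $\varphi_{\bs Z,\bs Z}\neq0$, and the extremal choice of $\bs X$ then gives $|\bs Y|_\triangle\leq|\bs X|_\triangle$ and $|\bs Z|_\triangle\leq|\bs X|_\triangle$. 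The combinatorial heart is the converse inequality: every $\bs Y\in\mathcal{A}_{\bs X}$ has $|\bs Y|_\triangle\geq|\bs X|_\triangle$, with equality only for $\bs Y=\bs X$. To see this, write $Y_d=(X_d\setminus X_f)\sqcup S$ with $S$ contained in the ``free zone'' $(X_d\cap X_f)\cup(\vLa\setminus(X_d\cup X_f))$; the cardinality identity $|Y_d|=|X_d|$ forces $|S|=|X_d\cap X_f|$. Splitting $S=S_1\sqcup S_2$ with $S_1\subseteq X_d\cap X_f$ and $S_2\subseteq\vLa\setminus(X_d\cup X_f)$, a direct count gives
\[
|\bs Y|_\triangle=|Y_d\triangle X_f|=|X_d\triangle X_f|+2|S_2|=|\bs X|_\triangle+2|S_2|,
\]
so equality holds iff $S_2=\varnothing$, i.e. $S=X_d\cap X_f$, i.e. $Y_d=X_d$, i.e. $\bs Y=\bs X$.

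Combining the last two displays, every nonvanishing term of $\Ex_{\bs X}\varphi$ must have $\bs Y=\bs Z=\bs X$; conversely $\bs X\in\mathcal{A}_{\bs X}$, so the term $\ket{\bs X,\bs X}\otimes\varphi_{\bs X,\bs X}$ is present, which is exactly \eqref{AEE-1} (the existence of some $\bs Z$ with $\varphi_{\bs Z,\bs Z}\neq0$, needed for the choice of $\bs X$ to make sense, was already recorded before \eqref{vphiDec}). I expect the main obstacle to be the combinatorial identity $|\bs Y|_\triangle=|\bs X|_\triangle+2|S_2|$: it is precisely what makes the maximal-$|\cdot|_\triangle$ choice single out $\bs X$ among all configurations fixed by $\Ex_{\bs X}$, and it is the one point where the specific design of $\Ex_{\bs X}$ — fully constraining the $f$–occupation but constraining the $d$–occupation only on $X_d\triangle X_f$ — is used essentially. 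A secondary, purely bookkeeping, point is the first paragraph: verifying that after the Lang–Firsov and hole–particle transformations the operators building $\Ex_{\bs X}$ remain diagonal number operators each acting on a single $\mathfrak{E}$–factor, and that \eqref{BOI} lets one pass freely between $\varphi$ and its fibers $\varphi(\bs q)$.
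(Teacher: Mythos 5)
Your argument is correct and runs along the same lines as the paper's, but it is organized more quantitatively at the combinatorial step and more explicitly at the positivity step, so a brief comparison is worthwhile.

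The paper proves the statement via an auxiliary Lemma ("phi max") which says: if $E_{\bs X}|\bs X'\ra=|\bs X'\ra$ and $\vphi_{\bs X',\bs X'}\neq 0$, then $X_d\setminus X_f=X'_d\setminus X'_f$ and $X_f\setminus X_d=X'_f\setminus X'_d$; the lemma is proved by contradiction, using maximality to produce an element of one symmetric-difference set not in the other, and observing that the corresponding factor $\hnd_x\ohnf_x$ (or $\ohnd_x\hnf_x$) inside $E_{\bs X}$ kills $|\bs X'\ra$. Then $Y_f=X_f$ plus the equality of the two ``wing'' sets pins $Y_d$ down to $X_d$. Your version replaces this contradiction argument by the direct identity
\[
|\bs Y|_\triangle=|\bs X|_\triangle+2|S_2|,\qquad S_2=Y_d\cap\bigl(\vLa\setminus(X_d\cup X_f)\bigr),
\]
for every $\bs Y$ fixed by $E_{\bs X}$, which makes the role of maximality transparent (it is equivalent to $S_2=\varnothing$, hence $Y_d=X_d$) and gives the sharper qualitative picture that $E_{\bs X}$-fixed configurations can only \emph{increase} $|\cdot|_\triangle$. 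The two routes are logically interchangeable; yours is a little more self-contained because it never needs to produce a witness element by contradiction.

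A second small difference is worth recording. In the paper's proof of Proposition \ref{AEE} the reduction from the double sum $\sum'_{\bs X',\bs X''}$ to the single term $\bs X'=\bs X''=\bs X$ is carried out by invoking Lemma ``phi max'' for all $\bs X',\bs X''$ satisfying \eqref{RistXX}, even though the lemma's hypothesis includes $\vphi_{\bs X',\bs X'}\neq0$; the missing step (only terms with $\vphi_{\bs X',\bs X''}\neq0$ matter, and for those positivity of the fibers forces the diagonal entries to be nonzero) is left implicit. You make this explicit via the Cauchy--Schwarz inequality $|\vphi_{\bs Y,\bs Z}(\bs q)|^2\le \vphi_{\bs Y,\bs Y}(\bs q)\vphi_{\bs Z,\bs Z}(\bs q)$ for the positive semidefinite fiber $\varphi(\bs q)$, which is exactly the right way to justify the reduction. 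So your proposal is not just correct; at this one point it is slightly more careful than the paper.
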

The proof of Proposition \ref{AEE} is given in Subsection \ref{PfAEE}.
Given the above three propositions, we can prove Proposition \ref{BP3}:

\subsubsection*{Proof of Proposition \ref{BP3} given Propositions \ref{Projection3}, \ref{connectivity4} and  \ref{AEE}}

From Proposition  \ref{AEE}, there exists an $\bs X\in \mathscr{C}$ such that $\Ex_{\bs X} \vphi= \ket{\bs X, \bs X} \otimes \vphi_{\bs X, \bs X}$, $\vphi_{\bs X, \bs X}\ge 0$ and  $\vphi_{\bs X, \bs X} \neq 0$.
Then take a path $\bs p=\bs X_0\bs X_1\cdots \bs X_{n+1}$  (${\bs X}_0=\bs F$ and $\bs X_{n+1}=\bs X$) connecting 
$\bs X$ and $\bs F$.
From Proposition \ref{connectivity4}, it follows that 
\begin{align}
\Ex_{\beta}({\bs p})\vphi
=c_0\beta^{2n+2} | {\bs F}, {\bs F}; f_{\beta} \rangle +o(\beta^{2n+2})\vphi  \quad (c_0>0), \label{connectivity2-1}
\end{align}
where $f_{\beta}=J_{\beta}(\bs p) \vphi_{\bs X, \bs X}$.

On the other hand,  by using Proposition  \ref{Projection3}, we have
\begin{align}
\lim_{\beta'\to+0}\beta'^{-4|\bs X|_\triangle}F_{\beta'}(\boldsymbol{X})\vphi
= V^{4|\boldsymbol{X}|_\triangle}\Ex_{\boldsymbol{X}}\vphi.\label{FLim}
\end{align}
Combining 
\eqref{connectivity2-1} with \eqref{FLim}, we find that 
\begin{align}
&\lim_{\beta\rq{}\to +0}\beta'^{-4d} F_{\beta'}({\bs X}_0) e^{-\beta H_0} \cdots e^{-\beta H_0} F_{\beta'}({\bs X}_n)\vphi \no
&=c_0\beta^{2n+2}\Bigg(\prod_{i=0}^n V^{4|{\bs X}_i|_\triangle}\Bigg) | {\bs F}, {\bs F}; f_{\beta} \rangle
+o(\beta^{2n+2}) \vphi.
\end{align}
If we set $f=\vphi_{\bs X, \bs X}$, then from \eqref{ConTo1}, $f_{\beta} \to f\ (\beta\to +0)$ holds.
Therefore, by choosing 
$
c
=c_0\big(\prod_{i=0}^nV^{4|{\bs X}_i|_\triangle}\big)
$, 
we finally arrive at 
\begin{align}
\lim_{\beta\to+0} \lim_{\beta'\to+0}
\beta^{-2n-2}\beta'^{-4d} F_{\beta, \beta\rq{}}(\bs p)\vphi
=c | {\bs F}, {\bs F}; f \rangle.
\end{align}
\qed

\subsection{Proof of Proposition \ref{connectivity4}}\label{PfCo4}
\subsubsection{Characteristics of the edges}

Given $x,  y\in \vLa $, we set
\begin{align}
\hat{B}_{x,  y}^-=
\begin{cases}
t_{x, y}\hat{d}_x^*\hat{d}_y\ \ &(x\neq y)\\
V\hat{f}_x^*\hat{d}_x\ \ &(x=y)
\end{cases},\quad 
\hat{B}_{x,  y}^+=
\begin{cases}
t_{x, y}\hat{d}_y^*\hat{d}_x\ \ &(x\neq y)\\
V\hat{d}_x^*\hat{f}_x\ \ &(x=y)
\end{cases}
\label{DefB}
\end{align}
and 
\be
\hat{B}_{x, y}=\hat{B}_{x, y}^-+\hat{B}_{x, y}^+.
\ee

The following facts are often used in the proof of Proposition \ref{connectivity4}:
\begin{Prop}\label{EquivC}
Let $\bs X, \bs Y\in \mathscr{C}$. The following {\rm (i), (ii)}, and {\rm (iii)} are equivalent to each other:

\begin{itemize}
\item[\rm (i)] $\{{\bs X} ,{\bs Y}\}$ is an edge.
\item[\rm (ii)] 
There exist $x, y\in \vLa$ and non-zero constant $c$ such that $\hat{B}_{x, y}|{\bs X}\ra=c \ket{\bs Y}$ holds.
\item[\rm (iii)] There exist $x, y\in \vLa$ and non-zero constant $c$ such that either $\hat{B}_{x, y}^-|{\bs X}\ra=c \ket{\bs Y}$ or $\hat{B}_{x, y}^+|{\bs X}\ra=c \ket{\bs Y}$ holds.
\end{itemize}

\end{Prop}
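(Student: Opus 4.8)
The plan is to verify the three equivalences by a direct inspection of how the one-body operators $\hat B^{\pm}_{x,y}$ act on the occupation-number basis $\{\,|\bs X\ra : \bs X\in\mathscr{C}\,\}$ of $\mathfrak{E}$. Since all three assertions concern only the \emph{existence} of a nonzero constant $c$, the signs arising from reordering creation and annihilation operators in the convention fixed for $|\bs X\ra$ are irrelevant and will not be tracked; I will write ``$=\pm c\,|\bs Y\ra$'' throughout.

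First I would record the elementary action of these operators. For $x\neq y$, the operator $\hat B^{-}_{x,y}=t_{x,y}\hd_x^*\hd_y$ sends $|\bs X\ra$ to a nonzero vector precisely when $y\in X_d$, $x\notin X_d$ and $t_{x,y}\neq0$, in which case $\hat B^{-}_{x,y}|\bs X\ra=\pm t_{x,y}\,|\bs Y\ra$ with $Y_d=(X_d\setminus\{y\})\cup\{x\}$ and $Y_f=X_f$; the statement for $\hat B^{+}_{x,y}=t_{x,y}\hd_y^*\hd_x$ is the same with $x$ and $y$ exchanged. For $x=y$, the operator $\hat B^{-}_{x,x}=V\hf_x^*\hd_x$ sends $|\bs X\ra$ to a nonzero vector precisely when $x\in X_d$ and $x\notin X_f$ (recall $V\neq0$), and then $\hat B^{-}_{x,x}|\bs X\ra=\pm V\,|\bs Y\ra$ with $Y_d=X_d\setminus\{x\}$ and $Y_f=X_f\cup\{x\}$; symmetrically for $\hat B^{+}_{x,x}=V\hd_x^*\hf_x$. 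The key observation to extract from this list is that, for every pair $(x,y)$, the conditions making $\hat B^{-}_{x,y}|\bs X\ra\neq0$ and those making $\hat B^{+}_{x,y}|\bs X\ra\neq0$ are mutually exclusive, so $\hat B_{x,y}|\bs X\ra$ always coincides with whichever of $\hat B^{\pm}_{x,y}|\bs X\ra$ is nonzero (and is $0$ if both vanish). This yields $(\mathrm{ii})\Leftrightarrow(\mathrm{iii})$ immediately.

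Next I would prove $(\mathrm{i})\Leftrightarrow(\mathrm{iii})$. For $(\mathrm{iii})\Rightarrow(\mathrm{i})$ I simply match the descriptions of $(Y_d,Y_f)$ above against Definition \ref{DefNei}: the case $x\neq y$ gives exactly a $d$-edge ($X_d\triangle Y_d=\{x,y\}$, $X_f=Y_f$, $t_{x,y}\neq0$), and the case $x=y$ gives exactly a $(d,f)$-edge ($X_d\triangle Y_d=\{x\}=X_f\triangle Y_f$). For $(\mathrm{i})\Rightarrow(\mathrm{iii})$ I split into the two kinds of adjacency. If $\{\bs X,\bs Y\}$ is a $d$-edge with $X_d\triangle Y_d=\{x,y\}$ and $X_f=Y_f$, then $|X_d|=|Y_d|$, so one of $x,y$ lies in $X_d\setminus Y_d$ and the other in $Y_d\setminus X_d$; relabelling so that $y\in X_d$ and $x\notin X_d$, we obtain $\hat B^{-}_{x,y}|\bs X\ra=\pm t_{x,y}|\bs Y\ra$ with $t_{x,y}\neq0$. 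If $\{\bs X,\bs Y\}$ is a $(d,f)$-edge with $X_d\triangle Y_d=\{x\}=X_f\triangle Y_f$, a short count using $|X_d|+|X_f|=|Y_d|+|Y_f|=|\vLa|$ rules out $x$ being removed from both of $X_d,X_f$ or added to both, leaving the two cases $x\in X_d,\ x\notin X_f$ (so $\hat B^{-}_{x,x}|\bs X\ra=\pm V|\bs Y\ra$) and $x\in X_f,\ x\notin X_d$ (so $\hat B^{+}_{x,x}|\bs X\ra=\pm V|\bs Y\ra$), with $V\neq0$ in both.

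The only genuinely delicate point is this last counting step in the $(d,f)$-edge case, which certifies that the single site $x$ is deleted from exactly one of $X_d,X_f$ and inserted into the other, so that the resulting configuration is matched exactly by one of $\hat B^{\pm}_{x,x}$; the $d$-edge case is of the same flavour but easier. Beyond that, the argument is a routine check of when a product of one creation and one annihilation operator sends one occupation-number basis vector to a nonzero scalar multiple of another, and I do not foresee any substantial obstacle.
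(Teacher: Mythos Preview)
Your proposal is correct and follows essentially the same route as the paper: both record the action of $\hat B^{\pm}_{x,y}$ on the occupation-number basis (the paper's formulas \eqref{Bket}--\eqref{Bket3}) and then match against Definition \ref{DefNei}. Your write-up is in fact more explicit than the paper's, which simply states that the equivalence ``follows immediately'' from those formulas; in particular, your mutual-exclusivity observation for $(\mathrm{ii})\Leftrightarrow(\mathrm{iii})$ and the counting argument for the $(d,f)$-edge case are details the paper leaves to the reader.
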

\begin{proof}

For each $\bs X=(X_d, X_f)\in \mathscr{C}$, we denote by $\ket{X_d; X_f}$ 
the $\ket{\bs X}$ defined by \eqref{DefVecX}; this notation has the advantage that the electron configurations of the $d$-and $f$-electrons becomes clearer.
For each $\bs W=(W_d, W_f)\in \mathscr{C}$ and $u\in \vLa$, it holds that 
\begin{align}
\hat{B}^-_{u, u} \Big| W_d;  W_f\Big\ra
&=
\pm V \Big| W_d \setminus \{u\}; W_f\cup \{u\}\Big\ra&  \mbox{if $u\notin W_f,\ u\in W_d$},\label{Bket}\\ 
\hat{B}^+_{u, u} \Big| W_d;  W_f\Big\ra
&=
\pm V \Big| W_d \cup\{u\}; W_f\setminus \{u\}\Big\ra&  \mbox{if $u\in W_f,\ u\notin W_d$}, \label{BketB}\\ 
\hat{B}^{\pm}_{u, u} \Big| W_d;  W_f\Big\ra
&=0&  \mbox{otherwise}, 
\end{align}
where $\pm A$ represents either $A$ or $-A$.\footnote{In the following discussion, it is not necessary to explicitly determine whether $+$ or $-$, so we dare to leave it ambiguous in this way.} 
Similarly, 
for each $\bs W=(W_d, W_f)\in \mathscr{C}$ and $u, v\in \vLa\  (u\neq v)$,  the following hold:
 \begin{align}
 \hat{B}_{u, v }^- \Big| W_d; W_f\Big\ra &=
 \pm \tilde{t}_{u, v} \Big|
 (W_d\cup \{u\})\setminus \{v\} ; W_f \Big\ra & \mbox{if $u\notin W_d,\ v\in W_d$,} \label{Bket1}\\
  \hat{B}_{u, v }^+ \Big| W_d; W_f\Big\ra&=
 \pm \tilde{t}_{u, v} \Big|
 (W_d\cup \{v\})\setminus \{u\} ; W_f \Big\ra & \mbox{if $u\in W_d,\ v\notin W_d$,}\label{Bket2}\\
  \hat{B}_{u, v }^{\pm} \Big| W_d; W_f\Big\ra&=
 0 & \mbox{otherwise}\label{Bket3}.
 \end{align}
From these equations and Definition \ref{DefNei}, the claim of the proposition follows immediately.
\end{proof}

\begin{Def} \upshape
For each 
${\bs X}=(X_{d},  X_f)\in \mathscr{C}$, we define
\be
E_{\bs X}=
\Bigg[
\prod_{x\in X_d\setminus X_f} \hat{n}_x^d
\Bigg]
\Bigg[
\prod_{x\in X_f\setminus X_d}\overline{\hat{n}}^d_x
\Bigg]
\Bigg[
\prod_{x\in X_f} \hnf_x
\Bigg]
\Bigg[
\prod_{x\in \vLa\setminus X_f} \overline{\hat{n}}_x^f
\Bigg],
\ee
where
\be
\hnd_x=\hd_x^*\hd_x,\ \hnf_x=\hf_x^*\hf_x,\ \overline{\hat{n}}_x^d=\mathbbm{1}-\hnd_x,\ \overline{\hat{n}}_x^f
=\mathbbm{1}-\hnf_x.
\ee
$E_{\bs X}$ is an orthogonal projection on $\mathfrak{E}$.
Note that under the identification \eqref{IdnE}, we can express $\mathbb{E}_{\bs X}$ as  $\mathbb{E}_{\bs X}=E_{\bs X}\otimes E_{\bs X}$.
\end{Def}

\begin{Prop}\label{PropNei}
Let 
$\bs X, \bs Y\in \mathscr{C}$.
Suupose that $\{\bs X, \bs Y\}$ is an edge such that $X_d\triangle Y_d=\{x, y\}$,
where if 
$\{\bs X, \bs Y\}$ is a $(d, f)$-edge, then we understand $x=y$ and $\{x, y\}=\{x\}$.
Then the following hold for either $\vepsilon=+$ or $-$: 
\begin{align}
E_{\bs X} E_{\bs Y}&=0; \label{E1}\\
\hat{B}^{\vepsilon}_{x, y}\ket{\bs X}&=c\ket{\bs Y}\quad(c\neq 0); \label{E2} \\
\hat{B}^{\vepsilon}_{x, y}E_{\bs X}&=\pm E_{\bs Y} \hat{B}_{x, y}^{\vepsilon},\quad   E_{\bs X} \hat{B}_{x, y}^{\overline{\vepsilon}}=\pm \hat{B}^{\overline{\vepsilon}}_{x, y}E_{\bs Y}; \label{E3}\\
 E_{\bs X} \hat{H}_1(\VP(\bs q)) E_{\bs Y}&=\pm E_{\bs X} e^{\pm \im \vP_{\{x, y\}}}\hat{B}^{\overline{\vepsilon}}_{x, y}, \label{E4}
\end{align}
where $\overline{\vepsilon}=1$ if $\vepsilon=0$, $\overline{\vepsilon}=0$ if $\vepsilon=1$; $\pm A$ represents either $A$ or $-A$;\footnote{Same as the above footnote.}  $\Phi_{\{x, y\}}=\Phi_{x, y}$ if $x\neq y$, $\Phi_{\{x, y\}}=\Phi_x$ if $x=y$.
Here, recall that $ \hat{H}_1(\VP(\bs q)) $ is defined by \eqref{DefhH_1}.
\end{Prop}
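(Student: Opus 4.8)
The plan is to verify the four identities \eqref{E1}--\eqref{E4} in order, treating separately the two cases in which $\{\bs X,\bs Y\}$ is a $d$-edge ($x\neq y$) or a $(d,f)$-edge ($x=y$), using the explicit action formulas \eqref{Bket}--\eqref{Bket3} from the proof of Proposition \ref{EquivC}. First I would recall that under the identification \eqref{IdnE} the projection $E_{\bs X}$ detects precisely the occupation pattern of the conduction electrons on $X_d\triangle X_f$ and of the $f$-electrons on $X_f$; concretely, for a basis vector $\ket{\bs W}$ of $\mathfrak E$ one has $E_{\bs X}\ket{\bs W}=\ket{\bs W}$ if $W_d\supseteq X_d\setminus X_f$, $W_d\cap(X_f\setminus X_d)=\varnothing$, and $W_f=X_f$, and $E_{\bs X}\ket{\bs W}=0$ otherwise. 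In particular $E_{\bs X}\ket{\bs X}=\ket{\bs X}$.

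For \eqref{E1}: if $\{\bs X,\bs Y\}$ is a $d$-edge then $X_f=Y_f$ and $X_d\triangle Y_d=\{x,y\}$ with, say, $x\in X_d\setminus Y_d$ and $y\in Y_d\setminus X_d$; but $x$ cannot lie in $X_f=Y_f$ (otherwise $x$ would be in $X_f\setminus Y_d$ forcing $E_{\bs Y}$ to kill vectors occupied at $x$, while $x\in X_d\setminus X_f$ forces $E_{\bs X}$ to require occupation at $x$) --- so $E_{\bs X}$ requires occupation at $x$ while $E_{\bs Y}$ forbids it, giving $E_{\bs X}E_{\bs Y}=0$. If $\{\bs X,\bs Y\}$ is a $(d,f)$-edge with $X_d\triangle Y_d=\{x\}=X_f\triangle Y_f$, then $X_f\neq Y_f$, so one of $E_{\bs X},E_{\bs Y}$ requires the $f$-occupation pattern $X_f$ and the other requires $Y_f\neq X_f$, and again the product vanishes. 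Identity \eqref{E2} is an immediate restatement of Proposition \ref{EquivC}(iii) together with Definition \ref{DefNei}, choosing $\vepsilon$ so that $\hat B^{\vepsilon}_{x,y}\ket{\bs X}$ is the (nonzero) transition to $\ket{\bs Y}$; I would just identify which sign $\vepsilon$ works in each case from \eqref{Bket}--\eqref{Bket2}.

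The intertwining relations \eqref{E3} are the technical heart. The idea is that $\hat B^{\vepsilon}_{x,y}$ moves exactly one electron between sites $x$ and $y$ (or, for $x=y$, converts one $d$-electron to an $f$-electron or vice versa), so conjugating it past the occupation-detecting projections swaps $E_{\bs X}$ and $E_{\bs Y}$ up to a sign coming from reordering fermionic operators. Concretely I would check that $\hat B^{\vepsilon}_{x,y}E_{\bs Y}$ and $E_{\bs X}\hat B^{\vepsilon}_{x,y}$ have the same action on every basis vector $\ket{\bs W}$: both are nonzero only when $\ket{\bs W}$ has the occupation pattern selected by $E_{\bs Y}$ \emph{after} applying $\hat B^{\vepsilon}_{x,y}$ lands in the pattern selected by $E_{\bs X}$, and the sign discrepancy is a fixed $\pm 1$ independent of $\bs W$ because the anticommutations involved only concern the operators at $x$ and $y$ relative to the (fixed) ordering convention. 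This is a finite case analysis using \eqref{Bket}--\eqref{Bket3}; I would phrase it as: $E_{\bs X}\hat B^{\vepsilon}_{x,y}=\hat B^{\vepsilon}_{x,y}P$ where $P$ is the projection onto the occupation pattern such that $\hat B^{\vepsilon}_{x,y}$ maps it into the range of $E_{\bs X}$, and observe $P=\pm$-compatible with $E_{\bs Y}$. The second half of \eqref{E3} follows by taking adjoints, noting $(\hat B^{\vepsilon}_{x,y})^*=\hat B^{\overline\vepsilon}_{x,y}$ and that $E_{\bs X},E_{\bs Y}$ are self-adjoint projections.

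Finally \eqref{E4}: starting from the definition \eqref{DefhH_1} of $\hat H_1(\VP(\bs q))$, I would left-multiply by $E_{\bs X}$ and right-multiply by $E_{\bs Y}$. The diagonal term $\tfrac{U^d_{\rm eff}}{2}\sum_z\hat n^d_z$ and all hopping/hybridization terms $\hat B_{u,v}$ with $\{u,v\}\neq\{x,y\}$ kill either $E_{\bs X}$ on the left or $E_{\bs Y}$ on the right, because they either preserve the occupation pattern (and then $E_{\bs X}E_{\bs Y}=0$ by \eqref{E1}) or move an electron across an edge different from $\{x,y\}$ (and then the resulting pattern still fails to match one of the two projections --- here I would invoke that $\bs X$ and $\bs Y$ differ only at $\{x,y\}$). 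Only the $\{x,y\}$-term survives, and by \eqref{DefhH_1} it carries the phase factor $e^{\pm\im\vP_{\{x,y\}}(\bs q)}$ multiplying $\hat B^{\overline\vepsilon}_{x,y}$ (the component of $\hat B_{x,y}$ that actually connects $\bs Y$ to $\bs X$); applying \eqref{E3} to move $E_{\bs Y}$ to the left as $\pm E_{\bs X}$ absorbs it into the surviving $E_{\bs X}$, yielding \eqref{E4}. The main obstacle I anticipate is bookkeeping the fermionic signs consistently across the $d$-edge and $(d,f)$-edge cases and across both choices of $\vepsilon$; since the statement only asserts the identities up to an unspecified sign ($\pm A$), I would lean on that and not track the signs explicitly, which keeps the argument to a clean finite check built on \eqref{Bket}--\eqref{Bket3}.
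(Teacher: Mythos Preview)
Your overall strategy matches the paper's: both proceed by the same case split ($d$-edge versus $(d,f)$-edge), describe $E_{\bs X}$ via occupation constraints (the paper makes this explicit through the rank-one expansion \eqref{BaseE}), verify \eqref{E3} by checking the action of $\hat B^{\vepsilon}_{x,y}$ on the basis vectors spanning $\mathrm{ran}\,E_{\bs X}$, and reduce \eqref{E4} to the single $\{x,y\}$-term of $\hat H_1$ by killing all other hops/hybridizations between the two projections. Your use of adjoints to get the second half of \eqref{E3} is a nice shortcut the paper does not spell out.

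There is, however, a genuine gap in your argument for \eqref{E1} in the $d$-edge case. You assert that ``$E_{\bs X}$ requires occupation at $x$ while $E_{\bs Y}$ forbids it'', but these two requirements are mutually exclusive: $E_{\bs X}$ contains the factor $\hat n^d_x$ only when $x\in X_d\setminus X_f$, whereas $E_{\bs Y}$ contains $\bar{\hat n}^d_x$ only when $x\in Y_f\setminus Y_d=X_f\setminus Y_d$, which forces $x\in X_f$. So no single site can witness the orthogonality you claim. Concretely, with $|\vLa|=4$, $\bs X=(\{1,2\},\{3,4\})$ and $\bs Y=(\{1,3\},\{3,4\})$ (a $d$-edge whenever $t_{2,3}\neq 0$), every factor appearing in $E_{\bs Y}$ already appears in $E_{\bs X}$, hence $E_{\bs X}E_{\bs Y}=E_{\bs X}\neq 0$. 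This also undermines your route to \eqref{E4}, since you invoke \eqref{E1} to dispose of the diagonal term $\tfrac{U^d_{\rm eff}}{2}\sum_z\hat n^d_z$. The paper's own proof is no more explicit at this step (it writes only ``readily confirm'' and ``it is easy to see''), so the issue seems to lie with the statement of \eqref{E1} for $d$-edges rather than with your strategy; but your written argument does not establish \eqref{E1} in that case, and you should acknowledge the problem instead of presenting the flawed deduction as a proof.
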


\begin{proof}
{\it Case 1.} The case where $\{\bs X, \bs Y\}$ is a $d$-edge.
First, we note that $E_{\bs X}$ can be expressed as 
\be
E_{\bs X}=\sum_{{Z\subseteq \vLa \setminus (X_d \triangle X_f)}\atop{|Z|=|X_d\cap X_f|}} \Big| (X_d\setminus X_f)\cup Z; X_f\Big\ra\Big \la(X_d\setminus X_f)\cup Z; X_f \Big|.  \label{BaseE}
\ee

(i) We divide the proof into two cases:
 \begin{itemize}
 \item[] (i-a) $Y_d=X_d\cup \{x\},\ Y_f=X_f \setminus \{x\}$.
 \item[] (i-b) $Y_d=X_d\setminus \{x\},\ Y_f=X_f\cup \{x\}$.
 \end{itemize}
In the case of (i-a), we readily confirm 
\eqref{E1} and \eqref{E2}. Below, we show \eqref{E3} and \eqref{E4}.
By using \eqref{Bket} and \eqref{BketB}, one has 
\begin{align}
\hat{B}_{x, x}^+ \Big| (X_d\setminus X_f)\cup Z; X_f\Big\ra&= \pm V \Big| (X_d\setminus X_f)\cup Z \cup\{x\}; X_f\setminus \{x\}\Big\ra\no
&=\pm V \Big| (Y_d\setminus Y_f)\cup Z; Y_f\Big\ra,\\
\hat{B}_{x, x}^- \Big| (Y_d\setminus Y_f)\cup Z'; Y_f\Big\ra&=\pm V \Big| \big( (Y_d\setminus Y_f)\cup Z' \big) \setminus \{x\}; Y_f\cup \{x\}\Big\ra\no
&=\pm V \Big| (X_d\setminus X_f)\cup Z'; X_f\Big\ra.
\end{align}
Combining these equations with \eqref{BaseE}, we obtain \eqref{E3}.
The idea of the proof of \eqref{E4} is as follows: first, we rewrite $E_{\bs X}\hat{H}_1(\VP)E_{\bs Y}$ as
\begin{align}
E_{\bs X}\hat{H}_1(\VP)E_{\bs Y}= -E_{\bs X} \{ e^{-\im \vP_x}\hat{B}^-_{x, x} 
+e^{\im \vP_x}\hat{B}^+_{x, x} 
\}E_{\bs Y}+E_{\bs X}I_x E_{\bs Y}.
\end{align}
Using \eqref{E3}, the first term on the right-hand side is equal to $\pm e^{\overline{\vepsilon} \im \vP_x} E_{\bs X} \hat{B}_{x, x}^{\overline{\vepsilon}}$. The second term on the right-hand side contains, for example, the following term:
$
\sum_{z\neq x} E_{\bs X}\hat{B}_{z, z}E_{\bs Y}
$. If $z\neq x$, we know that $E_{\bs X} \hat{B}_{z, z} E_{\bs Y}=0$.
In this way, one obtains $E_{\bs X}I_x E_{\bs Y}=0$. In the case of (i-b), we can similarly show that \eqref{E1}-\eqref{E4}.
\medskip

{\it Case 2.} The case where $\{\bs X, \bs Y\}$ is a $(d, f)$-edge.
It is easy to see that \eqref{E1} and \eqref{E2} are valid. Below we describe the strategy for the proof of \eqref{E3} and \eqref{E4}.
The proof is divided into the following three cases:
 \begin{itemize}
 \item[] (ii-a) $x, y\in X_d\setminus X_f$.
 \item[] (ii-b) $x\in X_d\setminus X_f,\ y\in X_d\cap X_f$.
 \item[] (ii-c) $x, y\in X_d \cap X_f$.
 \end{itemize}
 In each case, the proof is given in a manner similar to that of (i). For the readers' convenience, we describe the proof strategy for  the  case (ii-a) below.
 The same applies to (ii-b) and (ii-c).
 
  (ii-a) can be further divided into two cases:
 \begin{itemize}
 \item[] (ii-a-1) $Y_d=(X_d\cup \{y\}) \setminus \{x\}$.
 \item[] (ii-a-2) $Y_d=(X_d\cup \{x\}) \setminus \{y\}$.
 \end{itemize}
 In case (ii-a-1), since $Y_d\setminus Y_f=\big((X_d\setminus X_f) \cup \{y\} \big)\setminus \{x\}$,
 we find by using \eqref{Bket1}  and \eqref{Bket2} that 
 \begin{align}
 \hat{B}_{x, y}^{+} \Big| (X_d\setminus X_f) \cup Z; X_f\Big\ra &=\pm t_{x, y} \Big|
 (Y_d\setminus Y_f) \cup Z; Y_f \Big\ra,\\
 \hat{B}_{x, y}^{-} \Big| (Y_d\setminus Y_f) \cup Z'; Y_f\Big\ra &=\pm t_{x, y} \Big|
 (X_d\setminus X_f) \cup Z'; X_f \Big\ra.
 \end{align}
 Combining these equations with \eqref{BaseE}, we obtain \eqref{E3}.
 The idea of the proof of \eqref{E4} is as follows: 
 first, we express $E_{\bs X}\hat{H}_1(\VP)E_{\bs Y}$ as
 \begin{align}
E_{\bs X}\hat{H}_1(\VP)E_{\bs Y}= -E_{\bs X} \{ e^{\im \vP_{x, y}}\hat{B}^-_{x, y} 
+e^{-\im \vP_{x, y}}\hat{B}^+_{x, y} 
\}E_{\bs Y}+E_{\bs X}I_{x, y} E_{\bs Y}.
\end{align}
By using \eqref{E3}, the first term on the right-hand side is equal to $\pm e^{\im \vepsilon \vP_{x, y}}E_{\bs X} \hat{B}_{x, y}^{\overline{\vepsilon}}$.
For the second term on the right-hand side, using \eqref{Bket3}, we can show that $E_{\bs X}I_{x, y} E_{\bs Y}=0$.
 The case (ii-a-2) can be proved in a similar way.
\end{proof}

\subsubsection{Completion of the proof of Proposition \ref{connectivity4}}

For each $\sigma=\up, \down$ and $x, y\in \vLa$, we set
\begin{align}
B_{x,  y; \sigma}^-=
\begin{cases}
t_{x, y}d_{x, \sigma}^*d_{y, \sigma}\ \ &(x\neq y)\\
Vf_{x, \sigma}^* d_{x, \sigma}\ \ &(x=y)
\end{cases},\quad 
B_{x,  y; \sigma}^+=
\begin{cases}
t_{x, y} d_{y, \sigma}^* d_{x, \sigma}\ \ &(x\neq y)\\
V d_{x, \sigma}^*f_{x, \sigma}\ \ &(x=y).
\end{cases}
\end{align}
Under the identification \eqref{IdnE}, we have $B^{\vepsilon}_{x, y; \up}=\hat{B}^{\vepsilon}_{x, y} \otimes \mathbbm{1}$ and 
$B^{\vepsilon}_{x, y; \down}=\mathbbm{1} \otimes \hat{B}^{\vepsilon}_{x, y}$ for $\vepsilon =+, -$, where $\hat{B}_{x, y}^{\vepsilon}$ is defined by \eqref{DefB}.
 We also set 
 \be
B_{x, y}^{\vepsilon}=B_{x, y; \up}^{\vepsilon} B_{x, y; \down}^{\vepsilon}\quad (\vepsilon=+, -).
\ee

For given $x, y\in \vLa$, we define 
$J_{x, y}(\beta)$ as follows: 
\begin{description}
\item[]
  If $x=y$, then
\begin{align}
 J_{x, x}(\beta)=\int_{0\le s_1 \le s_2\le 1}
& \Big(
 e^{-s_1 \beta \omega_0\Np} e^{-\im \vPhi_x} e^{-(s_2-s_1) \beta \omega_0\Np} e^{+\im \vPhi_x} e^{-(1-s_2)\beta \omega_0 \Np}\no
 &+e^{-s_1 \beta \omega_0 \Np} e^{+\im \vPhi_x} e^{-(s_2-s_1)\beta \omega_0  \Np} e^{-\im \vPhi_x} e^{-(1-s_2) \beta \omega_0\Np}
 \Big)ds_1ds_2;
 \end{align}
 \item[] If $x\neq y$, then
\begin{align}
 J_{x, y}(\beta)=\int_{0\le s_1 \le s_2\le 1 }
& \Big(
 e^{-s_1 \beta \omega_0 \Np} e^{-\im \vPhi_{x, y}} e^{-(s_2-s_1) \beta \omega_0 \Np} e^{+\im \vPhi_{x, y}} e^{-(1-s_2) \beta \omega_0 \Np}\no
 &+e^{-s_1 \beta \omega_0  \Np} e^{+\im \vPhi_{x, y}} e^{-(s_2-s_1) \beta \omega_0 \Np} e^{-\im \vPhi_{x, y}} e^{-(1-s_2) \beta \omega_0 \Np}
 \Big)ds_1ds_2.
 \end{align}
 
 \end{description}
 
 \begin{Lemm}\label{EBase}
 Let $\bs X, \bs Y\in \mathscr{C}$.
 Suppose that $\{\bs X, \bs Y\}$ is an edge with $X_d\triangle Y_d=\{x, y\}$, where if 
$\{\bs X, \bs Y\}$ is a $(d, f)$-edge, then we understand $x=y$ and $\{x, y\}=\{x\}$. Then, for either $\vepsilon=-$ or $+$, the following holds:
 \be
 \Ex_{\bs X} e^{-\beta H_0} \Ex_{\bs Y}= \beta ^2 J_{x, y}(\beta) \Ex_{\bs X} B_{x, y}^{\vepsilon}+O(\beta^3),
 \ee
 where $O(\beta^n)$ is some bounded operator satisfying
\be
\limsup_{\beta\to +0}\frac{\|O(\beta^n)\|}{\beta^n}<\infty.
\ee

 \end{Lemm}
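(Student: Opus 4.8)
The plan is to expand $e^{-\beta H_0}$ by Duhamel's formula in powers of $H_1$ (where $H_0 = H_1 + \omega_0\Np$), track which term survives after sandwiching between the mutually orthogonal projections $\Ex_{\bs X}$ and $\Ex_{\bs Y}$, and show that the leading nonzero contribution comes at order $\beta^2$ and has exactly the claimed form. First I would write, via the Duhamel series in the interaction picture with respect to $\omega_0\Np$,
\begin{align}
\Ex_{\bs X}e^{-\beta H_0}\Ex_{\bs Y}
&=\sum_{n\ge0}(-\beta)^n\int_{0\le s_1\le\cdots\le s_n\le1}
\Ex_{\bs X}\,\widetilde H_1(s_1)\cdots\widetilde H_1(s_n)\,e^{-\beta\omega_0\Np}\,\Ex_{\bs Y}\,d^n\bs s,
\end{align}
where $\widetilde H_1(s)=e^{-s\beta\omega_0\Np}H_1 e^{s\beta\omega_0\Np}$, and the series converges in operator norm since $H_1$ is bounded. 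The $n=0$ term vanishes because $\Ex_{\bs X}\Ex_{\bs Y}=0$ by \eqref{E1} of Proposition \ref{PropNei} (note $\Ex_{\bs X}=E_{\bs X}\otimes E_{\bs X}$, so $\Ex_{\bs X}\Ex_{\bs Y}=(E_{\bs X}E_{\bs Y})\otimes(E_{\bs X}E_{\bs Y})=0$, and $e^{-\beta\omega_0\Np}$ acts only on the phonon factor). The $n=1$ term also vanishes: writing $H_1 = H_1^{(\up)} + H_1^{(\down)} + (\text{diagonal density term})$ where $H_1^{(\sigma)}$ collects the hopping/hybridization operators carrying spin $\sigma$, we see that a single $\widetilde H_1(s_1)$ changes the electron configuration on only \emph{one} spin sector, whereas $\bs X$ and $\bs Y$ differ on \emph{both} sectors (since $\Ex_{\bs X}=E_{\bs X}\otimes E_{\bs X}$ and $X_d\triangle Y_d\ne\varnothing$ forces both $E$-factors to be orthogonal between the two configurations); the diagonal term commutes with both projections and is killed by orthogonality. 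Hence the series starts at $n=2$.

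At order $n=2$ the only surviving term is the one where one factor of $H_1$ acts as $H_1^{(\up)}$ and the other as $H_1^{(\down)}$, and within each factor only the specific edge operator connecting $\bs X$ to $\bs Y$ contributes. This is where I invoke \eqref{E4} of Proposition \ref{PropNei}: $E_{\bs X}\hat H_1(\VP(\bs q))E_{\bs Y}=\pm E_{\bs X}e^{\pm\im\Phi_{\{x,y\}}}\hat B^{\overline\vepsilon}_{x,y}$, so on the full electron space $\h_{\rm e}=\mathfrak E\otimes\mathfrak E$ the double application produces $\pm\,\Ex_{\bs X}\,e^{\pm\im\Phi_{\{x,y\}}}e^{\pm\im\Phi_{\{x,y\}}}\,B^{\overline\vepsilon}_{x,y}$ (the two $\Phi$-phases, one per spin sector, acting on the phonon factor with signs depending on the interaction-picture time arguments). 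Collecting the two orderings of which spin sector carries $s_1$ versus $s_2$, and keeping the phonon evolution factors $e^{-s_1\beta\omega_0\Np}$, $e^{-(s_2-s_1)\beta\omega_0\Np}$, $e^{-(1-s_2)\beta\omega_0\Np}$ explicitly, the $\bs s$-integral over $0\le s_1\le s_2\le1$ of the phonon part is precisely the operator $J_{x,y}(\beta)$ as defined above; the $(-\beta)^2$ prefactor supplies $\beta^2$, and the two surviving sign-compatible choices of $\vepsilon$ collapse to a single $\vepsilon\in\{+,-\}$ (the one dictated by $\hat B^{\vepsilon}_{x,y}\ket{\bs X}=c\ket{\bs Y}$ via \eqref{E2}). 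This yields $\beta^2 J_{x,y}(\beta)\,\Ex_{\bs X}B^{\vepsilon}_{x,y}$ up to an overall sign which can be absorbed.

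Finally I would bound the remainder $\sum_{n\ge3}$ in operator norm: each term is at most $\beta^n\|H_1\|^n/n!$, so the tail is $O(\beta^3)$ uniformly, giving the stated $O(\beta^n)$ estimate with $\limsup_{\beta\to+0}\|O(\beta^3)\|/\beta^3<\infty$. The main obstacle is the careful bookkeeping at $n=2$: one must verify that \emph{no} other pairing of edge operators (e.g., both acting on the same spin sector, or acting through an intermediate configuration $\bs Z\ne\bs X,\bs Y$) can land back on the $\Ex_{\bs Y}$-subspace after hitting $\Ex_{\bs X}$ on the left — this is exactly what \eqref{E1} and the structure of $E_{\bs X}$ in Proposition \ref{PropNei} rule out, since any such intermediate step would require $E_{\bs X}\hat B_{z,w}E_{\bs Z}\hat B_{z',w'}E_{\bs Y}$ with the two $E$-projectors orthogonal unless $(z,w)$ and $(z',w')$ both equal $(x,y)$ and $\bs Z$ is forced. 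Tracking the signs $\pm$ (left deliberately ambiguous in Proposition \ref{PropNei}) and confirming they are consistent across the two spin sectors so that the net sign is well-defined is the other delicate point, but it does not affect the stated conclusion since an overall sign is harmless.
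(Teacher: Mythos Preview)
Your proposal is correct and follows essentially the same approach as the paper: expand $e^{-\beta H_0}$ by Duhamel's formula with respect to $\omega_0\Np$, use \eqref{E1} and the tensor-product structure $\Ex_{\bs X}=E_{\bs X}\otimes E_{\bs X}$ to kill the $n=0,1$ terms, and invoke \eqref{E4} of Proposition \ref{PropNei} on each spin factor to identify the $n=2$ contribution with $\beta^2 J_{x,y}(\beta)\Ex_{\bs X}B^{\vepsilon}_{x,y}$ (the paper treats the $d$-edge and $(d,f)$-edge cases separately but the mechanism is exactly what you describe). One small remark: the overall sign you worry about is in fact $+$, since the $\pm$ from \eqref{E4} appears once in each tensor factor and hence squares; there is nothing to absorb.
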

 \begin{proof}
 For each $x\in \vLa$, define
  \begin{align}
 \hat{B}_{x,x}^-(\VP)=e^{-\im \vPhi_x}\hat{B}_{x,x}^-, \quad \hat{B}_{x, x}^+(\VP)=e^{+\im \vPhi_x} \hat{B}_{x, x}^+.
 \end{align}
 Under the identification \eqref{IdnE}, we obtain 
 \begin{align}
 V_{\up}(\VP) &=\sum_{x\in \vLa}\Big\{\hat{B}^-_{x, x}(\VP)+ \hat{B}^+_{x, x}(\VP)\Big\}\otimes \mathbbm{1}, \\
 V_{\down}(-\VP)&=\sum_{x\in \vLa}\mathbbm{1} \otimes \Big\{\hat{B}^-_{x, x}(-\VP)+ \hat{B}^+_{x, x}(-\VP)\Big\}.
 \end{align}
 Similarly, if we set, for each $x, y\in \vLa\ (x\neq y)$, 
 \be
 \hat{B}_{x,y}^-(\VP)=e^{+\im \vPhi_{x, y}}\hat{B}_{x,y}^-, \quad \hat{B}_{x, y}^+(\VP)=e^{-\im \vPhi_{x, y}} \hat{B}_{x, y}^+, 
 \ee
 then we have
 \begin{align}
 T_{\up}(\VP)&=\sum_{\{x, y\}\in E} t_{x, y}\Big\{ \hat{B}_{x, y}^-(\VP)+\hat{B}^+_{x, y}(\VP)\Big\} \otimes \mathbbm{1},\\
 T_{\down}(-\VP)&=\sum_{\{x, y\}\in E} t_{x, y}\mathbbm{1} \otimes \Big\{\hat{B}_{x, y}^-(-\VP)+\hat{B}^+_{x, y}(-\VP)\Big\}, 
 \end{align}
 where $E$ is the set of edges defined immediately above the assumption \hyperlink{A2}{\bf (A. 2)}.
 Recall that 
 \be
 \mathbb{E}_{\bs X}=E_{\bs X} \otimes E_{\bs X}. \label{TensorEE}
 \ee

 From Duhamel\rq{}s formula, it follows that 
\begin{align}
&\Ex_{\bs X} e^{-\beta H_0}\Ex_{\bs Y}\no
=&\Ex_{\bs X} \bigg\{\mathbbm{1}+\beta \int_0^{1} H_1(s) e^{- \beta \omega_0 \Np}ds
+\beta^2 \int_{0\le s_1\le s_2\le 1} H_1(s_1) H_1(s_2) e^{-  \beta \omega_0 \Np}
\bigg\}\Ex_{\bs Y}+O(\beta^3),
\end{align}
where, for a give operator $A$, we set
$A(s)=e^{-s \beta \omega_0 \Np} A e^{s \beta \omega_0 \Np}$.
Using 
Proposition \ref{PropNei} and \eqref{TensorEE}, we find that 
\be
\Ex_{\bs X} \bigg\{\mathbbm{1}+\beta \int_0^{1} H_1(s) e^{-\beta \omega_0 \Np}ds\bigg\}\Ex_{\bs Y}=0.
\ee

First consider the case where $\{\bs X, \bs Y\}$ is a $(d, f)$-edge.
Again from   Proposition \ref{PropNei} and \eqref{TensorEE}, we see  that  for some $\vepsilon\in \{-, +\}$, 
the following holds:
\begin{align}
&\int_{0\le s_1\le s_2 \le 1 }\Ex_{\bs X}  H_1(s_1) H_1(s_2) \Ex_{\bs Y} e^{-\beta \omega_0 \Np} ds_1ds_2\no
=&
\int_{0\le s_1\le s_2 \le 1 }\Ex_{\bs X} \Big\{
V_{\up}(\VP)(s_1) V_{\down}(-\VP)(s_2)+V_{\down}(-\VP)(s_1) V_{\up}(\VP)(s_2)
\Big\}\Ex_{\bs Y} e^{-\beta \omega_0 \Np} ds_1ds_2\no
=&
J_{x, x}(\beta) B^{\vepsilon}_{x, x}.
\end{align}
Consequently, the claim holds when $\{\bs X, \bs Y\}$ is a $(d, f)$-edge.

Next consider the case where $\{\bs X, \bs Y\}$ is a $d$-edge.
From Proposition \ref{PropNei} and \eqref{TensorEE}, it follows  that  for some $\vepsilon\in \{-, +\}$, the following holds:
\begin{align}
&\int_{0\le s_1\le s_2 \le 1}\Ex_{\bs X}  H_1(s_1) H_1(s_2) \Ex_{\bs Y} e^{-\beta \omega_0 \Np} ds_1ds_2\no
=&
\int_{0\le s_1\le s_2 \le 1}\Ex_{\bs X} \Big\{
T_{\up}(\VP)(s_1) T_{\down}(-\VP)(s_2)+T_{\down}(-\VP)(s_1) T_{\up}(\VP)(s_2)
\Big\}\Ex_{\bs Y} e^{-\beta \omega_0 \Np} ds_1ds_2\no
=&
J_{x, y}(\beta) B^{\vepsilon}_{x, y}.
\end{align}
Thus, the claim is also shown when $\{\bs X, \bs Y\}$ is a $d$-edge.
 \end{proof}

 Under the above preparation, we can easily prove Proposition \ref{connectivity4}: 
Using Lemma \ref{EBase} and \eqref{E3} repeatedly, we see that there exists some $\bs \vepsilon=(\vepsilon_j)_{j=0}^{n}\in \{-, +\}^{n+1}$ such that the following holds:
\begin{align}
\Ex_{\beta}(\bs p)
= \beta^{2n+2}  J_{\beta} (\bs p) \Ex_{\boldsymbol{F}} B^{\bs \vepsilon}(\boldsymbol{p})
+ O(\beta^{2n+3}),  \label{E(p)}
\end{align}
where 
\begin{align}
B^{\bs \vepsilon}(\boldsymbol{p})
=B^{\vepsilon_0}_{x_0, y_0} B^{\vepsilon_1}_{x_1, y_1}\cdots B^{\vepsilon_n}_{x_n, y_n}.
\end{align}
Then, setting $D(\bs p)=\mathbb{E}_{\bs F} B^{\bs \vepsilon}(\bs p)$, we find 
\be
D(\bs p) \ket{\bs X, \bs X}=c\ket{\bs F, \bs F}\quad(c> 0)
\ee
by repeatedly using \eqref{E2}. 
It is clear from the definition of $J_{\beta}(\bs p)$ that \eqref{ConTo1} is true.
This completes the proof of Proposition \ref{connectivity4}.
\qed

\subsection{Proof of Proposition \ref{AEE}}\label{PfAEE}

In this subsection, we will complete the proof of Proposition \ref{BP3} by giving the proof of Proposition \ref{AEE}. First, we prepare a lemma:

\begin{Lemm} \label{phi max}
For each 
$\varphi\geq0\wrt \p $ with $\vphi\neq 0$, there exists a $\boldsymbol{Z}\in \mathscr{C}$ such that $\varphi_{\bs Z, \bs Z} \neq 0$.
Let $\boldsymbol{X}$ be one of such $\boldsymbol{Z}$ with the largest $|{\bs Z}|_{\triangle}$.
Let $\boldsymbol{X}'$ be an electron configuration satisfying $E_{\boldsymbol{X}}|\boldsymbol{X}'\rangle=|\boldsymbol{X}'\rangle$ and $\vphi_{\boldsymbol{X}',\boldsymbol{X}'}\neq 0$. Then 
$X_d\setminus X_f=X'_d\setminus X'_f$ and $X_f\setminus X_d=X'_f\setminus X'_d$ hold.

\end{Lemm}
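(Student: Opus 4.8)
The plan is to decode what the single constraint $E_{\bs X}\ket{\bs X'}=\ket{\bs X'}$ forces on the sets $X_d',X_f'$, and then play the resulting inclusions against the extremal choice of $\bs X$. The argument is purely combinatorial once the definition of $E_{\bs X}$ is unwound.

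First I would record the elementary facts that make the statement meaningful. Positivity of $\varphi$ w.r.t.\ $\p$ means that, under the identification \eqref{IdnE} together with $\Psi_\vartheta$, $\varphi(\bq)\ge 0$ as an operator on $\mathfrak E$ for $\mu$-a.e.\ $\bq$; hence each diagonal coefficient $\varphi_{\bs Z,\bs Z}(\bq)=\langle \bs Z|\varphi(\bq)|\bs Z\rangle$ is nonnegative, and if all of them vanished then $\varphi(\bq)=0$ for a.e.\ $\bq$, contradicting $\varphi\neq 0$. This produces a $\bs Z$ with $\varphi_{\bs Z,\bs Z}\neq 0$ and legitimizes choosing $\bs X$ with $|\bs X|_\triangle$ maximal among such $\bs Z$, exactly as observed after \eqref{vphiDec}.

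Next, since the operators $\hat n^d_x,\ \overline{\hat{n}}^d_x,\ \hat n^f_x,\ \overline{\hat{n}}^f_x$ occurring in $E_{\bs X}$ are mutually commuting orthogonal projections and $\ket{\bs X'}$ is one of their common eigenvectors $\{\ket{\bs Y}\}$, the identity $E_{\bs X}\ket{\bs X'}=\ket{\bs X'}$ holds if and only if every factor in the product defining $E_{\bs X}$ fixes $\ket{\bs X'}$. Reading off the four products gives: $x\in X_d'$ for all $x\in X_d\setminus X_f$; $x\notin X_d'$ for all $x\in X_f\setminus X_d$; $x\in X_f'$ for all $x\in X_f$; and $x\notin X_f'$ for all $x\in \vLa\setminus X_f$. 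The last two statements amount precisely to $X_f'=X_f$, while the first two give $X_d\setminus X_f\subseteq X_d'$ and $(X_f\setminus X_d)\cap X_d'=\varnothing$.

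Finally, using $X_f'=X_f$ these inclusions become $X_d\setminus X_f\subseteq X_d'\setminus X_f'$ and $X_f\setminus X_d\subseteq X_f'\setminus X_d'$; the two right-hand sets are disjoint (one lies in $\vLa\setminus X_f$, the other in $X_f$), so
\[
|\bs X|_\triangle=|X_d\setminus X_f|+|X_f\setminus X_d|\ \le\ |X_d'\setminus X_f'|+|X_f'\setminus X_d'|=|\bs X'|_\triangle .
\]
Because $\varphi_{\bs X',\bs X'}\neq 0$, maximality of $|\bs X|_\triangle$ gives $|\bs X'|_\triangle\le|\bs X|_\triangle$, hence equality throughout, so both inclusions above are in fact equalities: $X_d\setminus X_f=X_d'\setminus X_f'$ and $X_f\setminus X_d=X_f'\setminus X_d'$, which is the claim. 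The only point that needs a little care is the bookkeeping of the symmetric differences under the constraint $X_f'=X_f$; there is no analytic obstacle, since everything reduces to the definition of $E_{\bs X}$ together with the extremal choice of $\bs X$.
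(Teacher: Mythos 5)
Your proof is correct and uses the same essential ingredients as the paper's: unwinding $E_{\bs X}$ as a product of commuting number-operator projections and invoking maximality of $|\bs X|_\triangle$ among diagonal components of $\varphi$. The only (cosmetic) difference is that the paper argues by contradiction, locating a single site $x$ in $(X_d\setminus X_f)\setminus(X'_d\setminus X'_f)$ whose factor $\hat n_x^d\,\overline{\hat n}_x^f$ kills $|\bs X'\rangle$, whereas you argue directly by extracting $X'_f=X_f$ and the two inclusions $X_d\setminus X_f\subseteq X'_d\setminus X'_f$, $X_f\setminus X_d\subseteq X'_f\setminus X'_d$, and then letting a cardinality count force equality.
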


\begin{proof}
We have already shown that there exists a $\boldsymbol{Z}\in \mathscr{C} $ satisfying $\vphi_{\bs Z, \bs Z} \neq 0$ in the discussion above Proposition \ref{BP3}.

Assume that 
$X_d\setminus X_f\neq X'_d\setminus X'_f$ or $X_f\setminus X_d\neq X'_f\setminus X'_d$.
From the maximality of $|\boldsymbol{X}|_{\triangle}$, we have
\be
(X_d\setminus X_f)\setminus(X'_d\setminus X'_f)\neq \varnothing \ \ \mbox{or}\ \ (X_f\setminus X_d)\setminus(X'_f\setminus X'_d)\neq\varnothing.
\ee
Consider  the case where
$(X_d\setminus X_f)\setminus(X'_d\setminus X'_f)\neq \varnothing$.  If  we take $x$ as  $x\in(X_d\setminus X_f)\setminus(X'_d\setminus X'_f)$, then $E_{\bs X}$ contains  $\hnd_x \ohnf_x$.
Hence, since  $x\not\in X'_d\setminus X'_f$,  $E_{\boldsymbol{X}}|\boldsymbol{X}'\rangle=0$ holds. 
A similar argument shows that $E_{\boldsymbol{X}}|\boldsymbol{X}'\rangle=0$ in the case where $(X_f\setminus X_d)\setminus(X'_f\setminus X'_d)\neq\varnothing$. Since the consequences in both cases contradict with $E_{\boldsymbol{X}}|\boldsymbol{X}'\rangle=|\boldsymbol{X}'\rangle$, we can conclude that $X_d\setminus X_f=X'_d\setminus X'_f$ and $X_f\setminus X_d=X'_f\setminus X'_d$ hold.
\end{proof}

\subsubsection*{Completion of the proof of  Proposition \ref{AEE}}
Let $\bs X\in \mathscr{C}$ be the one in Proposition \ref{AEE}.
If we express $\varphi$ as \eqref{vphiDec}, we obtain 
\begin{align}
\Ex_{\boldsymbol{X}}\varphi
=\underset{\boldsymbol{X}',\boldsymbol{X}''}{{\sum}'} |\boldsymbol{X}',\boldsymbol{X}''\rangle
\otimes \varphi_{\boldsymbol{X}',\boldsymbol{X}''},
\end{align}
 where 
 the right-hand side indicates that the sum is taken over 
 $|\boldsymbol{X}',\boldsymbol{X}''\rangle
\otimes \varphi_{\boldsymbol{X}',\boldsymbol{X}''}$
  such that \be E_{\boldsymbol{X}}|\boldsymbol{X}'\rangle=|\boldsymbol{X}'\rangle,\quad E_{\boldsymbol{X}}|\boldsymbol{X}''\rangle=|\boldsymbol{X}''\rangle.\label{RistXX}
  \ee
From Lemma \ref{phi max}, for  $\boldsymbol{X}'$ and ${\bs X}''$
satisfying \eqref{RistXX}, we have
 \be
X_f\setminus X_d=X_f'\setminus X_d'=X_f''\setminus X_d'',\quad X_d\setminus X_f=X_d'\setminus X_f'=X_d''\setminus X_f''.
\ee 
In addition, $\Ex_{\boldsymbol{X}} |\boldsymbol{X}', \boldsymbol{X}''\rangle=(E_{\bs X}\ket{\bs X\rq{}})\otimes(E_{\bs X}\ket{\bs X\rq{}\rq{}} )\neq0$ holds only if  $X_f=X_f'=X_f''$. 
Therefore, we obtain $X_d'\cap X_f'= X_f'\setminus(X_f'\setminus X_d')= X_f\setminus(X_f\setminus X_d)= X_d\cap X_f$. Similarly, we see that $X_d''\cap X_f''=X_d\cap X_f$ holds. 
From the above,  $\Ex_{\boldsymbol{X}} |\boldsymbol{X}',\boldsymbol{X}''\rangle\neq0$ is valid only if  $\boldsymbol{X}=\boldsymbol{X}'=\boldsymbol{X}''$.
Hence, (\ref{AEE-1}) holds. \qed

\section{Completion of the proof of the main theorems}\label{Sec9}

\subsection{Total spin of the ground state of $H$}

This subsection proves the following proposition:
\begin{Prop}\label{GTotalSP}
The ground state of $H$ has total spin $S=0$.
\end{Prop}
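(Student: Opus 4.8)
The plan is to run the Marshall--Lieb--Mattis (MLM) argument inside the operator--inequality framework that has already been set up. \emph{First}, I would record that $H$ carries a (rotated) $SU(2)$ symmetry. Since $\mathcal U=e^{-L_d}e^{-\im\frac\pi2N_{\mathrm p}}W$ intertwines $H$ and $\bs H$ up to a constant (Lemma \ref{HHPT}), $H$ commutes with $\bar S^{(i)}:=\mathcal U^*S^{(i)}_{\rm tot}\mathcal U=W^*S^{(i)}_{\rm tot}W$ ($i=1,2,3$): indeed $e^{L_d}$ commutes with every $S^{(i)}_{\rm tot}$ because the latter commute with each $n_x^d$, and $e^{-\im\frac\pi2N_{\mathrm p}}$ acts only on $\h_{\rm ph}$. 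The $\bar S^{(i)}$ obey the $\mathfrak{su}(2)$ relations, and a direct computation using \eqref{WProp} gives $\bar S^{(3)}=\tfrac12(N_{\rm e}-2|\vLa|)=0$ on $\h$ at half filling; hence $\bar{\bs S}^2:=\sum_i(\bar S^{(i)})^2=\bar S^{(-)}\bar S^{(+)}=\bar S^{(+)}\bar S^{(-)}$ on $\h$. As the ground state $\psi$ of $H\restriction\h$ is unique (Theorem \ref{pff} applied via Theorem \ref{MainPI}), it is an eigenvector of $\bar{\bs S}^2$, say $\bar{\bs S}^2\psi=S(S+1)\psi$, with $S\in\{0,1,2,\dots\}$ (an integer, since $\psi$ lies in the weight-zero space of $\bar S^{(3)}$). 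Equivalently $\psi_{\rm g}=\mathcal U\psi$ is the unique ground state of $\bs H$ in the half-filled $S^{(3)}_{\rm tot}=0$ subspace and has total spin $S$, so it remains to exclude $S\ge1$.

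\emph{Second}, I would reduce this to exhibiting a reference singlet inside the cone: it suffices to construct $\Omega\in\p\setminus\{0\}$ with $\bar{\bs S}^2\Omega=0$. Indeed $\psi>0$ w.r.t. $\p$, so $\la\psi|\Omega\ra>0$; but $\psi\in\ker(\bar{\bs S}^2-S(S+1))$ and $\Omega\in\ker\bar{\bs S}^2$, and these subspaces are orthogonal unless $S=0$. Thus $S=0$, and then $\bs S_{\rm tot}^2\psi_{\rm g}=\mathcal U\bar{\bs S}^2\psi=0$, which is the assertion. It is convenient to look for $\Omega$ of the form $\Omega=(W^*\chi)\otimes f$ with $f\in L^2_+(\BbbR^{|\vLa|})\setminus\{0\}$ and $\chi\in\h_{\rm e}$ a genuine spin singlet ($\bs S_{\rm tot}^2\chi=0$): then $\bar{\bs S}^2$ acts only on the electronic factor and $\bar{\bs S}^2\Omega=(W^*\bs S^2_{\rm tot}\chi)\otimes f=0$, while by \eqref{PExp2} one only needs $W^*\chi\ge0$ w.r.t. $\p_{\rm e}=\mathscr L^2_+(\mathfrak E)$.

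\emph{Third} comes the construction of $\chi$, and this is where the balanced bipartite structure enters. Viewing the $2|\vLa|$ orbitals as vertices, the $d$-orbitals form the bipartite graph $G_d$ with classes $\vLa_1,\vLa_2$, and each $f$-orbital is a pendant hybridized (via $V\neq0$) to the $d$-orbital at the same site; hence the full orbital graph is bipartite with classes $A=\{d\text{-orbitals in }\vLa_1\}\cup\{f\text{-orbitals in }\vLa_2\}$ and $B=\{d\text{-orbitals in }\vLa_2\}\cup\{f\text{-orbitals in }\vLa_1\}$, and crucially $|A|=|B|=|\vLa|$. I would take for $\chi$ the Lieb/MLM reference singlet attached to this partition (cf. \cite{Lieb1989,PhysRevLett.68.1030,PhysRevB.50.6246}) --- the $S^{(3)}_{\rm tot}=0$ component of the configuration in which $A$ is maximally polarized up and $B$ maximally polarized down, which is a spin singlet precisely because $|A|=|B|$ --- and then verify, exactly as in the computation of the two-point function signs in the proof of Theorem \ref{GSPart} (where $\gamma_x\gamma_y\,\mathcal U^*S_x^{d,(+)}S_y^{d,(-)}\mathcal U$ turned out to be positivity preserving), that $W^*\chi\ge0$ w.r.t. $\mathscr L^2_+(\mathfrak E)$. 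Concretely, in the identification \eqref{IdnE} and $\Psi_\vartheta$, the hole--particle twist $W$ acts by right multiplication by a fixed unitary, and the right choice of $\chi$ makes $W^*\chi$ a positive semidefinite element of $\mathscr L^2(\mathfrak E)$ (a suitable projection), by Proposition \ref{PPI2}.

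\emph{Main obstacle.} Steps one and two are routine bookkeeping; the difficulty is concentrated in the third step. The cone $\p$ is tailored to the $H$-frame, whereas ``total spin'' is intrinsically a $\bs H$-frame notion, so the reference singlet must be transported through $W$ (and, implicitly, through $e^{L_d}$, which does not factor across $\h_{\rm e}\otimes\h_{\rm ph}$). Checking that the transported singlet lands in the positive cone $\p$ --- rather than merely in the real subspace $\X_{\BbbR}$ --- is the crux, and it is here that the balanced bipartiteness $|A|=|B|=|\vLa|$ and Proposition \ref{PPI2} have to be combined with care, in close parallel with the argument already used for the correlation functions.
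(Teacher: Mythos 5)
Your Steps 1--2 are correct and amount to a rederivation of the paper's Lemma~\ref{Overlap2} (the ``positive overlap principle''): if one exhibits $\Omega\in\p\setminus\{0\}$ that is an exact eigenvector of $\bar{\bs S}^2$ with eigenvalue $0$, strict positivity of the ground state $\psi$ of $H\restriction\h$ forces $\la\psi|\Omega\ra>0$ and hence $S=0$.

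The gap is, as you anticipated, entirely in Step 3, and the explicit construction you propose does not work. The N\'eel configuration (all of $A$ up, all of $B$ down) already lies in the $S^{(3)}_{\rm tot}=0$ sector, so taking ``its $S^{(3)}_{\rm tot}=0$ component'' changes nothing --- and that state is \emph{not} a spin singlet when $|A|=|B|$: it is a superposition of all total spins $S=0,1,\dots,|\vLa|$ (already for $|A|=|B|=1$ one has $|{\up\down}\ra=\tfrac1{\sqrt2}(|S{=}0\ra+|S{=}1,M{=}0\ra)$). The parenthetical ``which is a spin singlet precisely because $|A|=|B|$'' is false. Your positivity computation for this state is actually fine --- $W^*|\text{N\'eel}\ra$ is, up to a phase, the rank-one projection $|\bs A\ra\la\bs A|\in\mathscr L^2_+(\mathfrak E)$ --- but because the N\'eel state has nonzero components in every $S$-sector, the inequality $\la\psi|\Omega\ra>0$ then yields no constraint on $S$. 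If you instead replace it by its projection onto the $S=0$ sector you get a genuine singlet, but there is no reason for $W^*$ of that projected state to lie in $\p_{\rm e}$, and establishing this would require real work; that is exactly the crux you flagged without resolving.

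The paper avoids the hand-built singlet by letting a reference \emph{Hamiltonian} furnish it. The ground state $\psi_B$ of $H_{\mathrm H}'=H_{\mathrm H}+\omega_0\Np$ (a Hubbard Hamiltonian on the doubled lattice $\vLa\sqcup\vLa$ with hybridization as hopping, plus free phonons) is known, via the generalized Lieb theorem applied to the balanced bipartition $\varXi_1=\vLa_1\sqcup\vLa_2$, $\varXi_2=\vLa_2\sqcup\vLa_1$, to be a spin singlet with $W^*\psi_B>0$ w.r.t.\ $\p$ (Lemmas~\ref{HubbardPI} and \ref{RefHamiP}). Then Lemma~\ref{Overlap2} with $V_1=e^{-L_d}e^{-\im\frac{\pi}{2}N_{\mathrm p}}$, $V_2=W$, $A=H$, $B=H_{\mathrm H}'$ immediately gives $S=S_B=0$. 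In other words, the right move for your Step 3 is to take $\chi=\psi_B$ and invoke Lieb's theorem for both the singlet property and the cone positivity, rather than attempting an explicit combinatorial construction.
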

We employ the overlap principle developed by the authors in \cite{MIYAO2021168467} to prove Proposition \ref{GTotalSP}.
For this purpose, we introduce the following Hamiltonian:
\begin{align}
H_{\mathrm{H}}
=\sum_{x,y\in\vLa}\sum_{\sigma=\up, \down}t_{x,y}d_{x, \sigma}^*d_{y, \sigma}
+\sum_{x\in\vLa}\sum_{\sigma=\up, \down}(f_{x, \sigma}^*d_{x, \sigma}+ d_{x, \sigma}^*f_{x, \sigma})
+\sum_{x\in\vLa}(n_{x, \up}^dn_{x, \down}^d +n_{x, \up}^fn_{x, \down}^f)
\end{align}
Note that $H_{\rm H}$ acts in  $\h_{\rm e}$.

\begin{Lemm}\label{HubbardPI}
For every  $\beta>0$, it holds that $\exp[-\beta W^*H_{\mathrm{H}}W]\rhd0\wrt\p_{\rm e}$, where  $W$ is the hole-particle transformation satisfying \eqref{WProp}. Furthemore,  the ground state of  $H_{\rm H}$ is unique and has total spin $S=0$.
\end{Lemm}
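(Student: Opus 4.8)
The plan is to recognize $H_{\mathrm H}$ as a half-filled Hubbard Hamiltonian on the auxiliary lattice $\tilde\vLa:=\vLa\sqcup\vLa$ and to reuse, in this phonon-free situation, the machinery of Sections~\ref{Sec3}--\ref{Sec9}. Let one copy of $\vLa$ carry the $d$-orbitals and the other the $f$-orbitals; the nonzero $t_{x,y}$, together with the on-site hybridizations $f_x^*d_x+d_x^*f_x$, turn $\tilde\vLa$ into a connected graph, and the decomposition $\vLa=\vLa_1\cup\vLa_2$ of \hyperlink{A2}{\bf (A. 2)} induces a bipartition of $\tilde\vLa$ into two classes of cardinality $|\vLa|$ each. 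Thus $H_{\mathrm H}$ is, after the gauge unitary $\prod_x\gamma_x^{\,n_x^d+n_x^f}$ (which reverses the signs of the $t_{x,y}$, using bipartiteness) and up to an additive constant, the instance of $\bs H_d$ with $g=0$, $U^d=U^f=V=1$, $\varepsilon_f=0$ (the phonon sector then decouples trivially), a parameter choice satisfying the hypotheses of Theorem~\ref{MainTh1}(i).

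First I would record the decomposition playing here the role of Lemma~\ref{HHPT}: since $g=0$ the transformation $\mathcal U$ reduces to $W$, and, using $\tfrac12\sum_x(n_x^d+n_x^f)=\tfrac12N_{\mathrm e}=|\vLa|$ on $\h_{\rm e}$, one gets $W^*H_{\mathrm H}W=H_0^{\mathrm H}-R^{\mathrm H}+\text{const}$, where $H_0^{\mathrm H}=\hat h\otimes\mathbbm 1+\mathbbm 1\otimes\vartheta\hat h\vartheta$ with the self-adjoint $\hat h=\sum_{x,y}t_{x,y}\hd_x^*\hd_y+\sum_x(\hf_x^*\hd_x+\hd_x^*\hf_x)+\tfrac12\sum_x\hat n_x^d$ on $\mathfrak E$, and $R^{\mathrm H}=\sum_x n_{x,\up}^d n_{x,\down}^d+\tfrac12\sum_x\big(n_{x,\up}^f n_{x,\down}^f+(\mathbbm 1-n_{x,\up}^f)(\mathbbm 1-n_{x,\down}^f)\big)$. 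By the corollary of Proposition~\ref{PPI2} we have $e^{-\beta H_0^{\mathrm H}}\unrhd0$ \wrt $\p_{\rm e}$; by \eqref{fnPP} and Lemma~\ref{RProp}, $R^{\mathrm H}\unrhd0$ \wrt $\p_{\rm e}$; hence Lemma~\ref{ppiexp1} gives $e^{-\beta W^*H_{\mathrm H}W}\unrhd e^{-\beta H_0^{\mathrm H}}\unrhd0$. Next I would obtain ergodicity by transcribing Sections~\ref{Sec4}--\ref{Sec8} with the phonon factors trivial: here $\VP\equiv0$, $H_0$ carries no $\omega_0\Np$ term, the configuration graph $\mathscr G$ remains connected (Proposition~\ref{Connect}), and the analogues of Propositions~\ref{BasePP}, \ref{ISP}, \ref{BP1}, \ref{BP2}, \ref{BP3} go through unchanged, so that $\{e^{-\beta W^*H_{\mathrm H}W}\}_{\beta\ge0}$ is ergodic \wrt $\p_{\rm e}$ by the argument of Theorem~\ref{MainPI}.

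The hard part will be upgrading ergodicity to the strict statement $e^{-\beta W^*H_{\mathrm H}W}\rhd0$ for \emph{every} $\beta>0$. For this I would expand by Duhamel, $e^{-\beta W^*H_{\mathrm H}W}=\sum_{n\ge0}\mathscr D_n(\beta)$ as in the proof of Lemma~\ref{SemiOpLow}, and bound a single term below: for a suitable fixed $n=n(\vLa)$ and a suitable time-slicing one has $e^{-\beta W^*H_{\mathrm H}W}\unrhd c\sum_j\mathcal L(B_j)\mathcal R(B_j^*)$ \wrt $\p_{\rm e}$, where each $B_j\in\mathscr B(\mathfrak E)$ is a product of free propagators $e^{-t\hat h}$ and single-site number projections $\hat n_x^d,\hat n_x^f,\mathbbm 1-\hat n_x^f$. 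Because the hopping graph $\tilde\vLa$ is connected, these products — as the choices vary — span all of $\mathscr B(\mathfrak E)$ (the operator-algebraic shadow of the connectedness of $\mathscr G$; cf.\ Propositions~\ref{Connect} and \ref{BP3}), and an operator of the form $\xi\mapsto\sum_j B_j\xi B_j^*$ with $\{B_j\}$ spanning $\mathscr B(\mathfrak E)$ sends $\p_{\rm e}\setminus\{0\}$ into the strictly positive operators; this gives $e^{-\beta W^*H_{\mathrm H}W}\rhd0$.

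Finally, the ground state: by the positivity-improving property and Theorem~\ref{pff}, the ground state of $W^*H_{\mathrm H}W$ in $\h_{\rm e}$ — hence of $H_{\mathrm H}$ in the $M=0$ sector — is unique and strictly positive \wrt $\p_{\rm e}$. That its total spin is $S=0$ then follows from Lieb's theorem \cite{Lieb1989}: $H_{\mathrm H}$ is the half-filled Hubbard Hamiltonian on the connected bipartite lattice $\tilde\vLa$ with positive on-site repulsion $U\equiv1$, whose ground state is unique with $S=\tfrac12\big||\tilde\vLa_1|-|\tilde\vLa_2|\big|=0$ — equivalently, the $g=0$ special case of Proposition~\ref{GTotalSP} read off from Sections~\ref{Sec3} and \ref{Sec9}.
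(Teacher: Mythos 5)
Your identification of $H_{\rm H}$ as the half-filled Hubbard Hamiltonian on the enlarged bipartite lattice $\vLa\sqcup\vLa$ — with bipartition $\varXi_1=\vLa_1\sqcup\vLa_2$, $\varXi_2=\vLa_2\sqcup\vLa_1$, connected via the $t_{x,y}$ and the on-site hybridizations — is exactly the paper's opening observation, and the $S=\tfrac12\big||\varXi_1|-|\varXi_2|\big|=0$ computation via Lieb's theorem at the end is likewise the paper's. Where you diverge is the positivity-improving claim: the paper does not re-derive it but invokes the generalized Lieb theorem established in \cite{Miyao2012, Miyao2019}, which says directly that the hole-particle-transformed semigroup of such a half-filled Hubbard model on a connected bipartite lattice is positivity improving for every $\beta>0$. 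That one citation is the whole content of the positivity part of the lemma, and all its hypotheses are met here.

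Your replacement for that citation has a genuine gap. As you acknowledge, transcribing Sections~\ref{Sec4}--\ref{Sec8} gives at best \emph{ergodicity} of $\{e^{-\beta W^*H_{\rm H}W}\}_{\beta\ge0}$ — i.e., for each pair $u,v\in\p_{\rm e}\setminus\{0\}$ some $t(u,v)$ with $\la u|e^{-tW^*H_{\rm H}W}v\ra>0$ — which is strictly weaker than $e^{-\beta W^*H_{\rm H}W}\rhd0$ at \emph{every} $\beta>0$, the property the lemma asserts and Lemma~\ref{RefHamiP} then uses. Your ``upgrade'' rests on two unproved assertions: (a) that a Duhamel time-slicing yields $e^{-\beta W^*H_{\rm H}W}\unrhd c\sum_j\mathcal{L}(B_j)\mathcal{R}(B_j^*)$ w.r.t.\ $\p_{\rm e}$ for a \emph{finite} family $\{B_j\}$, and (b) that these $B_j$, built from free propagators $e^{-t\hat h}$ and number projections, span $\mathscr{B}(\mathfrak{E})$. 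Neither is a consequence of anything in this paper: connectedness of $\mathscr{G}$ (Proposition~\ref{Connect}) only provides, between any two configurations, a single path of $d$- and $(d,f)$-edges, and the paper uses it only to bound one carefully chosen Duhamel term from below (Lemma~\ref{SemiOpLow}), not to produce a spanning family. Filling in (a) and (b) rigorously would amount to reproving the theorem of \cite{Miyao2012, Miyao2019}, so the citation is not avoidable with the sketch as given. Two minor points: the detour through $\bs H_d$ at $g=0$, $U^d=U^f=V=1$, $\varepsilon_f=0$ together with the gauge unitary $\prod_x\gamma_x^{n_x^d+n_x^f}$ is correct but unnecessary — the cited Lieb-type theorem is insensitive to the sign of $t_{x,y}$, and dragging a trivially decoupled phonon factor into a statement that lives on $\h_{\rm e}$ only obscures it — and the decomposition $W^*H_{\rm H}W=H_0^{\rm H}-R^{\rm H}+\mathrm{const}$ you record, while a valid specialization of Lemma~\ref{HHPT}, gives (via Lemma~\ref{ppiexp1}) only positivity \emph{preserving}, not the needed improving.
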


\begin{proof}
We outline the proof.   $H_{\rm H}$ can be regarded as the Hubbard Hamiltonian  on the enlarged lattice 
$\varXi=\vLa\sqcup \vLa$\footnote{A similar idea is also used in \cite{PhysRevB.50.6246}.}, 
where the bipartite structure of $\varXi$ is given by
$\varXi=\varXi_1 \cup \varXi_2\ (\varXi_1=\vLa_1\sqcup \vLa_2, \ \varXi_2=\vLa_2\sqcup \vLa_1)$. 
Therefore, the generalized Lieb's theorem \cite{Miyao2012, Miyao2019} can be applied, and $\exp[-\beta W^*H_{\mathrm{H}}W]\rhd0\wrt\p_{\rm e}\ (\beta >0)$  follows.
 From Theorem \ref{pff}, the ground state is unique, and from Lieb's theorem, the total spin of the ground state is given by 
\be
S=\frac{1}{2} \big||\varXi_1|-|\varXi_2|\big|=\frac{1}{2}\big||\vLa_1|+|\vLa_2|-|\vLa_2|-|\vLa_1|\big|=0.
\ee
\end{proof}

Define the reference Hamiltonian as:
\begin{align}
H_{\mathrm{H}}'
=H_{\mathrm{H}} +\omega_0\Np.
\end{align}
$H_{\mathrm{H}}'$ acts in $\h$.

\begin{Lemm}\label{RefHamiP}
$\exp[-\beta W^*H_{\mathrm{H}}'W]\rhd0\wrt\p$ holds  for all $\beta>0$, and thus the ground state of $H_{\mathrm{H}}'$ is unique. 
Furthermore, the ground state has total spin $S=0$.
\end{Lemm}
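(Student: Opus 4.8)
The plan is to reduce the statement for $H'_{\mathrm H}=H_{\mathrm H}+\omega_0\Np$ to the already-proved positivity improving property of $e^{-\beta W^*H_{\mathrm H}W}$ on $\p_{\rm e}$ (Lemma \ref{HubbardPI}) by tensoring with the phonon sector. Recall the identification $\h=\int^{\oplus}_{\BbbR^{|\vLa|}}\h_{\rm e}\,d\bq$ and $\p=\int^{\oplus}_{\BbbR^{|\vLa|}}\p_{\rm e}\,d\bq$, and that under this identification $W^*H_{\mathrm H}W$ acts as $W^*H_{\mathrm H}W\otimes\mathbbm 1$ while $\omega_0\Np=\mathbbm 1\otimes(\omega_0\Np)$, the two commuting. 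Hence
\be
e^{-\beta W^*H'_{\mathrm H}W}=e^{-\beta W^*H_{\mathrm H}W}\otimes e^{-\beta\omega_0\Np}.
\ee
First I would note that $e^{-\beta\omega_0\Np}\rhd 0$ w.r.t.\ $L^2_+(\BbbR^{|\vLa|})$ for every $\beta>0$ (the Mehler kernel of the harmonic oscillator semigroup is strictly positive), and $e^{-\beta W^*H_{\mathrm H}W}\rhd 0$ w.r.t.\ $\p_{\rm e}$ for every $\beta>0$ by Lemma \ref{HubbardPI}.

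Next I would establish that a tensor product of two positivity improving operators is positivity improving w.r.t.\ the direct-integral (equivalently, tensor-product) Hilbert cone $\p=\int^{\oplus}_{\BbbR^{|\vLa|}}\p_{\rm e}\,d\bq$. Concretely: let $A\rhd 0$ w.r.t.\ $\p_{\rm e}$ and $C\rhd 0$ w.r.t.\ $L^2_+(\BbbR^{|\vLa|})$; I claim $A\otimes C\rhd 0$ w.r.t.\ $\p$. By Proposition \ref{BasicDP2}, $A\otimes\mathbbm 1\unrhd 0$ and $\mathbbm 1\otimes C\unrhd 0$ w.r.t.\ $\p$, so $A\otimes C\unrhd 0$ w.r.t.\ $\p$ by Lemma \ref{PPBasic}(iv). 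For the strict improvement, take $\Phi\in\p\setminus\{0\}$ and write it through \eqref{PExp2} as a norm limit of conical combinations $\sum_j \psi_j\otimes f_j$ with $\psi_j\in\p_{\rm e}$, $f_j\in L^2_+(\BbbR^{|\vLa|})$; after passing to the fibered picture $\Phi=\int^{\oplus}\Phi(\bq)\,d\bq$ with $\Phi(\bq)\ge 0$ w.r.t.\ $\p_{\rm e}$ a.e.\ and $\Phi\neq 0$ on a set of positive measure. For any $\Psi\in\p\setminus\{0\}$ with $\Psi(\bq)\ge 0$ w.r.t.\ $\p_{\rm e}$ a.e., using that $C$ is positivity improving on $L^2_+$ together with Fubini one checks $\la\Psi|(A\otimes C)\Phi\ra=\int\!\!\int K_C(\bq,\bq')\la\Psi(\bq)|A\Phi(\bq')\ra_{\mathfrak E}\,d\bq\,d\bq'$ with $K_C>0$ and each integrand $\ge 0$ (since $A\unrhd 0$ on $\p_{\rm e}$); picking $\bq,\bq'$ in positive-measure sets where $\Psi(\bq)\neq 0$, $\Phi(\bq')\neq 0$, the fact that $A$ is positivity improving gives $\la\Psi(\bq)|A\Phi(\bq')\ra_{\mathfrak E}>0$ on a set of positive measure, so the double integral is strictly positive. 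Hence $A\otimes C\rhd 0$ w.r.t.\ $\p$. Applying this with $A=e^{-\beta W^*H_{\mathrm H}W}$ and $C=e^{-\beta\omega_0\Np}$ yields $e^{-\beta W^*H'_{\mathrm H}W}\rhd 0$ w.r.t.\ $\p$ for all $\beta>0$.

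From here the ground-state conclusions are immediate: since $H'_{\mathrm H}$ is self-adjoint and bounded below (Kato--Rellich, as for $\bs H$) and its semigroup, after the unitary conjugation by $W$, is positivity improving hence ergodic w.r.t.\ $\p$, Theorem \ref{pff} gives that the ground state of $W^*H'_{\mathrm H}W$ is unique and strictly positive w.r.t.\ $\p$; conjugating back, the ground state of $H'_{\mathrm H}$ is unique. For the total spin, observe that $\Np$ commutes with all the spin operators $S^{(i)}_{\rm tot}$, so adding $\omega_0\Np$ does not change the spin structure: the ground state of $H'_{\mathrm H}$ is of the form $\psi_{\mathrm H}\otimes\Omega_{\rm ph}$ where $\psi_{\mathrm H}$ is the (unique, $S=0$) ground state of $H_{\mathrm H}$ from Lemma \ref{HubbardPI} and $\Omega_{\rm ph}$ is the Gaussian ground state of $\omega_0\Np$, so it again has total spin $S=0$. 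I expect the only genuinely delicate point to be the measure-theoretic bookkeeping in the tensor-product positivity-improving argument (making the Fubini/positive-measure selection rigorous); everything else is a direct assembly of the lemmas already established, in particular Propositions \ref{BasicDP}, \ref{BasicDP2} and Theorem \ref{pff}.
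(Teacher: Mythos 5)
Your proposal is correct and follows the same structure as the paper's proof: factor $e^{-\beta W^*H'_{\mathrm H}W}=e^{-\beta W^*H_{\mathrm H}W}\otimes e^{-\beta\omega_0\Np}$, invoke Lemma \ref{HubbardPI} for the electronic factor and strict positivity of the Mehler kernel for the phonon factor, conclude positivity improving, then apply Theorem \ref{pff} and read off $S=0$ from the factorized ground state. The one difference is that the paper dispatches the step ``tensor product of two positivity improving operators is positivity improving w.r.t.\ $\p=\int^{\oplus}\p_{\rm e}\,d\bq$'' by citing Corollary~I.8 of \cite{Miyao2019}, whereas you prove it by hand using the integral-kernel representation of $e^{-\beta\omega_0\Np}$ and a Fubini/positive-measure argument; your direct argument is sound for this specific $C$ since the Mehler kernel is strictly positive, and it has the advantage of being self-contained, though it is narrower than the abstract corollary the paper invokes (which does not presuppose a pointwise positive kernel). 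One very minor cosmetic point: $\h_{\rm e}$ is finite-dimensional, so $H_{\mathrm H}$ is bounded and $H'_{\mathrm H}=H_{\mathrm H}+\omega_0\Np$ is self-adjoint and bounded below without appeal to Kato--Rellich.
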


\begin{proof}
Note that 
$e^{-\beta W^* H_{\mathrm{H}}' W}=e^{-\beta W H_{\rm H} W} \otimes e^{-\beta \omega_0 \Np}$ holds. 
Considering Lemma \ref{HubbardPI} and the fact that $e^{-\beta \omega_0 \Np} \rhd 0$ w.r.t. $L^2_+(\BbbR^{|\vLa|})\ (\beta>0)$, we can apply \cite[Corollary I.8]{Miyao2019} and conclude that $\exp[-\beta W^*H_{\mathrm{H}}'W]\rhd0\wrt\p$ for all $\beta>0$. Thus, the uniqueness of the ground state follows from Theorem \ref{pff}.
The claim for total spin follows immediately from Lemma \ref{HubbardPI}.
\end{proof}

\begin{Lemm}[Positive overlap principle]\label{Overlap2}
Let 
$A $ and $B$ be  positive  self-adjoint operators on 
$
\h
$.
 Let $V_1$ and $V_2$ be unitary operators on $\h$.
 We assume the following: 
\begin{itemize}
\item[{\rm (i)}] $A$ and $B$ commute with the total spin operators $S_{\rm tot}^{(1)}, S_{\rm tot}^{(2)}$ and $S_{\rm tot}^{(3)}$.
\item[{\rm (ii)}] Let 
$V=V_1V_2$.
$\{ e^{-\beta V^*AV} \}_{\beta \ge 0}$ and $\{e^{-\beta V_2^*BV_2}\}_{\beta \ge 0}$ are ergodic w.r.t. $\p$. Hence, the ground state of each of $V^*AV$ and $V_2^*BV_2$ is unique and strictly positive w.r.t. $\p$ due to Theorem \ref{pff}. 
\item[{\rm (iii)}] $V_1$ commutes with $\boldsymbol{S}_{\rm{}tot}^2$.
\item[\rm (iv)] $\inf \mathrm{spec}(A)$ and $\inf \mathrm{spec}(B)$ are eigenvalues of $A$ and $B$, respectively.
\end{itemize}
We denote by $S_A$ (resp. $S_B$) the total spin of the ground state of $A$ (resp. $B$).
Then we have $S_A=S_B$.
\end{Lemm}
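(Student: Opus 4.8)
The plan is to run the standard positive–overlap argument powered by the Perron--Frobenius--Faris theorem (Theorem \ref{pff}). The key mechanism is that the two ergodic semigroups force $V^*AV$ and $V_2^*BV_2$ to have one–dimensional, \emph{strictly positive} ground states; a strictly positive vector has strictly positive inner product with any nonzero element of $\p$, and this positivity is inherited, after undoing the unitaries, by the overlap $\la\psi_A|V_1\psi_B\ra$ of the genuine ground states of $A$ and $B$. Since $V_1$ respects the Casimir operator $\bs S_{\rm tot}^2$, a nonzero overlap is possible only when $\psi_A$ and $\psi_B$ carry the same total spin.

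First I would record the ground–state structure. By hypothesis (iv), $E(A)=\inf\mathrm{spec}(A)$ is an eigenvalue of $A$; since $\mathrm{spec}(V^*AV)=\mathrm{spec}(A)$ and eigenvectors transform by $\phi\mapsto V^*\phi$, $E(A)$ is also an eigenvalue of $V^*AV$, so hypothesis (ii) together with Theorem \ref{pff} gives $\dim\ker(V^*AV-E(A))=1$ with a generator $\Omega_A$ strictly positive w.r.t. $\p$. Pulling back by the unitary $V$, we get $\dim\ker(A-E(A))=1$, and I fix the ground state of $A$ to be $\psi_A:=V\Omega_A$, so that $\Omega_A=V^*\psi_A$. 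Because $A$ commutes strongly with $S_{\rm tot}^{(1)},S_{\rm tot}^{(2)},S_{\rm tot}^{(3)}$ it commutes with $\bs S_{\rm tot}^2$, which therefore leaves the one-dimensional space $\ker(A-E(A))$ invariant; hence $\psi_A$ is an eigenvector of $\bs S_{\rm tot}^2$ with some total spin $S_A$. Running the identical argument with $B$, $V_2$ in place of $A$, $V$ yields a ground state $\Omega_B=V_2^*\psi_B$ strictly positive w.r.t. $\p$, with $\psi_B$ of total spin $S_B$; these are precisely the spins appearing in the statement.

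Next I would compute the overlap and conclude. A vector strictly positive w.r.t. $\p$ lies in $\p$ (the Hilbert cone $\p$ is self-dual), so $\Omega_B\in\p\setminus\{0\}$, and strict positivity of $\Omega_A$ gives $\la\Omega_A|\Omega_B\ra>0$. On the other hand, $\la\Omega_A|\Omega_B\ra=\la V^*\psi_A|V_2^*\psi_B\ra=\la\psi_A|VV_2^*\psi_B\ra=\la\psi_A|V_1\psi_B\ra$, using $VV_2^*=V_1V_2V_2^*=V_1$ ($V_2$ unitary); hence $\la\psi_A|V_1\psi_B\ra>0$. By hypothesis (iii), $V_1$ commutes with $\bs S_{\rm tot}^2$, so $V_1\psi_B$ again has total spin $S_B$. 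If $S_A\neq S_B$, then $\psi_A$ and $V_1\psi_B$ belong to distinct, hence mutually orthogonal, eigenspaces of $\bs S_{\rm tot}^2$, forcing $\la\psi_A|V_1\psi_B\ra=0$ and contradicting the previous line. Therefore $S_A=S_B$.

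There is no genuinely hard step here; the lemma is a packaging of elementary facts, and the only points demanding attention are bookkeeping ones: verifying that $E(A)$ (resp.\ $E(B)$) really is an eigenvalue of the conjugated operator so that Theorem \ref{pff} applies, choosing normalizations so that $\Omega_A=V^*\psi_A$ (resp.\ $\Omega_B=V_2^*\psi_B$) holds exactly rather than merely up to a scalar, and invoking self-duality of $\p$ to know that a strictly positive vector sits inside $\p$. The conceptual content lies entirely in the two inputs already granted by the hypotheses — ergodicity of the two deformed semigroups and the fact that $V_1$ preserves $\bs S_{\rm tot}^2$ — and once those are in hand the argument is the three short paragraphs above.
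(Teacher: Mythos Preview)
Your proof is correct and is exactly the standard positive--overlap argument one would expect. The paper itself does not give a proof of this lemma but simply cites \cite[Lemma 4.9]{MIYAO2021168467}; your write-up is a faithful reconstruction of that argument, and the only point worth flagging is that the self-duality of the Hilbert cone $\p$ (used to conclude $\Omega_B\in\p$) is indeed a consequence of Definition~\ref{DefPPInq}, though it is not stated explicitly in the paper.
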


\begin{proof}
See \cite[Lemma 4.9]{MIYAO2021168467}.
\end{proof}

\subsubsection*{Proof of Proposition \ref{GTotalSP}}
Taking Theorem \ref{MainPI} and Lemma \ref{RefHamiP} into consideration, we can apply 
 Lemma \ref{Overlap2} with 
$V_1=e^{-L_d}e^{-\im\frac{\pi}{2} N_{\mathrm{p}}}$,  $V_2=W$,  $V=V_1 V_2=\mathcal{U}$,
$A=H$ and $ B= H_{\mathrm{H}}'$.   \qed

\subsection{Proof of Theorems \ref{MainTh1} and \ref{MainTh2}}
So far, we have studied the operator $\bs H\restriction \h$ which restricts $\bs H$ to the $M=0$ subspace $\h$ of $\F_{{\rm e}, 2|\vLa|}\otimes \h_{\rm ph}$.
From the general theory of spin angular momentum, it is shown that if the ground state of $\bs H$ in the $M=0$ subspace is unique and has total spin $S$, then the ground state of $\bs H$ is $2S+1$-fold degenerate by the conservation law of total spin.
In the present case, since $S=0$ follows from Proposition \ref{GTotalSP}, we know that the ground state of $\bs H$ is unique. Consequently, this proves Theorem \ref{MainTh1}. Theorem \ref{MainTh2} follows immediately from Theorem \ref{GSPart}.

The above proof covers $\bs H=\bs H_d$. As discussed in Remark \ref{Reason}, the arguments above are also valid for $\bs H_f$ with appropriate modifications.
\qed

\appendix
\section{Proof of Proposition \ref{Projection3}}\label{PfProjection3}

\subsection{Preliminaries}
In this appendix, we will prove Proposition \ref{Projection3}.
For this purpose, some preparations are necessary.

For each $x\in \vLa$ and $\sigma=\up, \down$, we set
\be
v_{x, \sigma}^-=f_{x, \sigma}^*d_{x, \sigma}, \quad v_{x, \sigma}^+=d_{x, \sigma}^*f_{x, \sigma}.
\ee
Obviously, $B_{x, x;  \sigma}=V v_{x, \sigma}^-$
and additionally it holds that 
\begin{align}
v_{x, \sigma}^- v^+_{x, \sigma}=n_{x, \sigma}^f\overline{n}_{x, \sigma}^d, \ \ v_{x, \sigma}^+v_{x, \sigma}^-=\overline{n}_{x, \sigma}^f n_{x, \sigma}^d,\ (v_{x, \sigma}^-)^2=0, \label{vBase1}\\
[v_{x, \sigma}^-, v^-_{y, \tau}]=0=[v^-_{x, \sigma}, v_{y, \tau}^+], \ \ \ \mbox{$x\neq y$ or $\sigma\neq \tau$}. 
\end{align}

\begin{Lemm}\label{Projection1}
Given $x_1, x_2, y_1, y_2\in \vLa$, set $\delta_{(x_1, x_2), (y_1, y_2)}=\delta_{x_1, y_1} \delta_{x_2, y_2}$.  Then the following hold:
\begin{align}
n_{x,  \sigma}^f v^-_{y, \tau}\overline{n}_{x,  \sigma}^f
&=  \delta_{\sigma, \tau} \delta_{x, y}v_{x, \sigma}^-,  \label{f1}\\
n_{x,  \sigma}^f v_{y, \tau}^+\overline{n}_{x,  \sigma}^f
&=  0,  \label{f3}\\
\overline{n}_{x,  \sigma}^f v_{y, \tau}^+ n_{x,  \sigma}^f
&=\delta_{\sigma, \tau} \delta_{x, y}v_{x, \sigma}^+, \label{f3} \\
\overline{n}_{x,  \sigma}^f v_{y, \tau}^- n_{x,  \sigma}^f
&=0,  \label{f4}\\
n_{x,  \sigma}^f v_{y_1, \tau_1}^- \overline{n}_{x,  \sigma}^f v_{y_1, \tau_2}^+  n_{x,  \sigma}^f
&= \delta_{(y_1, y_2), (x, x)}  \delta_{(\tau_1, \tau_2), (\sigma, \sigma)} n_{x,  \sigma}^f\overline{n}_{x,  \sigma}^d , \label{f5}\\
n_{x,  \sigma}^f v_{y_1, \tau_1}^- \overline{n}_{x,  \sigma}^f v_{y_1, \tau_2}^- n_{x,  \sigma}^f
&= n_{x,  \sigma}^f v_{y_1, \tau_1}^+  \overline{n}_{x,  \sigma}^f v_{y_1, \tau_2}^+ n_{x,  \sigma}^f
= n_{x,  \sigma}^f v_{y_1, \tau_1}^+  \overline{n}_{x,  \sigma}^f v_{y_1, \tau_2}^- n_{x,  \sigma}^f=0, 
\label{f6} \\
\overline{n}_{x,  \sigma}^f v_{y_1, \tau_1}^+  n_{x,  \sigma}^f v_{y_1, \tau_2}^-\overline{n}_{x,  \sigma}^f
&=\delta_{(y_1, y_2), (x, x)}  \delta_{(\tau_1, \tau_2), (\sigma, \sigma)} \overline{n}_{x, \sigma}^f n_{x, \sigma}^d,\label{f7}\\
\overline{n}_{x,  \sigma}^f v_{y_1, \tau_1}^-  n_{x,  \sigma}^f v_{y_1, \tau_2}^- \overline{n}_{x,  \sigma}^f
&=\overline{n}_{x,  \sigma}^f v_{y_1, \tau_1}^+ n_{x,  \sigma}^f v_{y_1, \tau_2}^+\overline{n}_{x,  \sigma}^f
=\overline{n}_{x,  \sigma}^f v_{y_1, \tau_1} ^+n_{x,  \sigma}^f v_{y_1, \tau_2}^-\overline{n}_{x,  \sigma}^f=0.\label{f8}
\end{align}

\end{Lemm}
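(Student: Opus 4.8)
The plan is to reduce all of \eqref{f1}--\eqref{f8} to a short list of on-site \emph{absorption rules} for the fermion number operators together with the idempotency of the projections, and then run a uniform case analysis. The only input needed from the canonical anticommutation relations is the following on-site algebra: since $\{f_{x,\sigma},f_{x,\sigma}^*\}=\mathbbm{1}$ and $f_{x,\sigma}^2=(f_{x,\sigma}^*)^2=0$, one has $f_{x,\sigma}f_{x,\sigma}^*=\overline{n}_{x,\sigma}^f$, hence $n_{x,\sigma}^f f_{x,\sigma}^*=f_{x,\sigma}^*=f_{x,\sigma}^*\overline{n}_{x,\sigma}^f$, $\overline{n}_{x,\sigma}^f f_{x,\sigma}^*=0=f_{x,\sigma}^* n_{x,\sigma}^f$, $\overline{n}_{x,\sigma}^f f_{x,\sigma}=f_{x,\sigma}=f_{x,\sigma}n_{x,\sigma}^f$, and $n_{x,\sigma}^f f_{x,\sigma}=0=f_{x,\sigma}\overline{n}_{x,\sigma}^f$. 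Besides these, I would use that $n_{x,\sigma}^f$ and $\overline{n}_{x,\sigma}^f=\mathbbm{1}-n_{x,\sigma}^f$ are orthogonal projections with $n_{x,\sigma}^f\overline{n}_{x,\sigma}^f=0$, and that both commute with every $d$-operator and with each $f_{y,\tau}^{(*)}$ whenever $(y,\tau)\neq(x,\sigma)$. Throughout I read the second $v$-factor in \eqref{f5}--\eqref{f8} as carrying the index $(y_2,\tau_2)$, which is what makes the Kronecker symbols on the right-hand sides meaningful.

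First I would dispatch the single-operator identities \eqref{f1}--\eqref{f4} by splitting into the off-diagonal case $(y,\tau)\neq(x,\sigma)$ and the diagonal case $(y,\tau)=(x,\sigma)$. In the off-diagonal case $n_{x,\sigma}^f$ and $\overline{n}_{x,\sigma}^f$ commute past $v_{y,\tau}^{\pm}=f_{y,\tau}^{(*)}d_{y,\tau}^{(*)}$, so each expression collapses to a multiple of $n_{x,\sigma}^f\overline{n}_{x,\sigma}^f=0$, giving the $\delta$-factors. In the diagonal case one substitutes $v_{x,\sigma}^-=f_{x,\sigma}^*d_{x,\sigma}$, $v_{x,\sigma}^+=d_{x,\sigma}^*f_{x,\sigma}$, slides the $d$-operators freely past the projections, and applies the absorption rules to the $f$-operators; e.g. $n_{x,\sigma}^f v_{x,\sigma}^-\overline{n}_{x,\sigma}^f=n_{x,\sigma}^f f_{x,\sigma}^*\overline{n}_{x,\sigma}^f d_{x,\sigma}=f_{x,\sigma}^*d_{x,\sigma}=v_{x,\sigma}^-$ yields \eqref{f1}, while $n_{x,\sigma}^f v_{x,\sigma}^+\overline{n}_{x,\sigma}^f=d_{x,\sigma}^*(n_{x,\sigma}^f f_{x,\sigma})\overline{n}_{x,\sigma}^f=0$ and $\overline{n}_{x,\sigma}^f v_{x,\sigma}^+ n_{x,\sigma}^f=d_{x,\sigma}^*\overline{n}_{x,\sigma}^f f_{x,\sigma}n_{x,\sigma}^f=d_{x,\sigma}^*f_{x,\sigma}=v_{x,\sigma}^+$ yield \eqref{f3}; \eqref{f4} follows from $\overline{n}_{x,\sigma}^f f_{x,\sigma}^*=0$ in the same way.

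For the two-operator identities \eqref{f5}--\eqref{f8}, the idea is to insert a copy of the middle projection (using $(\overline{n}_{x,\sigma}^f)^2=\overline{n}_{x,\sigma}^f$, resp. $(n_{x,\sigma}^f)^2=n_{x,\sigma}^f$) so that each expression factors as a product of two single-operator blocks already evaluated above. For instance $n_{x,\sigma}^f v_{y_1,\tau_1}^-\overline{n}_{x,\sigma}^f v_{y_2,\tau_2}^+ n_{x,\sigma}^f=(n_{x,\sigma}^f v_{y_1,\tau_1}^-\overline{n}_{x,\sigma}^f)(\overline{n}_{x,\sigma}^f v_{y_2,\tau_2}^+ n_{x,\sigma}^f)$, which by \eqref{f1} and \eqref{f3} vanishes unless $(y_1,\tau_1)=(y_2,\tau_2)=(x,\sigma)$, and then equals $v_{x,\sigma}^- v_{x,\sigma}^+=n_{x,\sigma}^f\overline{n}_{x,\sigma}^d$ by \eqref{vBase1}; this is \eqref{f5}, and \eqref{f7} is the mirror computation producing $v_{x,\sigma}^+ v_{x,\sigma}^-=\overline{n}_{x,\sigma}^f n_{x,\sigma}^d$. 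The remaining combinations in \eqref{f6} and \eqref{f8} all vanish because one of the two factored blocks is already identically zero (namely $n_{x,\sigma}^f v^+\overline{n}_{x,\sigma}^f=0$ or $\overline{n}_{x,\sigma}^f v^- n_{x,\sigma}^f=0$). I do not expect a genuine obstacle here: every step is an elementary manipulation, and the only real cost is organizing the case distinctions so that the sign/index conditions and the ``same site vs.\ different site'' dichotomy are treated uniformly, and invoking the commutation of the $d$-operators with the $f$-number operators at each stage.
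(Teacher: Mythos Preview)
Your argument is correct and follows essentially the same route as the paper: first record the on-site absorption identities $n_{x,\sigma}^f f_{x,\sigma}^*=f_{x,\sigma}^*=f_{x,\sigma}^*\overline{n}_{x,\sigma}^f$, $\overline{n}_{x,\sigma}^f f_{x,\sigma}=f_{x,\sigma}=f_{x,\sigma}n_{x,\sigma}^f$ and their vanishing counterparts, deduce \eqref{f1}--\eqref{f4} by the diagonal/off-diagonal case split, and then obtain \eqref{f5}--\eqref{f8} by combining \eqref{f1}--\eqref{f4} with \eqref{vBase1}. Your idempotency trick of duplicating the middle projection to factor each double product into two single-operator blocks is exactly what the paper means by ``immediately obtained from \eqref{vBase1}--\eqref{f4}'', and your reading of the second index as $(y_2,\tau_2)$ is the intended one.
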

\begin{proof}
From the anticommutation relations \eqref{CAR1} and \eqref{CAR2}, we know that
\begin{align}
n_{x, \sigma}^f f_{x, \sigma}^*&=f_{x, \sigma}^*, \ \ f_{x, \sigma} n_{x, \sigma}^f=f_{x, \sigma},\ \
f_{x, \sigma}^* \overline{n}_{x, \sigma}^f=f_{x, \sigma}^*,\ \ \overline{n}_{x, \sigma}^f f_{x, \sigma}=f_{x, \sigma},\\
f_{x, \sigma}^* n_{x, \sigma}^f&=0,\ \ n_{x, \sigma}^f f_{x, \sigma}=0,\ \ \overline{n}_{x, \sigma}^f f_{x, \sigma}^*=0,\ \ f_{x, \sigma} n_{x, \sigma}^f=0.
\end{align}
With these, we find that \eqref{f1}-\eqref{f4} are valid. Furthermore, \eqref{f5}-\eqref{f8} can be immediately obtained from \eqref{vBase1}-\eqref{f4}.
\end{proof}

For each 
$X\subset \vLa$ and    ${\bs \vepsilon}=(\vepsilon_x)_{x\in X}\in \{-, +\}^{|X|}$, define
\be
{\bs v}_{X, \sigma}^{\bs \vepsilon}=\prod_{x\in X}v_{x, \sigma}^{\vepsilon_{x}}.
\ee
For $\bs \vepsilon=(\vepsilon_x)_{x\in X}$ defined  by $\vepsilon_x=+$ for all $x\in \vLa$, we shall write  ${\bs v}_{X, \sigma}^{\bs \vepsilon}={\bs v}_{X, \sigma}^{ +}$.
Similarly, for $\bs \vepsilon=(\vepsilon_x)_{x\in X}$ defined  by $\vepsilon_x=-$ for all $x\in \vLa$, let us  denote ${\bs v}_{X, \sigma}^{\bs \vepsilon}={\bs v}_{X, \sigma}^{ -}$.

\begin{Lemm}\label{KeyTech}
Suppose that 
 $Y, Y\rq{}\subset \vLa$  satisfy $|Y| \le |X_d\triangle X_f|$ and  $|Y\rq{}| \le |X_d\triangle X_f|$, respectively.
One obtains 
\begin{align}
&Q_{\bs X} {\bs v}_{Y, \up}^{\bs \vepsilon}{\bs v}_{Y\rq{}, \down}^{\bs \vepsilon\rq{}} P_{\bs X}\no
=&
\begin{cases}
\Big[{\bs v}^{-}_{X_d\setminus X_f, \uparrow}{\bs v}^{-}_{X_d\setminus X_f, \down}\Big]\Big[
{\bs v}^+_{X_f\setminus X_d, \uparrow}{\bs v}^+_{X_f\setminus X_d, \down}
\Big] & \mbox{if $Y=Y\rq{}=X_d\triangle X_f$ and ${\bs \vepsilon}={\bs \vepsilon\rq{}}={\bs \vepsilon_X}$}\\
0 & \mbox{otherwise, }
\end{cases} \label{QVP}
\end{align}
where ${\bs \vepsilon}_{\bs X}=(\vepsilon_x)_{x\in X_d\triangle X_f}$ is given by 
\be
\vepsilon_x=
\begin{cases}
- & \mbox{if $x\in X_d\setminus X_f$}\\
+ & \mbox{if $x\in X_f\setminus X_d$}.
\end{cases}
\ee

\end{Lemm}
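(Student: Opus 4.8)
The plan is to exploit that $P_{\bs X}$, $Q_{\bs X}$ and all the operators $v^{\vepsilon}_{x,\sigma}$, $n^f_{x,\sigma}$ occurring here are \emph{even} in the fermion operators: this decouples the two spin sectors, and inside a fixed sector the whole expression localizes site by site. First I would separate the spins. Since $P_{\bs X}=\prod_{\sigma}P^f_{X_f,\sigma}\overline P^f_{\vLa\setminus X_f,\sigma}$ and $Q_{\bs X}=\prod_{\sigma}P^f_{X_d,\sigma}\overline P^f_{\vLa\setminus X_d,\sigma}$ factor over $\sigma$, and since $v^{\vepsilon}_{x,\sigma}$ and $n^f_{x,\sigma}$ are even, those with $\sigma=\up$ commute with those with $\sigma=\down$; hence the left-hand side of \eqref{QVP} is a product $\Pi_{\up}\Pi_{\down}$ of commuting factors, where
\begin{equation*}
\Pi_{\sigma}=\Big(\prod_{x\in X_d}n^f_{x,\sigma}\prod_{x\in\vLa\setminus X_d}\overline{n}^f_{x,\sigma}\Big)\,{\bs v}^{\bs\eta}_{Z,\sigma}\,\Big(\prod_{x\in X_f}n^f_{x,\sigma}\prod_{x\in\vLa\setminus X_f}\overline{n}^f_{x,\sigma}\Big),
\end{equation*}
with $(Z,\bs\eta)=(Y,\bs\vepsilon)$ for $\sigma=\up$ and $(Z,\bs\eta)=(Y',\bs\vepsilon')$ for $\sigma=\down$. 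It therefore suffices to analyze a single $\Pi_{\sigma}$.

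Next I would localize $\Pi_{\sigma}$ at each site. By the commutation relations following \eqref{vBase1}, $v^{\vepsilon}_{x,\sigma}$ commutes with $v^{\vepsilon'}_{y,\sigma}$ and with $n^f_{y,\sigma},\overline{n}^f_{y,\sigma}$ whenever $y\neq x$, and the one-site $f$-number projections commute among themselves; since $Z$ is a set there is at most one $v$-factor at each site. These commutations let me bring, for every $z\in\vLa$, the $f$-number projection at $z$ coming from $Q_{\bs X}$ and the one coming from $P_{\bs X}$ next to the $v$-factor at $z$ (if present), with no Fermi signs since all factors are even, giving $\Pi_{\sigma}=\prod_{z\in\vLa}M_z$ (product in any order), where $M_z=q_z v^{\eta_z}_z p_z$ if $z\in Z$ and $M_z=q_z p_z$ if $z\notin Z$, with $q_z$ (resp.\ $p_z$) the one-site $f$-projection at $z$ prescribed by $X_d$ (resp.\ $X_f$). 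I then evaluate each $M_z$ with Lemma \ref{Projection1}. If $z\in X_d\triangle X_f$ but $z\notin Z$, then $q_z$ and $p_z$ are opposite projections and $M_z=n^f_z\overline{n}^f_z=0$; thus $\Pi_{\sigma}\neq0$ forces $Z\supseteq X_d\triangle X_f$, and together with the hypothesis $|Z|\le|X_d\triangle X_f|$ this gives $Z=X_d\triangle X_f$. For $z\in X_d\setminus X_f$ one gets $M_z=n^f_{z,\sigma}v^{\eta_z}_{z,\sigma}\overline{n}^f_{z,\sigma}$, which by \eqref{f1}--\eqref{f4} is $v^-_{z,\sigma}$ if $\eta_z=-$ and $0$ if $\eta_z=+$; for $z\in X_f\setminus X_d$, $M_z=\overline{n}^f_{z,\sigma}v^{\eta_z}_{z,\sigma}n^f_{z,\sigma}$ is $v^+_{z,\sigma}$ if $\eta_z=+$ and $0$ otherwise. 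Hence $\Pi_{\sigma}=0$ unless $\bs\eta={\bs \vepsilon}_{\bs X}$, and in that case $\Pi_{\sigma}=\big(\prod_{z\in X_d\cap X_f}n^f_{z,\sigma}\prod_{z\in\vLa\setminus(X_d\cup X_f)}\overline{n}^f_{z,\sigma}\big)\,{\bs v}^-_{X_d\setminus X_f,\sigma}\,{\bs v}^+_{X_f\setminus X_d,\sigma}$. Multiplying the commuting factors $\Pi_{\up}$ and $\Pi_{\down}$ and rearranging then reproduces the right-hand side of \eqref{QVP}, the leftover one-site $f$-number projections being those already enforced by $P_{\bs X}$ acting on its right.

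I do not expect any single step to be deep; the main difficulty is the bookkeeping. One must verify carefully that evenness genuinely makes all the relevant operators commute, so that the reduction to the site-local product $\prod_z M_z$ introduces no sign, and one must work through the small case analysis of the Lemma \ref{Projection1}-type identities for both orientations $\eta_z=\pm$ and the four types of site ($X_d\setminus X_f$, $X_f\setminus X_d$, $X_d\cap X_f$, $\vLa\setminus(X_d\cup X_f)$). The cardinality constraint $|Y|,|Y'|\le|X_d\triangle X_f|$ is precisely what rules out ``wasted'' $v$-operators and pins down $Y=Y'=X_d\triangle X_f$.
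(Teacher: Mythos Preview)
Your approach is correct and is precisely what the paper intends: its own proof reads, in full, ``Use Lemma \ref{Projection1}.'' You have simply spelled out the site-by-site reduction that the authors left implicit---spin separation via evenness, commuting each $v_{z,\sigma}^{\eta_z}$ next to its two $f$-number projections, and then invoking \eqref{f1}--\eqref{f4}.

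One small point worth tightening: your final clause about the ``leftover one-site $f$-number projections being those already enforced by $P_{\bs X}$'' is not quite how it works. What you actually obtain is $P_1\cdot\big[{\bs v}^-_{X_d\setminus X_f,\up}{\bs v}^-_{X_d\setminus X_f,\down}\big]\big[{\bs v}^+_{X_f\setminus X_d,\up}{\bs v}^+_{X_f\setminus X_d,\down}\big]$, where $P_1=\prod_\sigma P^f_{X_d\cap X_f,\sigma}\overline{P}^f_{X_d^c\cap X_f^c,\sigma}$, and this factor does \emph{not} drop out (take e.g.\ $X_d=X_f$, where the $v$-product is $\mathbbm{1}$ but the left-hand side is $P_{\bs X}$). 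The paper itself uses the lemma in exactly this form: in the completion of the proof of Proposition \ref{Projection3} it writes $K_{\bs X}=[{\bs v}^-\ldots][{\bs v}^+\ldots]P_1$ with the $P_1$ explicit. So your computation is right; just keep the $P_1$ rather than trying to argue it away.
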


\begin{proof}
Use Lemma \ref{Projection1}.
\end{proof}

\subsection{Completion of the proof of Proposition  \ref{Projection3}}

Throughout this proof, set $N=|\bs X|_{\triangle}$.
We can express $P_{\bs X}$ as 
\begin{align}
P_{\boldsymbol{X}}
&=\prod_{\sigma=\up, \down} P^f_{X_f\setminus X_d, \sigma} \overline{P}^f_{X_d\setminus X_f, \sigma} P^f_{X_d\cap X_f, \sigma} \overline{P}^f_{X_d^c\cap X_f^c, \sigma}.
\end{align}
Putting 
\be
P_1=\prod_{\sigma=\up, \down} P^f_{X_d\cap X_f, \sigma} \overline{P}^f_{X_d^c\cap X_f^c, \sigma}, 
\ee
we have $P_{\bs X}=P_{\bs X} P_1$. Similarly, we obtain $ Q_{\bs X}=Q_{\bs X} P_1$.
For each $X\subseteq \vLa$ and $\bs \vepsilon=(\vepsilon_x)_{x\in X}\in \{-, +\}^{|X|}$,  define
\be
\varTheta_{X}^{\bs \vepsilon}=\prod_{x\in X} e^{-\vepsilon_x \im \vPhi_x},\quad \overline{\varTheta}_{X}^{\bs \vepsilon}=\prod_{x\in X} e^{+\vepsilon_x \im \vPhi_x}.
\ee

From  Lemma \ref{KeyTech},  it follows that 
\begin{align}
&(N!)^{-2}Q_{\boldsymbol{X}} V_{\up}^{N}(\VP)V_{\down}^{N}(-\VP) P_{\boldsymbol{X}}\no
&=(N!)^{-2} V^{2N} \sum_{{X\subset \vLa}\atop{|X|=N}} \sum_{
{Y\subset \vLa}\atop{|Y|=N}
}
\sum_{{\bs \vepsilon}, {\bs \vepsilon}\rq{}\in \{-, +\}^{N}} \varTheta_{X}^{\bs \vepsilon}\overline{\varTheta}_{Y}^{\bs \vepsilon'}
Q_{\boldsymbol{X}}
v_{X, \up}^{\bs \vepsilon}v_{Y, \down}^{\bs \vepsilon\rq{}}
P_{\boldsymbol{X}}\no
&=(N!)^{-2} V^{2N} \sum_{X=X_f\triangle X_d} \sum_{Y= X_f\triangle X_d}
\sum_{{\bs \vepsilon}={\bs \vepsilon}_{\bs X}}
\sum_{{\bs \vepsilon}\rq{}={\bs \vepsilon}_{\bs X}
}\varTheta_{X}^{\bs \vepsilon}\overline{\varTheta}_{Y}^{\bs \vepsilon'}
P_1
Q_{\boldsymbol{X}}
v_{X, \up}^{\bs \vepsilon}v_{Y, \down}^{\bs \vepsilon\rq{}}
P_{\boldsymbol{X}}
P_1
\no
&=V^{2N}
\Big[{\bs v}^{-}_{X_d\setminus X_f, \uparrow}{\bs v}^{-}_{X_d\setminus X_f, \down}\Big]\Big[
{\bs v}^+_{X_f\setminus X_d, \uparrow}{\bs v}^+_{X_f\setminus X_d, \down}
\Big] P_1, 
\end{align}
where we have used the fact that if  $X=X_f\triangle X_d,\ Y= X_f\triangle X_d$ and $ {\bs \vepsilon}={\bs \vepsilon}'={\bs \vepsilon}_{\bs X}$, then 
$\varTheta_{X}^{\bs \vepsilon}\overline{\varTheta}_{Y}^{\bs \vepsilon'}=\mathbbm{1}$ holds.

From Duhamel's formula, we get 
\be
e^{-\beta H_0}=\sum_{n=0}^{\infty} G_n(\beta),
\ee
where  $G_0(\beta)=e^{-\beta \omega_0 \Np}$ and 
 \be
 G_n(\beta)=\beta^n \int_{\Delta_n} H_1(s_1)\cdots H_1(s_n) e^{-\beta \omega_0 \Np}d^n {\bs s},\quad H_1(s)=e^{-s \beta \omega_0 \Np}H_1 e^{s\beta \omega_0 \Np}.
 \ee
 Because 
$\lim_{\beta \to +0}(\mathbbm{1} -e^{-s \beta \omega_0 \Np})\vphi=0\ (\vphi\in \h)$ and  $\|\mathbbm{1} -e^{-s \beta \omega_0 \Np}\| \le 2 \ (s\ge 0)$ hold, we find that 
\be
 \lim_{\beta \to +0}H_1(s_1)\cdots H_1(s_n) e^{-\beta \omega_0 \Np} \vphi =H_1^n \vphi \quad (\vphi\in \h), 
\ee
which implies that 
\be
G_n(\beta)=\frac{\beta^n}{n!} H_1^n +o(\beta^n)\quad (\beta\to +0). \label{SimpNinaru}
\ee
Since 
$Q_{\bs X}
H_1(s_1)\cdots H_1(s_n) e^{-\beta \omega_0 \Np} 
P_{\boldsymbol{X}}=0$ holds for $n<2N$, one obtains 
\be
Q_{\bs X} e^{-\beta H_0}P_{\bs X} =G_{2N}(\beta)+O(\beta^{2N+1}). \label{2Nkieru}
\ee
Besides, one finds that 
\begin{align}
Q_{\bs X} H_1^{2N} P_{\bs X}=Q_{\bs X} \mathbbm{V}^{2N} P_{\bs X} =\binom{2N}{N} Q_{\boldsymbol{X}} V_{\up}(\VP)^NV_{\down}(-\VP)^{N} P_{\boldsymbol{X}} \label{UDdake}
\end{align}
holds. Then, by setting 
\begin{align}
K_{\boldsymbol{X}}
=\Big[{\bs v}^{-}_{X_d\setminus X_f, \uparrow}{\bs v}^{-}_{X_d\setminus X_f, \down}\Big]\Big[
{\bs v}^+_{X_f\setminus X_d, \uparrow}{\bs v}^+_{X_f\setminus X_d, \down}
\Big] P_1,
\end{align}
we see from \eqref{QVP},  \eqref{SimpNinaru}, \eqref{2Nkieru} and \eqref{UDdake} that 
\begin{align}
Q_{\boldsymbol{X}} e^{-\beta H_0} P_{\boldsymbol{X}}
\overset{\eqref{SimpNinaru}, \eqref{2Nkieru}}{=}&\frac{\beta ^{2N}}{(2N)!}Q_{\boldsymbol{X}} H_1^{2N} P_{\boldsymbol{X}}
+o(\beta^{2N}) \no
\overset{\eqref{UDdake}}{=}&\frac{\beta^{2N}}{(2N)!} \binom{2N}{N}  Q_{\boldsymbol{X}} V_{\up}(\VP)^{N}V_{\down}(-\VP)^{N} P_{\boldsymbol{X}} 
+o(\beta^{2N}) \no
\overset{\eqref{QVP}}{=}&\frac{\beta ^{2N}}{(2N)!}\binom{2N}{N}   (N!)^2V^{2N}K_{\boldsymbol{X}}
+o(\beta^{2N})
\no
=\ \ &\beta^{2N}V^{2N}K_{\boldsymbol{X}}
+o(\beta^{2N}).
\end{align}
Therefore, we have
\begin{align}
P_{\boldsymbol{X}} e^{-\beta H_0} Q_{\boldsymbol{X}} e^{-\beta H_0} P_{\boldsymbol{X}} 
&=\left\{\beta^{2N}V^{2N}K_{\boldsymbol{X}}^*+o(\beta^{2N})\right\} \left\{\beta^{2N}V^{2N}K_{\boldsymbol{X}}
+o(\beta^{2N})\right\}\no
&=\beta^{4N}V^{4N}K_{\boldsymbol{X}}^*K_{\boldsymbol{X}} +o(\beta^{4N}). \label{PeP}
\end{align}
According to \eqref{vBase1}, we know that $K_{\boldsymbol{X}}^*K_{\boldsymbol{X}}=\Ex_{\boldsymbol{X}}$, so it follows that
\begin{align}
P_{\boldsymbol{X}} e^{-\beta H_0} Q_{\boldsymbol{X}} e^{-\beta H_0} P_{\boldsymbol{X}}
&=\beta^{4N}V^{4N}\Ex_{\boldsymbol{X}}
+o(\beta^{4N}).
\end{align}
\qed

\section{Proof of Proposition \ref{Connect}}\label{PfConnect}

\subsection{Classification of electron configurations}
In this appendix, we prove Proposition \ref{Connect}. This proposition looks simple, but its proof is, unexpectedly, rather complicated.

For the sake of subsequent discussion, we classify the electron configurations as follows.

\begin{Def}\upshape 
\begin{itemize}
\item[(i)] 
An electron configuration ${\bs X}=(X_d,  X_f)\in \mathscr{C}$ is said to be {\it $d$-dominated} if it satisfies $X_d \supseteq X_f$.
 We denote by $\mathscr{C}_d$ the set of all $d$-dominated electron configurations:
$
\mathscr{C}_d= \{{\bs X}=(X_d, X_f)\in \mathscr{C} \, :\,   X_d \supseteq X_f\}
$(See Fig. \ref{Fig1}).
\begin{figure}[t]
\centering
 \includegraphics[scale=0.6]{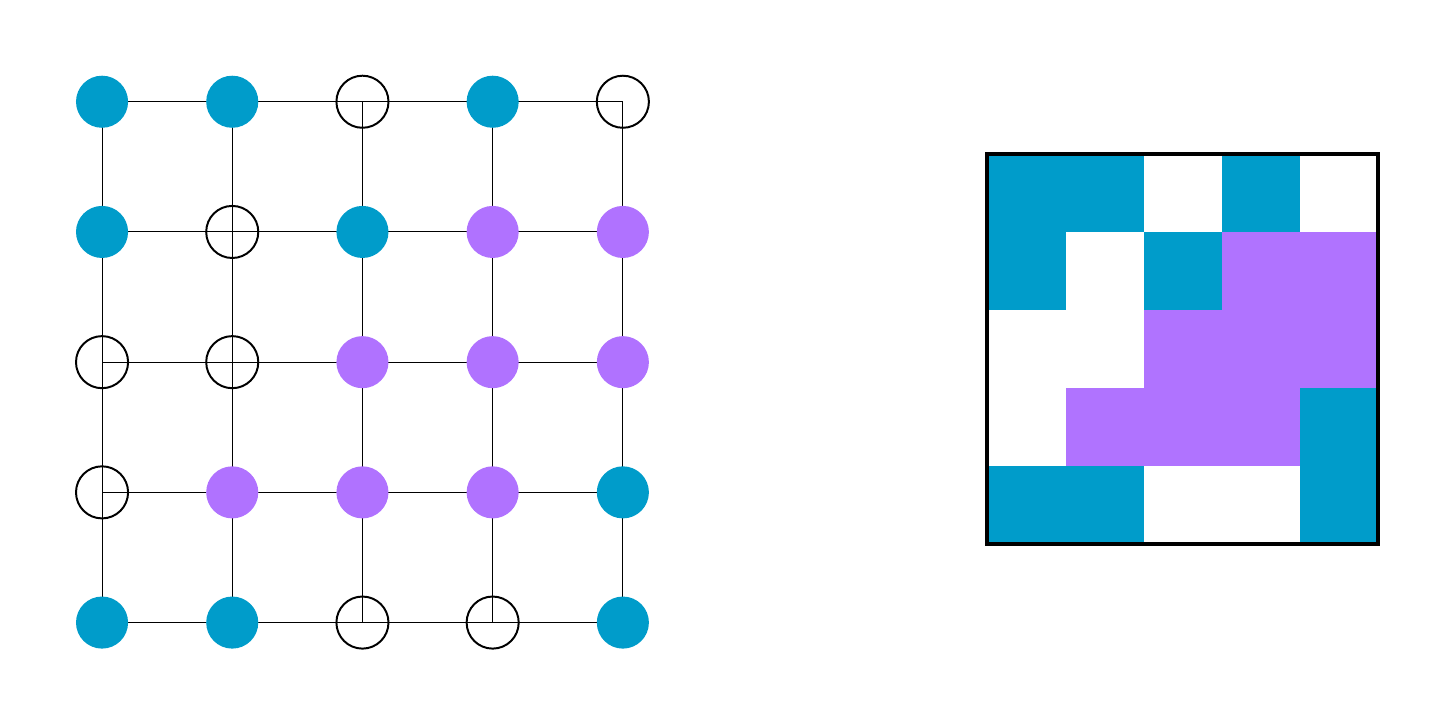}
\caption{The left figure is an example of the $d$-dominated electron configuration. See Fig. \ref{Fig0} for the meaning of the colors of the sites. The right figure is a simplified representation of the situation in the left.}
\label{Fig1}
\end{figure}
\item[(ii)] 
An electron configuration ${\bs X}=(X_d,  X_f)\in \mathscr{C}$ is said to be {\it $f$-dominated} if it satisfies $X_f \supseteq X_d$.
We denote by $\mathscr{C}_f$ the set of all $f$-dominated electron configurations:
$
\mathscr{C}_f= \{{\bs X}=(X_d, X_f)\in \mathscr{C} \, :\,   X_f \supseteq X_d\}
$(See Fig. \ref{Fig2}).

\item[(iii)]
An electron configuration ${\bs X}=(X_d,  X_f)\in \mathscr{C}$ is said to be {\it $(d, f)$-disjoint} if it satisfies $X_d\cap X_f=\varnothing$. We denote by $\mathscr{C}_{d, f}$ the set of all  $(d, f)$-disjoint electron configurations:
$\mathscr{C}_{d, f}= \{{\bs X}=(X_d, X_f) \in \mathscr{C}\, :\,  X_d \cap X_f=\varnothing\}$(See Fig. \ref{Fig2}).

 \begin{figure}[t]
\centering
 \includegraphics[ scale=0.6]{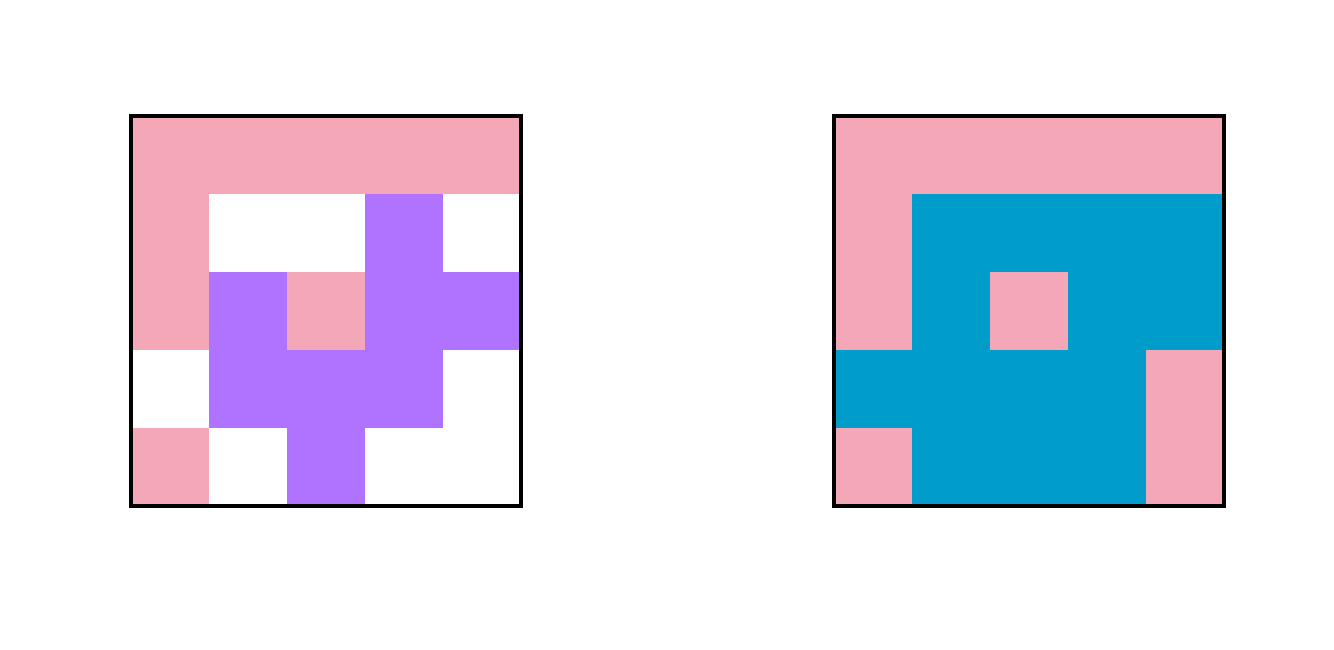}
\caption{The left figure shows an example of the $f$-dominated electron configuration.
The right figure illustrates an example of the $(d, f)$-disjoint electron configuration.}
\label{Fig2}
\end{figure}
\end{itemize}
\end{Def}

\subsection{Simplification of electron configurations not in $\mathscr{C}_f$}
In this subsection, we examine the properties of non $f$-dominated electron configurations.

We first introduce the  term as follows to facilitate the following discussion.
\begin{Def}
\rm For each  $\bs X=(X_d, X_f)\in \mathscr{C} \setminus \mathscr{C}_f$, we define the new electron configuration $S \bs X$ by $S \bs X=(X_d\cap X_f, X_d\cup X_f)$. Apparently $S\bs X\in \mathscr{C}_f$. We call $S\bs X$ a {\it simplification} of $\bs X$. See Fig. \ref{Fig3}
\end{Def}

\begin{figure}[h]
\centering
 \includegraphics[ scale=0.6]{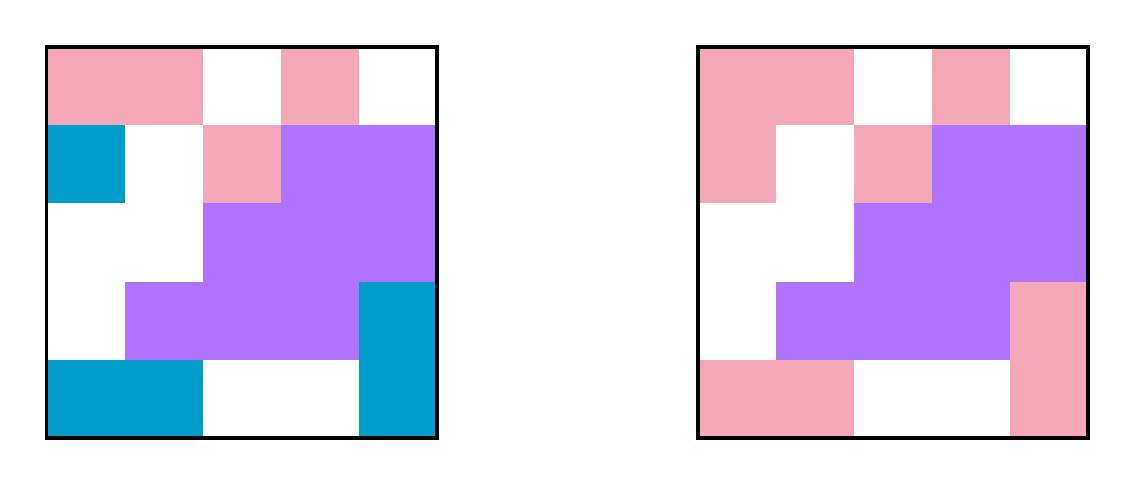}
\caption{The right figure depicts the simplification $S\bs X$ of the electron configuration $\bs X$ represented in the left figure.}
\label{Fig3}
\end{figure}
The purpose of this subsection is to prove the following lemma:

\begin{Lemm}\label{Simplified}
Let ${\bs X}=(X_d, X_f)\in \mathscr{C}\setminus \mathscr{C}_f$. We set 
  $m=| X_d\setminus X_f|$. Then there exists a path  ${\bs p}={\bs X}_1\cdots {\bs X}_{m+1}$ connecting $\bs X$ and its simplification $S {\bs X}$ such that each $\{\bs X_i, \bs X_{i+1}\}$ is a $(d, f)$-edge $(i=1, \dots, m)$.

\end{Lemm}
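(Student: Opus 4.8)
The plan is to build the path $\bs X_1 \cdots \bs X_{m+1}$ explicitly by moving the conduction electrons that sit at sites in $X_d \setminus X_f$ down onto the $f$-orbitals, one site at a time, using only $(d,f)$-edges. Write $X_d \setminus X_f = \{z_1, \dots, z_m\}$ (in the fixed ordering of $\vLa$). I would define the intermediate configurations by
\begin{align}
\bs X_1 &= \bs X = (X_d, X_f),\no
\bs X_{k+1} &= \Bigl(X_d \setminus \{z_1, \dots, z_k\},\ X_f \cup \{z_1, \dots, z_k\}\Bigr), \quad k = 1, \dots, m,
\end{align}
so that $\bs X_{m+1} = (X_d \cap X_f,\ X_f \cup (X_d \setminus X_f)) = (X_d \cap X_f,\ X_d \cup X_f) = S\bs X$. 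Each of these is a legitimate electron configuration: the $d$-part always contains $X_d \cap X_f$ and the $f$-part always contains $X_f$, and $|X_d \setminus \{z_1,\dots,z_k\}| + |X_f \cup \{z_1,\dots,z_k\}| = (|X_d| - k) + (|X_f| + k) = |X_d| + |X_f| = |\vLa|$ because the $z_i$ lie in $X_d \setminus X_f$, hence are removed from a set containing them and added to a set not containing them.

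The next step is to check that consecutive pairs are $(d,f)$-edges in the sense of Definition \ref{DefNei}(ii). Fix $k \in \{0, \dots, m-1\}$ and consider $\bs X_{k+1}$ and $\bs X_{k+2}$. Writing $D_k = X_d \setminus \{z_1, \dots, z_k\}$ and $F_k = X_f \cup \{z_1, \dots, z_k\}$, the two configurations differ by moving the site $z_{k+1}$: the $d$-part changes from $D_k$ to $D_k \setminus \{z_{k+1}\}$ and the $f$-part changes from $F_k$ to $F_k \cup \{z_{k+1}\}$. Since $z_{k+1} \in X_d \setminus X_f$ and $z_{k+1} \notin \{z_1, \dots, z_k\}$, we have $z_{k+1} \in D_k$ and $z_{k+1} \notin F_k$, so the symmetric differences are exactly $(D_k)\triangle(D_k \setminus \{z_{k+1}\}) = \{z_{k+1}\}$ and $(F_k)\triangle(F_k \cup \{z_{k+1}\}) = \{z_{k+1}\}$; that is, $X_d \triangle Y_d = \{z_{k+1}\} = X_f \triangle Y_f$ with the notation of Definition \ref{DefNei}(ii), so $\{\bs X_{k+1}, \bs X_{k+2}\}$ is a $(d,f)$-edge. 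Hence $\bs p = \bs X_1 \bs X_2 \cdots \bs X_{m+1}$ is a path of length $m+1$ connecting $\bs X$ to $S\bs X$, every edge of which is a $(d,f)$-edge.

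There is really no serious obstacle here; the only point that requires a moment's care is verifying that each $\bs X_k$ genuinely lies in $\mathscr{C}$ (the cardinality bookkeeping above) and that the site being moved at each step is simultaneously present in the current $d$-part and absent from the current $f$-part — which is guaranteed precisely because we only ever move sites of $X_d \setminus X_f$ and we move them in a fixed order so no site is touched twice. One should also note the edge case $m = 0$, i.e. $X_d \setminus X_f = \varnothing$: then $\bs X \in \mathscr{C}_f$ already, which is excluded by hypothesis, so $m \geq 1$ and the path is nontrivial; but even formally the statement for $m=0$ would just be the trivial one-vertex path $\bs p = \bs X = S\bs X$. I would present the argument in the two steps above: first exhibit the sequence, then verify it is a $(d,f)$-path.
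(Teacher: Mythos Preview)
Your proof is correct and follows essentially the same approach as the paper's: both construct the path by moving the sites of $X_d\setminus X_f$ one at a time from the $d$-component to the $f$-component via $(d,f)$-edges. Your presentation is in fact cleaner than the paper's (which has somewhat awkward backwards indexing), and your explicit verification that each intermediate configuration lies in $\mathscr{C}$ and that each step satisfies Definition~\ref{DefNei}(ii) is more thorough.
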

\begin{proof}
Take $x\in X_d\setminus X_f$, arbitrarily. We define the electron configuration 
${\bs X_m}$  as  ${\bs X}_m=(X_d\setminus \{x\},  X_f\cup \{x\})$.
Then $\{\bs X_m, \bs X_{m+1}\}$ is a $(d, f)$-edge. For each 
$j=2,\ldots, m-1$, define ${\bs X}_{m-j}=(X_{d,  m-j+1}\setminus \{x_{j}\},  X_{f,  m-j+1}\cup\{x_{j}\})\ (x_j\in X_{d,  m-j+1}\setminus X_{f,  m-j+1})$. Then, we see that each  $\{\bs X_{m-j}, \bs X_{m-j+1}\}$ is a $(d, f)$-edge.
Defining $\bs p={\bs X}_1\cdots {\bs X}_{m+1}$ shows the assertion of Lemma \ref{Simplified}.
\end{proof}

\subsection{Properties of electron configurations in $\mathscr{C}_f$}
In this subsection we prove the following lemma for $f$-dominated electron configurations.

\begin{Lemm}\label{XtoF}
For any $\bs X\in \mathscr{C}_f$, there exists a path connecting $\bs X$ and $\bs F$.
\end{Lemm}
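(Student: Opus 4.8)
The goal is to connect an arbitrary $f$-dominated configuration $\bs X=(X_d,X_f)$ with $X_d\subseteq X_f$ to the distinguished configuration $\bs F=(\varnothing,\vLa)$. Since $\bs F$ itself is $f$-dominated with all $f$-sites occupied and no $d$-electrons, the natural two-stage strategy is: (1) within $\mathscr{C}_f$, empty out the $d$-electrons one at a time, reducing $\bs X$ to the configuration $(\varnothing,X_f)$; then (2) fill in the missing $f$-sites $\vLa\setminus X_f$ one at a time to reach $(\varnothing,\vLa)=\bs F$. Each move in stage (1) should be arranged as a sequence of edges in $\mathscr{G}$, and similarly for stage (2).

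\textbf{Stage 1: removing $d$-electrons.} Suppose $X_d\neq\varnothing$ and pick $x\in X_d\subseteq X_f$, so at site $x$ both a $d$- and an $f$-electron sit. I want to move the $d$-electron at $x$ out of the system. The $d$-electrons can only hop along edges of $G_d$, which by assumption \hyperlink{A2}{(A.\,2)} is connected; but hopping alone conserves the number of $d$-electrons. The mechanism that changes $|X_d|$ is the $(d,f)$-edge (Definition \ref{DefNei}(ii)): if $y\in\vLa$ satisfies $y\in X_d$ and $y\in X_f$, i.e. $y\in X_d\cap X_f=X_d$, then there is no $(d,f)$-edge removing a $d$-electron directly, because condition (ii) requires $X_d\triangle Y_d=\{y\}=X_f\triangle Y_f$, which would mean $y$ is removed from (or added to) \emph{both} $X_d$ and $X_f$ simultaneously. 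So to annihilate a $d$-electron we must first bring it to a site with no $f$-electron, i.e. to a site in $\vLa\setminus X_f$. Concretely: if $X_f\neq\vLa$, choose $y\in\vLa\setminus X_f$; since $G_d$ is connected, there is a $d$-path in $G_d$ from $x$ to $y$, and hopping the $d$-electron along this path gives a sequence of $d$-edges in $\mathscr{G}$ from $\bs X$ to a configuration $\bs X'$ with $x$ replaced by $y$ in $X_d$ (so $y\in X_d'$, $y\notin X_f'$). At $y$ we now have a $d$-electron and \emph{no} $f$-electron, so the $(d,f)$-edge applies: $X_d'\triangle Y_d=\{y\}=X_f'\triangle Y_f$ takes us to a configuration with the $d$-electron at $y$ removed and an $f$-electron added at $y$ — but wait, this changes $|X_d|$ and $|X_f|$ in opposite directions, keeping $|X_d|+|X_f|=|\vLa|$, good, and the resulting configuration has one fewer $d$-electron. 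One subtlety: the hopping path in $G_d$ might pass through sites already occupied by $d$-electrons; I would handle this by a standard argument — either choose $y$ to be the nearest empty-of-$d$ site and note that a shortest such path can be taken to avoid other $d$-occupied sites, or, more robustly, shuffle occupied sites along by observing that exchanging two adjacent $d$-electrons is achievable via edges (or simply that the relevant two-electron hops are themselves $d$-edges). Iterating, we reduce to a configuration $(\varnothing, X_f')$ for some $X_f'$ with $|X_f'|=|\vLa|$, hence $X_f'=\vLa$, i.e. we have reached $\bs F$. Actually this already completes the proof in one sweep, so Stage 2 is subsumed.

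\textbf{Potential obstacle and how to address it.} The delicate point is the hopping-path argument inside $G_d$ when intermediate sites are occupied: moving the chosen $d$-electron from $x$ to an empty target $y$ requires that we never try to hop onto an occupied site (a $d$-edge requires $X_d\triangle Y_d=\{x',y'\}$ with, say, $x'\in X_d$, $y'\notin X_d$). The clean fix is to pick $y\in\vLa\setminus X_f$ and then among all sites of $\vLa\setminus X_f$ reachable, repeatedly use connectivity of $G_d$ together with the bipartite/connected structure to route around obstacles; alternatively, process the $d$-electrons in an order (e.g. peel off one adjacent to the boundary of $X_d$ first) so that at each stage the target empty site is graph-adjacent or reachable without collision. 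Since we only need \emph{existence} of a path, not an optimal one, I can afford to be generous: push all other $d$-electrons temporarily toward their "home" region, move the selected one out, etc. I expect this combinatorial bookkeeping — not any conceptual difficulty — to be where the real work lies, and it is presumably why the paper isolates this lemma. The statements \ref{Connect}, \ref{Simplified}, \ref{XtoF} together suggest the intended structure: \ref{Simplified} handles reducing a general $\bs X\notin\mathscr{C}_f$ to its simplification $S\bs X\in\mathscr{C}_f$ via $(d,f)$-edges, and then \ref{XtoF} (the present lemma) finishes within $\mathscr{C}_f$; so for \ref{XtoF} I would lean on the argument above, making careful use of the connectedness of $G_d$ in assumption \hyperlink{A2}{(A.\,2)(i)} to justify the $d$-hopping paths and of the half-filling constraint $|X_d|+|X_f|=|\vLa|$ to guarantee that emptying all $d$-electrons forces $X_f=\vLa$.
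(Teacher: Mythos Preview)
Your overall strategy matches the paper's: for $\bs X\in\mathscr{C}_f$, repeatedly transport a $d$-electron from $X_d=X_d\cap X_f$ to an empty site $y\in Y:=\vLa\setminus X_f$ by $d$-hops along a path in $G_d$, then apply a $(d,f)$-edge at $y$; each pass decreases $|X_d|$ and $|Y|$ by one, so after $|X_d|$ rounds you reach $\bs F$. Your counting observation that $|X_d|=|Y|$ (from $X_d\subseteq X_f$ and $|X_d|+|X_f|=|\vLa|$) is exactly what makes the iteration terminate at $\bs F$.

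The only substantive gap is the collision issue you yourself flag, and your suggested fixes are too loose. In particular, ``choose $y$ to be the nearest empty-of-$d$ site'' can land you at a site of $X_f\setminus X_d$, not of $Y$, where no $(d,f)$-conversion is available. The paper resolves this with a one-line minimality argument that you should adopt: choose $w\in X_d$ and $z\in Y$ so that their distance in $G_d$ is minimal over \emph{all} such pairs, and take a shortest path $w=u_0,u_1,\dots,u_n=z$. Then for $0<j<n$ one has $u_j\notin X_d$ (otherwise $(u_j,z)$ would be a strictly closer pair) and $u_j\notin Y$ (otherwise $(w,u_j)$ would be), so every intermediate vertex lies in $X_f\setminus X_d$ and each hop $u_{j-1}\to u_j$ is a legal $d$-edge. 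This replaces the hand-waving about ``routing around obstacles'' or ``pushing electrons temporarily'' with an argument that works uniformly and requires no case analysis.
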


\begin{proof}
Let $G_d$ be the graph generated by the hopping matrix introduced in Section \ref{Sec1}.
From the assumption \hyperlink{A2}{\bf (A. 2)}, $G_d$ is a connected graph.

For each $Z\subset \vLa$, define its boundary $\partial Z$ by $\partial Z=\{z\in Z\, :\,  \exists x\in Z^c \ \mbox{s.t. $t_{x,z} \neq 0$}\}$.
Set $Y=\vLa\setminus (X_d\cup X_f)$ 
For each 
$z\in \partial Y$ and $w\in \partial (X_d\cap X_f)$, we set
\be
P_{\bs X}(z, w)=\{p: \mbox{path from $z$ to $w$ satisfying $p\subset X_f\setminus X_d$}\},
\ee
where for a path $p=x_1\cdots x_n$ and $A\subset \vLa$, we denote by $p\subset A$ if $\{x_1, \dots, x_n\}\subset A$ holds.
Furthermore, put
\be
N_{\boldsymbol{X}}(z,w)=\min_{p\in P_{\bs X}(z, w)} |p|.
\ee
Fig. \ref{Fig4} depicts the situation described above.
We choose $z_1\in \partial Y$ and $w_1\in \partial(X_d\cap X_f)$ so that $P_{\bs X}(z_1, w_1) \neq \varnothing$.
Let $n_1=N_{\boldsymbol{X}}(z_1,w_1)$ and denote one of the shortest paths  in $P_{\bs X}(z_1, w_1)$ by $p=u_0u_1\cdots u_{n_1}\ (u_0=w_1, u_{n_1}=z_1)$.  
\begin{figure}[t]
\centering
 \includegraphics[ scale=0.9]{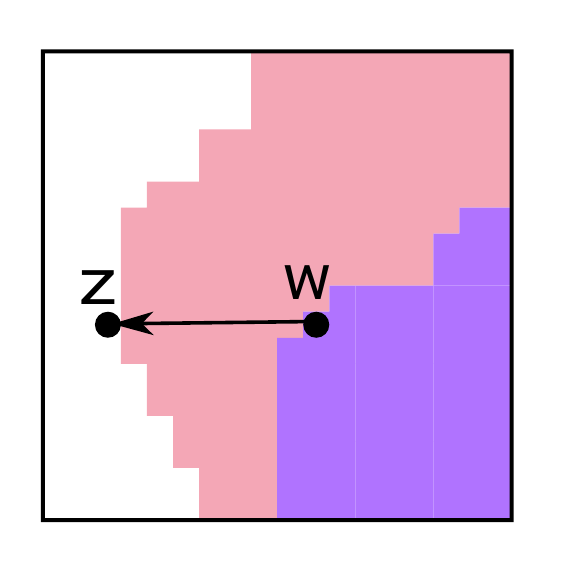}
\caption{$z$ is a site at the boundary of $Y$, and $w$ is a site at the boundary of $X_d\cap X_f$. The arrow represents the shortest path connecting $z$ and $w$.}
\label{Fig4}
\end{figure}
Define the electron configurations $\boldsymbol{Z}_1,\ldots,\boldsymbol{Z}_{n_1}$ by
\begin{align}
\boldsymbol{Z}_0&=\boldsymbol{X}, \\
Z_{j, d}&=(Z_{j-1, d} \setminus \{u_{j-1}\}) \cup \{u_j\},\ 
Z_{j, f}=Z_{j-1, f}
\ (j=1,\ldots,n_1).
\end{align}
We readily confirm that each 
$\{\bs Z_{j-1}, \bs Z_j\}\ (j=1, \dots, n_1)$ is a $d$-edge.
Next, we define the electron configuration $\boldsymbol{Z}_{n_1+1}$ as
\be
Z_{n_1+1, d}=Z_{n_1, d} \setminus \{z_1\},\quad Z_{n_1+1, f}=Z_{n_1, f}\cup\{z_1\}.
\ee
Then $\{\bs Z_{n_1}, \bs Z_{n_1+1}\}$ is a $(d, f)$-edge. 
Additionally, the following hold:
\be
\vLa\setminus (Z_{n_1+1, d} \cup Z_{n_1+1, f})=Y\setminus \{z_1\},\quad Z_{n_1+1, d} \cap Z_{n_1+1, f}=(X_f\cap X_d)\setminus \{w_1\}.
\ee
In this way, we can construct a path $\bs p_1=\bs Z_0\cdots \bs Z_{n_1+1}$ connecting $\bs Z_0=\bs X$ and $\bs Z_{n_1+1}$.

Next, take $z_2\in\partial (\vLa\setminus(Z_{f, n_1+1}\cup Z_{d, n_1+1})) $ and $ w_2\in \partial (Z_{f, n_1+1}\cap Z_{d, n_1+1})$ so that $P_{\bs Z_{n_1+1}}(z_2, w_2) \neq \varnothing$.
Let $n_2=N_{\boldsymbol{Z}_{n_1+1}}(z_2,w_2)$ and let $p_2$ be one of the shortest paths in $P_{\bs Z_{n_1+1}}(z_2, w_2)$.
In a similar way as above, we can construct electron configurations $\boldsymbol{Z}_{n_1+2},\ldots,\boldsymbol{Z}_{n_1+n_2+2}$
such that each 
$\{\bs Z_{j-1}, \bs Z_j\}\ (j=1, \dots, n_1)$ is an edge and the following hold:
\begin{align}
\vLa\setminus (Z_{n_1+n_2+1, d} \cup Z_{n_1+n_2+2, f})&=Y\setminus \{z_1, z_2\},\\
 Z_{n_1+n_2+1, d} \cap Z_{n_1+n_2+1, f}&=(X_f\cap X_d)\setminus \{w_1, w_2\}.
\end{align}
From the electron configurations obtained in this way,
we can  construct a path $\bs p_2$ connecting $\bs Z_{n_1+1}$ and $\bs Z_{n_1+n_2+2}$ by putting $\bs p_2={\bs Z}_{n_1+1}\cdots \bs Z_{n_1+n_2+2}$.

 Repeating this procedure until there are no more elements in $Y$ ($k=|Y|$ times), we obtain a sequence of paths $\bs p_1, \dots, \bs p_k$.  It can be seen that their composition $\bs p=\bs p_k\circ \cdots \circ \bs p_1$ is a path connecting $\bs X$ and $\bs F$.
\end{proof}

\subsection{Completion of the proof of Proposition \ref{Connect}}
We split the proof into two parts.

{\bf Step 1.}
Let  ${\bs X} \in \mathscr{C}$.
Consider the following two cases: (i) $\bs X\in \mathscr{C}_f$; (ii)  $\bs X\in \mathscr{C}\setminus \mathscr{C}_f$.
In case (i),  from Lemma \ref{XtoF}, there exsits a path connecting $\bs X$ and $\bs F$.
On the other hand, in case (ii), from Lemma  \ref{Simplified}, there exists a path $\bs p_1$ connecting $\bs X$ and its simplification $S\bs X$; furthermore, by Lemma \ref{XtoF}, there exists a path $\bs p_2$ connecting $S\bs X$ and $\bs F$.
Then, by combining these two paths, we obtain a path connecting $\bs X$ and $\bs F$.
Summarizing the above, we see that in both cases (i) and (ii), there exists a path $\bs p_{\bs X\to \bs F}$ connecting $\bs X$ and $\bs F$.

{\bf Step 2.}
Take $\bs X, \bs Y\in \mathscr{C}$, arbitrarily. From {\bf Step 1}, two paths exist $\bs p_{\bs X\to \bs F}$ connecting $\bs X$ and $\bs F$ and $\bs p_{\bs Y\to \bs F}$ connecting $\bs Y$ and $\bs F$. Then, by setting $\bs p=\bs p_{\bs X\to \bs F} \circ \bs p_{\bs F\to \bs Y}$, we can construct a path   $\bs p$  connecting $\bs X$ and $\bs Y$,  where $\bs p_{\bs F\to \bs Y}$ is the path obtained by reversing the order of $\bs p_{\bs Y\to \bs F}$, connecting $\bs F$ and $\bs Y$: $\bs p_{\bs F\to \bs Y}=\bs p_{\bs Y\to \bs F}^{-1}$.
\qed

\end{document}